\definecolor{Gred}{RGB}{219, 50, 54}
\definecolor{ToCgreen}{RGB}{0, 128, 0}
\titleformat*{\paragraph}{\bfseries}
\newcommand{\iprod}[1]{\langle #1\rangle}
\newcommand{\flatmat}{\vec{\Psi}}
\newcommand{\calK}{\mathcal{K}}
\newcommand{\bA}{\vec{A}}
\newcommand{\bB}{\vec{B}}
\newcommand{\bQ}{\vec{Q}}
\newcommand{\nc}{m} 
\newcommand{\triangleq}{\coloneqq}
\newcommand{\paramerr}{\upsilon}
\newtheorem{theorem}{Theorem}[section]
\newtheorem{lemma}[theorem]{Lemma}
\newtheorem{informal theorem}[theorem]{Theorem (informal statement)}
\newtheorem{proposition}[theorem]{Proposition}
\newtheorem{corollary}[theorem]{Corollary}
\newtheorem{claim}[theorem]{Claim}
\newtheorem{fact}[theorem]{Fact}
\newtheorem{remark}[theorem]{Remark}
\newtheorem{definition}[theorem]{Definition}
\renewcommand{\vec}[1]{\boldsymbol{\mathrm{#1}}}
\newcommand\norm[1]{\| #1 \|}
\DeclareMathOperator*{\pr}{\mathbb{P}}
\DeclareMathOperator*{\E}{\mathbb{E}}
\newcommand{\normal}{\mathcal{N}}
\DeclareMathOperator*{\argmin}{argmin}
\newcommand{\mc}[1]{\mathcal{#1}}
\newcommand{\tr}{\mathrm{tr}}
\newcommand{\Tr}{\mathrm{tr}}
\newcommand{\op}{\mathsf{op}}
\newcommand{\coeffvec}{\vec \theta}
\newcommand{\vmu}{\vec{\mu}}
\newcommand{\vg}{\vec{g}}
\newcommand{\R}{\mathbb{R}}
\newcommand{\eps}{\varepsilon}
\newcommand{\poly}{\mathrm{poly}}
\newcommand{\polylog}{\mathrm{polylog}}
\renewcommand{\top}{\intercal}
\newcommand{\calE}{{\cal E}}
\newcommand{\calN}{{\cal N}}
\newcommand{\calS}{{\cal S}}
\newcommand{\calT}{{\cal T}}
\newcommand{\calU}{{\cal U}}
\newcommand{\Ind}{\mathds{1}}
\newcommand{\1}{\Ind}
\newcommand{\wt}{\widetilde}
\newcommand{\wh}{\widehat}
\newcommand{\bE}{\vec{E}}
\newcommand{\lambdamin}{\lambda_{\rm min}}
\newcommand{\x}{\vec x}
\newcommand{\z}{\vec z}
\newcommand{\citet}{\cite}
\newcommand{\citep}{\cite}
\newcommand{\lossdenoisingclipped}{L^{({\rm clip})}}
\newcommand{\lossdenoisingout}{L^{(out)}}
\newcommand{\vc}[1]{\mathrm{vec}(#1)}
\newcommand{\Id}{\boldsymbol{\mathrm{Id}}}
\newcommand{\bM}{\vec{M}}
\newcommand{\cov}{\vec{Q}}
\newcommand{\quadform}{\vec{\Lambda}}
\newcommand{\withincluster}{\Delta_{\rm in}}
\newcommand{\betweencluster}{\Delta_{\rm out}}
\newcommand{\withinclustermean}{\Delta^{(\vmu)}_{\rm in}}
\newcommand{\betweenclustermean}{\Delta^{(\vmu)}_{\rm out}}
\newcommand{\withinclustercov}{\Delta^{(\cov)}_{\rm in}}
\newcommand{\betweenclustercov}{\Delta^{(\cov)}_{\rm out}}
\newcommand{\numclust}{{n_{\rm c}}}
\newcommand{\radius}{R}
\newcommand{\ballt}{B}
\newcommand{\ball}[1]{B^{(#1)}}
\newcommand{\ballc}[1]{V^{(#1)}}
\newcommand{\hballt}{\wh{\ballt}}
\newcommand{\hballc}[1]{\wh{V}^{(#1)}}
\newcommand{\hball}[1]{\wh{B}^{(#1)}}
\newcommand{\htheta}{\widehat{\theta}}
\newcommand{\thetadiff}{\omega}
\newcommand{\lscore}{\wh{\vec s}}           
\newcommand{\tvdistance}{\mathrm{TV}}
\newcommand{\kldistance}{\mathrm{KL}}
\newcommand{\noisymixture}[1]{\mathcal M_{#1}}
\newcommand{\meanerr}{\upsilon_{\rm mean}}
\newcommand{\hatPi}{\wh{\vec \Pi}}
\newcommand{\resid}{\vec{\zeta}}
\newcommand{\bC}{\vec{C}}
\newcommand{\calW}{\mathcal{W}}
\newcommand{\hatmu}{\wh{\vmu}}
\newcommand{\covthres}{\Delta}
\newcommand{\Sfar}{S_{\rm far}}
\newcommand{\Sclose}{S_{\rm close}}
\newcommand{\w}{\vec{w}}
\newcommand{\hatQ}{\wh{\bQ}}
\newcommand{\hbeta}{\wh{\beta}}
\newcommand{\classify}{\mathsf{c}}
\newcommand{\covsep}{\Delta_{\rm covsep}}
\newcommand{\epscore}{\epsilon_{\mathrm{score} }}
\newcommand{\coverr}{\upsilon_{\rm cov}}
\newcommand{\precest}{\wh{\vec{K}}}
\newcommand{\bU}{\vec{U}}
\newcommand{\bLam}{\vec{\Lambda}}
\newcommand{\dpar}{D_p}
\newcommand{\score}{\vec s}
\newcommand{\epspart}{\eps_{\mathrm{part}}}
\title{Learning general Gaussian mixtures with efficient score matching}
\author{
Sitan Chen \thanks{Email: \texttt{sitan@seas.harvard.edu}.}\\
Harvard SEAS
\and
Vasilis Kontonis \thanks{Email: \texttt{vasilis@cs.utexas.edu}, supported by the NSF AI Institute for Foundations of Machine Learning (IFML).}\\
UT Austin 
\and
Kulin Shah \thanks{Email: \texttt{kulinshah@utexas.edu}, supported by the NSF AI Institute for Foundations of Machine Learning (IFML).
} \\
UT Austin
}
\date{April 29, 2024}
\begin{document}

\setcounter{page}{0}
\thispagestyle{empty}

\maketitle

\begin{abstract}
We study the problem of learning mixtures of $k$ Gaussians in $d$ dimensions. 
We make no separation assumptions on the underlying mixture components: 
we only require that the covariance matrices have bounded condition number 
and that the means and covariances lie in a ball of bounded radius. We give an algorithm that draws $d^{\poly(k/\eps)}$ samples from the target mixture, 
runs in sample-polynomial time, and constructs a sampler whose output distribution is $\eps$-close from the unknown mixture in total variation.  Prior works for this problem either (i) required exponential runtime in the dimension $d$, (ii) placed strong assumptions on the instance (e.g., spherical covariances or clusterability), or (iii) had doubly exponential dependence on the number of components $k$.

Our approach departs from commonly used techniques for this problem like the method of moments. Instead, we leverage a recently developed reduction, based on diffusion models, from distribution learning to a supervised learning task called score matching. We give an algorithm for the latter by proving a structural result showing that the score function of a Gaussian mixture can be approximated by a piecewise-polynomial function, and there is an efficient algorithm for finding it. To our knowledge, this is the first example of diffusion models achieving a state-of-the-art theoretical guarantee for an unsupervised learning task.

\end{abstract}

\thispagestyle{empty}

\clearpage

\thispagestyle{empty}

\tableofcontents

\thispagestyle{empty}

\clearpage

\setcounter{page}{1}

\section{Introduction}

Gaussian mixture models (GMMs) are
one of the most well-studied models in statistics, 
with a history going back to the work of Pearson~\cite{pearson1894contributions}. 
Its computational study was initiated in the work of Dasgupta~\cite{Dasgupta:99}; since then,  it has been one of the prototypical non-convex learning problems that 
 has attracted significant attention from the theoretical computer science community
\cite{VempalaWang:02,KSV:05,BV:08,kalai2010efficiently,moitra2010settling, belkin2015polynomial,hopkins2018mixture,kothari2018robust,diakonikolas2020robustly,bakshi2020outlier,diakonikolas2020small,liu2022clustering,liu2023robustly,bakshi2022robustly,buhai2023beyond}.
 
\paragraph{Learning without separation}
We focus on learning even when parameter recovery is impossible, i.e., without assuming that the 
components of the mixture are separated.  In this setting, the learner has to produce a hypothesis that is close to the target GMM in total variation distance \cite{FOS:05focs,moitra2010settling, CDSS13, SOAJ14, DK14, DKKLMS16,  acharya2017sample, li2017robust, ashtiani2018nearly, diakonikolas2020small,bakshi2022robustly,buhai2023beyond}.

Statistically, this problem is essentially completely understood: in order to approximate the target mixture of  $k$ Gaussians in $\eps$ total variation distance, it is known that $\wt{\Theta}(k d^2/\eps^2)$ samples are sufficient and also necessary  \cite{ashtiani2018nearly}.  Even though statistically almost optimal, the algorithm of \cite{ashtiani2018nearly} has a runtime scaling exponentially in $\tilde{O}(kd^2)$. This exponential dependence on the dimension is due to the fact that their algorithm is based on brute-force enumeration.

Despite significant efforts, the computational aspects of the problem are still far from
well-understood.  The work \cite{SOAJ14} provided an algorithm for learning 
mixtures of spherical (i.e., with covariance matrices that are multiples of the identity $\Id$)
with  $\poly(d k /\eps)$ sample complexity and $\poly(d) (k/\eps)^{\poly(k)}$ runtime.
For spherical Gaussians, the runtime was more recently improved to quasi-polynomial in $k$:
in \cite{diakonikolas2020small}, a runtime and sample complexity of $\poly(d)(k/\eps)^{\log^2 k}$ was given.  

For GMMs with general covariance matrices, the focus of the present work, the best-known runtime is due to 
\cite{bakshi2022robustly} and is doubly exponential in the number of components $k$, i.e.,
$(d/\eps)^k {{(1/\eps)}^k}^{k^2}$.  To the best of our knowledge, this doubly 
exponential dependency on $k$ is implicit in all works on learning general GMMs using the method of moments \cite{moitra2010settling,bakshi2020outlier,diakonikolas2020robustly,liu2023robustly} (see \Cref{sec:doubly-exponential} for intuition for where this comes from). 

In particular, for any $k = \Omega(\sqrt{\log d})$, previously there was no algorithm that ran in time faster than the exponential-time algorithm of~\cite{ashtiani2018nearly}, even for constant $\epsilon$!

On the negative side, there is strong evidence in the form of statistical query (SQ) \cite{diakonikolas2017statistical} and lattice-based \cite{bruna2021continuous,gupte2022continuous} hardness that runtime which scales super-polynomially in the number of components $k$ is necessary.  More precisely, the SQ lower bound of \cite{diakonikolas2017statistical} implies that even to learn within constant
accuracy $\eps>0$, $d^{\Omega(k)}$ runtime is required.  Our work aims to bridge the gaps 
between the best-known upper and lower bounds for learning GMMs \--- we ask the following fundamental question.

\begin{center}
\emph{ What is the best possible runtime for learning general Gaussian mixture models with $k$ components?
Can we improve over the doubly exponential runtime of moment-based methods? 
}
\end{center}

\noindent We make significant progress towards answering this question. 
Under mild ``condition number'' bounds on the mixture components \--- and without assuming the components are separated \--- we give an algorithm that achieves runtime $d^{\poly(k)}$ for any constant accuracy $\eps > 0$.  Thus, for well-conditioned mixtures, our result 
improves {\em exponentially} over the best-known runtime of \cite{bakshi2022robustly} in the regime where $k = \Omega(\sqrt{\log d})$.

\paragraph{Diffusion models and learning}

Interestingly, our algorithm \emph{does not rely on matching moments} with
the target mixture. Instead, we draw inspiration from the recent literature on proving theoretical guarantees for diffusion models~\cite{debetal2021scorebased,BloMroRak22genmodel,chen2022improved, DeB22diffusion, leelutan22sgmpoly, liu2022let, Pid22sgm, WibYan22sgm, chen2023sampling, chen2023restoration, lee2023convergence,li2023towards,benton2023error,chen2023probability,benton2023linear, conforti2023score, wibisono2024optimal}, the state-of-the-art method in practice for audio and image generation~\cite{sohl2015deep,dhariwal2021diffusion,song2020score,ho2020denoising}. These works culminated in the key finding that for any distribution with bounded second moment, there is a reduction from distribution learning to a supervised learning task called \emph{score matching}. Roughly speaking, this task is defined as follows: given a sample from the target distribution that has been corrupted by some Gaussian noise, predict the noise that was used to generate the sample (see \Cref{sec:diffusion_basics} for an exposition of these concepts). Despite the striking level of generality with which this reduction holds, these works fell short of giving ``end-to-end'' learning guarantees as they didn't address how to actually perform score matching algorithmically.

Our main technical contribution is an algorithm for score matching for GMMs. This relies on a novel structural result showing that the score function of a GMM can be well-approximated by a piecewise polynomial, together with an efficient procedure to recover the polynomial pieces.

While diffusion models have achieved remarkable empirical successes~\cite{betker2023improving}, to our knowledge our guarantee marks the first example of an unsupervised learning problem where diffusion models can even yield improved \emph{theoretical} guarantees. Our techniques are a synthesis of this modern algorithmic technique on the one hand and classic ideas from theoretical computer science like low-degree approximation on the other. We leave it as an intriguing open question to identify other problems for which this marriage of toolkits could prove useful.

\subsection{Our results and techniques}
We first give the formal definition of the well-conditioned GMMs that we consider in this work.
Roughly, we require that the covariance matrices of the components are well-conditioned in the sense that their eigenvalues are upper and lower bounded and that the means and covariances lie within an $\ell_2$ ball 
of bounded radius.

\begin{definition}[Well-Conditioned Gaussian Mixture]
\label{def:well-conditioned-mixture}
Let $\normal_1,\ldots, \normal_k$ be $d$-dimensional Gaussian distributions 
with means $\vec \mu_1,\ldots, \vec \mu_k$ and covariances $\cov_1,\ldots, \cov_k$.  We denote by $\mathcal M$ the mixture of these distributions with weights $\lambda_1,\ldots, \lambda_k$.  We will say that $\mathcal M$ is $\tau$-well-conditioned if
for some $\alpha \leq 1 \leq \beta$ and $R>0$ with $(\beta/\alpha) \log R \leq \tau$, it holds that: for all $i$, $\alpha ~ \Id \preceq \cov_i 
\preceq \beta ~ \Id$ and $\|\vec \mu_i \|_2 + \|\cov_i - \Id\|_F \leq R$.
When we want to distinguish between parameters we will also say that $\mathcal M$ is 
$(\alpha, \beta, R)$-well-conditioned.
Moreover, we denote by $\lambdamin $ the minimum weight $\min_{i \in [k]} \lambda_i$.
\end{definition}

\noindent We now present our main result: an efficient algorithm for learning well-conditioned GMMs.

\begin{theorem}[Informal \--- Learning Gaussian mixtures, see \Cref{thm:generate-sample}]\label{thm:main_informal}  
Let $\mathcal M$ be a $\tau$-well-conditioned mixture of $k$ Gaussians in $d$
dimensions, and suppose $\lambdamin \ge 1/\poly(k)$.
There exists an algorithm that draws $N = d^{\poly(k\tau/\eps)}$ samples from $\mathcal M$, runs in sample-polynomial time, and constructs a sampling oracle
whose output distribution is $\eps$-close to $\mathcal M$ in total variation.  To generate a new sample the oracle requires $\poly(N, d)$ time.
\end{theorem}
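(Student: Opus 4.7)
My strategy is to reduce density estimation to score matching using the now-standard diffusion model framework, and then solve score matching by showing that the score function of a Gaussian mixture admits a piecewise polynomial approximation of low degree and small number of pieces, which can be fit from samples via localized polynomial regression.

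\textbf{Step 1: Reduction to score matching.} By the convergence analyses for denoising diffusion sampling (via Girsanov / data-processing bounds for the reverse-time OU SDE), it suffices to construct, for each noise level $t$ in a grid over $[t_{\min}, T]$ with $T = O(\log(R/\eps))$ and $t_{\min}$ polynomially small, an estimate $\widehat{\vec s}_t$ of the score $\vec s_t(\vec x) \eqdef \nabla \log p_t(\vec x)$ with small $L^2(p_t)$ error. Since $\mathcal M$ is a mixture of $k$ Gaussians, so is $p_t$ (with means $e^{-t}\vec\mu_i$ and covariances $e^{-2t}\cov_i+(1-e^{-2t})\Id$, still $\tau$-well-conditioned up to constants depending on $T$), and the score takes the explicit posterior form
\[
\vec s_t(\vec x) \;=\; \sum_{i\in[k]} \pi_i(\vec x)\, \cov_{i,t}^{-1}(\vec \mu_{i,t}-\vec x),
\qquad \pi_i(\vec x) \propto \lambda_i\,\mathcal{N}(\vec x;\vec\mu_{i,t},\cov_{i,t}).
\]

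\textbf{Step 2: Structural result — piecewise polynomial approximation of $\vec s_t$.} The plan is to cover the bulk of $p_t$ by a collection of $\mc{N}$ balls of carefully chosen radius, and to show that on each ball $\vec s_t$ is approximated to $L^2$ error $\eps$ by a polynomial of degree $D = \poly(k\tau/\eps)$. Tail arguments (using the condition-number assumption) discard a set of probability $\ll \eps$, reducing to a bounded-radius region. Within each ball only a few components contribute non-negligibly to the posterior softmax, and because the log-odds between any two components is a quadratic in $\vec x$ whose range on the ball is controlled by $(\beta/\alpha)\cdot R$, the softmax weights $\pi_i$ are analytic with controlled derivatives; Chebyshev / Jackson-style polynomial approximation then gives degree $\poly(k\tau/\eps)$. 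Choosing $\mc{N} = d^{\poly(k\tau/\eps)}$ balls in an appropriate (noise-scaled) metric ensures uniform local smoothness across pieces, and the ball indicator of a sample $\vec x$ can be decoded from $\vec x$ itself without knowing the mixture parameters.

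\textbf{Step 3: Efficient piecewise polynomial regression.} To \emph{find} the approximation, exploit Tweedie's identity: the score at noise level $t$ is the Bayes-optimal predictor of a denoising target computable from a fresh sample. Draw $N = d^{\poly(k\tau/\eps)}$ noisy samples, partition them into the geometric pieces from Step 2, and on each piece solve an ordinary least-squares polynomial regression of degree $D$ (which has $d^D$ parameters). Standard generalization bounds for polynomial regression against a realizable-up-to-$\eps$ piecewise polynomial target yield the claimed $L^2(p_t)$ score error. Finally, plug these estimates into the discretized reverse SDE; the per-step discretization analyses from the diffusion literature convert the score error into the stated TV bound, and each generated sample takes $\poly(N,d)$ time to produce.

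\textbf{Main obstacle.} The decisive step is the structural claim in Step 2 without any separation assumption. In the separated regime one could argue that a single component dominates on each piece and the score is essentially linear, but here transition regions between overlapping components are generic, and the posterior weights $\pi_i$ can change on length scales much smaller than the typical distance between means. The nontrivial content is to show that even in such overlap regions, the condition-number bound $(\beta/\alpha)\log R \le \tau$ limits how fast the softmax can vary, so a polynomial of degree $\poly(k\tau/\eps)$ really does suffice on each small enough piece, and to ensure that the total count $\mc N\cdot d^D$ of effective parameters does not blow up past $d^{\poly(k\tau/\eps)}$. Getting both the local degree and the global piece count into the right range simultaneously is where the delicate balancing occurs.
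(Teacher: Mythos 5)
Your Steps 1 and 3 (diffusion reduction and denoising regression) match the paper's high-level pipeline, but Step 2 — the structural claim that drives the whole argument — takes a route that does not work as stated, and the paper's actual mechanism is different in an essential way.

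\paragraph{Why the fine geometric cover fails.} You propose covering the bulk of $p_t$ by $\mathcal{N} = d^{\poly(k\tau/\eps)}$ balls and using Jackson-type approximation ball-by-ball, with the key claim that within a ball the relevant log-odds $\tfrac12\|\x-\vmu_j\|^2_{\cov_j^{-1}} - \tfrac12\|\x-\vmu_i\|^2_{\cov_i^{-1}}$ have range $O((\beta/\alpha)R)$. Carry out the calculation: the gradient of this log-odds at a typical $\x$ (at distance $\Theta(\sqrt{d\beta})$ from the means) has norm $\Theta(\sqrt{d\beta}/\alpha)$ in the worst case — and even for nearby components the gradient has a term $\|(\cov_j^{-1}-\cov_i^{-1})(\x-\vmu_j)\|$ that concentrates at scale $\sqrt{\beta}\,\withincluster/\alpha^2$. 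If you want the log-odds to vary by $O(1)$ over a ball you need radius $r = O(1)$, and covering the effective support (a region of linear size $\Theta(\sqrt{d\beta})$ in $\R^d$) by radius-$O(1)$ balls takes $\exp(\Omega(d))$ of them, not $d^{\poly(k\tau/\eps)}$. Conversely, if you force $\mathcal{N} = d^{\poly(k\tau/\eps)}$ you are forced (for large $d$) into balls of radius comparable to the entire support, on which the naive Jackson bound gives degree $\poly(d/\eps)$. The two constraints cannot be satisfied simultaneously, which is precisely the tension your ``Main obstacle'' paragraph flags but does not resolve. There is also a second issue: if you split $N$ samples into $\mathcal{N}\approx N$ pieces, each piece receives $O(1)$ samples, which is not enough to fit a degree-$D$ polynomial in $d$ variables.

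\paragraph{What the paper actually does.} The partition has at most $k$ pieces, not $d^{\poly(k\tau/\eps)}$ pieces. These pieces are clusters $U_1,\ldots,U_{\numclust}$ of \emph{components} (not balls in $\x$-space), obtained by a data-dependent crude clustering: PCA on low-order moment tensors gives coarse parameter estimates, and log-likelihood-ratio tests define the clustering function $\classify(\x)$. Because the components in different clusters are $\betweencluster = \poly(k\tau\log(1/\eps))$-separated, removing out-of-cluster terms from the score introduces only exponentially small error (\Cref{prop:score-simplification}). The decisive trick for getting degree $\poly(k\tau/\eps)$ \emph{without} a fine cover is then a normalization of the softmax inputs: instead of feeding the raw Mahalanobis distances $\|\x-\vmu_j\|^2_{\cov_j^{-1}}$ (which scale as $d$), one feeds the centered differences $\|\x-\vmu_j\|^2_{\cov_j^{-1}} - \|\x-\vmu_1\|^2_{\cov_1^{-1}} - \iprod{\cov_1,\cov_j^{-1}-\cov_1^{-1}}$, which concentrate at scale $\poly(\tau\withincluster)$ independent of $d$ for $\x$ drawn from the sub-mixture (\Cref{lem:poly-approx-score-bounded-interval}). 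This is what lets one apply Jackson on a fixed, $d$-independent input range and get degree $\poly(k\tau/\eps)$ with only $k$ pieces. Relatedly, your claim that the piece index ``can be decoded from $\x$ itself without knowing the mixture parameters'' is not how the paper proceeds: the classifier $\classify(\x)$ depends on the estimated parameters and on brute-forced partitions and thresholds, with the final choice selected by validation loss.

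In short, the missing idea is that the partition must be in \emph{component} space (clusters of components determined by crude parameter estimates), not in $\x$-space, and that what tames the softmax is centering by one component's quadratic form plus a trace correction, not shrinking the domain. Without these two ingredients the piece count and the polynomial degree cannot both be made $d^{O(1)\cdot\poly(k\tau/\eps)}$ simultaneously.
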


\noindent 
To our knowledge, this is the first example of an unsupervised learning problem for which a diffusion-based sampler outperforms existing state-of-the-art theoretical approaches~\cite{moitra2010settling,bakshi2022robustly}.   In particular, when the number of components $k$ is super-constant, i.e., $k = \Omega (\sqrt{\log d})$, we obtain a quasipolynomial $2^{\poly(\log d)}$ runtime, improving over the exponential $2^{\poly(d)}$ runtime following from \cite{bakshi2022robustly}. Moreover, we remark that using moment methods for Gaussian mixtures, e.g.,~\cite{bakshi2022robustly}, results in a doubly exponential runtime in $k$ even for well-conditioned mixtures, see \Cref{sec:doubly-exponential}. 
Finally, our improvements hold for any $\epsilon = 1/\polylog(d)$.  In fact, prior to our work, nothing better than doubly 
exponential in $k$ was known even for constant accuracy $\epsilon = \Omega(1)$.  We leave investigating whether the dependency on $1/\epsilon$ can be improved as an interesting question for future work.

\paragraph{Learning mixtures of degenerate Gaussians.} 
As stated, \Cref{thm:main_informal} does not appear to give anything for mixtures with covariances that are not full rank. This includes, for instance, mixtures of linear regressions and mixtures of linear subspaces~\cite{chen2020learning,diakonikolas2020small}. It turns out that we can still give a learning guarantees in this case, though in \emph{Wasserstein distance} rather than total variation, see \Cref{remark:degenerate}.

\subsection{Related work}

\paragraph{Learning mixtures of Gaussians} 
A thorough literature review on learning Gaussian mixtures is well outside the scope of this work. In addition to the sampling of works~\cite{dasgupta1999learning,FOS:05focs,moitra2010settling, belkin2015polynomial, CDSS13, SOAJ14, DK14, DKKLMS16,  acharya2017sample, li2017robust, ashtiani2018nearly,diakonikolas2020small,bakshi2020outlier,diakonikolas2020robustly,bakshi2022robustly,buhai2023beyond} mentioned in the introduction which deal with parameter estimation or distribution learning, we also mention a related line of work on \emph{clustering} Gaussian mixtures. This is a setting where there is a large enough separation between components that one can reliably identify which component generated a given sample. Some representative works in this line include~\cite{vempala2004spectral,brubaker2008isotropic,regev2017learning,hopkins2018mixture,diakonikolas2018list,kothari2018robust,liu2022clustering}.

Similar in spirit to the present work is the interesting work of~\cite{yan2023learning} which also eschews the method of moments in favor of a variational method. Whereas we use diffusion models, they use a certain interacting particle system that approximates a Wasserstein gradient flow. They focus on the case of Gaussian mixtures with identity covariance components. While they prove that the gradient flow itself converges in an asymptotic sense and numerically demonstrate the effectiveness of their approach, they do not prove non-asymptotic, end-to-end learning guarantees like in the present work.

\paragraph{General theory for diffusion models} Several works have provided convergence guarantees for DDPMs and variants~\cite{debetal2021scorebased,BloMroRak22genmodel,chen2022improved, DeB22diffusion, leelutan22sgmpoly, liu2022let, Pid22sgm, WibYan22sgm, chen2023sampling, chen2023restoration, lee2023convergence,li2023towards,benton2023error,chen2023probability,benton2023linear}. These works assume the existence of an oracle for accurate score estimation and show that diffusion models can learn essentially any distribution over $\R^d$ (e.g. \cite{chen2023sampling,lee2023convergence} show this for arbitrary compactly supported distributions, and \cite{chen2022improved,benton2023linear} extended this to arbitrary distributions with finite second moment). Recently, \cite{koehler2023sampling} showed that Langevin diffusion with data-dependent initialization can also learn multimodal distributions like mixtures of Gaussians, provided one can perform score matching. In another sampling context, \cite{anari2023parallel,anari2024fast} gave fast parallel algorithms based on a similar diffusion-style sampler for various problems like Eulerian tours and asymmetric determinantal point processes.

\paragraph{End-to-end applications of diffusions} 
In this work, we use a diffusion process as a tool to obtain \emph{end-to-end efficient learning algorithms} and we are not making ``black-box'' assumptions about the computational or the statistical complexity of learning the score function.
The recent works \cite{shah2023learning, cui2023analysis} also consider learning Gaussian mixtures, specifically with well-separated identity covariance components, using diffusions and show in different settings that gradient descent can provably perform score matching. 
The results of \cite{shah2023learning, cui2023analysis} only apply to the special
case of learning spherical Gaussian mixtures --- a setting that is already known to admit
efficient learning algorithms. The focus of those works is mainly in understanding why gradient descent for score matching can achieve guarantees similar to the prior known results while our goal in this work is to provide new efficient algorithms for general mixtures that are not captured by prior works.

Several recent results use diffusion models to obtain new \emph{sampling} algorithms
with a focus on graphical models.  This is a different setting than the one considered in the present work: instead of being given samples from the target distribution, one is given a Hamiltonian describing some graphical model, or some combinatorial object such that one would like to sample certain structures defined on it.  
For example, \cite{el2022sampling,montanari2023posterior,alaoui2023sampling, montanari2023sampling, huang2024sampling} have used Eldan's stochastic localization~\cite{eldan2013thin,eldan2020taming} method to give sampling algorithms for certain distributions arising in statistical physics.
These works provide an algorithmic implementation for the drift in the diffusion process, which is defined by the score, using approximate message passing and natural gradient descent (see also \cite{celentano2022sudakov}). 

Finally, in a concurrent and independent work \cite{gatmiry2024learning} 
the authors give diffusion-based algorithms for the special case of learning spherical (identity covariance) Gaussian mixtures, qualitatively matching the best-known results by \cite{diakonikolas2020small}.  
Our focus here is different: we learn Gaussian
mixtures with \emph{general, well-conditioned} covariance matrices
and improve over the prior works \cite{moitra2010settling, bakshi2022robustly} yielding exponential savings in runtime 
when the number of components $k$ is not constant, i.e., $k = \Omega(\polylog d)$.

\paragraph{Statistical guarantees for score matching} 
Several recent works have investigated the \emph{statistical} complexity of score matching. \cite{koehler2022statistical} showed a connection between the statistical efficiency of score matching and functional inequalities satisfied by the data distribution. \cite{pabbaraju2024provable} studied score matching for learning log-polynomial distributions. Like in~\cite{koehler2022statistical}, they focus on the score function of the base distribution and not noisy versions thereof; as the authors note, in this case, score matching is computationally tractable as it is exactly an instance of polynomial regression, and their focus was on proving that the statistical efficiency of score matching here is comparable to that of maximum likelihood estimation.

Recently, \cite{wibisono2024optimal} established the optimal rate for score estimation of nonparametric distributions in high dimensions. \cite{chen2023score,oko2023diffusion} studied the sample complexity of score matching for nonparametric distributions specifically using a neural network. \cite{mei2023deep} bounded the sample complexity of learning certain graphical models using diffusion models by arguing that neural network layers can implement iterations of certain variational inference algorithms. We emphasize once more that these guarantees are all statistical in nature rather than algorithmic.

\section{Technical overview}

In this section, we provide an overview of our approach, sketches for the main
arguments, and pointers to the relevant sections for more details.

\subsection{Learning via DDPM}
\label{sec:diffusion_basics} 

Our algorithm is based on a denoising diffusion probabilistic model (DDPM) \cite{sohl2015deep, song2019generative, ho2020denoising}. Here we give a self-contained exposition of the basic tools from this literature (see \Cref{sec:diffusion_details} for details); readers who are familiar with diffusion models may safely skip to \Cref{inf-thm:learning-the-score} below.

The most common \cite{song2020score, montanari2023sampling} approach is to consider the Ornstein-Uhlenbeck process, which
given some distribution $q_0$ corresponds to the SDE $d \x_t = -\x_t d t + \sqrt{2} d \w_t$, with $\x_0 \sim q_0$.  The distribution $q_0$ here corresponds to the target distribution that we want to learn to generate samples from.  In what follows, we use $q_t$ to denote the law of the OU process at time $t$.  It holds that $q_t$ converges to the standard normal distribution and in particular
at time $t$ we have that 
\begin{equation}
\label{inf-eq:OU-equation}
  \x_t = e^{-t} \x_0 + ~ \sqrt{1- e^{-2t}} ~ \vec z_t,   ~~~\text{for}~~~ \vec x_0 \sim q_0, ~~ \vec z_t \sim \normal\,.
\end{equation}
Given some terminal timestep $T$ of the forward OU process with distribution $q_T$, the following reverse process perfectly transforms noisy distribution $q_T$ (which is close to standard Gaussian) to the data distribution $q_0$:
\[
    \mathrm{d} \x^\leftarrow_t = \{\x^\leftarrow_t + 2\nabla_{\x} \log q_{T-t}(\x^\leftarrow_t)\}\, \mathrm{d}t + \sqrt{2}\,\mathrm{d} \w_t \;\;\;\text{with} \;\;\; \x^\leftarrow_0 \sim q_T\,. \]
In this reverse process, the iterate $\x^\leftarrow_t$ is distributed according to $q_{T - t}$ for every $t\in[0, T]$, so that the final iterate $\x^\leftarrow_T$ is distributed according to the data distribution $q_0$.  To be able to generate samples using the reverse SDE we need
access to the \emph{score function} $\nabla_{\x} \log q_t(\x)$.
Given approximate oracle access to the score function of the target density $q_0$ (for us this is the mixture of Gaussians) at close enough noise levels, we can discretize
the reverse SDE that starts with a sample from the Gaussian noise and generates a sample whose distribution is close to the target density.
In particular, for timesteps $t_0,\ldots, t_N$, given estimates $\wh{\vec s}(\x, T-t_\ell)$ 
we will be using the following update rule to generate a sample (sometimes called the exponential integrator scheme as it replaces the time-dependent score term in the reverse SDE with the score
approximation at time-step $T-t_{\ell}$).  More precisely, at the $\ell$-th iteration,
we sample $\vec z_\ell \sim \normal(\vec 0, \Id)$ and update our guess as follows:
 \begin{equation}
 \label{eq:exponential-integrator}
  \vec y_{\ell+1} \gets ~ \rho_\ell ~ \vec y_{\ell} + 2 (\rho_\ell - 1)
        ~ \wh{\vec s}(\vec y_{\ell}, T-t_{\ell}) + \sqrt{\rho_{\ell}^2 - 1} ~ \vec z_\ell, 
\end{equation}
where $\rho_\ell$ is an appropriately chosen ``step-size'' parameter, see \Cref{alg:generate-sample} for more details.   Several recent works  (see, e.g., \cite{chen2023sampling,lee2023convergence,chen2022improved,benton2023linear}) have studied the convergence of the above (discretized) reverse SDE to the data distribution under black-box assumptions on the quality
    of the score estimates $\wh{\vec s}(\cdot, \cdot)$.  We will be using a recent result from \cite{benton2023linear} (see \Cref{lem:diffusion-convergence}) that places minimal assumptions on the data distribution and gives fast convergence rates.
    More precisely, for the case of well-conditioned Gaussian mixtures, it implies that if the score functions are approximated within $L_2$ error roughly 
    $\poly(\eps/\tau )$, then iterating \Cref{eq:exponential-integrator} will produce a sample within total variation distance $\eps$
    from the target Gaussian mixture after $\poly(d\tau /\eps)$ iterations.

\paragraph{Learning the score}
 We have now reduced the original sampling problem to roughly $N = \poly(d \tau/\eps ) $ regression problems to get the approximate 
 score functions at times $t_1, \ldots, t_N$.  
More precisely for every $t\in \{t_1,\ldots, t_N\}$ we would like to use 
some expressive enough class of functions $\mathcal G$ and 
solve the following minimization (score-matching) problem:
$ \min_{\vec g \in \mathcal G} \;\; \E_{\x_0, \vec z_t} [ \| \vec g(\x_t) - \nabla_{\x} \log q_t(\x_t) \|^2_2 ]$
where $\x_t$ is generated by adding the Gaussian noise $\vec z_t$ to
the sample $\x_0 \sim \mathcal M$, $\x_t = e^{-t} \x_0 + \sqrt{1- e^{-2t}} \vec z_t$.  Since we have sample access to the unknown mixture $\mathcal M$, we can
generate i.i.d.\ copies of $\x_t$ to solve the regression task.  However, the target score function at noise-level $t$ is not available (as it depends on the
density of the unknown mixture).  A standard workaround~\cite{hyvarinen2005estimation,vincent2011connection,ho2020denoising, song2020score} 
is the denoising 
approach where conditional on the observed $\x_t$ we try to predict the added
noise $\vec z_t$.  
It is a well-known consequence of Gaussian integration by parts (see e.g. Appendix A of~\cite{chen2023sampling} for a proof) that the following regression task is equivalent to the original score-matching
problem with the benefit that it does not require knowledge of the score function of
the distribution $q_t$ (that corresponds to the distribution of $\x_t$):
\begin{equation}
\label{eq:diffusion-loss-tech-overview}
    \min_{\vec g \in \mathcal G} \;\; L_t(\vec g) = \min_{\vec g\in\mathcal{G}} \;\; \E_{\x_0, \vec z_t} \Bigl[ \Big\| \vec g(\x_t) + \frac{\vec z_t}{ \sqrt{1 - \exp(-2t)} } \Big\|^2_2 \Bigr]
\end{equation}

Our main technical contribution is an efficient algorithm that uses the above denoising formulation of the score-matching problem and yields an approximation
to the score function $\lscore (\x_t)$.  
\begin{proposition}[Informal - Efficiently Learning the Score - \Cref{thm:learning-score-guarantee-fixed-t}]
\label{inf-thm:learning-the-score}
Let $\mathcal M$ be a $\tau$-well-conditioned mixture. 
Then, for any $\eps> 0$ and noise scale $t \geq \poly(\eps/\tau)$, there exists an algorithm that draws $d^{\poly(k \tau/\eps)}$ samples from $\mathcal M$, 
runs in sample-polynomial time, and returns a score function $\lscore(\cdot)$ such that with high probability it holds 
\(
    \E_{\x_t \sim \noisymixture{t}} \big[ \| \lscore(\x_t) - \nabla_{\x} \log q_t(\x) \|^2 \big] \leq \eps.
    \)
\end{proposition}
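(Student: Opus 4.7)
The plan is to exploit the exact closed form of the score of the noised mixture. Since the Ornstein--Uhlenbeck process preserves Gaussian mixtures, $\noisymixture{t}$ is itself a mixture of Gaussians $\normal(e^{-t}\vmu_i, \cov_{t,i})$ with $\cov_{t,i} \coloneqq e^{-2t}\cov_i + (1-e^{-2t})\Id$ and the same weights $\lambda_i$, so
$$\nabla_{\x}\log q_t(\x) \;=\; -\sum_{i=1}^k w_i(\x)\,\cov_{t,i}^{-1}(\x - e^{-t}\vmu_i),$$
where $w_i(\x)$ is the posterior probability that $\x$ was generated by component $i$. This is a softmax-weighted combination of $k$ linear functions, and for $t \geq \poly(\eps/\tau)$ the variance floor $(1-e^{-2t})\Id$ together with the $\tau$-well-conditioning makes every $\cov_{t,i}$ uniformly well-conditioned with bounded spectrum. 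My strategy is to (i) prove a structural result that this score is well approximated in $L^2(\noisymixture{t})$ by a piecewise low-degree polynomial with $\poly(k)$ pieces, (ii) identify the pieces algorithmically from samples, and (iii) learn the polynomial on each piece by polynomial regression on the denoising loss \eqref{eq:diffusion-loss-tech-overview}.

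For (i), I would partition (most of the mass of) $\R^d$ into balls $\ball{1},\ldots,\ball{\nc}$ with $\nc = \poly(k)$ centered at the means of \emph{clusters} of nearby mixture components, of controlled radius on the scale $\sqrt{1-e^{-2t}}$. Within a single ball, the log-sum-exp defining $\log q_t$ takes only arguments of bounded magnitude, so a standard truncated Taylor or Chebyshev approximation of $\log\sum_i \exp(\cdot)$ yields a polynomial $\vec p^{(j)}$ of degree $D = \poly(k\tau/\eps)$ whose gradient matches the true score on $\ball{j}$ up to error $\eps$. Components whose means lie \emph{outside} the cluster contribute only exponentially small posterior inside $\ball{j}$ and can be dropped with negligible effect; the mass in the boundary region between pieces is small by a covering argument that exploits the subgaussian tails of the components of $\noisymixture{t}$.

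For (ii) and (iii), I would first run a clustering/density-based procedure on $\poly(d,k,1/\eps)$ samples from $\noisymixture{t}$ to estimate cluster centers $\hatmu^{(j)}$ and corresponding balls $\hball{j}$; the well-conditioning together with the lower bound on $\lambdamin$ should make this tractable via iterative peeling of high-density regions. Then for each $j$ I would solve
$$\wh{\vec p}^{(j)} \;\in\; \argmin_{\deg \vec p \leq D} \;\frac{1}{|S_j|}\sum_{(\x_t,\vec z_t) \in S_j}\Big\|\vec p(\x_t) + \frac{\vec z_t}{\sqrt{1-e^{-2t}}}\Big\|_2^2,$$
where $S_j$ is the set of noised sample pairs whose $\x_t$ falls in $\hball{j}$. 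Polynomials of degree $\leq D$ in $\R^d$ form a linear space of dimension $\binom{d+D}{D} = d^{O(D)} = d^{\poly(k\tau/\eps)}$, so standard uniform-convergence bounds for linear least squares give $L^2$ error $O(\eps)$ on each piece from $d^{\poly(k\tau/\eps)}$ samples. Stitching the $\wh{\vec p}^{(j)}$'s along the estimated partition yields $\lscore$, and summing the per-piece regression errors together with the structural approximation bound produces the claimed overall $L^2$ guarantee.

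The main obstacle I anticipate is the structural result, specifically handling clusters whose components genuinely overlap so that no single one dominates anywhere in the piece. There the polynomial must approximate a softmax combination of several linear scores, and the degree required to approximate softmax blows up with the range of its arguments. I expect to resolve this via a multi-scale clustering argument: coarsely, split the components into clusters whose inter-cluster mean separations are large on the scale of $\sqrt{1-e^{-2t}}$ so that cross-cluster posteriors are exponentially small and the clusters' contributions to $\log q_t$ decouple; within a cluster, ensure the piece radius is small enough (again measured against the noise scale) that the argument range of the softmax is $\poly(k\tau/\eps)$, which suffices for degree-$\poly(k\tau/\eps)$ polynomial approximation. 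A secondary issue is that the empirical balls $\hball{j}$ only approximate the true $\ball{j}$, and the $L^2$ error from their symmetric differences must be controlled; here the well-conditioning and the variance floor at noise level $t$ again supply the Lipschitzness needed to bound this contribution.
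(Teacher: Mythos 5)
Your high-level strategy (reduce to piecewise polynomial regression with pieces determined by a crude clustering, with a score-simplification step to drop far-away components) matches the paper's, but there is a genuine gap in your structural step that the paper's proof is built around.

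You propose pieces that are balls $\ball{j}$ of radius on the scale $\sqrt{1-e^{-2t}}$, and argue that within such a ball the log-sum-exp takes only arguments of bounded magnitude so a degree-$\poly(k\tau/\eps)$ truncated expansion suffices. The problem is that the effective support of a single $d$-dimensional Gaussian component of $\noisymixture{t}$ has radius $\Theta(\sqrt{d})$ (up to the covariance scale), so a collection of $\nc = \poly(k)$ balls of $O(1)$-radius covers only a $2^{-\Theta(d)}$ fraction of the probability mass. To cover the mass you would either need exponentially many balls, or balls of radius $\Theta(\sqrt{d})$, but then the softmax arguments $\|\x - \vmu_i\|^2_{\cov_i^{-1}}$ within a piece are $\Theta(d/\alpha)$, not $\poly(k\tau/\eps)$, and the degree of your polynomial approximation grows with $d$, which destroys the claimed $d^{\poly(k\tau/\eps)}$ bound. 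The paper avoids this not by restricting to small balls, but by \emph{normalizing} the softmax: it shows the shifted quantities $\|\x - \vmu_i\|^2_{\cov_i^{-1}} - \|\x - \vmu_1\|^2_{\cov_1^{-1}} - \iprod{\cov_1, \cov_i^{-1} - \cov_1^{-1}}$ concentrate (via Hanson--Wright, \Cref{prop:HW}) to scale $\poly(\beta/\alpha)\cdot\withincluster^2\log(k/\delta)$ on essentially the whole effective support of $\mathcal M(U_t)$, not just a small ball (see \Cref{lem:concentration-poly-approx-input} and \Cref{lem:poly-approx-score-bounded-interval}). This normalization is what makes the degree dimension-free.

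A secondary gap: your clustering criterion is based only on mean separation ("inter-cluster mean separations large on the noise scale"). Two components can share a mean yet have covariances far apart in Frobenius norm; the corresponding softmax arguments then differ by $\Theta(\sqrt{d})$ even at a typical $\x$ from the cluster, and polynomial approximation of the softmax again blows up. The paper therefore clusters simultaneously in means and covariances, brute-forcing over a mean partition $\calS$ and a covariance partition $\calT$ and taking their common refinement, with the covariance clustering implemented via log-likelihood ratio tests on the estimated quadratic forms (\Cref{lem:main_cluster}); the polynomial-degree bound in \Cref{lem:poly-approx-score-bounded-interval} explicitly depends on the within-cluster covariance diameter $\withincluster$. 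Finally, your "density-based peeling" clustering step is left vague; the paper instead estimates the span of means and covariances via PCA on the second- and fourth-moment tensors (\Cref{sec:crude-parameters}), brute-forces over a net in that low-dimensional subspace, and uses likelihood-ratio tests to turn crude parameter estimates into a clustering function on $\R^d$. Your approach might be salvageable, but without the normalization trick (or a substitute) the degree bound does not follow, so the proof as written has a hole at its central structural claim.
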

    
\noindent A detailed theorem statement and the details of the algorithm can be found in \Cref{thm:learning-score-guarantee-fixed-t}. The details of the proof of \Cref{inf-thm:learning-the-score} can be found
in \Cref{sec:learning-piecewise-polynomial}.
Combining the above efficient algorithm with the convergence rate of the reverse
SDE we are able to get our end-to-end efficient algorithm for sampling from the
mixture $\mathcal M$.  Our efficient algorithm in \Cref{inf-thm:learning-the-score} relies on a structural result showing that the score function of the 
mixture $\mathcal M$ can be approximated by a piecewise-polynomial function,
and an efficient algorithm to recover the partition of the piecewise polynomial approximation.  In the following sections, we describe the main ideas of each part.

\begin{remark}[Learning mixtures of low-dimensional (degenerate) Gaussians]\label{remark:degenerate}
    Here we briefly discuss how our techniques can also give a learning guarantee even when the covariances of the components are degenerate. The reason is that we can simply stop the reverse diffusion $\delta$ time steps early. Instead of approximately sampling from the original mixture $\mathcal{M}$, this would approximately sample in total variation from a slightly noisy version of $\mathcal{M}$, namely the distribution $\mathcal{M}_\delta$ given by starting at $\mathcal{M}$ and running the forward process for a small amount of time $\delta$. Given a component $\calN(\vmu_i,\cov_i)$ of $\mathcal{N}$, the corresponding component of $\mathcal{M}_\delta$ is given by $\calN(e^{-\delta}\vmu_i, e^{-2\delta}\cov_i + (1 - e^{-2\delta})\Id)$. In particular, the minimum singular value of the covariance is at least $1 - e^{-2\delta} = \Omega(\delta)$, and we can thus apply \Cref{thm:main_informal} to $\mathcal{M}_\delta$ instead of $\mathcal{M}$, incurring exponential dependence on $\poly(1/\delta)$. Moreover, the Wasserstein distance between $\mathcal{M}$ and $\mathcal{M}_\delta$ scales with $\delta(R + \poly(\beta/\alpha))$. Altogether, we find that we can sample from a distribution that is TV-close to a distribution which is Wasserstein-close to $\mathcal{M}$, even if $\mathcal{M}$ might have degenerate covariances.
\end{remark}

\subsection{Approximating the score function using piecewise polynomials}
We now present the key ideas behind our main technical result showing that a piecewise polynomial
approximation of the score function exists.  In the following discussion, we will be focusing on 
estimating the score function of the Gaussian mixture at a specific noise level $t$.  At noise level
$t$, each component of the mixture is rescaled by $e^{-t}$ and convolved with a mean-zero Gaussian with covariance  $(1 - e^{-2t}) \Id$ (see \Cref{inf-eq:OU-equation}).  Therefore, the score function at every noise level
corresponds to the score function of a Gaussian mixture with 
means $e^{-t} \vec \mu_i$ and covariances $e^{-2 t} \cov_i + (1- e^{-2t}) ~\Id $, where 
$\vec \mu_i$ and $\cov_i$ denote the parameters of $i^{\text{th}}$ component of the original target mixture $\mathcal M$. For simplicity, we assume that the minimum mixing weight of the mixture $\mc M$ is at least $\poly(1/k)$ in the following discussion. 
It turns out that the bottleneck is to approximate the score function of the original
mixture $\mathcal M$ and therefore, to keep the notation simple, for this presentation
we will focus on this problem. We will denote the score
function (i.e., the gradient of the log-density) of a mixture of Gaussians by $\vec s(\x; \mathcal M)$:
\begin{equation}
\label{inf-eq:score-function}
    \vec s(\x; \mathcal M) = -\sum_{i=1}^k w_i(\x) \underbrace{\cov_i^{-1}( \x - \vec{\mu}_i ) }_{\vec g_i(\x)} \hspace{5mm} \text{where} \hspace{4mm} w_i(\x) = \frac{ \lambda_i \normal(\vec \mu_i, \cov_i; \vec x)}
    {\sum_{j=1}^k \lambda_j \normal(\vec \mu_j, \cov_j; \x)}  
\end{equation}

\begin{proposition}[Informal - Efficient Piecewise Polynomial Approximation - \Cref{prop:piecewise-poly-approx-score}]
\label{inf-prop:cluster-poly-approx}
    Let $\mc M$ be a $\tau$-well-conditioned mixture of $k$ Gaussians.
    There exists a function $\classify(\cdot) : \R^d \mapsto [\numclust]$ and polynomials
    $p_1,\ldots, p_{\numclust}$ of degree at most $\ell = \poly(k\tau/\eps)$ such that 
    \( \E_{\x \sim \mathcal M}[\| s(\x; \mathcal M) - p_{\classify(\x)}(\x) \|^2 ] \leq \eps \ , \)
    Moreover, there exists an efficient algorithm that with high-probability 
    finds this piecewise polynomial approximation with $d^{\poly(\ell)}$ samples 
    and runtime.
\end{proposition}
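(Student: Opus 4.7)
The plan is to prove the proposition in two parts: first a structural existence result for the piecewise polynomial approximation, then the algorithmic procedure to find it.

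For the structural part, I would define the classifier $\classify(\x)$ by partitioning $\R^d$ based on which subset of components is ``relevant'' at $\x$. Writing $\ell_i(\x) \coloneqq \log(\lambda_i \normal(\vmu_i, \cov_i; \x))$ (a quadratic in $\x$), call component $i$ relevant at $\x$ if $\ell_i(\x) \ge \max_j \ell_j(\x) - T$ for a threshold $T = \poly(k\tau/\eps)$. The relevant set $S(\x)$ takes at most $2^k$ values, giving $\numclust \le 2^k$ regions. In a region with relevant set $S$, the posterior weights $w_i$ for $i \notin S$ are at most $e^{-T}$ and contribute negligibly, while for $i \in S$ the shifted log-likelihoods $u_i(\x) \coloneqq \ell_i(\x) - \max_{j \in S}\ell_j(\x)$ lie in $[-T, 0]$. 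The softmax
\[
\frac{e^{u_i(\x)}}{\sum_{j\in S} e^{u_j(\x)}}
\]
is an analytic function on a bounded domain and admits a polynomial approximation of degree $\poly(T)$ on $[-T,0]^{|S|}$ (via Chebyshev approximation of $\exp$ and $1/x$). Composing with the quadratics $u_i$ and multiplying by the linear maps $\cov_i^{-1}(\x - \vmu_i)$ gives the degree-$\ell = \poly(k\tau/\eps)$ piecewise polynomial. The well-conditioned assumption ensures the polynomial coefficients and the score itself have $\poly(R, \beta/\alpha)$-bounded $L_2$ norm under $\mathcal M$, so the threshold $T$ can be chosen to make the approximation error $\eps$ in expected squared $L_2$ norm.

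For the algorithmic part, the challenge is that $\classify$ depends on the unknown mixture parameters. The approach is to enumerate candidate partitions derived from geometric structure on samples. Concretely, each component is localized in a ball of radius $\poly(\beta, \log(k/\eps))$ around its mean, so the true partition is determined by a collection of at most $k$ ``cluster balls'' $\ballc{1}, \ldots, \ballc{\numclust}$ whose centers lie in the $R$-ball. One can either (a) take all subsets of up to $k$ points from a net of size $(R/\eps)^{O(k)}$ over the $R$-ball and use the resulting Voronoi diagrams as candidate partitions, or (b) build cluster balls iteratively from samples by a greedy dense-region peeling procedure. For each candidate partition, fit a polynomial of degree $\ell$ on each piece by solving the empirical denoising regression in \Cref{eq:diffusion-loss-tech-overview}, which is convex in the $d^{O(\ell)}$ polynomial coefficients, and output the candidate minimizing empirical loss. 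Standard generalization bounds for polynomial regression over $d^{O(\ell)}$ basis monomials give the $d^{\poly(\ell)}$ sample complexity.

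The main obstacle will be the algorithmic part, specifically showing that some enumerated candidate achieves loss close to the structural optimum. A learned partition generally does not match $\classify$ exactly, so we need a robustness argument: if a candidate partition agrees with the true relevant-set partition except on a measure-$\delta$ ``boundary'' set, then the squared-$L_2$ error it induces on the score is at most $\delta \cdot \poly(R, \beta/\alpha, k)$, using that the score has polynomially bounded pointwise norm. Combined with an anti-concentration statement showing that the boundary of $\classify$ (where two quadratics $\ell_i, \ell_j$ are within $T \pm \eta$) has probability $\poly(\eta)$ under $\mathcal M$, this lets us tolerate modest partition misclassification. The covering/net step must then produce one candidate within $\eta$ of the truth, which follows from a standard volumetric net argument since all component means and ``Voronoi-relevant'' directions lie in a ball of radius $R$. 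Putting these together with \Cref{inf-thm:learning-the-score}'s regression guarantees completes the argument.
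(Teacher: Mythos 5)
Your partition-by-relevant-set idea is a genuine departure from the paper's scheme, and the score-simplification step (dropping components with posterior weight $\le e^{-T}$) is sound in spirit. But there is a gap that sinks the claimed degree bound: you multiply the softmax approximation directly by the linear maps $\vec g_i(\x) = \cov_i^{-1}(\x - \vmu_i)$, whose norms on the region where $S(\x)$ is the relevant set are \emph{not} $\poly(k\tau/\eps)$. A point $\x$ midway between two components with means at $\pm R$ has both of them in its relevant set (their log-likelihoods are equal there), and $\|\cov_i^{-1}(\x - \vmu_i)\| \sim R/\alpha$; more generally $\|\cov_i^{-1}(\x - \vmu_i)\|$ is $\Theta(\sqrt{d/\alpha})$ on the bulk of each component. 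Since the score-approximation error at $\x$ is bounded only by $\max_i\|\vec g_i(\x)\| \cdot k\eps'$ where $\eps'$ is the softmax error, driving the $L_2^2$ error below $\eps$ forces $\eps' \lesssim \eps/(\sqrt{d} \vee R)$, and via Jackson's theorem that pushes the degree to $\poly(d, R/\eps)$. You cannot amortize this against the small mass of the cross-over region because the degree must be set \emph{before} integrating, and it is the degree, not the integrated error, that drives the $d^{\poly(\ell)}$ runtime.

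The paper escapes this with two ideas your proposal doesn't have. First, its partition is not point-dependent relevant sets but a fixed partition $U_1,\ldots,U_{\numclust}$ of $[k]$ into groups of components whose \emph{parameters} are mutually $\withincluster$-close (within a group) and $\betweencluster$-separated (across groups); this is \Cref{inf-def:in-n-out}, obtained by crude PCA-based parameter estimates plus log-likelihood clustering. Second, and crucially, inside each cell they decompose the sub-mixture score as $s(\x;\mathcal{M}(U_t)) = s_1(\x) + s_2(\x) + \cov_1^{-1}(\x - \vmu_1)$, where $s_1 = \sum_i w_i(\x)(\cov_i^{-1} - \cov_1^{-1})(\x - \vmu_i)$ and $s_2 = -\sum_i w_i(\x)\cov_1^{-1}(\vmu_i - \vmu_1)$ (see \Cref{lem:poly-approx-score-bounded-interval}). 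Because all components in $U_t$ are $\withincluster$-close, the linear maps appearing in $s_1, s_2$ have norm $\poly(\withincluster, \beta/\alpha)$ with high probability, and the large residual $\cov_1^{-1}(\x - \vmu_1)$ is \emph{exactly} linear in $\x$ and so needs no polynomial approximation at all. Without the parameter-closeness guarantee inside each cell, this decomposition isn't available, and that is exactly what your relevant-set partition fails to provide. (Your algorithmic sketch—brute-forcing candidate partitions and doing polynomial regression per piece, then validating—is close in spirit to \Cref{thm:learning-score-guarantee-fixed-t}, but it inherits the same issue because the structural target is wrong.)
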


\paragraph{Why piecewise polynomials?}
We first give some intuition behind the structure of the score function of a Gaussian mixture, and its piecewise polynomial approximation.  
We observe that the score function (see \Cref{inf-eq:score-function}) is a weighted combination of linear functions.  For example, for a mixture of two standard 
one-dimensional Gaussians with means at $-R$ and $R$, it behaves (approximately)
like the function $-\1\{x \leq 0\} (x+R) - \1\{x\geq 0\} (x-R)$, see the left figure in \Cref{fig:clustering-vs-polynomial-apx}.  
We observe that the total length of support of the mixture is roughly an interval of length $O(R)$
and the slope of the score function is approximately $O(R)$ close to the origin.
We would like to have a polynomial approximation 
of degree $\poly(\log R/\eps)$ for this instance but naively applying polynomial approximation results (see, e.g., Jackson's theorem, \Cref{lemma:multivariate-jackson}) would yield a degree $\poly(R/\eps)$ even for 1-dimensional mixtures.\footnote{When dealing with $d$ dimensional mixtures things are even worse since the effective support has a radius depending on the dimension $d$.}
Therefore, as we observe in \Cref{fig:clustering-vs-polynomial-apx}, two reasons prohibit us from applying polynomial approximation results in a black-box manner: (1) 
the total support is of radius $R$ and (2) there are regions (far from the mixture means) where the
slope of the score function is also large (also $R$).

For the case of two Gaussians, we see that the ``effective'' support is much smaller
(intervals of size roughly $\sqrt{\log(1/\eps)}$ around the means).  Moreover, by focusing on
the ``effective'' support we also avoid the area where the derivative of the score function is
large (close to the origin).  Thus one could hope to solve both issues discussed above by 
creating an interpolating polynomial by concentrating the nodes on the effective support.  
Such an approach would work when the support consisted of actual ``hard'' intervals (and not ``approximate'' intervals with Gaussian tails).  The main issue is a race condition 
between the value of the interpolating polynomial far from the interpolation nodes (roughly exponentially large in the degree) and the decay of the Gaussian density.
While this race condition can be solved in some special cases (such as for mixtures of two Gaussians
with very well-separated means on $-R$ and $+R$), in general when more Gaussians are present in 
the mixture, the mental image of a union of ``hard'' intervals is incorrect
and it is not clear that the tails will always be able to cancel out the large error of the polynomial far from the interpolation intervals.  

The above structure of the score function naturally leads to a piecewise polynomial approximation 
approach.  For the symmetric mixture of two Gaussians discussed above there is an obvious candidate
for the partition: we should perform polynomial approximation in $\poly(\log(R/\eps))$ sized intervals around $\pm R$ and output zero in the rest of the space. That would lead to the desired degree of
$\poly(\log(R/\eps)/\eps)$.
For the more complicated example of the right figure of \Cref{fig:clustering-vs-polynomial-apx} we could similarly try to split the instance in
an interval containing almost all the mass of two left components and one interval containing the 
three right components and perform polynomial approximation (and output zero out of those two intervals).
In both examples, by using the piecewise polynomial approximation we avoided both issues discussed earlier, i.e., using polynomial approximation over large intervals or approximating over intervals where the derivative of the score is large.

\paragraph{Clustering and polynomial approximation: a win-win analysis}
Piecewise polynomial regression is a computationally hard, non-convex problem 
when we search both for the polynomials and for the partition of the space.
Therefore, we have to make sure that we have an efficient algorithm to find the partition of the space and then apply polynomial regression inside each cell of the partition.  Our main algorithm
is enabled by a win-win argument in the sense that the areas where polynomial approximation requires
high degree (i.e., $\poly(R)$) can be easily avoided by a crude clustering algorithm and the 
areas where the clustering algorithm fails to separate between a set of components of the mixture are those where the polynomial approximation
is effective.

\begin{figure}
\centering
\includegraphics[width=.48\textwidth]{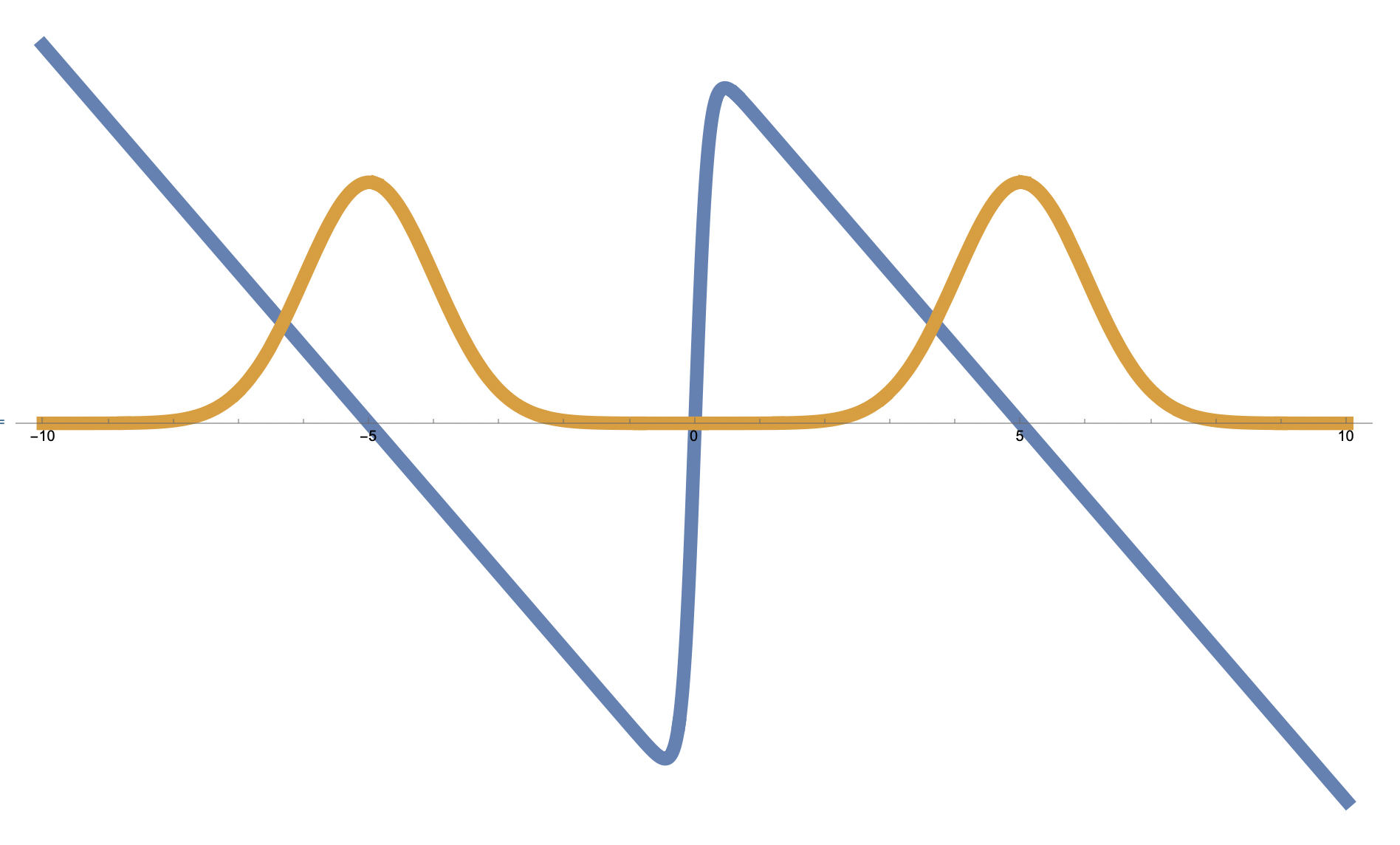}
~
\includegraphics[width=.48\textwidth]{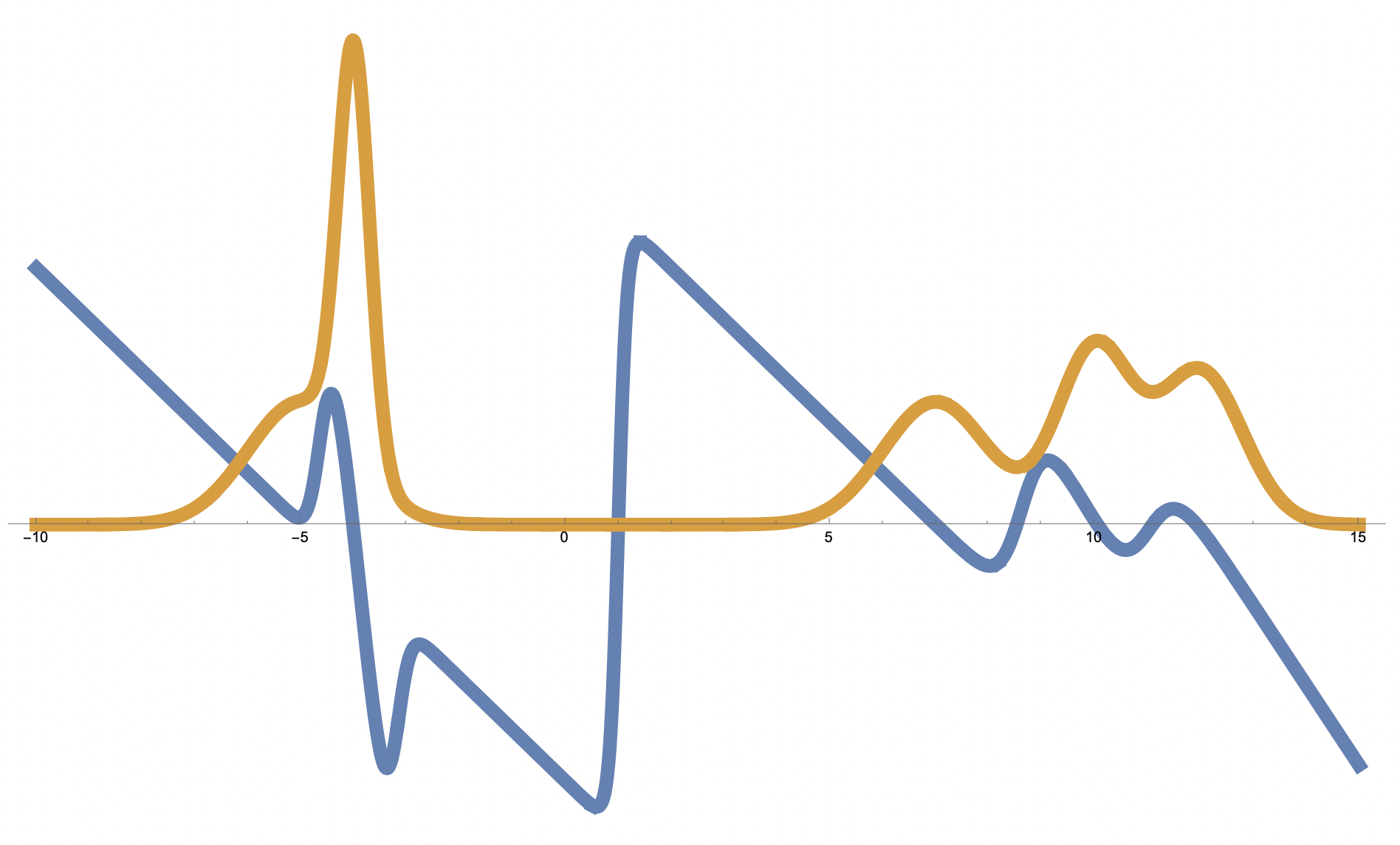}
\caption{When approximation is hard, clustering is easy. On the left figure,
we plot the density (gold) and score function (blue) of mixtures of two standard Gaussians with well-separated means (their distance is $R$). 
We observe that in that case, the score function
is (almost) a piecewise linear function with a large slope, i.e., roughly $R$, 
close to the origin.  In the right image, we have a mixture of $5$ Gaussians with different means and variances
that can be split into two clusters: a group of $2$ on the left and $3$ on the right.  Again the area where
the derivative of the score function (blue) is high, falls in between the two clusters (where the Gaussian density is exponentially small).  In both cases, a piecewise polynomial approximation yields the correct degree
that scaling with $(\log R)/\eps$ instead of $R/\eps$. Moreover, we expect that it is easy to cluster 
the points in the corresponding sub-mixtures that have much smaller effective support than the original
mixture.
}
\label{fig:clustering-vs-polynomial-apx}
\end{figure}

\subsection{Approximating the score given a crude partition}

As we observed in the previous examples,  the main difficulty in providing a polynomial approximation of the score function arises when it involves multiple Gaussians that are far apart.  We first make
more precise the notion of ``crude'' clustering \footnote{We use the terms ``clustering'' and ``partition'' function interchangeably.} that we require. 
\begin{definition}[$(\withincluster, \betweencluster)$-separated partition]
\label{inf-def:in-n-out}
Given a mixture of Gaussians 
$\normal_1= \normal(\vec \mu_1, \cov_1) ,\ldots, \normal_k = \normal(\vec \mu_k,\cov_k)$, 
we require that the clustering function $\classify(\x)$ assigns
each $\x \in \R^d$ to one of $n_c$ subsets $U_1,\ldots, U_{n_c}$ of $[k]$ 
that form a partition of the original $k$ components such that:
\begin{enumerate}
\item  If $\normal_i, \normal_j$ belong in different subsets $U_t$
and $U_{t'}$, they have to be \emph{at least} $\betweencluster = \poly( \tau k \log (1/\eps) )$ far in parameter distance, i.e.,
$\dpar(\normal_i, \normal_j) = \|\vec \mu_i - \vec \mu_j\|_2 + \|\cov_i - \cov_j\|_F \geq \betweencluster$.
\item  If $\normal_i, \normal_j$ belong in the same subset $U_t$, they have to be \emph{at most} $\withincluster = \poly( \tau k \log (1/\eps) )$ far in parameter distance, i.e.,
$\dpar(\normal_i, \normal_j) \leq \withincluster$.
\item  $\classify(\x)$ is consistent with the partition $U_1,\ldots, U_t$ with high-probability, i.e.,
for any $i \in U_t$, $\pr_{\x \sim \normal_i}[\classify(\x) \neq t] \leq \epspart$, where $\epspart$ is a small error parameter.
\end{enumerate}
\end{definition}

Given the above $(\withincluster, \betweencluster)$-partition, our proof consists of two steps: 
(i) show that we can reduce the original problem of approximating the score function of the whole mixture to approximating the score function of the sub-mixtures $U_t$ and (ii) 
providing low-degree approximations of the sub-mixture score functions.  We describe 
these steps in the next two paragraphs.

\paragraph{Simplifying the score}
As we discussed, the first obstacle in approximating the score function is that
it is a function over a domain of radius $\poly(R)$ (inducing a $\poly(R)$ dependency on the degree).
Fortunately, there is an additional structure connecting the weights $w_i(\x)$ and 
the linear terms $\vec g_i(\x)$.  We use this structure to prove that when $\x$ is sampled 
from some component $\normal_i$ then on expectation over the component $\normal_i$ we can remove
a term in the score function corresponding to a component $\normal_j$ that is far 
from $\normal_i$ without introducing large error, see \Cref{lem:score-removal-single}.
More precisely, we show that given a partition function $\classify(\cdot)$ that satisfies 
\Cref{inf-def:in-n-out}, for all $\x$ where $\classify(\x) = t$, 
we can ``simplify'' the score function by removing
the contribution of all components $\normal_j$ that do not belong in $U_t$.

Given a subset $U_t$ of indices of $[k]$, we denote by 
$\mathcal M(U_t)$ the submixture containing the components $\normal_i$ for $i \in U_t$ and by 
$\vec s(\x; \mathcal{M}(U_t))$ the score function containing only the contribution of
components from $U_t$, i.e., 
\[
\vec s(\x; \mathcal M(U_t)) = - \sum_{i \in U_t} \vec g_i(\x) \frac{ \lambda_i \normal_i(\x)}{\sum_{j \in U_t}\lambda_j \normal_j(\x)}
\]
We prove the following proposition showing that,
inside each cell $t$ of the partition given by $\classify(\cdot)$, we can replace the original score function $\vec s(\x; \mathcal M)$ by the score function of the 
sub-mixture $\vec s(\vec x;\mathcal M(U_t))$.
Each sub-mixture score function corresponding to $U_t$ contains components that are all 
$\withincluster$-close to each other, thus reducing the effective radius of the approximation domain to $\poly(\log R)$.
\begin{proposition}[Informal -- Score Simplification, see \Cref{prop:score-simplification}]
\label{inf-prop:score-simplification}
Fix $\eps > 0$. let $\mathcal M$ be a $\tau$-well-conditioned mixture of $k$ Gaussian distributions and satisfies $\|\vec \mu_i \|_2 + \|\cov_i - \Id\|_F \leq R$ for all the components.
Moreover, assume that $\classify$ satisfies \Cref{inf-def:in-n-out}.
Define the following piecewise approximation to the score function
\(
s(\x; \classify(\cdot)) = 
\sum_{t=1}^{\numclust} s(\x; \mathcal M(U_t)) ~ \1\{ \classify(\x) = t \} \,.
\)
It holds that
\(
\E_{\x \sim \mathcal M}
[\| s(\x; \mathcal M) - s (\x; \classify(\cdot)) \|_2^2 ] 
\leq \poly(k \tau R) \sqrt{\eps}  .
\)
\end{proposition}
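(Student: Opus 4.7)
The plan is to first expand the expectation over $\x \sim \mathcal{M}$ by conditioning on the component that generated the sample, namely
\[
\E_{\x \sim \mathcal{M}}[\|s(\x;\mathcal{M}) - s(\x;\classify(\cdot))\|_2^2] = \sum_{i=1}^k \lambda_i \,\E_{\x \sim \normal_i}[\|s(\x;\mathcal{M}) - s(\x;\classify(\cdot))\|_2^2],
\]
and then, for each $i$, to split according to whether $\classify(\x)$ correctly identifies the cluster $t(i)$ containing $i$. Property~(3) of \Cref{inf-def:in-n-out} guarantees that the ``good event'' $G_i = \{\classify(\x) = t(i)\}$ has probability at least $1 - \epspart$ under $\x \sim \normal_i$, and on $G_i$ we have the pointwise identity $s(\x;\classify(\cdot)) = s(\x;\mathcal{M}(U_{t(i)}))$.

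For the contribution on $G_i$, I would apply \Cref{lem:score-removal-single} to peel off, one at a time, each component $\normal_j$ with $j \notin U_{t(i)}$ from the full-mixture score, reducing it to the sub-mixture score $s(\x;\mathcal{M}(U_{t(i)}))$. Algebraically, the difference $s(\x;\mathcal{M}) - s(\x;\mathcal{M}(U_{t(i)}))$ decomposes into two pieces: the posterior weights $w_j(\x)$ for $j \notin U_{t(i)}$ times the affine parts $\vec{g}_j(\x)$, and the renormalization discrepancy between the full and sub-mixture posteriors; both pieces are controlled by the overlap ratio $\sum_{j \notin U_{t(i)}} \lambda_j \normal_j(\x)/\lambda_i \normal_i(\x)$. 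Because any such $\normal_j$ is at least $\betweencluster = \poly(\tau k \log(1/\eps))$ far from $\normal_i$ in parameter distance, a Gaussian overlap estimate (splitting the integration domain into neighborhoods of $\vmu_i$ and $\vmu_j$, in each of which one of $\normal_i(\x), \normal_j(\x)$ is exponentially small) yields a bound of $\poly(k\tau R)\cdot \eps$ after summing over $i$ and $j$.

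For the contribution on $G_i^c$, I use Cauchy--Schwarz:
\[
\E_{\x \sim \normal_i}\bigl[\|s(\x;\mathcal{M}) - s(\x;\classify(\cdot))\|_2^2 \, \mathbf{1}_{G_i^c}\bigr] \leq \sqrt{\E_{\x \sim \normal_i}[\|s(\x;\mathcal{M}) - s(\x;\classify(\cdot))\|_2^4]} \cdot \sqrt{\pr_{\x \sim \normal_i}[G_i^c]}.
\]
The fourth-moment factor is $\poly(k\tau R)$ since each score function is a convex combination of the affine maps $\vec{g}_j(\x) = \cov_j^{-1}(\x - \vmu_j)$, whose fourth moments under any component $\normal_l$ are bounded in terms of $1/\alpha$, $R$, and $\beta$. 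Combining with $\pr_{\x \sim \normal_i}[G_i^c] \leq \epspart \leq \eps$ produces the $\sqrt{\eps}$ factor, and averaging over $i$ with weights $\lambda_i$ (which sum to $1$) completes the bound.

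I expect the main technical obstacle to be the peeling step on the good event. Although it is intuitively clear that when $\normal_i$ and $\normal_j$ are separated in parameter distance the density $\normal_j(\x)$ is small on the $\normal_i$-typical region, the corresponding affine factor $\vec{g}_j(\x) = \cov_j^{-1}(\x - \vmu_j)$ can be large precisely where $\x$ is far from $\vmu_j$ but close to $\vmu_i$, so ensuring that the product $w_j(\x) \vec{g}_j(\x)$ has small expectation under $\normal_i$ requires carefully separating the ``near $\vmu_i$'' and ``near $\vmu_j$'' regions and exploiting that $w_j$ already carries a $\normal_j(\x)$ factor in its numerator. \Cref{lem:score-removal-single} should handle this at the level of a single pair $(i, j)$, and the polynomial choice of $\betweencluster$ must be calibrated to absorb both the Gaussian tails and the polynomial growth of $\vec{g}_j$.
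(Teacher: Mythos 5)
Your proposal is correct and mirrors the paper's proof of \Cref{prop:score-simplification}: the same decomposition into a good event (sample classified into the cluster of the component that generated it) and a bad event, with the good-event contribution controlled via iterated application of \Cref{lem:score-removal-single} (this is exactly \Cref{lem:score-removal-multiple} in the paper) and the bad-event contribution handled by Cauchy--Schwarz against the misclassification probability. The only cosmetic difference is that the paper sums over the predicted cluster index $t$ before splitting on correctness, while you condition on the generating component $i$ first; these are equivalent reorganizations of the same expectation over $(i,\x)\sim\mathcal{M}^J$.
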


\paragraph{Polynomial approximation of the simplified score} 
Recall from Eq.~\eqref{inf-eq:score-function} that the score function for any Gaussian mixture is a sum of the softmax function $w_i(\x)$ multiplied by a linear function $\cov_i^{-1}(\x - \vmu_i)$. A polynomial approximation of the softmax will provide a polynomial approximation for the simplified score. Note that we want to approximate the simplified score with the degree at most $\poly(k \tau / \eps)$ to obtain runtime of polynomial regression of $O(d^{\poly(k \tau / \eps)})$.

The degree of a polynomial approximation of a function generally depends on the domain of the approximation and smoothness of the function (in terms of the norm of its gradient), see \Cref{lemma:multivariate-jackson}. The softmax function is smooth and has a bounded gradient but the input to the softmax is $\{ \| \x - \vmu_i  \|^2_{\cov_i^{-1}} \}_{i=1}^{|U_t|}$ which can be as large as $\poly(d)$ and hence, the degree of the naive polynomial approximation could be $\poly(d/\eps)$.

To overcome this issue, we show that even though each input $\| \x - \vmu_i \|_{\cov_i^{-1}}^2$ is large, there exists a normalization of the softmax for which the inputs to the softmax are $\poly(\tau \withincluster)$. More precisely, we normalize the softmax such that $\{ \| \x - \vmu_i \|^2_{\cov_i^{-1}} - \| \x - \vmu_1 \|^2_{\cov_1^{-1}} - \iprod{ \cov_1, \cov_i^{-1} - \cov_1^{-1} } \}_{i=1}^{|U_t|}$ are the inputs to the softmax function and show that its norm is $\poly(\tau \withincluster)$ with high probability. Therefore, using multivariate Jackson's theorem (\Cref{lemma:multivariate-jackson}), we obtain the polynomial approximation for the softmax function and hence, for the simplified score function. 

\begin{lemma}[Informal - See \Cref{lem:poly-approx-score-bounded-interval}] Let $\mathcal M(U)$ be a $\tau$-well-conditioned mixture of $k$ Gaussian distributions restricted to the subset of components in $U$. Then, there exist a polynomial $p(\x; \mathcal M(U))$ of degree $\poly(\tau \withincluster / \eps)$ and coefficients bounded in magnitude by $d R \exp(\poly( \tau \withincluster / \eps ))$ such that for $\x \sim \mathcal M(U)$, with high probability, the polynomial satisfies
\(
    \| s(\x; \mathcal M(U)) - p(\x; \mathcal M(U))  \| \leq \eps.
    \)
\end{lemma}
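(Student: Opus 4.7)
Writing the score restricted to the submixture as
\[
s(\x;\mathcal{M}(U)) = -\sum_{i\in U} w_i(\x)\,\cov_i^{-1}(\x - \vmu_i),
\]
I would first exploit the shift-invariance of the softmax: adding a common constant (in $i$) to every Gaussian log-density inside $w_i(\x)$ leaves the weights unchanged. Subtracting $-\tfrac{1}{2}\|\x-\vmu_1\|^2_{\cov_1^{-1}} - \tfrac{1}{2}\iprod{\cov_1,\cov_i^{-1}-\cov_1^{-1}}$ from the $i$-th log-density turns the effective softmax inputs into the quadratic-in-$\x$ quantities
\[
u_i(\x) \coloneqq \|\x-\vmu_i\|^2_{\cov_i^{-1}} - \|\x-\vmu_1\|^2_{\cov_1^{-1}} - \iprod{\cov_1,\cov_i^{-1}-\cov_1^{-1}},
\]
plus bounded additive constants of size $\poly(\tau)$ coming from $\log\lambda_i - \tfrac12\log\det\cov_i$. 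The task reduces to approximating the softmax $w(\cdot)$ on a small cube of arguments.

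The central technical step is to show that $\|\vec u(\x)\|_\infty \le B \coloneqq \poly(\tau\,\withincluster\,\log(1/\eps))$ with probability at least $1-\eps$ over $\x\sim\mathcal{M}(U)$. Conditioning on the generating component $\x\sim\calN(\vmu_j,\cov_j)$ for some $j\in U$ and writing $\x = \vmu_j + \cov_j^{1/2}\vec g$ with $\vec g\sim\calN(\vec 0,\Id)$, I would decompose $u_i(\x)$ into a deterministic piece whose leading contribution is $(\vmu_j-\vmu_i)^\top\cov_i^{-1}(\vmu_j-\vmu_i)$ plus a trace term, a Gaussian-linear piece in $\vec g$, and a centered quadratic form in $\vec g$ with matrix $\cov_j^{1/2}(\cov_i^{-1}-\cov_1^{-1})\cov_j^{1/2}$. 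The subtraction of $\iprod{\cov_1,\cov_i^{-1}-\cov_1^{-1}}$ was designed precisely so that, after taking expectation, the residual trace becomes $\Tr\bigl((\cov_j-\cov_1)(\cov_i^{-1}-\cov_1^{-1})\bigr)$, which by Cauchy--Schwarz in Frobenius norm is bounded by $\|\cov_j-\cov_1\|_F\cdot\|\cov_i^{-1}-\cov_1^{-1}\|_F \le \poly(\tau\withincluster)$ using within-cluster parameter closeness together with the bounded condition number. The mean-difference term is $\poly(\tau\withincluster)$ by the same within-cluster bound, while Gaussian concentration and the Hanson--Wright inequality handle the linear and quadratic fluctuations at the same scale.

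Given this high-probability box, I would apply multivariate Jackson (\Cref{lemma:multivariate-jackson}) on $[-B,B]^{|U|}$. Since each softmax coordinate is $1$-Lipschitz in $\ell_\infty$ and bounded by $1$, there is a polynomial $\widetilde w_i$ of degree $D \coloneqq \poly(B/\eps')$ that approximates $w_i$ to uniform error $\eps'$ on the cube, where $\eps' \coloneqq \eps/\poly(dR\tau/\alpha)$. Composing with the quadratic polynomial $\vec u(\x)$ and multiplying by the linear factor $\cov_i^{-1}(\x-\vmu_i)$ yields a polynomial of degree $\poly(\tau\withincluster/\eps)$ in $\x$; summing over $i\in U$ gives the desired $p(\x;\mathcal{M}(U))$. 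On the joint good event that $\vec u(\x)$ lies in the cube and $\|\cov_i^{-1}(\x-\vmu_i)\|\le \poly(dR/\alpha)$ (each of which holds with probability $1 - O(\eps)$ by Gaussian tails), the pointwise approximation error is at most $|U|\cdot\eps'\cdot\poly(dR/\alpha)\le\eps$. The coefficient bound $dR\exp(\poly(\tau\withincluster/\eps))$ then follows by tracking Chebyshev-basis coefficient magnitudes from Jackson's construction and pushing through the change of variables from $\vec u$ to $\x$, using that $\vec u(\x)$ itself has coefficient magnitudes $\poly(\tau\withincluster)$.

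I expect the principal obstacle to be the dimension-free concentration of $\vec u(\x)$: a naive bound on $\|\x-\vmu_i\|^2_{\cov_i^{-1}}$ scales with $d$ via the trace, and the whole strategy hinges on the subtraction $\iprod{\cov_1,\cov_i^{-1}-\cov_1^{-1}}$ cancelling this $d$-dependent leading behavior, leaving only $\withincluster$-dependent residuals that admit sub-exponential tail bounds. Without this cancellation, the cube radius $B$ would inherit a $d$-factor and the degree of the polynomial approximation would no longer be dimension-independent.
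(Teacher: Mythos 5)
Your outline captures the right idea for the softmax \emph{argument}: recentering the quadratic inputs by $-\tfrac12\|\x-\vmu_1\|^2_{\cov_1^{-1}} - \tfrac12\iprod{\cov_1,\cov_i^{-1}-\cov_1^{-1}}$ exactly cancels the $\Theta(d)$ leading behaviour, leaving only $\withincluster$-scale residuals handled by Hanson--Wright. This matches the paper's Lemma~\ref{lem:concentration-poly-approx-input}.

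However, there is a genuine gap in how you handle the linear multiplicative factor. You multiply the polynomial softmax approximation by the \emph{raw} linear term $\cov_i^{-1}(\x-\vmu_i)$, whose high-probability norm is $\Theta(\sqrt{d\beta}/\alpha)$ (indeed $\E\|\cov_i^{-1}\cov_j^{1/2}\vec g\|^2 = \|\cov_i^{-1}\cov_j^{1/2}\|_F^2 \asymp d\beta/\alpha^2$). You propose to compensate by approximating the softmax to the smaller error $\eps' = \eps/\poly(dR\tau/\alpha)$, but Jackson's degree for a Lipschitz target on a box of radius $B$ is $\Theta(B/\eps')$, so this inflates the degree to $\poly(d\tau\withincluster/\eps)$ --- contradicting the $d$-free degree $\poly(\tau\withincluster/\eps)$ that the lemma claims and that the whole argument needs (a degree-$\poly(d)$ polynomial gives $d^{\poly(d)}$ runtime after regression). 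Your own final claim that the degree is $\poly(\tau\withincluster/\eps)$ is inconsistent with the $\eps'$ you set.

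The paper sidesteps this by a further decomposition of the score itself, not just the softmax inputs. Writing $s(\x;\mathcal M(U)) = \cov_1^{-1}(\x-\vmu_1) + \sum_i w_i(\x)(\cov_i^{-1}-\cov_1^{-1})(\x-\vmu_i) - \sum_i w_i(\x)\,\cov_1^{-1}(\vmu_i-\vmu_1)$ pulls the dimension-heavy linear term $\cov_1^{-1}(\x-\vmu_1)$ entirely \emph{outside} the softmax (it is exact, so it contributes degree one and no approximation error), while the remaining softmax-weighted factors are $(\cov_i^{-1}-\cov_1^{-1})(\x-\vmu_i)$ and $\cov_1^{-1}(\vmu_i-\vmu_1)$, both of which are $\poly(\tau\withincluster\log(1/\delta))$ with high probability because $\|\cov_i^{-1}-\cov_1^{-1}\|_F \lesssim \withincluster/\alpha^2$ within a cluster. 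With those as the $M$ in the softmax-times-linear approximation lemma, the resulting degree is $O(\zeta_1\zeta_2\nc^2/\eps)$ with no $d$. You need this extra step; without it the approach as you have sketched it does not yield the claimed degree bound.
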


\subsection{Crude clustering via PCA}  
We now describe our crude clustering algorithm for obtaining the partition satisfying
the assumptions of \Cref{inf-def:in-n-out}.  Our approach consists of two main steps: (1)  
approximately recover the span of the means and covariances using PCA on the second and 
fourth-order moment tensors of the mixture and (2) recover estimates of the parameters by 
brute forcing over the $k$-dimensional subspace recovered in the first step and using
pairwise log-likelihood tests to create the final partition function.

\paragraph{Obtaining estimates of means and covariances}

The algorithm operates in two phases. First, we obtain a crude estimate for the subspace spanned by the means, after which we brute-force within this low-dimensional subspace to find points close to each of the means. Second, we use these mean estimates to form an estimator for the subspace spanned by the covariances, after which we can similarly brute-force to find points close to each of the covariances.  
With roughly $d^{O(k)}$ runtime, we can construct a list of candidate parameters for the means
and covariances of the mixture containing crude (in the sense that they can be $\poly(k \tau)$-far) of the target parameters.

\begin{lemma}[Informal -- Recovering crude estimates of the parameters, see \Cref{lem:crude_param}]
\label{inf-lem:crude_param}
    There is an algorithm that returns a list $\calW$ such that for every $i\in[k]$, there exists $(\hatmu_i, \wh{\cov}_i) \in \calW$ for which $\norm{\vmu_i - \hatmu_i}^2 \lesssim \beta/\lambdamin$ and $\norm{\cov_i - \wh{\cov}_i}_F \lesssim k^{3/2}\beta/\lambdamin + k^2\alpha\log\radius$. Furthermore, $|\calW| \le (R/\sqrt{\beta})^{O(k^2)}\cdot d^{O(k)}$, and the algorithm runs in time $(\radius/\sqrt{\beta})^{O(k^2)}\cdot(\poly(d\radius/\beta) + d^{O(k)})$ and draws $\poly(d\radius/\beta)$ samples.
\end{lemma}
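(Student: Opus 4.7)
The plan is to identify two low-dimensional subspaces---one approximately containing the means, the other the flattened covariances---by PCA on the second and fourth moment tensors of $\mathcal M$, and then to brute-force an $\eps$-net of each. The final list $\calW$ is the Cartesian product of the two nets, with a $d^{O(k)}$ slack factor absorbing residual ambiguity in identifying the subspaces.

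\textbf{Means via the second moment.} Write $\vec M_2 \triangleq \E_{\x\sim\mathcal M}[\x\x^\top] = \vec A + \vec B$, where $\vec A \triangleq \sum_i \lambda_i \vmu_i \vmu_i^\top$ has rank at most $k$ and $\vec B \triangleq \sum_i \lambda_i \cov_i$ satisfies $\alpha \Id \preceq \vec B \preceq \beta\Id$. By Weyl's inequality, $\lambda_{k+1}(\vec M_2) \le \|\vec B\|_{\op} \le \beta$. I would estimate $\vec M_2$ in spectral norm from $\poly(dR/\beta)$ samples (using matrix Bernstein applied to the subgaussian vectors $\x\x^\top$), and let $V$ denote the top-$k$ eigenspace of the empirical matrix. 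For any unit $v \in V^\perp$, $\langle v, \vec A v\rangle = \langle v, \vec M_2 v\rangle - \langle v, \vec B v\rangle \lesssim \beta$, and since $\vec A$ restricted to $V^\perp$ has rank at most $k$, its trace there is $\lesssim k\beta$. Thus $\|\Pi_{V^\perp}\vmu_i\|^2 \lesssim \beta/\lambdamin$ for each $i$ (the factor of $k$ is absorbed into $\lesssim$). A standard $\eps$-net of the radius-$R$ ball in $V$ at scale $\sqrt{\beta/\lambdamin}$ has $(R/\sqrt{\beta})^{O(k)}$ points; these are the candidate $\hatmu_i$.

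\textbf{Covariances via the fourth moment.} Form a flattened matrix $\vec N \in \R^{d^2 \times d^2}$ from empirical fourth moments after subtracting off mean contributions using the candidate means from the previous step. A direct computation gives $\E[\vec N] = \sum_i \lambda_i \vc{\cov_i}\vc{\cov_i}^\top + \vec E$, where the leading term has rank at most $k$ and the residual $\vec E$ is driven by the $\sqrt{\beta/\lambdamin}$-accuracy of the plugged-in means together with empirical-process fluctuations of order $\sqrt{\alpha \log R}$ coming from Gaussian tails out to radius $\lesssim \sqrt{\log R}$. Repeating the PCA-perturbation argument above but in $\R^{d^2}$ yields a $k$-dimensional subspace $W$ whose Frobenius distance to each $\cov_i$ is $\lesssim k^{3/4}\sqrt{\beta/\lambdamin} + k\sqrt{\alpha\log R}$; squaring recovers the claimed $k^{3/2}\beta/\lambdamin + k^2 \alpha \log R$ bound. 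Netting $W \cap \{\vec M : \|\vec M\|_F \le R\}$ at the corresponding scale gives $(R/\sqrt{\beta})^{O(k^2)}$ candidate $\wh{\cov}_i$. Taking all pairs and further multiplying by a $d^{O(k)}$ enumeration for handling the subspace-identification slack produces the final list $\calW$ of the claimed size.

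\textbf{Main obstacle.} The hardest step is controlling $\vec E$ in the covariance phase. The fourth moment of a Gaussian $\normal(\vmu_i, \cov_i)$ involves products of four copies of $\x$, so substituting $\hatmu_i$ for $\vmu_i$ creates cross terms such as $(\vmu_i - \hatmu_i)^{\otimes 2} \otimes \cov_i$ and $(\vmu_i - \hatmu_i) \otimes \vmu_i \otimes \cov_i$, each of which is naively inflated by factors of $R^2$ or $\beta$. Recovering the clean $k^{3/2}\beta/\lambdamin + k^2\alpha\log R$ dependence requires grouping cross-terms by symmetry type of the moment tensor so that dominant contributions telescope, leaving only genuine mean-perpendicular and sample-concentration pieces. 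A secondary issue is that spectral estimation of a $d^2 \times d^2$ matrix naively demands $\poly(d^2)$ samples; this is sidestepped by performing matrix concentration only in the low-rank direction spanned by the $\vc{\cov_i}$, giving the stated $\poly(dR/\beta)$ sample complexity.
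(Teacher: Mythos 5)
Your high-level two-phase plan (PCA on the second moment for means, then PCA on a recentered fourth moment for covariances, netting each subspace) matches the paper's structure, and the means phase is essentially correct. But there is a genuine gap in the covariance phase that the paper resolves with a device you don't have.

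The obstacle you name---controlling the cross-terms created by substituting $\hatmu_i$ for $\vmu_i$---is real, but the fix you propose (``grouping cross-terms by symmetry type of the moment tensor so that dominant contributions telescope'') is not the mechanism at work and would not suffice on its own. The paper's actual resolution has two ingredients you are missing. First, each sample is recentered by the data-dependent nearest mean estimate $\hatmu(\x)$, not a fixed one; this requires a separate argument (Lemma~5.8) that samples rarely land in the Voronoi cell of a far-away mean, since misassignment introduces shifts of order up to $\radius$. Second, and decisively, the flattened fourth-moment statistic is split into three pieces $\flatmat_{00},\flatmat_{01},\flatmat_{11}$ according to projections onto $\hatPi$ (span of the estimated means, which is $k$-dimensional) and $\hatPi^\perp$. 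This split is what keeps the error polynomial in $k$ rather than $d$: the noise terms from misassignment only involve shifts inside $\hatPi$, so the offending moments like $\E\|\hatPi\x\|^8$ scale as $\poly(k,\beta)$, not $\poly(d)$. In $\hatPi^\perp$ the recentering has literally no effect (so $\flatmat_{11}$'s noise term vanishes), and the residual mean contributions there are automatically small because $\|\hatPi^\perp\vmu_i\| = \|\hatPi^\perp(\vmu_i-\hatmu_i)\|\le\sqrt{\meanerr}$. Without this $\hatPi/\hatPi^\perp$ decomposition, the cross-term bound you would need involves $\E\|\x\|^4 = \poly(d,\radius)$, which destroys the claimed rate. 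Your ``telescoping by symmetry type'' does not address this dimensional issue.

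Two smaller points. (i) Your trace-based PCA perturbation argument gives $\|\Pi_{V^\perp}\vmu_i\|^2 \lesssim k\beta/\lambdamin$; the paper's Lemma~5.3 is a slightly sharper per-vector argument ($\vec r_i^\top \bA^* \vec r_i \ge \|\vec r_i\|^4$) that drops the extraneous $k$, but you could also get the sharp bound directly from your operator-norm inequality by testing against $\Pi_{V^\perp}\vmu_i/\|\Pi_{V^\perp}\vmu_i\|$, no trace step needed. (ii) Your ``secondary issue'' about $d^2\times d^2$ concentration is not a real issue: $\poly(d^2)=\poly(d)$ and the paper's stated sample complexity $\poly(d\radius/\beta)$ accommodates standard matrix concentration of the flattened moments directly; the fix you sketch (concentration only along the low-rank directions) is neither needed nor what the paper does.
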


\noindent We use PCA on the covariance of the mixture $\vec M = \E_{\x \sim \mathcal M}[\x \x^\top]$ 
to obtain the subspace spanned by the means.  We observe that 
$\vec M = \sum_{i=1}^k \lambda_i \vec \mu_i \vec \mu_i^\top + \sum_{i=1}^k \lambda_i \cov_i$.
The main idea here is to think of $\vec M$ as approximately low-rank and treat the contribution
of the covariances as an error $\calE =  \sum_{i=1}^k \lambda_i \cov_i$.
Since the covariance matrices $\cov_i$ are well-conditioned (i.e., their eigenvalues are not bigger
than $\beta$ (see \Cref{def:well-conditioned-mixture}) we can show that if some $\vec \mu_i$
is larger than $\beta/\lambdamin$ then its contribution in $\vec M$ cannot be ``hidden'' by the error term $\calE$ and will have a large projection onto the subspace spanned by the top eigenvectors
of $\vec M$.  The proof of this claim follows by a standard argument for $k$-SVD and can be found in
\Cref{sec:crude-means}.

Finding estimates for the covariances is more complicated but similarly relies on recovering the subspace 
spanned by  the low-rank components of the (flattened) fourth-order tensor 
\begin{equation*}
    \vec \Psi = \E_{\x \sim \mathcal M}[\vc{ \vec x \vec x^\top} \vc{ \x \x^\top}]\,.    
\end{equation*}
The intuition behind our approach is that if the means of the mixture were all sufficiently close to zero, then the top-$k$ singular subspace of the matrix $\vec \Psi$ 
can be shown to contain points close to $\vc{\cov_1},\ldots,\vc{\cov_k}$. In general, if the means are arbitrary, then we can use the estimates $\hatmu_1,\ldots,\hatmu_k$ derived in the previous section to approximately ``recenter'' the mixture components near zero.  Since the means recovered in the previous step were already
crude $\poly(k)$ approximations of the true means a careful error analysis must be done so that this 
recentering does not introduce significantly more error (i.e., depending on the dimension $d$) in the covariance
estimates.  We refer to \Cref{sec:crude-parameters} and \Cref{alg:cov_span} for more details.

\noindent 

\paragraph{Clustering using the log-likelihood ratios}

We now present our main clustering guarantee, which leverages the estimates for the parameters we obtained previously. As those estimates are only crude approximations to the true parameters, we will obtain a commensurately crude clustering. 

Our algorithm starts by brute-forcing over mean-based and covariance-based partitions $\calS$ (resp. $\calT$).
$\calS$ (resp. $\calT$) partitions the mixture components into groups such that any two components in the same group have means (resp. covariances) that are not far, and any two components from two different groups have means (resp. covariances) that are not close. Their common refinement is a partition $\calU$ satisfying
the assumptions of \Cref{inf-def:in-n-out}: any two components in the same group have both means and covariances not too far, and any two components from two different groups either have means not too close or covariances not too close.

By brute-forcing over pairs of partitions of $[k]$ (of which there are at most $k^{2k}$) we may assume we have access to $\calS$ and $\calT$, and thus to $\calU$. Our goal is then to assign to every $\x\in\R^d$ an index into the partition $\calU$. For $\x$ which is sampled from the $i$-th component of the mixture which belongs to the $t$-th group in $\calU$, we would like our assignment to be $t$ with high probability. 
At a high level, the idea is as follows. It is not too hard to determine which group in $\calS$ a given point $\x$ should belong to, simply by checking which mean estimate $\hatmu_i$ is closest to $\x$ after projecting to the subspace spanned by $\hatmu_1,\ldots,\hatmu_k$. For each group in $\calS$, we can then effectively restrict our attention to components within that group and focus on clustering them according to their covariances. Roughly speaking, we accomplish this by comparing log-likelihoods of sampling $\x$ under $\calN(\hatmu_1,\wh{\cov}_1),\ldots,\calN(\hatmu_k,\wh{\cov}_k)$ and choosing the group in $\calT$ containing the component maximizing log-likelihood.  For more details, we refer to \Cref{sec:likelihood-clustering} and to \Cref{lem:main_cluster} for the formal clustering statement that we prove.

\subsection{Avoiding the doubly exponential dependency on \texorpdfstring{$k$}{k}}
\label{sec:doubly-exponential}

Here we provide some intuition for the origin of the doubly exponential dependence on $k$ which is implicit in existing works on learning mixtures of general Gaussians with the method of moments, and how our technique outlined above avoids this issue. Our starting point is the algorithm of~\cite{moitra2010settling}; in fact, for this discussion, it will suffice to consider the case of $d = 1$ and components of variance 1. 

Specialized to this case, in the analysis in~\cite{moitra2010settling}, the authors first proved that if all of the components have means with nonnegligible separation, say $\eta$, from each other, then one can learn the means by brute-forcing over a grid with sufficiently small granularity and finding a setting of parameters in this grid for which the corresponding mixture matches the first $O(k)$ moments with the target mixture to error $\eta^k$ (here we ignore constants in the exponent for simplicity).

Now what happens if the minimal separation $\eta$ is arbitrarily small? The authors noted that for means that are particularly close, one can simply ``merge them'': they are statistically close to a single component, and in a bounded number of samples one would not be able to tell the difference. Because the number of samples used by the algorithm outlined above is $(1/\eta)^k$, this implies that if there is some scale $\eta$ at which there is a \emph{gap} in the sense that all means are either $\eta^k$-close or $\eta$-far apart, then one can learn in the same amount of time/samples as in the $\eta$-separated case. 

The last question that remains is how to ensure such a scale exists. The idea is that if one looks at $k^2+1$ consecutive windows $\{[\eta^{k^i}, \eta^{k^{i-1}}]\}_{i=1,\ldots,k^2+1}$, by pigeonhole principle there must exist some window such that the separation between any pair of means lies outside this window. At that scale, one can apply the above reasoning to learn the means. This is the origin of the doubly exponential scaling in $k$ that is present in all existing algorithms for learning mixtures of general Gaussians, including the state-of-the-art guarantee of~\cite{bakshi2022robustly}.

It is instructive to contrast this with our approach. The main reason for the doubly exponential dependence in the above windowing argument was that one needed a scale at which the components break up into ``gapped clusters'' such that the separation within clusters is significantly smaller than the separation across clusters. For this clustering structure to exist, we need to go down potentially to a doubly exponentially small scale. In contrast, in our work, we make do with a very crude clustering for the purposes of our piecewise regression. We simply require that for components from different clusters, their parameter distance is sufficiently large, while for components from the same cluster, their parameter distance is \emph{not too large}. Crucially, we don't need to make any assumption about a gap between the intra- versus inter-cluster separations, ensuring we avoid the doubly exponential dependence on $k$.

\section{Diffusion models and other technical preliminaries}

In this section, we collect various technical ingredients. The bulk of this section is dedicated to an exposition of diffusion models in~\Cref{sec:diffusion_details}.

\subsection{Notation for mixture models}

Throughout the paper, we use either $q$ or $q_0$ to denote the data distribution on $\R^d$, i.e., the mixture of Gaussians with means $\vec \mu_1, \vec \mu_2, \ldots, \vec \mu_k$, covariances $\cov_1, \cov_2, \ldots, \cov_k$, and mixing weights $\lambda_1,\ldots,\lambda_k$ respectively. We will use $\mathcal N_i$ to denote the distribution for its $i$-th component, i.e. $\calN(\vec \mu_i,\cov_i)$. We use $p$ or $p_T$ to denote the learned distribution.

\begin{definition}
Let $\mathcal M = \frac{1}{k} \sum_{i=1}^k \normal(\vec \mu_i, \cov_i)$ be a $(\alpha, \beta, R)$-well-conditioned Gaussian mixture.
We say that a partition of $[k]$ into subsets $S_1,\ldots, S_m$ is $(\withincluster, \betweencluster)$-separated 
if for all $i, j \in S_\ell$ it holds that 
$\| \vmu_i - \vmu_j \| + \|\cov_i - \cov_j\|_F \leq \withincluster$ and for all $i \in S_\ell$, $j \in S_{\ell'}$
for $\ell \neq \ell'$ it holds $\| \vmu_i - \vmu_j \| + \|\cov_i - \cov_j\|_F \geq \betweencluster$.
We denote by $\mathcal M(S_i)$ the mixture distribution corresponding to the components of $S_i$, i.e., $\mathcal M(S_i) = \frac{1}{|S_i|} \sum_{j \in S_i} 
\normal(\vec \mu_j, \cov_j)$.
\end{definition}

Moreover, given a mixture $\mathcal M = \sum_{i=1}^k \lambda_i \mathcal D_i$ we denote by $\mathcal M^J$ the joint distribution over tuples
$(j, \x)$ where $j = i$ with probability $\lambda_i$ and, conditional on $j = i$, $\x$ is drawn from $\mathcal D_i$.

\subsection{Learning Gaussian mixtures via a denoising diffusion process}
\label{sec:diffusion_details}
We start by introducing some standard terminology and notation
on diffusion models.  We will be using the diffusion algorithmic template in a more or less
black box manner and therefore we try to keep the presentation short but still self-contained.
Throughout the paper, we use either $q$ or $q_0$ to denote the data distribution on $\R^d$. The two main components in diffusion models are the \emph{forward process} and the \emph{reverse process}. The forward process transforms samples from the data distribution into noise, for instance via the \emph{Ornstein-Uhlenbeck (OU) process}: 
\begin{equation*}
    \mathrm{d} \x_t = - \x_t \, \mathrm{d} t + \sqrt{2} \,\mathrm{d} \vec w_t \;\;\;\text{with} \;\;\; \x_0 \sim q_0\,,
\end{equation*}
where $(\w_t)_{t\ge 0}$ is a standard Brownian motion in $\R^d$.
We use $q_t$ to denote the law of the OU process at time $t$. Note that for $\x_t \sim q_t$,
\begin{equation}
\label{eq:Xt-density}
    \x_t = \exp(-t) \x_0 + \sqrt{1 - \exp(-2t)} \vec z_t \;\;\; \text{with} \;\; \x_0 \sim q_0, \;\; \vec z_t \sim \mc{N}(0, \Id)\,.
\end{equation}

The reverse process then transforms noise into samples, thus performing generative modeling. Ideally, this could be achieved by running the following stochastic differential equation for some choice of terminal time $T$:
\begin{equation}
    \mathrm{d} \x^\leftarrow_t = \{\x^\leftarrow_t + 2\nabla_{\x} \ln q_{T-t}(\x^\leftarrow_t)\}\, \mathrm{d}t + \sqrt{2}\,\mathrm{d} \w_t \;\;\;\text{with} \;\;\; \x^\leftarrow_0 \sim q_T\,,
\end{equation}
where now $\w_t$ is the reversed Brownian motion. In this reverse process, the iterate $\w^\leftarrow_t$ is distributed according to $q_{T - t}$ for every $t\in[0,T]$, so that the final iterate $\x^\leftarrow_T$ is distributed according to the data distribution $q_0$. 
The function $\nabla_{\x} \ln q_t$ is called the \emph{score function} and is required so that
we are able to run the reverse SDE and generate samples from the unknown distribution.  Ideally, we would like to have access to an approximate 
oracle $\wh {\vec s}(\x)$ such that for all $t \in [0, T]$ it is a good approximation to the score function $\nabla_{\x} \log q_t(\x)$:
\begin{equation}
    \E_{\x_t \sim q_t}[ \| \nabla_{\x} \ln q_t(\x_t) - \wh {\vec s_t}(\x_t) \|^2 ] 
    \leq \epscore \,.
    \end{equation}
To obtain such a function $\wh{\vec s}_t(\x)$, one would an expressive enough set of candidate functions $\mathcal G$ and then try to optimize the score matching loss:
\begin{equation*}
\min_{\vec g_t \in \mathcal G}
    \E_{\x_t \sim q_t}[ \| \nabla_{\x} \ln q_t(\x_t) - \vec g_t(\x) \|^2 ] 
    \leq \epscore \,.
\end{equation*}
However, as the density function of $q_t$ is unknown the above minimization 
problem cannot be solved directly.
A standard calculation (see e.g. Appendix A of~\cite{chen2023sampling}) shows that this is equivalent to minimizing the \emph{DDPM objective} in which one wants to predict the noise $\vec z_t$ from the noisy observation $\x_t$, i.e.
\begin{equation}
\label{eq:diffusion-loss}
    \min_{\vec g \in \mathcal G} \;\; L_t(\vec g_t) = \E_{\x_0, \vec z_t} \Biggl[ \Big\| \vec g_t(\x_t) + \frac{\vec z_t}{ \sqrt{1 - \exp(-2t)} } \Big\|^2 \Biggr]\,.
\end{equation}
In this work we focus specifically on the optimization problem~\eqref{eq:diffusion-loss} and show that it can be solved efficiently
when the underlying target density $q_0$ is a mixture of $k$ Gaussian distributions.

\begin{algorithm2e}
\DontPrintSemicolon
\caption{\textsc{GenerateSample}}
\label{alg:generate-sample}
    \KwIn{Score estimation error $\epscore$, confidence $\delta$,
    sequence of time steps $t_0, t_1,\ldots,t_N$ }
    \KwOut{A sample $\vec y_N \in \R^d$}

     \
    \For{$\ell \in\{0, \ldots, N-1\}$}{ 
       $\wh{\vec s}(\cdot, T-t_{\ell} ) \gets \textsc{LearnScore}(t_\ell, \epscore, \delta)$
       \Comment{Learn the score function at all time steps} \;
    }
 
    \For{$\ell \in\{0, \ldots, N-1\}$}{
        Set $\rho_\ell = e^{(t_{\ell+1} - t_\ell)/2}$ \; 
        Sample $\vec z_{\ell} \sim \normal(\vec 0, \Id)$ \;
        $ \vec y_{\ell+1} \gets ~ \rho_\ell ~ \vec y_{\ell} + 2 (\rho_\ell - 1)
        ~ \wh{\vec s}(\vec y_{\ell}, T-t_{\ell}) + \sqrt{\rho_{\ell}^2 - 1} ~ \vec z_\ell$ ~ \Comment{Run the (discretized) reverse SDE}
    }
    
        \Return{$\vec y_N$}
\end{algorithm2e}

We are now ready to present and prove our main result: an efficient algorithm for learning well-conditioned 
GMMs.  

\begin{theorem}[Efficient Sampler for GMMs] \label{thm:generate-sample}
Fix $\eps, \delta \in (0,1)$ and let $\mathcal M$ be an $(\alpha, \beta, R)$-well-conditioned mixture of $k$ Gaussians. Let $\tau = (\beta/\alpha) \, \log R$,
$\eps_{\text{score}} = \eps / \log ( R / (\alpha \eps) )$, $\delta = \alpha \eps / R$, and let
time sequence $t_1,\ldots, t_N$ be as defined in \Cref{lem:diffusion-convergence}.  Then with probability at least $1 - \delta_{\text{f}}$,
\Cref{alg:generate-sample} draws 
$M = d^{\poly(k \tau/(\lambdamin \eps))} \log \frac{1}{\delta_{\text{f}}}$ samples from $\mathcal M$,
runs in sample-polynomial time, and generates a sample
$\vec y_N$ whose distribution is $\eps$-close in total variation to $\mathcal M$.
\end{theorem}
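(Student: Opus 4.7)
The plan is to combine the two main technical ingredients already established: the score estimation guarantee of \Cref{inf-thm:learning-the-score} (formally \Cref{thm:learning-score-guarantee-fixed-t}) and the convergence of the discretized reverse SDE from \Cref{lem:diffusion-convergence}. At a high level, the algorithm calls \textsc{LearnScore} at each of the $N$ noise scales $T - t_\ell$ used in the exponential integrator scheme of \Cref{alg:generate-sample}, then runs the discretized reverse SDE with those estimates. Correctness reduces to (i) setting $\eps_{\text{score}}$ and the time schedule so that the convergence lemma delivers TV error $\eps$, (ii) union-bounding the per-step score learning failures, and (iii) tallying the total number of samples.

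First I would fix parameters. Given target TV error $\eps$ and well-conditioning parameter $\tau = (\beta/\alpha)\log R$, the parameter $\delta = \alpha\eps/R$ plays the role of an early-stopping threshold that controls how close to time $0$ the reverse process is run; together with the specified $\eps_{\text{score}} = \eps/\log(R/(\alpha\eps))$, these are precisely the values required by \Cref{lem:diffusion-convergence} so that if each learned score $\wh{\vec s}(\cdot, T - t_\ell)$ satisfies
\begin{equation*}
    \E_{\x_{T - t_\ell} \sim q_{T - t_\ell}}\bigl[\, \|\wh{\vec s}(\x_{T - t_\ell}, T - t_\ell) - \nabla_{\x}\log q_{T - t_\ell}(\x_{T - t_\ell})\|^2\, \bigr] \le \eps_{\text{score}},
\end{equation*}
then the law of the output $\vec y_N$ is within $\eps$ in total variation of $\mathcal{M}$. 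The lemma also fixes $N = \poly(d\tau/\eps)$ and the schedule $t_0,\ldots,t_N$, so the only remaining task is to provably produce score estimates of the required accuracy at every time step.

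Next I would invoke \Cref{thm:learning-score-guarantee-fixed-t} once for each $\ell \in \{0,\ldots,N-1\}$, with accuracy parameter $\eps_{\text{score}}$ and failure probability $\delta_{\text{f}}/N$. Each invocation draws $d^{\poly(k\tau/(\lambdamin \eps_{\text{score}}))} \log(N/\delta_{\text{f}})$ samples from $\mathcal{M}$ and, conditional on its own success event, returns a score estimate meeting the $L_2$ error target at noise scale $T - t_\ell$. A union bound over the $N$ calls ensures that all score estimates are simultaneously accurate with probability at least $1 - \delta_{\text{f}}$. On this event, \Cref{lem:diffusion-convergence} applies to the iterates $\vec y_0, \ldots, \vec y_N$ produced by the update rule on line $6$ of \Cref{alg:generate-sample}, yielding $\tvdistance(\text{law of } \vec y_N, \mathcal{M}) \le \eps$. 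The total sample count aggregates across the $N$ calls and absorbs $\log N = \polylog(d\tau/\eps)$ into the already exponential $d^{\poly(\cdot)}$ factor, giving $M = d^{\poly(k\tau/(\lambdamin\eps))}\log(1/\delta_{\text{f}})$ as claimed; the runtime is polynomial in $M$ because each application of \textsc{LearnScore} runs in sample-polynomial time and each step of the reverse SDE evaluates one polynomial piece.

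The main obstacle I anticipate is bookkeeping near the boundary of the time schedule. \Cref{inf-thm:learning-the-score} is only stated for $t \ge \poly(\eps/\tau)$, so one must verify that the schedule from \Cref{lem:diffusion-convergence}, together with the early-stopping parameter $\delta = \alpha\eps/R$, avoids querying the score at noise scales where the learning guarantee degrades; equivalently, one must check that $T - t_\ell$ stays bounded below throughout. Similarly, I need to confirm that the per-time-step approximation guarantee, which is in expectation under the true $q_{T - t_\ell}$ rather than under the process induced by previous approximate scores, is exactly the form required as input to \Cref{lem:diffusion-convergence}. Both are routine to reconcile with the parameter choices in the theorem statement, after which the end-to-end guarantee follows directly from the union bound and the convergence lemma.
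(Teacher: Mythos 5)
Your proposal takes the same high-level route as the paper: instantiate \Cref{lem:diffusion-convergence}, feed it per-step score estimates from \Cref{thm:learning-score-guarantee-fixed-t}, union-bound the $N$ learning calls, and tally samples. You also correctly flag the early-stopping threshold $\delta = \alpha\eps/R$ as the device that keeps the noise scales away from zero, which is exactly how the paper reconciles the range restriction on the score-learning guarantee.

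However, there is a real gap in the step where you assert that the chosen $\eps_{\text{score}}$ and $\delta$ are ``precisely the values required by \Cref{lem:diffusion-convergence}'' to deliver TV error $\eps$ against $\mathcal{M}$. \Cref{lem:diffusion-convergence} only bounds $\kldistance(q_\delta \| p_{t_N})$ in terms of $\eps_{\text{score}}^2$, discretization terms $\kappa^2 dN + \kappa dT$, and $\kldistance(q_T \| \normal(0,\Id))$; it says nothing directly about $\tvdistance(q_0, p_{t_N})$. Turning this KL bound into a TV bound against the target mixture requires (i) Pinsker plus a triangle inequality through $q_\delta$, (ii) a bound on the early-stopping error $\tvdistance(q_0, q_\delta)$ --- which in the paper is shown to be $\lesssim \delta R/\alpha$ by computing KL between the original components $\normal(\vmu_i,\cov_i)$ and the early-stopped components $\normal(e^{-\delta}\vmu_i, e^{-2\delta}\cov_i + (1-e^{-2\delta})\Id)$ using the well-conditioning bounds $\alpha\Id \preceq \cov_i$ and $\norm{\cov_i - \Id}_F \le R$ --- (iii) a bound on $\kldistance(q_T \| \normal(0,\Id)) \lesssim e^{-2T}R^2$ and a corresponding choice $T = \log(R/\eps)$, and (iv) a choice of $N = (\log^2(R/\alpha\eps)d)/\eps$ large enough to tame the $\kappa^2 dN + \kappa dT$ discretization terms. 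None of these steps is automatic from the parameter setting alone; each uses the structure of Gaussian mixtures and the well-conditioning assumptions, and together they are the main technical content of the theorem's proof. Your proposal should carry them out rather than absorb them into the claim that the lemma ``delivers TV error $\eps$.''
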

\begin{proof}

We are going to
use the following result on the convergence of the discretized reverse SDE with the score 
approximation that we use in \Cref{alg:generate-sample}.

\begin{lemma}[Convergence given approximate scores, \citep{benton2023linear}]
\label{lem:diffusion-convergence}
Fix some $\delta \in(0,1)$, $T \geq 1$ and let $N$ be some even integer larger 
than $\log(1/\delta)$ and let $\kappa > 0$ be larger than a sufficiently large 
constant multiple of $(T + \log(1/\delta))/N$.
Set $t_0 = 0$, $t_{N/2} = T-1$,  $t_N = T-\delta$.
Moreover, set $t_1,\ldots, t_{N/2-1}$ equally spaced on $[0, T-1]$, i.e.,
$t_{\ell+1} - t_{\ell} = \kappa > 0$ for all $\ell \in\{0, \ldots, N/2-1\}$
and $T-t_{N/2+1},\ldots T-t_{N-1}$ exponentially decaying, i.e.,
$t_{N/2+\ell+1}- t_{N/2+\ell} = \kappa/(1+\kappa)^{\ell}$ for all $\ell \in\{0,\ldots, N/2-2\}$ and $\gamma_\ell \leq \kappa \min(1, T - t_\ell)$. Assume that the data distribution and the score function satisfy the following assumptions.
\begin{enumerate} 
\item $\sum_{\ell=0}^{N - 1} \gamma_\ell \E_{\x \sim q_{t_\ell}}[\| \nabla \log q_{T-t_\ell}(\x) - \wh{\vec s}(\x, T-t_\ell)\|_2^2] \leq \epscore^2$. 
\item The target distribution $q_0$ on $\R^d$ has finite second moment.  
\end{enumerate}
For any $t \in [0, T]$ denote by $q_t$ the distribution of $\exp(-t) \vec x_0 + \sqrt{1-\exp(-2t)}
\vec z_t$, where $\vec x_0 \sim q_0$ and $\vec z_t \sim \normal(\vec 0,\Id)$
and denote by $p_{t_N}$ the distribution of the output $\vec y_N$ of \Cref{alg:generate-sample}. 
It holds that 
\[
\kldistance(q_{\delta} \| p_{t_N}) \lesssim \epscore^2 +\kappa^2 d N 
+ \kappa d T + \kldistance(q_{T} \| \normal(\vec 0, \Id))\,. 
\]
\end{lemma}

We first show that the guarantee of \Cref{lem:diffusion-convergence} yields
a total variation bound between $p_{t_N}$ and the target Gaussian mixture  $\mathcal M$.  By Pinsker's inequality, we obtain that 
$\tvdistance(p_{t_N}, q_\delta) \lesssim \sqrt{\kldistance(q_\delta\|p_{t_N}) }$.
Moreover, by a triangle inequality, we obtain that 
$\tvdistance(q_0, p_{t_N}) \leq \tvdistance(p_{t_N}, q_\delta) + \tvdistance(q_\delta, q_0)$.  Therefore, we have to control $\tvdistance(q_0, q_\delta)$.  Using again Pinsker's inequality we obtain that
$\tvdistance(q_0, q_\delta) \lesssim \sqrt{\kldistance(q_0 \|q_\delta)}$.
To control the Kullback-Leibler divergence between the target
$q_0$ that corresponds to the well-conditioned mixture and 
$q_\delta$.  We observe that $q_\delta$ is also a Gaussian mixture
with parameters $\wh{\vec \mu}_i = \vec \mu_i \exp(-\delta)$ and $\wh{\cov}= \cov_i e^{-2 \delta} + (1-e^{-2 \delta}) ~ \Id$. We denote this mixture by $\mathcal M_\delta$  Since $\kldistance$ is convex we obtain
that 
\begin{equation*}
\kldistance(q_\delta \| q_0) \leq \sum_{i=1}^k \lambda_i 
\kldistance(\normal(\vec \mu_i, \cov_i)\|\normal(\wh{\vec \mu}_i, \wh{\cov}_i))
\leq 
\max_{i=1}^k \kldistance(\normal(\vec \mu_i, \cov_i)\|\normal(\wh{\vec \mu}_i, \wh{\cov}_i)) \,.
\end{equation*}
We can now use the following standard bound for the Kullback-Leibler distance
between two Normal distributions 
$\kldistance(\normal(\vec\mu_1, \cov_1)\|\normal(\vec \mu_2, \cov_2))
\lesssim \|\Id - \cov_2^{-1/2} \cov_1 \cov_2^{-1/2}\|_F^2 + \|\cov_2^{-1/2}(\vec \mu_1 - \vec \mu_2)\|_2^2 $.  We have that
\[
\| \cov_i^{-1/2} (\vec \mu_i - \vec \mu_i e^{-\delta})  \|_2^2
\leq \frac{R^2}{\alpha} (1- e^{-\delta})^2
\leq  \frac{R^2}{\alpha} ~ \delta^2 \,,
\]
where the last inequality follows by the fact that $\alpha \Id \preceq \cov_i $ and the fact that $\|\vec \mu_i\|_2 \leq R$
and the inequality $e^x \geq x + 1$.  Moreover, if $s_1,\ldots s_d$ are the eigenvalues of $\cov_1$, we have that
\[
 \|\Id - \cov_2^{-1/2} \cov_1 \cov_2^{-1/2}\|_F^2 
 = \sum_{i=1}^d \Big(1 - \frac{ s_i e^{-2 \delta} + (1 - e^{-2 \delta})}{s_i}\Big)^2
 =  (1-e^{-2\delta})^2 \sum_{i=1}^d \Big(\frac{ 1- s_i}{s_i}\Big)^2 
 \lesssim \frac{\delta^2}{\alpha^2} R^2 \,,
\]
where the last inequality follows by the assumption that 
$\|\cov_i - \Id\|_F^2 \lesssim R$ and the fact that $s_i \geq \alpha$ for all $i$.
Putting the above together, we obtain that 
$\tvdistance(p_{t_N}, \mathcal M) \lesssim \sqrt{\kldistance(q_\delta\|p_{t_N})}
+ \delta R /\alpha $.

Similarly, we have to control the convergence error of the forward OU process $\kldistance(q_T \| \normal(\vec 0, \Id))$.  Similarly to the above argument, 
by the convexity of the Kullback-Leibler, we obtain that it suffices to control
the KL divergence between any component of the mixture and the standard normal $\normal(\vec 0, \Id)$.  Using the same bound for the KL divergence as above, we have that 
\[
\kldistance(q_T \| \normal(\vec 0, \Id)
\leq \max_{i=1}^k (e^{-2 T} \|\vec \mu_i \|_2^2 
+ e^{-4 T} \|\cov_i - \Id\|_F^2)
\lesssim e^{-2 T} R^2 \,.
\]

To make the forward process converge to an $\eps$-approximate Gaussian, we take $T = \log (R/\eps)$.  We choose $\eps_{\text{score}} = \eps / \log ( R / (\alpha \eps) )$ and $\delta = \alpha \eps / R$. Additionally, we have $\gamma_\ell \leq \kappa$ for all $\ell$. Therefore, we have  
\begin{equation*}
    \sum_{\ell=0}^{N - 1} \gamma_\ell \E_{\x \sim q_{t_\ell}}[\| \nabla \log q_{T-t_\ell}(\x) - \wh{\vec s}(\x, T-t_\ell)\|_2^2] \leq \eps \; .
\end{equation*}
The above choice also yields $\kappa^2 d N \lesssim (\log^2(R/\alpha \eps) d) / N $ and $\kappa d T \lesssim (\log^2(R/\alpha \eps) d) / N$. Choosing $N = (\log^2(R/\alpha \eps) d) / \eps$ and combining all the terms in \Cref{lem:diffusion-convergence}, we obtain that $\tvdistance(p_{t_N}, \mathcal M) \leq \eps$. We obtain sample complexity and runtime of the algorithm by putting $\eps_{\text{score}} = \eps / \log ( R / (\alpha \eps) )$ and failure probability $\delta_f = \delta / N$ in \Cref{thm:learning-score-guarantee-fixed-t}. 
\end{proof}

\section{Obtaining crude estimates for the parameters}
\label{sec:crude-parameters}

In this section, we prove the next lemma showing that we can construct a list of candidates for the unknown parameters of the mixture, containing ``crude'' approximation to the true target parameters.

\begin{lemma}\label{lem:crude_param}
    There is an algorithm \textsc{CrudeEstimate}($q$) which returns a list $\calW$ such that for every $i\in[k]$, there exists $(\hatmu_i, \wh{\cov}_i) \in \calW$ for which $\norm{\vmu_i - \hatmu_i}^2 \lesssim \beta/\lambdamin$ and $\norm{\cov_i - \wh{\cov}_i}_F \lesssim k^{3/2}\beta/\lambdamin + k^2\alpha\log\radius$. Furthermore, $|\calW| \le (R/\sqrt{\beta})^{O(k^2)}\cdot d^{O(k)}$, and the algorithm runs in time $(\radius/\sqrt{\beta})^{O(k^2)}\cdot(\poly(d,1/\beta) + d^{O(k)})$ and draws $\poly(dR/\beta)$ samples.
\end{lemma}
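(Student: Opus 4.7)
I would follow the two-phase strategy outlined in Section 2.4: first extract a crude subspace containing the component means via PCA on the empirical second moment, then extract a crude subspace containing the (vectorized) covariances via PCA on a recentered empirical fourth moment. Throughout, ``crude'' means that each signal vector has small projection onto the orthogonal complement of the recovered subspace; the brute-force over nets of these low-dimensional subspaces then produces the list $\calW$.

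\textbf{Phase 1 (mean subspace).} Draw $N_1=\poly(dR/\beta)$ samples and form the empirical estimate $\wh{\bM}$ of
\[
\bM \;=\; \E_{\x\sim\mathcal M}[\x\x^\top] \;=\; \sum_{i=1}^k \lambda_i\, \vmu_i\vmu_i^\top \;+\; \sum_{i=1}^k \lambda_i\, \cov_i.
\]
By matrix Bernstein, $\|\wh{\bM}-\bM\|_{\op}\le \sqrt{\beta}/100$ with high probability using $\poly(dR/\beta)$ samples. Since every $\cov_i \preceq \beta\,\Id$, the second summand has operator norm at most $\beta$; hence, letting $V$ be the span of the top $k$ left singular vectors of $\wh{\bM}$, a standard $k$-SVD/Davis--Kahan argument yields $\|(\Id-\Pi_V)\sqrt{\lambda_i}\,\vmu_i\| \lesssim \sqrt{\beta}$, so $\|(\Id-\Pi_V)\vmu_i\|\lesssim \sqrt{\beta/\lambdamin}$. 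Netting $V\cap\{\bv:\|\bv\|\le R\}$ at granularity $\sqrt{\beta/\lambdamin}$ produces $(R/\sqrt{\beta})^{O(k)}$ candidate mean vectors per component, i.e.\ $(R/\sqrt\beta)^{O(k^2)}$ candidate $k$-tuples after enumeration, where each subspace $V$ is specified by $d^{O(k)}$ bits.

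\textbf{Phase 2 (covariance subspace).} For each candidate mean tuple $(\hatmu_1,\ldots,\hatmu_k)$ from Phase 1 and each assignment of samples to guessed components, I would form a flattened fourth-moment-style matrix whose signal is
\[
\sum_{i=1}^k \lambda_i\, \vc{\cov_i}\,\vc{\cov_i}^\top \;\in\; \R^{d^2\times d^2},
\]
using vectors of the form $\vec v(\x,\hatmu) = \vc{(\x-\hatmu)(\x-\hatmu)^\top}$. Under the ``correct'' guess, $\E[\vec v(\x,\hatmu_i)\mid \x\sim\mathcal N_i] = \vc{\cov_i} + \vc{(\vmu_i-\hatmu_i)(\vmu_i-\hatmu_i)^\top}$, and the additive error has Frobenius norm $\|\vmu_i-\hatmu_i\|^2 = O(\beta/\lambdamin)$. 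Subtracting the closed-form isotropic contribution of a centered Gaussian to $\E[\vec v(\x,\vmu_i)\vec v(\x,\vmu_i)^\top]$, and estimating the resulting matrix from $\poly(d,1/\beta)$ additional samples, yields a $d^2\times d^2$ matrix whose top-$k$ singular subspace $W$ approximates the span of $\{\vc{\cov_i}\}$. A second application of $k$-SVD gives $\|(\Id-\Pi_W)\vc{\cov_i}\|_F \lesssim k^{3/2}\beta/\lambdamin + k^2\alpha\log\radius$. Netting $W$ at the corresponding resolution gives the covariance candidates. Combining over the $(R/\sqrt\beta)^{O(k)}\cdot d^{O(k)}$ Phase-1 tuples and the $(R/\sqrt\beta)^{O(k^2)}$ Phase-2 nets gives $|\calW|\le (R/\sqrt\beta)^{O(k^2)}\cdot d^{O(k)}$.

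\textbf{Main obstacle.} The delicate point is the error analysis in Phase 2. Because the Phase-1 mean guesses are only accurate to $O(\sqrt{\beta/\lambdamin})$, the recentering $\x\mapsto \x-\hatmu_i$ produces cross terms $(\vmu_i-\hatmu_i)\otimes(\x-\vmu_i)$ whose naive contribution to the spectrum could scale with the dimension $d$. Isolating the signal $\sum_i\lambda_i\vc{\cov_i}\vc{\cov_i}^\top$ from these cross terms to the promised accuracy $k^{3/2}\beta/\lambdamin + k^2\alpha\log\radius$ will require combining the Hanson--Wright bound of \Cref{prop:HW} (to concentrate $\vec v(\x,\vmu_i)$ around $\vc{\cov_i}$ dimension-free) with a Wedin-type perturbation bound on the $d^2\times d^2$ signal matrix, and handling the cross terms by a separate operator-norm bound that exploits the fact that $\x-\vmu_i$ is mean-zero and Gaussian. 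The $k^{3/2}$ loss reflects projecting onto a $k$-dimensional subspace, and the $k^2\alpha\log\radius$ term reflects the residual isotropic bias (of size $\alpha$ per component, amplified by grid resolution) introduced when netting the covariance subspace within a ball of radius $R$.
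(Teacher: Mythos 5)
Your Phase~1 is essentially identical to the paper's: estimate $\bM = \E[\x\x^\top]$, invoke the $k$-SVD argument (the paper's \Cref{lem:pca}), and net the recovered $k$-dimensional subspace. That part is fine.

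Phase~2 has a genuine gap, and it is exactly at the point you flag as the ``main obstacle'' --- but flagging it is not the same as resolving it. Two specific problems:

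\begin{enumerate}
\item \textbf{The recentering rule is not an algorithm.} You write ``for each candidate mean tuple \ldots\ and each assignment of samples to guessed components.'' Enumerating assignments of $N$ samples to $k$ components is $k^N$, not brute-forceable, and you later write $\E[\vec v(\x,\hatmu_i)\mid \x\sim\normal_i]$ as if you know which component each sample came from --- which you cannot. The paper replaces this with the data-dependent rule $\hatmu(\x) = \hatmu_{\argmin_j \norm{\hatmu_j - \hatPi\x}}$, i.e.\ subtract the nearest mean estimate \emph{after projecting to the recovered mean subspace $\hatPi$}. This is deterministic, single-pass, and crucially amenable to analysis via \Cref{lem:clustermeans} (which shows that the probability of misassignment to a ``far'' mean is tiny, while misassignments among ``close'' means introduce only bounded bias).

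\item \textbf{The block decomposition is the actual mechanism, and it is absent.} You correctly worry that naive recentering introduces cross terms whose spectral contribution could scale with $d$, and you speculate that Hanson--Wright plus Wedin-type perturbation plus a ``separate operator-norm bound'' would handle it. The paper's actual resolution is structural, not just a tighter concentration bound: it splits the flattened fourth moment into the three blocks $\flatmat_{00}, \flatmat_{01}, \flatmat_{11}$ obtained by projecting each copy of $(\x - \hatmu(\x))$ onto $\hatPi$ or $\hatPi^\perp$. The $\flatmat_{11}$ block is literally unaffected by which $\hatmu_j$ was subtracted (since $\hatPi^\perp\hatmu_j = 0$); the $\flatmat_{00}$ and $\flatmat_{01}$ blocks live in at most $kd$ effective dimensions, which is how the $\poly(d)$ error is avoided. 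Without this decomposition, the error analysis (\Cref{lem:signal_term}, \Cref{lem:noise_term}, \Cref{cor:Ni_approx}) does not go through, and I do not see how your ``separate operator-norm bound'' would substitute for it.
\end{enumerate}

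A smaller issue: ``subtracting the closed-form isotropic contribution'' is underspecified --- the isotropic term $2\norm{\cov^{1/2}\bA\cov^{1/2}}_F^2$ depends on $\cov$, which is the unknown. The paper does not subtract it; it shows (via \Cref{lem:wicks_scalar}) that its operator-norm contribution is $O(\beta^2)$ and absorbs it into the error.

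In short: the high-level plan (PCA on second moment, then PCA on a recentered flattened fourth moment, then net) is right, but the proposal is missing the two load-bearing ideas of Phase~2 --- the subspace-projected nearest-mean recentering and the $\hatPi/\hatPi^\perp$ block decomposition --- and without them the argument you sketch does not close.
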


\noindent The algorithm operates in two phases. First, we obtain a crude estimate for the subspace spanned by the means, after which we brute-force within this subspace to find points close to each of the means. Second, we use these mean estimates to form an estimator for the subspace spanned by the covariances, after which we can similarly brute-force to find points close to each of the covariances.

\subsection{Estimating the means}
\label{sec:crude-means}

This phase is straightforward: we simply take the top-$k$ singular subspace of the empirical second moment matrix (see \Cref{alg:mean_span} below). 

\begin{lemma}\label{lem:mean_net}
    There is an algorithm {\sc CrudeEstimateMeans}($q$) which returns a list $\calW$ such that for each $i\in[k]$, there exists $\hatmu_i\in\calW$ for which $\norm{\vmu_i - \hatmu_i}^2 \lesssim \beta/\lambdamin$. Furthermore, $| \mathcal W | \leq (\radius/\sqrt{\beta})^{O(k)}$, and the algorithm runs in time $\poly(d\radius/\beta) + (\radius/\sqrt{\beta})^{O(k)}$ and draws $\poly(d\radius/\beta)$ samples.
\end{lemma}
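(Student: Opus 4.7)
The plan is to run PCA on the empirical second moment matrix and then enumerate an appropriately spaced net over the recovered $k$-dimensional subspace. Write
\[
    \bM \;\eqdef\; \E_{\x\sim\mathcal{M}}[\x\x^\top] \;=\; \underbrace{\sum_{i=1}^k \lambda_i \vmu_i\vmu_i^\top}_{\bM_\mu} \;+\; \underbrace{\sum_{i=1}^k \lambda_i \cov_i}_{\bM_\cov}.
\]
The term $\bM_\mu$ has rank at most $k$ with image equal to $V\eqdef\mathrm{span}(\vmu_1,\ldots,\vmu_k)$, and $\bM_\cov$ satisfies $\|\bM_\cov\|_{\op}\le\beta$ by the well-conditioning assumption. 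The first step is therefore to argue that the top-$k$ eigenspace of $\bM$ is a good proxy for $V$.

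More precisely, let $\Pi$ be the projector onto the top-$k$ eigenspace of $\bM$ and let $v$ be any unit vector orthogonal to $\Pi$. Since $\bM_\mu$ has rank $\le k$, Weyl's inequality gives $\sigma_{k+1}(\bM)\le\|\bM_\cov\|_{\op}\le\beta$, so $v^\top\bM v\le\beta$. Because $\bM_\cov\succeq 0$, we also have $v^\top \bM_\mu v\le v^\top\bM v\le\beta$, i.e.\ $\sum_i \lambda_i(v^\top\vmu_i)^2\le\beta$. Specializing $v$ to the unit vector in the direction of $(\Id-\Pi)\vmu_i$ yields the key bound
\[
    \lambdamin\,\|(\Id-\Pi)\vmu_i\|^2 \;\le\; \lambda_i\|(\Id-\Pi)\vmu_i\|^2 \;\le\; \beta,
\]
so each $\vmu_i$ lies within $\sqrt{\beta/\lambdamin}$ of the $k$-dimensional subspace returned by PCA. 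In particular $\|\Pi\vmu_i\|\le R+\sqrt{\beta/\lambdamin}=O(R)$.

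The second step is to enumerate. Covering the ball of radius $O(R)$ in the $k$-dimensional subspace $\mathrm{range}(\Pi)$ by an $\eta$-net with $\eta\asymp\sqrt{\beta/\lambdamin}$ produces at most $(R/\eta)^{O(k)}=(R/\sqrt{\beta})^{O(k)}$ candidates (using $\lambdamin\le 1$), and some net point $\hatmu_i$ satisfies $\|\Pi\vmu_i-\hatmu_i\|\le\eta$, so $\|\vmu_i-\hatmu_i\|^2\lesssim\beta/\lambdamin$ as required.

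The only remaining issue -- and the one to be careful with -- is replacing $\bM$ by the empirical estimate $\wh\bM$ obtained from the samples. Since the mixture is $(\alpha,\beta,R)$-well-conditioned, a draw $\x\sim\mathcal{M}$ is subgaussian with $\|\x\|\le\poly(R,\sqrt{d\beta})$ with overwhelming probability, so matrix Bernstein (applied after a standard truncation step) gives $\|\wh\bM-\bM\|_{\op}\le\beta/C$ for a large constant $C$ using $\poly(dR/\beta)$ samples. Running PCA on $\wh\bM$ then yields a projector $\wh\Pi$ whose orthogonal complement contributes at most $O(\beta)$ to $v^\top\bM v$ by the same Weyl argument, so the bound $\|(\Id-\wh\Pi)\vmu_i\|^2\lesssim\beta/\lambdamin$ survives up to constants. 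This handles the sample-complexity piece; the rest is the net above, giving the claimed list size $(R/\sqrt{\beta})^{O(k)}$, runtime $\poly(dR/\beta)+(R/\sqrt{\beta})^{O(k)}$ for computing $\wh\bM$, its eigendecomposition, and writing down the net, and $\poly(dR/\beta)$ samples. The main subtlety is verifying that the concentration step really only costs polynomial factors despite the potentially large operator norm of $\bM$; this is handled by noting that the relevant quantity is the deviation $\wh\bM-\bM$, which is controlled by the variance of $\x\x^\top$, itself polynomial in $R$ and $\beta$.
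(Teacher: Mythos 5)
Your proposal is correct and follows essentially the same route as the paper: decompose $\bM = \sum_i \lambda_i \vmu_i\vmu_i^\top + \sum_i \lambda_i \cov_i$, treat the second term as an operator-norm-$\beta$ perturbation, use Weyl's inequality to bound $\sigma_{k+1}$, project $\vmu_i$ onto the top-$k$ eigenspace, and cover that subspace with a net; the paper just isolates the projection step as a standalone $k$-SVD perturbation lemma (applied with $\vec v_i = \sqrt{\lambda_i}\vmu_i$) so it can be reused later in the covariance-estimation step, whereas you inline the same argument. One minor practical note: the algorithm does not know $\lambdamin$, so the net spacing should be $\sqrt\beta$ (as in the paper's Algorithm \textsc{CrudeEstimateMeans}) rather than $\sqrt{\beta/\lambdamin}$, but this only makes the list larger by a benign amount and does not affect the claimed bounds.
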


\begin{algorithm2e}
\DontPrintSemicolon
\caption{\textsc{CrudeEstimateMeans}($q$)}
\label{alg:mean_span}
    \KwIn{Sample access to $q$}
    \KwOut{List $\calW$ containing approximations to $\vmu_1,\ldots,\vmu_k$}
        Draw samples $\x_1,\ldots,\x_N$ from $q$ for $N \gets \poly(d\radius/\beta)$\;
        $\wh{\bM} \gets \frac{1}{N}\sum_i \x_i \x_i^\top$\;
        $\wh{V}\gets$ top-$k$ singular subspace of $\wh{\bM}$\;
        $\calW\gets$ a $\beta^{1/2}$-net over vectors in $\wh{V}$ with $L_2$ norm at most $2\radius$\;
        \Return{$\calW$}
\end{algorithm2e} 

The analysis (as well as subsequent parts of our proof) uses the following standard bound for $k$-SVD:

\begin{lemma}\label{lem:pca}
    Let $\bA = \sum^k_{i=1} \vec v_i \vec v_i^\top + \bE$ for $\norm{\bE}_{\rm op} \le \epsilon$. The top-$k$ singular subspace of $\bA$ contains vectors $\wh{\vec v}_1,\ldots,\wh{\vec v}_k$ for which $\norm{\vec v_i - \wh{\vec v}_i}^2 \le 2\epsilon$ for all $i\in[k]$.
\end{lemma}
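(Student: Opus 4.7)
The plan is to take each $\wh{\vec v}_i$ to be the orthogonal projection of $\vec v_i$ onto the top-$k$ singular subspace $\wh{V}$ of $\bA$, and to bound $\norm{(I - \wh P)\vec v_i}^2$ directly, where $\wh P$ denotes the projector onto $\wh{V}$. Throughout I will treat $\bA$ as symmetric, which is how the lemma is applied in this paper (to empirical second- and fourth-moment matrices and related symmetric PSD constructions), so the top-$k$ singular subspace coincides with the span of the top $k$ eigenvectors.

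The first step is to control the Rayleigh quotient of $\bM \coloneqq \sum_{i=1}^k \vec v_i \vec v_i^\top$ in directions orthogonal to $\wh{V}$. Since $\bM$ has rank at most $k$, Weyl's inequality gives $\lambda_{k+1}(\bA) \le \lambda_{k+1}(\bM) + \norm{\bE}_{\op} \le \epsilon$. By the variational characterization of eigenvalues, any unit $\vec w \perp \wh{V}$ satisfies $\vec w^\top \bA \vec w \le \lambda_{k+1}(\bA) \le \epsilon$, and combining with $|\vec w^\top \bE \vec w| \le \norm{\bE}_{\op} \le \epsilon$ yields $\vec w^\top \bM \vec w \le 2\epsilon$. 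Expanding $\vec w^\top \bM \vec w = \sum_{j=1}^k \langle \vec v_j, \vec w\rangle^2$, each individual term satisfies $\langle \vec v_i, \vec w\rangle^2 \le 2\epsilon$ for every $i \in [k]$ and every unit $\vec w \in \wh{V}^\perp$.

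The second step upgrades this per-direction estimate into the stated per-vector bound. Set $\wh{\vec v}_i \coloneqq \wh P \vec v_i$; then the residual $(I - \wh P)\vec v_i$ lies in $\wh{V}^\perp$, and if nonzero we let $\vec w_i \coloneqq (I - \wh P)\vec v_i / \norm{(I - \wh P)\vec v_i}$. Using that $\wh P$ is an orthogonal projection, $\norm{(I - \wh P)\vec v_i}^2 = \langle \vec v_i, (I - \wh P)\vec v_i\rangle = \norm{(I - \wh P)\vec v_i}\cdot\langle \vec v_i, \vec w_i\rangle$, which rearranges to $\norm{\vec v_i - \wh{\vec v}_i} = \langle \vec v_i, \vec w_i\rangle$. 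Squaring and invoking the previous paragraph yields $\norm{\vec v_i - \wh{\vec v}_i}^2 \le 2\epsilon$.

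The main subtlety is exactly this last conversion: Weyl together with the operator-norm perturbation only yields per-direction control (a Rayleigh quotient bound on $\wh{V}^\perp$), whereas the target is a bound on the length of the entire residual $(I - \wh P)\vec v_i$. The trick is to choose the test direction $\vec w$ to be the unit vector along the residual itself, which makes the per-direction bound tight for that particular $i$. Everything else is a single application of Weyl plus standard spectral reasoning, so I do not anticipate further obstacles.
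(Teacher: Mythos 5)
Your proof is correct and takes essentially the same route as the paper's: Weyl's inequality to bound the $(k{+}1)$-th singular value of $\bA$ by $\epsilon$, the error-term bound to control $\vec w^\top \bM \vec w$ on the orthogonal complement, and then the observation that the Rayleigh quotient of $\bM$ along the residual direction dominates $\norm{(I-\wh P)\vec v_i}^2$. The only cosmetic difference is that you normalize the residual to a unit vector $\vec w_i$ and bound $\iprod{\vec v_i,\vec w_i}^2$, whereas the paper keeps the unnormalized residual $\vec r_i$ and compares $\norm{\vec r_i}^4$ to $2\epsilon\norm{\vec r_i}^2$; these are the same calculation.
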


\begin{proof}
    Define $\bA^* \triangleq \sum^k_{i=1} \vec v_i \vec  v_i^\top$. Let $\Pi^\perp$ denote the projector to the orthogonal complement of the top-$k$ singular subspace of $\bA$, and define $\vec r_i \triangleq \Pi^\perp \vec v_i$. Then
    \begin{equation*}
        \vec r_i^\top \bA^* \vec r_i = \sigma_{k+1}(\bA)\cdot \norm{\vec r_i}^2 - \vec r_i^\top \bE \vec r_i \le 2\epsilon \norm{\vec r_i}^2\,,
    \end{equation*}
    where in the last step we used Weyl's inequality to bound $\sigma_{k+1}(\bA)$.

    On the other hand,
    \begin{equation*}
        \vec r_i^\top \bA^* \vec r_i = \sum_j \iprod{\vec r_i, \vec v_j}^2 \ge \iprod{\vec r_i, \vec v_i}^2 = \norm{\vec r_i}^4\,,
    \end{equation*}
    so we conclude that $\norm{\vec r_i}^2 \le 2\epsilon$. If we define $\wh{\vec v}_i = \Pi \vec v_i$, where $\Pi$ is the projector to the top-$k$ singular subspace of $\bA$, then $\norm{\wh{\vec v}_i -\vec v_i}^2 = \norm{\vec r_i}^2 \le 2\epsilon$ as claimed.
\end{proof}

\noindent We now show that the empirical second moment matrix can be used to extract a rough approximation to the span of the means:

\begin{lemma}\label{lem:crude_means}
    For $\vec x\sim \mathcal M$, let $\bM \triangleq \E[\vec x \vec x^\top]$. Given $\wh{\bM}$ for which $\norm{\bM - \wh{\bM}}_{\rm op} \lesssim \beta$, let $\wh{V}$ denote the top-$k$ singular subspace of $\wh{\bM}$. Then for every $i\in[k]$, there exists $\wh{\vec \mu}_i \in \wh{V}$ for which $\norm{\wh{\vec \mu}_i - \vec \mu_i}^2 \lesssim \beta/\lambdamin$.
\end{lemma}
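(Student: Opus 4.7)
The plan is to decompose $\bM$ as a sum of a rank-$k$ part coming from the means plus a bounded-operator-norm perturbation from the covariances, then invoke the PCA bound (\Cref{lem:pca}) and undo a rescaling by the mixture weights.

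Concretely, first I would expand
\begin{equation*}
    \bM = \E_{\x\sim\mathcal M}[\x\x^\top] = \sum_{i=1}^k \lambda_i(\vmu_i \vmu_i^\top + \cov_i) = \sum_{i=1}^k (\sqrt{\lambda_i}\vmu_i)(\sqrt{\lambda_i}\vmu_i)^\top + \bE_0\,,
\end{equation*}
where $\bE_0 \triangleq \sum_i \lambda_i \cov_i$. Using $\cov_i \preceq \beta\Id$ from \Cref{def:well-conditioned-mixture} and $\sum_i \lambda_i = 1$, I get $\norm{\bE_0}_{\rm op} \le \beta$. Combining with the hypothesis $\norm{\bM - \wh{\bM}}_{\rm op} \lesssim \beta$, this lets me write
\begin{equation*}
    \wh{\bM} = \sum_{i=1}^k (\sqrt{\lambda_i}\vmu_i)(\sqrt{\lambda_i}\vmu_i)^\top + \bE\,, \qquad \norm{\bE}_{\rm op} \lesssim \beta\,.
\end{equation*}

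Next I would apply \Cref{lem:pca} directly to this decomposition of $\wh{\bM}$, with $\vec v_i \triangleq \sqrt{\lambda_i}\vmu_i$ and perturbation $\bE$. The conclusion is that the top-$k$ singular subspace $\wh{V}$ of $\wh{\bM}$ contains vectors $\wh{\vec v}_i$ satisfying $\norm{\wh{\vec v}_i - \sqrt{\lambda_i}\vmu_i}^2 \lesssim \beta$ for every $i\in[k]$.

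Finally I would set $\hatmu_i \triangleq \wh{\vec v}_i / \sqrt{\lambda_i}$, which still lies in $\wh{V}$ since $\wh{V}$ is a linear subspace. Dividing the error bound by $\lambda_i$ gives
\begin{equation*}
    \norm{\hatmu_i - \vmu_i}^2 = \tfrac{1}{\lambda_i}\norm{\wh{\vec v}_i - \sqrt{\lambda_i}\vmu_i}^2 \lesssim \beta/\lambda_i \le \beta/\lambdamin\,,
\end{equation*}
as desired. There is no real obstacle here; the only thing to be careful about is absorbing the $\cov_i$ contribution into the perturbation before applying the PCA lemma so that the low-rank part has the correct (unweighted outer-product) form required by \Cref{lem:pca}, and then rescaling by $1/\sqrt{\lambdamin}$ at the end to convert the $\sqrt{\lambda_i}$-weighted error into an $L_2$ bound on the mean approximants.
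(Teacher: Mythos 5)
Your proposal is correct and follows essentially the same route as the paper's own proof: decompose $\bM = \sum_i (\sqrt{\lambda_i}\vmu_i)(\sqrt{\lambda_i}\vmu_i)^\top + \sum_i \lambda_i\cov_i$, absorb the covariance contribution and the estimation error into the perturbation term $\bE$ with $\norm{\bE}_{\rm op}\lesssim\beta$, invoke \Cref{lem:pca} with $\vec v_i = \sqrt{\lambda_i}\vmu_i$, and rescale by $1/\sqrt{\lambda_i}$ to recover the means. No gaps.
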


\begin{proof}
    Define $\calE = \sum_i \lambda_i \cov_i$ and $\bM^* \triangleq \sum_i \lambda_i \mu_i\mu_i^\top$. We have that
    \begin{equation*}
        \bM = \bM^* + \calE\,,
    \end{equation*}
    and $\norm{\calE}_{\rm op} \lesssim \beta$. 
    
    By \Cref{lem:pca}, where we take $\bA$ and $\bE$ therein to be $\wh{\bM}$ and $\calE + \wh{\bM} - \bM$, we find that $\wh{V}$ contains vectors $\vec \mu'_1, \ldots, \vec \mu'_k$ for which $\norm{\vec \mu'_i - \sqrt{\lambda}_i \vec \mu_i}^2 \lesssim \beta$. So if we take $\wh{\vec \mu}_i = \vec \mu'_i / \sqrt{\lambda_i}$, the claimed bound follows.
\end{proof}

\begin{proof}[Proof of \Cref{lem:mean_net}]
    By standard matrix concentration (see, e.g., \cite{vershynin-HDP-book})
    with $N = \poly(d\radius/\beta)$ samples (as set in \Cref{alg:mean_span}) we have that the matrix $\wh{\bM}$ constructed therein satisfies $\norm{\wh{\bM} - \bM}_{\rm op} \le \beta$, where $\bM \triangleq \E_{\mathcal{M}}[\x\x^\top] = \sum_i \lambda_i \vmu_i \vmu_i^\top + \sum_i \lambda_i \cov_i$. We have $\norm{\wh{\bM} - \sum_i \lambda_i \vmu_i \vmu_i^\top}_{\rm op} \le 2\beta$, so by \Cref{lem:crude_means}, the $\beta$-net constructed in \Cref{alg:mean_span} contains points which are $O(\beta/\lambdamin)$-close to each of the means $\vmu_i$ as claimed.
\end{proof}

\subsection{Estimating the covariances}

Next, we show how to recover a rough approximation to the span of the covariance matrices and, as a consequence, produce a net containing rough approximations to each of the covariance matrices. The algorithm is summarized in \Cref{alg:cov_span}.

\begin{lemma}\label{lem:cov_span}
    Suppose $\hatmu_1,\ldots,\hatmu_k\in\R^d$ satisfy $\norm{\vmu_i - \hatmu_i}^2 \le \meanerr$ for all $i\in[k]$. Then there is an algorithm {\sc CrudeEstimateCovariances}($q,\{\hatmu_i\}$) which returns a list $\calW$ such that for each $i\in[k]$, there exists $\wh{\cov}_i \in \calW$ for which $\norm{\cov_i - \wh{\cov}_i}_F \lesssim \beta^{1/2}\meanerr^{1/2} + k^{3/2}\meanerr + k^{5/2}\beta + k^2\alpha\log\radius$.
    Furthermore $|\calW| \le d^{O(k)}$, and the algorithm runs in time $\poly(d\radius/\beta) + d^{O(k)}$ and draws $\poly(d\radius/\beta)$ samples.
\end{lemma}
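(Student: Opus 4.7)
The plan is to mirror the proof of \Cref{lem:mean_net}, but with a fourth-order moment matrix in place of the second-moment one. Given the mean estimates $\hatmu_1,\ldots,\hatmu_k$, the algorithm draws $N = \poly(d\radius/\beta)$ samples $\x_1,\ldots,\x_N$ from $\mathcal M$ and, for each one, forms the \emph{recentered point} $\y_\ell \triangleq \x_\ell - \hatmu_{j(\x_\ell)}$ where $j(\x) \triangleq \argmin_{i\in[k]} \norm{\x - \hatmu_i}$ picks the nearest candidate mean. It then computes the empirical flattened fourth-moment matrix
\[
\wh{\flatmat} \triangleq \frac{1}{N}\sum_{\ell=1}^N \vc{\y_\ell\y_\ell^\top}\vc{\y_\ell\y_\ell^\top}^\top \in \R^{d^2 \times d^2}\,,
\]
extracts its top-$k$ singular subspace $\wh V \subseteq \R^{d^2}$, and returns all matrices whose flattenings lie in a $\xi$-net (for $\xi$ matching the target Frobenius error) of vectors in $\wh V$ with $L_2$-norm at most $O(\radius)$. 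Since $\dim \wh V = k$, the net has size $d^{O(k)}$, matching the claimed bounds on $\abs{\calW}$ and runtime.

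The analysis has three parts. First, by standard matrix concentration, exploiting that $\|\y\|$ is subgaussian with parameter $O(\sqrt{\beta\log d} + \radius)$, one sees that $N = \poly(d\radius/\beta)$ samples suffice to have $\|\wh\flatmat - \flatmat\|_{\rm op} \lesssim \beta$ with high probability, where $\flatmat \triangleq \E[\vc{\y\y^\top}\vc{\y\y^\top}^\top]$. Second, I expand $\y\y^\top = (\x - \vmu_i)(\x - \vmu_i)^\top + \vec\delta_i(\x - \vmu_i)^\top + (\x - \vmu_i)\vec\delta_i^\top + \vec\delta_i\vec\delta_i^\top$ (with $\vec\delta_i \triangleq \vmu_i - \hatmu_{j(\x)}$), condition on each component, and apply Isserlis' theorem for Gaussian fourth moments. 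Averaging over the mixture components gives a decomposition
\[
\flatmat \;=\; \sum_{i=1}^k \lambda_i \vc{\cov_i}\vc{\cov_i}^\top \;+\; \vec I \;+\; \calE\,,
\]
where $\vec I$ is an ``isotropic'' piece expressible solely through $\cov_1,\ldots,\cov_k$ via a Kronecker-style term (arising from the $\cov_{ac}\cov_{bd} + \cov_{ad}\cov_{bc}$ pieces of Isserlis) that can be estimated and subtracted using an auxiliary empirical statistic, and $\calE$ aggregates all error from using $\hatmu_i$ instead of $\vmu_i$ and from misclassification events. Third, I apply \Cref{lem:pca} to the debiased empirical matrix $\wh\flatmat - \vec I$ with error matrix $\calE + (\wh\flatmat - \flatmat)$; this guarantees that $\wh V$ contains vectors within Frobenius distance of the stated bound of (rescaled) $\vc{\cov_i}$, so the net recovers each covariance to the desired accuracy.

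The main obstacle is bounding $\calE$. Expanding the fourth tensor power of $\y\y^\top$ produces cross terms whose dominant contributions I expect to be: (i) mixed ``second-second'' terms of order $\|\vec\delta_i\|\cdot\sqrt{\beta} \lesssim \beta^{1/2}\meanerr^{1/2}$ per component, accumulating to the $\beta^{1/2}\meanerr^{1/2}$ term; (ii) pure bias tensors $\vc{\vec\delta_i\vec\delta_i^\top}\vc{\vec\delta_i\vec\delta_i^\top}^\top$ of Frobenius norm $\meanerr$, summing over $k$ components to yield the $k^{3/2}\meanerr$ term (the extra $\sqrt k$ coming from the subspace dimension when passing from operator to Frobenius norm via \Cref{lem:pca}); (iii) a $k^{5/2}\beta$ contribution from the ``off-diagonal'' cross-component second moments that do not align with the target low-rank structure; and (iv) a $k^2\alpha\log\radius$ contribution from the event that $j(\x)$ misidentifies the source component, which is exponentially rare by Gaussian tail bounds but contributes to $\calE$ through a truncation argument tied to the radius $\radius$. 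The delicate part is verifying that the misclassification error really is at most $\alpha\log \radius$ per component and that the subtraction of $\vec I$ does not itself introduce additional noise of order $k^2\beta$ or worse; once these are handled, the rest of the proof is bookkeeping.
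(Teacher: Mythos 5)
There is a genuine gap. Your plan works with the full flattened fourth-moment matrix $\wh{\flatmat}\in\R^{d^2\times d^2}$ built from the unprojected recentered points $\y = \x - \hatmu_{j(\x)}$. (As an aside, your classifier $j(\x) = \argmin_i\norm{\x - \hatmu_i}$ is equivalent to the paper's $\hatmu(\x) = \argmin_j\norm{\hatmu_j - \hatPi\x}$, since the $\hatPi^\perp$-component of $\x$ contributes equally to all $\norm{\x - \hatmu_i}$, so the recentering rule itself is fine.) The trouble is that ``nearby-but-wrong'' recentering --- $\x\in\calK_j$ for some $j\in\Sclose[i]\setminus\{i\}$ --- can occur with constant probability, and on that event the bias $\vec\delta_i = \vmu_i - \hatmu_j$ has norm as large as $\covthres$, not $\sqrt{\meanerr}$. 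The corresponding cross term in the error matrix is $\E_{\normal_i}\bigl[(\flatmat(\x - \hatmu_j) - \flatmat(\x - \hatmu_i))\mathds{1}[\x\in\calK_j]\bigr]$, and to bound this in operator norm one is led to control $\E_{\normal_i}\bigl[(\wt\x^\top\bA\wt\x)^2\bigr]\le\E_{\normal_i}\norm{\wt\x}^4$ for the recentered variable $\wt\x$. In your unprojected setup $\wt\x = \x - \hatmu_j$ is a $d$-dimensional Gaussian with $\E\norm{\wt\x}^4 = \Theta(d^2\beta^2)$; that $d$-dependent error then flows through \Cref{lem:pca} and ruins the Frobenius bound. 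This is exactly why the paper splits into $\flatmat_{00},\flatmat_{01},\flatmat_{11}$: in the $\hatPi$-block the recentered variable is effectively $k$-dimensional, so $\E\norm{\wt\x}^4 \lesssim k^2\beta^2 + \covthres^4$, while in the $\hatPi^\perp$-block the recentering has no effect at all (since $\hatmu_j\in\hatPi$), so the cross term is identically zero. You cannot get a dimension-free operator-norm bound for \Cref{lem:pca} without this decomposition.

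Two smaller points. First, the debiasing step --- estimating and subtracting an ``isotropic'' Kronecker-type piece $\vec I$ --- is both circular (that piece is $\sum_i\lambda_i\cov_i\otimes\cov_i$ up to permutation, which requires knowing the very $\cov_i$ you are trying to estimate) and unnecessary: the permuted Isserlis terms $\cov_{ac}\cov_{bd}$ and $\cov_{ad}\cov_{bc}$ have operator norm $\le\beta^2$ as $d^2\times d^2$ matrices, which is small compared to the rank-$k$ signal, so the paper simply absorbs them into the error matrix $\bE$ of \Cref{lem:pca} via \Cref{lem:wicks_scalar} rather than subtracting them. Second, your accounting of the four error terms is off: the pure bias tensor $\vec\delta_i\vec\delta_i^\top$ has Frobenius norm up to $\covthres^2 \asymp \meanerr + k\beta + \sqrt{k}\alpha\log\radius$, not $\meanerr$; and the $k^2\alpha\log\radius$ term comes from $\covthres^2$ squared inside $k\covthres^4$ in \Cref{cor:Ni_approx} (the threshold needed so that $\Pr[\x\in\calK_j \text{ for } j\in\Sfar[i]]\le 1/\radius^8$ holds in the $k$-dimensional projected subspace), not from a truncation of the misclassification event as you suggest.
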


\noindent The intuition behind our approach is that if the means of the mixture were all sufficiently close to zero, then the top-$k$ singular subspace of the matrix $\E[\vc{\x\x^\top}\vc{\x\x^\top}^\top]$ can be shown to contain points close to $\vc{\cov_1},\ldots,\vc{\cov_k}$. In general, if the means are arbitrary, then we can use the estimates $\hatmu_1,\ldots,\hatmu_k$ derived in the previous section to approximately ``recenter'' the mixture components near zero. We now make this intuition precise.

\paragraph{Proof preliminaries.}
Define
\begin{equation*}
    \hatPi \triangleq \mathrm{span}(\hatmu_1,\ldots,\hatmu_k) \ \ \ \text{and} \ \ \ \hatPi^\perp \triangleq \Id - \hatPi\,.
\end{equation*}
Let $\vmu^{\parallel}_i \triangleq \hatPi \vmu_i$ and $\vmu^\perp_i \triangleq \hatPi^\perp \vmu_i$. Note that
\begin{equation*}
    \norm{\vmu^\perp_i}^2 = \norm{\hatPi^\perp(\vmu_i - \hatmu_i)}^2 \le \meanerr\,.
\end{equation*}
Also define 
 \begin{equation*}
     \resid_i \triangleq \vmu^\parallel_i - \hatmu_i
\end{equation*}
and note that
\begin{equation*}
    \norm{\resid_i}^2 = \norm{\hatPi(\vmu_i - \hatmu_i)}^2 \lesssim \meanerr\,.
\end{equation*}

Define $\covthres \ge 1$ by
\begin{equation}
    \covthres \triangleq C(\sqrt{\meanerr} + \sqrt{k\beta} + k^{1/4}\sqrt{\alpha\log\radius}) \label{eq:Qbound}
\end{equation}
for sufficiently large absolute constant $C > 0$. Given $i\in[k]$, define
\begin{equation*}
    \Sfar[i] \triangleq \{j\in[k]: \norm{\vec \mu_i - \vec \mu_j} \ge \covthres\} \ \ \ \text{and} \ \ \ \Sclose[i] \triangleq \{j\in[k]: \norm{\vec \mu_i - \vec \mu_j} \le \covthres\}\,.
\end{equation*}
The algorithm we give in this section (\Cref{alg:cov_span}) does not require knowledge of $\Sfar[i], \Sclose[i]$; these sets are only defined here for the purpose of analysis.

To approximately ``recenter'' the mixture components around zero, we will subtract from each sample the mean estimate which is closest to it in the subspace given by $\hatPi$. Formally, given $\x\sim\R^d$, define $\hatmu(\x)$ by
\begin{equation}
    \hatmu(\x) \triangleq \hatmu_i \ \ \ \text{for} \ \ \ i = \argmin_{j\in[k]} \norm{\hatmu_j - \hatPi \x}\,. \label{eq:hatmudef} 
\end{equation}
For every $i\in[k]$, define
\begin{equation*}
    \calK_i \triangleq \{ \x\in\R^d: \hatmu(\x) = \hatmu_i\}\,,
\end{equation*}
i.e. the set of points which are closest to $\hatmu_i$ in the subspace given by $\hatPi$.

Finally, given $\z\in\R^d$, define
\begin{equation*}
    \flatmat_{00}(\z) \triangleq \vc{\hatPi \z\z^\top \hatPi}\vc{\hatPi \z\z^\top \hatPi}^\top 
\end{equation*}
\begin{equation}
    \flatmat_{01}(\z) \triangleq \vc{\hatPi \z\z^\top \hatPi^\perp}\vc{\hatPi \z\z^\top \hatPi^\perp}^\top \label{eq:flatmat_def}
\end{equation}
\begin{equation*}
    \flatmat_{11}(\z) \triangleq \vc{\hatPi^\perp \z\z^\top \hatPi^\perp}\vc{\hatPi^\perp \z\z^\top \hatPi^\perp}^\top\,.
\end{equation*}
We will assemble an estimate for the span of the covariances out of the top-$k$ singular subspaces of empirical estimates of $\E_{\mathcal{M}}[\flatmat_{00}(\x - \hatmu(\x))], \E_{\mathcal{M}}[\flatmat_{01}(\x - \hatmu(\x))], \E_{\mathcal{M}}[\flatmat_{11}(\x - \hatmu(\x))]$.

For any $i\in[k]$ and $s\in\{00,01,11\}$, note that
\begin{align}
    \E_{\normal_i}[\flatmat_s(\x - \hatmu(\x))] &= \E_{\normal_i}[\flatmat_s(\x - \hatmu_i) \cdot \mathds{1}[\x\in \calK_i]] + \sum_{j\in \Sclose[i]\backslash i} \E_{\normal_i}[\flatmat_s(\x - \hatmu_{j})\cdot \mathds{1}[\x\in \calK_j]] \nonumber \\
    &\qquad\qquad\qquad\qquad\qquad\qquad\qquad\quad + \sum_{j\in \Sfar[i]} \E_{\normal_i}[\flatmat_s(\x - \hatmu_{j})\cdot \mathds{1}[\x\in \calK_j]] \nonumber \\
    &= \E_{\normal_i}[\flatmat_s(\x - \hatmu_i)] + \sum_{j\in \Sclose[i]\backslash i} \E_{\normal_i}[(\flatmat_s(\x - \hatmu_j) - \flatmat_s(\x - \hatmu_i))\cdot \mathds{1}[\x\in \calK_j]] \nonumber \\
    &\qquad\qquad\qquad\qquad\qquad + \sum_{j\in \Sfar[i]} \E_{\normal_i}[(\flatmat_s(\x - \hatmu_j) - \flatmat_s(\x - \hatmu_i))\cdot \mathds{1}[\x\in \calK_j]]\,, \label{eq:Ni_split}
\end{align}
where we used that $\calK_1,\ldots,\calK_m$ forms a partition of $\R^d$.

\paragraph{Constructing an approximation for $\sum_i \lambda_i \vc{\cov_i}\vc{\cov_i}^\top$.}

We will now argue that the two sums in Eq.~\eqref{eq:Ni_split} are negligible compared to the term $\E_{\normal_i}[\flatmat_s(\x - \hatmu_i)]$. This will allow us to construct a matrix that is close to $\sum_i \lambda_i \vc{\cov_i}\vc{\cov_i}^\top$.

In the expression $\E_{\normal_i}[\flatmat_s(\x - \hatmu(\x))]$ above, we are recentering $\x$ around $\hatmu(\x)$. We first show that the probability that a sample from the $i$-th component lands in $\calK_j$ for some $j\in\Sfar[i]$ is small, meaning that with high probability we are correctly recentering $\x$ around $\hatmu_j$ for some $j\in\Sclose[i]$.

\begin{lemma}\label{lem:clustermeans}
    For any $i\in[k]$,
    $ \Pr_{\normal_i}[\x \in \calK_j \ \text{for some} \ j\in \Sfar[i]] \le 1/\radius^8$.
\end{lemma}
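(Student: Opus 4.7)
The plan is to handle each $j\in\Sfar[i]$ separately and union-bound at the end. The starting point is that by the definition \eqref{eq:hatmudef} of $\hatmu(\cdot)$, the event $\x\in\calK_j$ is equivalent to $\|\hatmu_j - \hatPi\x\| \le \|\hatmu_i - \hatPi\x\|$; expanding the squares, this rewrites as the halfspace condition
\[
\langle \vec u_j,\, \hatPi\x - \hatmu_i\rangle \ge \tfrac12 \|\hatmu_i - \hatmu_j\|, \qquad \vec u_j \triangleq \frac{\hatmu_j - \hatmu_i}{\|\hatmu_j - \hatmu_i\|}\in\hatPi.
\]
For $j\in\Sfar[i]$ I use the triangle inequality together with $\|\vmu_\ell - \hatmu_\ell\|\le \sqrt{\meanerr}$ (for $\ell=i,j$) to lower-bound $\|\hatmu_i - \hatmu_j\| \ge \covthres - 2\sqrt{\meanerr} \ge \covthres/2$, where the last step absorbs $\sqrt{\meanerr}$ into the $C\sqrt{\meanerr}$ term of $\covthres$ in \eqref{eq:Qbound}.

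Next I decompose the left-hand side. Writing $\x = \vmu_i + \cov_i^{1/2}\g$ with $\g\sim\normal(\vec 0,\Id)$ and using $\hatPi\vmu_i = \hatmu_i + \resid_i$, I get
\[
\langle \vec u_j,\, \hatPi\x - \hatmu_i\rangle = \langle \vec u_j, \resid_i\rangle + \langle \cov_i^{1/2}\vec u_j,\, \g\rangle,
\]
where I used $\hatPi\vec u_j = \vec u_j$ to drop the projector from the stochastic term. The first summand is bounded deterministically by $\|\resid_i\| \le \sqrt{\meanerr}\le \covthres/8$, and the second is a centered one-dimensional Gaussian with variance $\vec u_j^\top\cov_i\vec u_j\in[\alpha,\beta]$. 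So the event $\x\in\calK_j$ forces this Gaussian to exceed $\covthres/4 - \sqrt{\meanerr}\ge \covthres/8$, which by the standard Gaussian tail occurs with probability at most $2\exp(-\covthres^2/(128\beta))$. Choosing $C$ in \eqref{eq:Qbound} sufficiently large makes this at most $1/(kR^8)$, and a union bound over the at most $k$ indices in $\Sfar[i]$ yields the desired $1/R^8$.

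The main subtlety I anticipate is constant-chasing: the third term of $\covthres$ as written is $k^{1/4}\sqrt{\alpha\log R}$ rather than $\sqrt{\beta\log R}$, so if $\beta\gg\alpha$ the one-dimensional Gaussian tail above may not be strong enough on its own. The fallback is to instead work with the full squared norm $\|\hatPi\x-\hatmu_i\|^2 \ge (\covthres/4)^2$ implied by the halfspace step and apply Hanson-Wright (\Cref{prop:HW}) to $\bA \triangleq \cov_i^{1/2}\hatPi\cov_i^{1/2}$, which is PSD of rank at most $k$, with $\tr(\bA)\le k\beta$, $\|\bA\|_F\le \sqrt{k}\beta$, and $\|\bA\|_{\rm op}\le \beta$. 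This yields $\|\hatPi\cov_i^{1/2}\g\|^2 \le k\beta + O(\sqrt{k\log(kR)}\,\beta)$ except with probability $1/(kR^9)$, which combined with $\|\resid_i\|\le\sqrt{\meanerr}$ and $\covthres \ge C\sqrt{k\beta}$ forces $\|\hatPi\x-\hatmu_i\|<\covthres/4$ for $C$ large. Aside from this bookkeeping I do not anticipate any deeper technical obstacle; the whole proof is a reduction to a one-dimensional (or rank-$k$) Gaussian concentration estimate plus a union bound.
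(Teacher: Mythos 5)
Your primary route --- project onto the one-dimensional direction $\vec u_j$ and apply a Gaussian tail --- is geometrically clean, and the halfspace rewriting of $\x\in\calK_j$ is correct. But the gap you anticipate is real and fatal: the variance of $\langle\cov_i^{1/2}\vec u_j,\g\rangle$ is only bounded above by $\beta$, while the only $\log R$ contribution to $\covthres^2$ in Eq.~\eqref{eq:Qbound} is the term $k^{1/2}\alpha\log\radius$. Hence $\covthres^2/\beta$ contains a $\log\radius$ factor only up to a multiplier of $\alpha\sqrt{k}/\beta$, which can be $\ll 1$ for ill-conditioned $\cov_i$; the remaining term $C^2 k\beta/\beta = C^2 k$ in $\covthres^2/\beta$ does not grow with $R$ at all. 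So $\exp(-\covthres^2/(128\beta))$ is not bounded by $1/(kR^8)$ in general, and no choice of the absolute constant $C$ rescues this.

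Your fallback is the paper's actual proof: lower-bound $\|\hatmu_i-\hatPi\x\|$ by $\covthres/4$ via the triangle inequality $\|\hatmu_j-\hatPi\x\|+\|\hatmu_i-\hatPi\x\|\ge\|\hatmu_j-\hatmu_i\|$ (which is simpler than the squared-norm expansion and is what the paper uses), then control $\|\hatPi\cov_i^{1/2}\z\|$ via Hanson--Wright applied once, and conclude deterministically for all $j\in\Sfar[i]$ simultaneously (so no union bound over $j$ is needed). One quibble with your invocation of \Cref{prop:HW}: you take $r=k$ on the grounds that $\bA=\cov_i^{1/2}\hatPi\cov_i^{1/2}$ has rank $k$, but $r$ must lower-bound $\|\bA\|_F^2/\|\bA\|_{\rm op}^2$, and for anisotropic $\cov_i$ this ratio is only guaranteed to be at least $k\alpha^2/\beta^2$ (which can be below $1$); it is \emph{not} $k$ in general. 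The paper uses $r=1$, which is always valid, at the cost of replacing your exceedance $O\bigl(\sqrt{k\log(kR)}\,\beta\bigr)$ by $O\bigl(\sqrt{k}\,\beta\log R\bigr)$. Either way the final comparison goes through with $\covthres$ chosen as in Eq.~\eqref{eq:Qbound}, so modulo that constant-factor bookkeeping your fallback reproduces the paper's argument.
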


\begin{proof}
    Note that $\Tr(\cov_i \hatPi) \le k\beta$ and $\norm{\cov^{1/2}_i\hatPi\cov^{1/2}_i}^2_F = \Tr(\cov_i \hatPi \cov_i \hatPi) \ge k\alpha^2$. 
    Therefore, for $\vec z\sim\calN(0,\Id)$, we may apply Hanson-Wright (\Cref{fact:HW} to control the tails of $\norm{\hatPi\cov_i^{1/2} \z}^2$). 
    \begin{fact}[Hanson-Wright]\label{fact:HW}
    Suppose $\bA\in\R^{d\times d}$ satisfies $\norm{\bA}^2_F / \norm{\bA}^2_{\rm op} \ge r$. Then for any $s > 0$,
    \begin{align}
        \Pr_{\x\sim\calN(0,\Id)}[\x^\top \bA \x - \tr(\bA) > s\norm{\bA}_F] &\le \exp(-\Omega(\min(s\sqrt{r}, s^2)) \\
        \Pr_{\x\sim\calN(0,\Id)}[\x^\top \bA \x - \tr(\bA) < -s\norm{\bA}_F] &\le \exp(-\Omega(\min(s\sqrt{r}, s^2))\,.
    \end{align}
    \end{fact}
    \noindent By taking $r$ in \Cref{fact:HW} to be $1$, we find that there is an absolute constant $C' > 0$ such that
    \begin{equation*}
        \Pr\bigl[\norm{\hatPi\cov_i^{1/2} \z}^2 > k\beta + C'\alpha\sqrt{k}\log \radius\bigr] \le 1/R^8\,.
    \end{equation*}
    
    Given $\x\sim \normal_i$, note that $\hatmu_i - \hatPi\x = \hatPi(\hatmu_i - \vmu_i) + \hatPi \cov_i^{1/2} \z$ for $\z \sim\calN(0,\Id)$. Thus, conditioned on the above event, 
    \begin{equation*}
        \norm{\hatmu_i - \hatPi \x} \le \sqrt{\meanerr} + \sqrt{k\beta} + k^{1/4}\sqrt{C'\alpha\log \radius}\,.
    \end{equation*}
    For any $j\in[k]$ and $\x\sim \normal_i$, note that $\hatmu_j - \hatPi \x = \hatPi(\hatmu_j - \hatmu_i) + \hatPi(\hatmu_i - \vmu_i) + \hatPi \cov_i^{1/2} \z$ for $\z\sim\calN(0,\Id)$. If $j\in \Sfar[i]$, we have
    \begin{equation*}
        \norm{\hatPi(\hatmu_j - \hatmu_i) + \hatPi(\hatmu_i - \vmu_i)} \ge \norm{\vmu_j - \vmu_i} - 3\sqrt{\meanerr} \ge \covthres - 3\sqrt{\meanerr}\,.
    \end{equation*}
    Thus, conditioned on the above event,
    \begin{equation*}
        \norm{\hatmu_j - \hatPi \x} \ge \covthres - 3\sqrt{\meanerr} - \sqrt{k\beta} - k^{1/4}\sqrt{C'\alpha\log \radius}\,.
    \end{equation*}
    By our choice of $\covthres$ in Eq.~\eqref{eq:Qbound}, if $C$ therein is a sufficiently large constant, the above is larger than $\norm{\hatmu_i - \hatPi \x}$ as desired.
\end{proof}

\noindent Next, we argue that the ``signal terms'' $\E_{\normal_i}[\flatmat_s(\x-\hatmu_i)]$ in Eq.~\eqref{eq:Ni_split} are well-approximated by the rank-one matrices $\vc{\hatPi\cov_i\hatPi}\vc{\hatPi\cov_i\hatPi}^\top$, $\vc{\hatPi\cov_i\hatPi^\perp}\vc{\hatPi\cov_i\hatPi^\perp}^\top$, and $\vc{\hatPi^\perp\cov_i\hatPi^\perp}\vc{\hatPi^\perp\cov_i\hatPi^\perp}^\top$.

\begin{lemma}\label{lem:signal_term}
    \begin{equation*}
        \norm{\E_{\normal_i}[\flatmat_{00}(\x - \hatmu_i)] - \vc{\hatPi(\cov_i+\resid_i\resid_i^\top)\hatPi}\vc{\hatPi(\cov_i+\resid_i\resid_i^\top)\hatPi}^\top}_{\rm op} \lesssim \beta^2 + \beta\meanerr
    \end{equation*}
    \begin{equation*}
        \norm{\E_{\normal_i}[\flatmat_{01}(\x - \hatmu_i)] - \vc{\hatPi(\cov_i + \resid_i(\vmu^\perp_i)^\top)\hatPi^\perp}\vc{\hatPi(\cov_i + \resid_i(\vmu^\perp_i)^\top)\hatPi^\perp}^\top}_{\rm op} \lesssim \beta^2 + \beta\meanerr
    \end{equation*}
    \begin{equation*}
        \norm{\E_{\normal_i}[\flatmat_{11}(\x - \hatmu_i)] - \vc{\hatPi^\perp(\cov_i + \vmu^\perp_i (\vmu^\perp_i)^\top) \hatPi^\perp}\vc{\hatPi^\perp(\cov_i + \vmu^\perp_i (\vmu^\perp_i)^\top)\hatPi^\perp}^\top}_{\rm op} \lesssim \beta^2 + \beta\meanerr\,.
    \end{equation*}
\end{lemma}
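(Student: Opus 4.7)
The plan is to write $\y \triangleq \x - \hatmu_i = \vec b + \cov_i^{1/2}\z$ where $\vec b \triangleq \vmu_i - \hatmu_i = \resid_i + \vmu^\perp_i$ and $\z \sim \calN(0,\Id)$, and then to compare each $\E[\flatmat_s(\y)]$ with the corresponding rank-one matrix via the bias-variance identity $\E[\vec M\vec M^\top] = (\E\vec M)(\E\vec M)^\top + \E[(\vec M - \E\vec M)(\vec M - \E\vec M)^\top]$. For each of the projector pairs $(\hatPi_1,\hatPi_2)\in\{(\hatPi,\hatPi),(\hatPi,\hatPi^\perp),(\hatPi^\perp,\hatPi^\perp)\}$ I would introduce $\vec M(\y)\triangleq\vc{\hatPi_1\y\y^\top\hatPi_2}$, so that $\flatmat_s(\y) = \vec M(\y)\vec M(\y)^\top$ for the matching $s\in\{00,01,11\}$.

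The rank-one signal follows by a direct computation: $\E[\hatPi_1\y\y^\top\hatPi_2] = \hatPi_1(\cov_i + \vec b\vec b^\top)\hatPi_2$, and the identities $\hatPi\vec b = \resid_i$, $\hatPi^\perp\vec b = \vmu^\perp_i$ reduce $\hatPi_1\vec b\vec b^\top\hatPi_2$ to $\resid_i\resid_i^\top$, $\resid_i(\vmu^\perp_i)^\top$, and $\vmu^\perp_i(\vmu^\perp_i)^\top$ in the three cases, matching the three signal matrices in the lemma statement.

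What remains is to bound the operator norm of the centered covariance $\E[(\vec M - \E\vec M)(\vec M - \E\vec M)^\top]$ by $O(\beta^2 + \beta\meanerr)$ uniformly across the three cases. I would use the dual characterization that this operator norm equals $\sup_{\vec W:\, \norm{\vec W}_F = 1}\mathrm{Var}(\iprod{\vec W, \hatPi_1\y\y^\top\hatPi_2})$. Rewriting the inner product as $\y^\top\vec A_s\y$ with $\vec A_s \triangleq \tfrac12(\hatPi_2\vec W^\top\hatPi_1 + \hatPi_1\vec W\hatPi_2)$ (symmetric, with $\norm{\vec A_s}_F \le 1$) and expanding in $\z$ yields a constant part (contributing no variance), a linear part, and a quadratic part. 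By Gaussian parity the linear and quadratic pieces are uncorrelated, so the total variance splits as $4\vec b^\top\vec A_s\cov_i\vec A_s\vec b + 2\Tr((\cov_i^{1/2}\vec A_s\cov_i^{1/2})^2)$; the first summand is at most $4\norm{\vec b}^2\norm{\cov_i}_{\rm op}\norm{\vec A_s}_{\rm op}^2 \le 4\beta\meanerr$, and the second is at most $2\norm{\cov_i}_{\rm op}^2\norm{\vec A_s}_F^2 \le 2\beta^2$.

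The step I expect to be the main obstacle is keeping the covariance bound dependent only on $\norm{\cov_i}_{\rm op}\le\beta$ and not on traces of $\cov_i$. A naive Frobenius-norm argument on $\vec M - \E\vec M$ would produce a stray $\Tr(\hatPi\cov_i\hatPi)^2 \sim k^2\beta^2$ term, which would be too weak for the downstream clustering step. The key inequality $\Tr((\cov_i^{1/2}\vec A_s\cov_i^{1/2})^2) \le \norm{\cov_i}_{\rm op}^2 \norm{\vec A_s}_F^2$ replaces one factor of $\cov_i$ by its operator norm and is precisely what keeps the bound dimension- and $k$-free; once this is in hand, specializing to each of the three cases $s\in\{00,01,11\}$ is routine.
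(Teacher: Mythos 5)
Your proof is correct and takes essentially the same route as the paper. The paper's \Cref{lem:wicks_scalar} and its consequence Eq.~\eqref{eq:scalarwicks_opbound} are exactly an upper bound on $\mathrm{Var}(\x'^\top\bA\x')$ for $\x'\sim\calN(\vmu_i-\hatmu_i,\cov_i)$ and a unit-Frobenius test matrix $\bA$, so your bias--variance framing and symmetrization to $\vec{A}_s$ reproduce the same Wick's-identity computation with the same two key estimates, $\norm{\cov_i^{1/2}\vec{A}_s\cov_i^{1/2}}_F^2\le\beta^2$ for the purely quadratic piece and $\beta\norm{\vec{A}_s(\vmu_i-\hatmu_i)}^2\le\beta\meanerr$ for the cross term.
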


\begin{proof}
    We will be bounding the operator norm of matrices of 
    the form of $\E[\vc{\vec x \vec x^{\top}} \vc{\vec x \vec x^{\top}}]$ where $\x$ is a Gaussian vector.  
    To do so we take any test vector $\bA\in\R^{d^2}$ for which $\norm{\bA}_F = 1$; we will regard it interchangeably as a vector or as a $d\times d$ matrix. 
    We then bound $\vc{\vec A}^\top 
    \E_{\x}[\vc{\vec x \vec x^{\top}} \vc{\vec x \vec x^{\top}}] \vc{\vec A}$
    using the following simple lemma (that follows from Wicks' identity for the fourth Gaussian moments).
\begin{lemma}\label{lem:wicks_scalar}
    Let $\vec A \in \R^{d \times d}$ be any matrix 
    and $\cov$ be a covariance matrix. Then for $\vec x \sim \normal (\vec \mu, \cov)$, we have 
    \begin{align*}
        \E_{\vec x \sim \normal ( \vec \mu, \cov )}[ (\vec x^\top \vec A \vec x)^2 ] = 
        & \; \iprod{\bA, \cov}^2 + 2 \| \cov^{1/2} \bA \cov^{1/2} \|_F^2 + \| \cov^{1/2} \vec{A}^\top \vec \vmu \|^2 + \| \vec \cov^{1/2} \vec{A} \vmu \|^2 + ( \vec \mu^\top \vec A \vec \mu )^2 \\ 
        &+ 2 \vec \mu^\top \vec A \vec \mu \cdot \iprod{\bQ, \bA} + 2 \Tr(\cov^{1/2} \vec A \vec \mu \vec \mu^\top \vec{A}\cov^{1/2}).
    \end{align*}
    Moreover, if $\|\vec A\|_F \leq 1$ and 
    $\|\cov \|_{\rm op} \leq \beta$, then 
    \begin{equation}
        \Bigl|
        \E_{\vec x \sim \normal ( \vec \mu, \cov )}[ (\vec x^\top \vec A \vec x)^2 ] -
             \; \iprod{\bA, \bQ}^2 - 2(\vmu^\top \bA \vmu)\iprod{\bA, \bQ} - (\vmu^\top \bA \vmu)^2\Bigr|
             \lesssim \max( \beta^2, \beta\| \bA^\top \vmu\|^2, \beta\|\bA \vmu\|^2) \,. \label{eq:scalarwicks_opbound}
    \end{equation}
    
\end{lemma}

\begin{proof}
    Writing $\vec{x}\sim \calN(\vec{\mu}, \cov)$ as $\x = \cov^{1/2} \vg + \vmu$ for $\vg\sim\normal$, we have
    \begin{align*}
        \E_{\vec g \sim \normal}&[ (( \cov^{1/2} \vec g + \vec \mu )^\top \vec A ( \cov^{1/2} \vec g + \vec \mu ) )^2 ]  =  \\
        &\E_{\vec g \sim \normal}[ (\vec g^\top \cov^{1/2} \vec{A} \cov^{1/2} \vec g )^2 ] + \E_{\vec g \sim \normal}[ (\vec \mu^\top \vec{A} \cov^{1/2} \vec g)^2 ]  + \E_{\vec g \sim \normal}[ (\vec g^\top \vec Q^{1/2} \vec{A} \vec \mu )^2 ] + \E_{\vec g \sim \normal}[ (\vec \mu^\top \vec A \vec \mu)^2 ] \\
        &+ 2 \vec \mu^\top \vec{A} \vec \mu \E_{\vec g \sim \normal}[ (\vec g^\top \cov^{1/2} \vec A \cov^{1/2} \vec g ) ] + 2\E_{\vec g \sim \normal}[ (\vec g^\top \vec Q^{1/2} \vec{A} \vec \mu ) ( \vec \mu^\top \vec A \cov^{1/2} \vec g ) ].
    \end{align*}
    Using the definition of $\vec{B} = \vec Q^{1/2} \vec{A} \vec Q^{1/2}$, we have $$\E[ (\vec g^\top \vec{B} \vec g )^2 ] =  \sum_{i, j=1}^d \vec{B}_{i, i} \vec{B}_{j, j} \E[ \vec g_i^2 \vec g_j^2 ] + 2 \sum_{i, j=1}^d \vec{B}_{i ,j}^2 \E[ \vec g_i^2 \vec g_j^2 ] = \Tr(\vec{B})^2 + 2 \| \vec{B} \|_F^2. $$
    Using the fact that $\E[ \vec g^\top \vec{M} \vec g ] = \Tr(\vec{M})$ for any matrix $\vec{M}$, we obtain the result.
\end{proof}

    For the first claimed inequality, we apply Eq.~\eqref{eq:scalarwicks_opbound} from \Cref{lem:wicks_scalar} to $\hatPi \bA \hatPi$ and $\x'\sim \calN(\vmu_i - \hatmu_i, \cov_i)$ to get
    \begin{equation*}
        \Bigl|\E[(\x'^\top \hatPi \bA \hatPi \x')^2] - \iprod{\bA, \hatPi (\cov_i + \resid_i\resid_i^\top) \hatPi}^2\Bigr| \lesssim \beta^2 + \beta \norm{\hatPi \bA^\top \resid_i}^2 + \beta \norm{\hatPi \bA \resid_i}^2\,.
    \end{equation*}
    Note that $\norm{\hatPi \bA^\top \resid_i}^2 \le \meanerr$ and $\norm{\hatPi \bA \resid_i}^2 \le \meanerr$, so 
    \begin{equation*}
        \E[(\x'^\top \hatPi \bA \hatPi \x')^2] = \iprod{\bA, \hatPi (\cov_i+ \resid_i\resid_i^\top) \hatPi}^2 \pm O(\beta^2 + \beta\meanerr)\,.
    \end{equation*}
    Furthermore, $\bA^\top \flatmat_{00}(\x - \hatmu_i) \bA = (\x'^\top \hatPi \bA\hatPi \x')^2$ for $\x' = \x - \hatmu_i$, so because the above bound holds for all $\bA$ for which $\norm{\bA}_F = 1$, the first claimed inequality follows.
    
    The proof of the second inequality proceeds similarly. By Eq.~\eqref{eq:scalarwicks_opbound} applied to $\hatPi \bA \hatPi^\perp$ and $\x' \sim \calN(\vmu_i - \hatmu_i, \cov_i)$, we get
    \begin{equation*}
        \Bigl|\E[(\x'^\top \hatPi \bA \hatPi^\perp \x')^2] - \iprod{\bA, \hatPi (\cov_i + \resid_i (\vmu^\perp_i)^\top) \hatPi^\perp}^2\Bigr| \lesssim \beta^2 + \beta\norm{\hatPi \bA \vmu^\perp_i}^2 + \beta\norm{\hatPi^\perp \bA^\top \resid_i}^2\,.
    \end{equation*}
    Note that $\norm{\hatPi \bA \vmu^\perp_i}^2 \le \meanerr$ and $\norm{\hatPi^\perp \bA^\top \resid_i}^2 \le \meanerr$, so
    \begin{equation*}
        \E[(\x'\hatPi\bA\hatPi^\perp\x')^2] = \iprod{\bA, \hatPi (\cov_i + \resid_i(\vmu^\perp_i)^\top)\hatPi^\perp}^2 \pm O(\beta^2 + \beta\meanerr)\,.
    \end{equation*}
    Furthermore, $\bA^\top \flatmat_{01}(\x - \hatmu_i)\bA = (\x'^\top\hatPi\bA\hatPi^\perp\x')^2$ for $\x' = \x - \hatmu_i$, so because the above bound holds for all $\bA$ for which $\norm{\bA}_F = 1$, the second claimed inequality follows.

    For the third inequality, by Eq.~\eqref{eq:scalarwicks_opbound} applied to $\hatPi^\perp\bA\hatPi^\perp$ and $\x'\sim\calN(\vmu_i - \hatmu_i, \cov_i)$, we get
    \begin{equation*}
        \Bigl|\E[(\x'^\top \hatPi^\perp \bA \hatPi^\perp \x')^2] - \iprod{\bA, \hatPi^\perp(\cov_i + \vmu^\perp_i (\vmu^\perp_i)^\top)\hatPi^\perp}^2\Bigr| \lesssim \beta^2 + \beta\norm{\hatPi^\perp \bA \vmu^\perp_i}^2 + \beta\norm{\hatPi^\perp \bA^\top \vmu^\perp_i}^2\,.
    \end{equation*}
    Note that $\norm{\hatPi^\perp \bA \vmu^\perp_i}^2 \le \meanerr$ and $\norm{\hatPi^\perp \bA^\top \vmu^\perp_i}^2 \le \meanerr$,
    so
    \begin{equation*}
        \E[(\x'\hatPi^\perp\bA\hatPi^\perp\x')^2] = \iprod{\bA, \hatPi^\perp (\cov_i + \vmu^\perp_i (\vmu^\perp_i)^\top)\hatPi^\perp}^2 \pm O(\beta^2 + \beta\meanerr)\,.
    \end{equation*}
    Furthermore, $\bA^\top \flatmat_{11}(\x - \hatmu_i)\bA = (\x'^\top\hatPi^\perp \bA\hatPi^\perp\x')^2$ for $\x' = \x - \hatmu_i$, so because the above bound holds for all $\bA$ for which $\norm{\bA}_F = 1$, the third claimed inequality follows.
\end{proof}

\noindent Now if we can show that the remaining terms in Eq.~\eqref{eq:Ni_split} have small norm, then we can argue that we can read off a rough approximation of $\sum_i \lambda_i \vc{\cov_i}\vc{\cov_i}^\top$ from $\E_{\normal_i}[\flatmat_s(\x - \hatmu(\x))]$. In the following Lemma, we show the remaining terms in Eq.~\eqref{eq:Ni_split} are indeed bounded:

\begin{lemma}\label{lem:noise_term}
    Let $\hatmu,\hatmu'$ be any vectors from among $\hatmu_1,\ldots,\hatmu_k$. Suppose that either of the following holds:
    \begin{itemize}
        \item $j\in\Sfar[i]$, or
        \item $j\in \Sclose[i]$ and additionally $\hatmu, \hatmu'$ are centers of components in $\Sclose[i]$.
    \end{itemize}
    Then
    \begin{equation*}
        \norm{\E_{\normal_i}[(\flatmat_{00}(\x-\hatmu) - \flatmat_{00}(\x - \hatmu'))\cdot \mathds{1}[\x\in\calK_j]]}_{\rm op} \lesssim \beta^2 k^2 + \covthres^4
    \end{equation*}
    \begin{equation*}
        \norm{\E_{\normal_i}[(\flatmat_{01}(\x-\hatmu) - \flatmat_{01}(\x - \hatmu'))\cdot \mathds{1}[\x\in\calK_j]]}_{\rm op} \lesssim \beta^{3/2} \covthres + \beta \covthres^2
    \end{equation*}
    \begin{equation*}
        \E_{\normal_i}[(\flatmat_{11}(\x-\hatmu) - \flatmat_{11}(\x - \hatmu'))\cdot \mathds{1}[\x\in\calK_j]] = 0\,.
    \end{equation*}
\end{lemma}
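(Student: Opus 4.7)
The plan is to address the three bounds separately, exploiting distinct structural features. The $\flatmat_{11}$ identity is in fact pointwise \emph{exact} and needs neither sub-case hypothesis: since $\hatmu,\hatmu'\in\mathrm{Im}(\hatPi)$, one has $\hatPi^\perp\hatmu = \hatPi^\perp\hatmu'=\vec 0$, so $\hatPi^\perp(\x-\hatmu)=\hatPi^\perp\x=\hatPi^\perp(\x-\hatmu')$ identically in $\x$ and hence $\flatmat_{11}(\x-\hatmu)\equiv\flatmat_{11}(\x-\hatmu')$. For the other two, I will introduce the shorthand $\vec u\eqdef\hatPi(\x-\hatmu)$, $\vec u'\eqdef\hatPi(\x-\hatmu')=\vec u-\vec d$ with $\vec d\eqdef\hatPi(\hatmu'-\hatmu)$, and $\vec v\eqdef\hatPi^\perp\x$ (which is common to both $\hatmu$ and $\hatmu'$, by the same observation). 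Using $\vc{\vec a\vec b^\top}=\vec b\otimes\vec a$, one computes $\flatmat_{00}(\x-\hatmu)=\vec u\vec u^\top\otimes\vec u\vec u^\top$ and $\flatmat_{01}(\x-\hatmu)=\vec v\vec v^\top\otimes\vec u\vec u^\top$, so the differences factor as $A\otimes(A-B)+(A-B)\otimes B$ (with $A=\vec u\vec u^\top, B=\vec u'\vec u'^\top$) and as $\vec v\vec v^\top\otimes(\vec u\vec d^\top+\vec d\vec u^\top-\vec d\vec d^\top)$, respectively.

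In the close sub-case $\hatmu,\hatmu'\in\Sclose[i]$, each center $\hatmu_\ell$ satisfies $\|\vmu_i-\hatmu_\ell\|\le\sqrt{\meanerr}+\covthres\lesssim\covthres$, hence $\|\vec d\|\lesssim\covthres$. For the $\flatmat_{00}$ claim, I combine the pointwise bound $\|A-B\|_{\rm op}\lesssim\covthres(\|\vec u\|+\covthres)$ with the Gaussian moment estimate $\E[\|\vec u\|^r]\lesssim\covthres^r+(k\beta)^{r/2}$ and Young's inequality to arrive at $\covthres^4+k^2\beta^2$. For the $\flatmat_{01}$ claim, since $\vec d$ is deterministic, $\E[\vec v\vec v^\top\otimes\vec d\vec d^\top]=\E[\vec v\vec v^\top]\otimes\vec d\vec d^\top$ has operator norm $\le(\beta+\meanerr)\covthres^2\lesssim\beta\covthres^2$; the two mixed terms $\vec v\vec v^\top\otimes\vec u\vec d^\top$ are handled via matrix Cauchy-Schwarz, $\|\E[XY^\top]\|_{\rm op}^2\le\|\E[XX^\top]\|_{\rm op}\|\E[YY^\top]\|_{\rm op}$ with $X=\vec v\otimes\vec u$ and $Y=\vec v\otimes\vec d$, where the second factor equals $\beta\covthres^2$ and the first is $\|\E[\vec v\vec v^\top\otimes\vec u\vec u^\top]\|_{\rm op}\lesssim\beta^2+\beta\covthres^2$ by Isserlis' theorem on the jointly Gaussian pair $(\vec u,\vec v)$ plus the mean contributions, giving $\sqrt{(\beta^2+\beta\covthres^2)\beta\covthres^2}\lesssim\beta^{3/2}\covthres+\beta\covthres^2$.

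In the far sub-case $j\in\Sfar[i]$, the key input is \Cref{lem:clustermeans}, which gives $\Pr_{\normal_i}[\x\in\calK_j]\le R^{-8}$. I bound each of $\|\E[\flatmat_s(\x-\hatmu)\mathds{1}_j]\|_{\rm op}$ and $\|\E[\flatmat_s(\x-\hatmu')\mathds{1}_j]\|_{\rm op}$ separately by testing against unit Frobenius-norm $\bA,\bB$ and applying scalar Cauchy-Schwarz to the indicator: $|\E[\vc{\bA}^\top\flatmat_s(\x-\hatmu)\vc{\bB}\cdot\mathds{1}_j]|\le R^{-4}\sqrt{\E[(\vc{\bA}^\top\flatmat_s(\x-\hatmu)\vc{\bB})^2]}$. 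The inner second moment is a Gaussian fourth moment of a product of two scalar quadratic forms in $\z=\x-\hatmu$ (namely $(\z^\top\hatPi\bA\hatPi\z)(\z^\top\hatPi\bB\hatPi\z)$ for $\flatmat_{00}$, and $(\z^\top\hatPi\bA\hatPi^\perp\z)(\z^\top\hatPi\bB\hatPi^\perp\z)$ for $\flatmat_{01}$); its size is controlled using \Cref{lem:wicks_scalar} together with standard Gaussian hypercontractivity, and the worst-case polynomial-in-$R$ growth of these moments is dominated by the $R^{-4}$ prefactor after squaring.

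The principal obstacle is ensuring the $\flatmat_{01}$ bounds carry no $d$-dependence. Pointwise $\|\vec v\vec v^\top\|_{\rm op}=\|\vec v\|^2$ has typical size $d\beta$, so the naive triangle inequality $\|\E[M]\|_{\rm op}\le\E[\|M\|_{\rm op}]$ is fatal. The remedy in the close case is to preserve the Kronecker factorization $\vec v\vec v^\top\otimes(\cdot)$ intact throughout the calculation, so that one only ever needs the dimension-free $\|\E[\vec v\vec v^\top]\|_{\rm op}\lesssim\beta$ (not $\E[\|\vec v\|^2]$), and to use matrix Cauchy-Schwarz (not scalar Cauchy-Schwarz on $\|\vec v\|^2\|\vec u\|^2$) when the $\vec v$- and $\vec u$-factors are correlated through $\x$. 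The analogous fix in the far case is to bound $\|\E[\vec v\vec v^\top\mathds{1}_j]\|_{\rm op}$ via the test-vector estimate $\E[(\vec a^\top\vec v)^2\mathds{1}_j]\lesssim\beta/R^4$ (scalar Cauchy-Schwarz on the Gaussian $\vec a^\top\vec v$) rather than via operator-norm triangle inequality on the random matrix $\vec v\vec v^\top$.
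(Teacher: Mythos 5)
Your plan is structurally sound, correctly identifies the crux ($\hatPi^\perp\hatmu=\vec 0$ makes the $\flatmat_{11}$ identity trivial and pointwise; the $\flatmat_{01}$ bound must never pay $\E[\|\x^\perp\|^2]\sim d\beta$), and is a viable alternative to the paper's argument, but it is a genuinely different route for the $\flatmat_{01}$ and $\flatmat_{00}$ bounds. For $\flatmat_{00}$ the paper never telescopes $A\otimes A - B\otimes B$: it simply tests against a single symmetric $\bA$ of unit Frobenius norm, writes the quantity as $(\wt{\x}^\top\bA\wt{\x})^2 - (\wt{\x}'^\top\bA\wt{\x}')^2$, and bounds each square by $\|\wt{\x}\|^4$ (resp.\ $\|\wt{\x}\|^8$ via Cauchy--Schwarz against the indicator in the far sub-case), which is shorter than carrying $\|A-B\|_{\rm op}$ around. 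For $\flatmat_{01}$ the paper's mechanism is \emph{iterated conditional expectation}: since $\mathds{1}[\x\in\calK_j]$ is measurable with respect to $\hatPi\x$ alone, one first integrates out $\x^\perp$ inside the expectation, collapsing $(\wt{\x}^\top\bA\x^\perp)^2 - (\wt{\x}'^\top\bA\x^\perp)^2$ to the dimension-free quantity $\iprod{\bA\hatPi^\perp\cov_i\hatPi^\perp\bA^\top,\,\wt{\x}\wt{\x}^\top-\wt{\x}'\wt{\x}'^\top}$, and then a single Cauchy--Schwarz against $\mathds{1}[\x\in\calK_j]$ handles the close and far sub-cases \emph{uniformly}, reducing everything to $\E\|\wt{\x}\wt{\x}^\top-\wt{\x}'\wt{\x}'^\top\|_F^2$. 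Your Kronecker-plus-matrix-Cauchy--Schwarz route does the analogous thing but is a bit lossier: the factor $\|\E[\vec v\vec v^\top\otimes\vec u\vec u^\top]\|_{\rm op}$ is not $\lesssim\beta^2+\beta\covthres^2$ as claimed; testing against $\bA$ gives $\E[(\vec u^\top\bA\vec v)^2]$, whose squared-mean contains $\iprod{\bA,\hatPi^\perp\cov_i\hatPi}^2\lesssim k\beta^2$ (the rank-$k$ off-diagonal block) and a term $\lesssim\|\vmu_i^\perp\|^2\|\hatPi(\vmu_i-\hatmu)\|^2\lesssim\meanerr\covthres^2$; these are absorbed only after invoking $\covthres\gtrsim\sqrt{k\beta}$, $\covthres\gtrsim\sqrt{\meanerr}$ from Eq.~\eqref{eq:Qbound}, so your final exponent is correct but the intermediate inequality as stated is off by those contributions. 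Similarly $\|\E[\vec v\vec v^\top]\|_{\rm op}\le\beta+\meanerr$, not $\beta$; again $\covthres^2\gtrsim\meanerr$ saves you. Finally, in the far sub-case for $\flatmat_{01}$ your plan switches to a bound on $\|\E[\vec v\vec v^\top\mathds{1}_j]\|_{\rm op}$ alone, but the actual object involves the full tensor $\vec v\vec v^\top\otimes\vec u\vec u^\top\cdot\mathds{1}_j$; the correct thing (which your earlier ``test against $\bA,\bB$ and Cauchy--Schwarz'' step does handle) is to note that $\z^\top\hatPi\bA\hatPi^\perp\z$ has trace $\iprod{\bA,\hatPi^\perp\cov_i\hatPi}\lesssim\sqrt{k}\beta$ and constant part $(\hatPi(\vmu_i-\hatmu))^\top\bA\vmu_i^\perp\lesssim R\sqrt{\meanerr}$, both controlled because $\hatPi$ has rank $k$ and $\hatPi^\perp\hatmu=\vec 0$, so no $d$-dependence arises even without the conditioning trick.
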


\begin{proof}
    For $\x\sim \normal_i$, define $\wt{\x} \triangleq \hatPi \x - \hatmu$, $\wt{\x}' \triangleq \hatPi\x - \hatmu'$, and $\x^\perp \triangleq \hatPi^\perp \x$ so that $\x - \hatmu = \wt{\x} + \x^\perp$ and $\x - \hatmu' = \wt{\x}' + \x^\perp$.

    Let $\bA\in\R^{d^2}$ be a test vector which we regard interchangeably as a vector and as a $d\times d$ matrix, and which satisfies $\norm{\bA}_F = 1$.

    \vspace{0.3em}
    
    \noindent \underline{\em Proof for $\flatmat_{00}$}: We have
    \begin{equation*}
        \bigl|\bA^\top (\flatmat_{00}(\x - \hatmu) - \flatmat_{00}(\x - \hatmu')) \bA \cdot \mathds{1}[\x \in\calK_j]\bigr| = \bigl|(\wt{\x}^\top\bA \wt{\x})^2 - (\wt{\x}'^\top\bA \wt{\x}')^2\bigr|\cdot \mathds{1}[\x\in\calK_j]\,.
    \end{equation*}
    To bound the expectation of this over $\x\sim \normal_i$, it suffices to bound $\E_{\normal_i}[(\wt{\x}^\top \bA \wt{\x})^2 \cdot \mathds{1}[\x\in\calK_j]]$ and $\E_{\normal_i}[(\wt{\x}'^\top \bA \wt{\x}')^2 \cdot \mathds{1}[\x\in\calK_j]]$. These can be handled in the same way, so here we consider the former. 
    
    First suppose that $j\in\Sfar[i]$. By Cauchy-Schwarz,
    \begin{equation*}
        \E_{\normal_i}[(\wt{\x}^\top \bA \wt{\x})^2 \cdot \mathds{1}[\x\in\calK_j]] \le \E_{\normal_i}[(\wt{\x}^\top\bA \wt{\x})^4]^{1/2} \cdot \Pr[\x\in\calK_j]^{1/2}\,.
    \end{equation*}
    Note that
    \begin{equation*}
        \E_{\normal_i}[(\wt{\x}^\top\bA\wt{\x})^4]^{1/2} \le \E_{\normal_i}[\norm{\wt{\x}}^8]^{1/2} \lesssim \E_{\vec{h}\sim \calN(0,\hatPi\cov_i\hatPi)}[\norm{\vec{h}}^8]^{1/2} + \norm{\hatPi(\vmu_i - \hatmu)}^4 \lesssim \beta^2 k^2 + \radius^4\,.
    \end{equation*}
    The proof of the first part of the Lemma then follows by the fact that $\Pr[\x\in\calK_j]^{1/2} \le 1/R^4$ by \Cref{lem:clustermeans}, so we get an overall bound of $\beta^2 k^2 / \radius^4 + 1 \le \beta^2 k^2 + \covthres^4$ (as $\covthres, \radius \ge 1$ by assumption).

    Next, suppose that $j\in \Sclose[i]$ and additionally $\hatmu, \hatmu'$ are centers of components in $\Sclose[i]$. Then
    \begin{equation*}
        \E_{\normal_i}[(\wt{\x}^\top \bA \wt{\x})^2 \cdot \mathds{1}[\x\in\calK_j]] \le \E_{\normal_i}[(\wt{\x}^\top \bA \wt{\x})^2] \le \E_{\normal_i}[\norm{\wt{\x}}^4] \lesssim \E_{\vec{h}\sim\calN(0,\hatPi\cov_i\hatPi)}[\norm{\vec{h}}^4] + \norm{\hatPi(\vmu_i - \hatmu)}^4 \lesssim \beta^2 k^2 + \covthres^4\,,
    \end{equation*}
    thus establishing the third part of the Lemma.

    \vspace{0.3em}

    \noindent \underline{\em Proof for $\flatmat_{01}$}: We have
    \begin{equation*}
        \bA^\top (\flatmat_{01}(\x - \hatmu) - \flatmat_{01}(\x - \hatmu')) \bA \cdot \mathds{1}[\x \in\calK_j] = \bigl((\wt{\x}^\top\bA \x^\perp)^2 - (\wt{\x}'^\top\bA \x^\perp)^2\bigr)\cdot \mathds{1}[\x\in\calK_j]\,.
    \end{equation*}
    Note that the event that $\x\in \calK_j$ only depends on $\x^\perp$, so the expectation of the above over $\x\sim \normal_i$ is given by
    \begin{align}
        \MoveEqLeft \E_{\normal_i}\bigl[\bigl((\wt{\x}^\top\bA \x^\perp)^2 - (\wt{\x}'^\top\bA \x^\perp)^2\bigr)\cdot \mathds{1}[\x\in\calK_j]\bigr] \nonumber \\
        &= \E_{\wt{\x},\wt{\x}'}\bigl[\mathds{1}[\x\in\calK_j]\cdot \E_{\x^\perp}[(\wt{\x}^\top\bA\x^\perp)^2 - (\wt{\x}'^\top\bA\x^\perp)^2]\bigr] \nonumber \\
        &= \E_{\wt{\x},\wt{\x}'}\bigl[\mathds{1}[\x\in\calK_j]\cdot \iprod{\bA\hatPi^\perp \cov_i \hatPi^\perp\bA^\top, \wt{\x}\wt{\x}^\top - \wt{\x}'\wt{\x}'^\top}\bigr] \nonumber \\
        &\le \Pr[\x\in\calK_j]^{1/2} \cdot \E_{\wt{\x},\wt{\x}'}[\iprod{\bA\hatPi^\perp \bQ_i \hatPi^\perp \bA^\top, \wt{\x}\wt{\x}^\top - \wt{\x}'\wt{\x}'^\top}^2]^{1/2} \nonumber \\
        &\lesssim \Pr[\x\in\calK_j]^{1/2} \cdot \beta \E_{\wt{\x},\wt{\x'}}[\norm{\wt{\x}\wt{\x} - \wt{\x}'\wt{\x}'^\top}_F^2]^{1/2} \nonumber \\
        &= \Pr[\x\in\calK_j]^{1/2} \cdot\beta\E_{\vec{h}\sim\calN(0,\hatPi\cov_i \hatPi)}[\norm{(\vec{h} + \hatPi(\vmu_i - \hatmu))(\vec{h} + \hatPi(\vmu_i - \hatmu))^\top \nonumber \\
        &\qquad\qquad\qquad\qquad\qquad\qquad\qquad\qquad- (\vec{h} + \hatPi(\vmu_i - \hatmu'))(\vec{h} + \hatPi(\vmu_i - \hatmu'))^\top}_F^2]^{1/2} \nonumber \\
        &= \Pr[\x\in\calK_j]^{1/2} \cdot \beta \E_{\vec{h}}[\norm{\vec{h} \hatPi(\hatmu' - \hatmu)^\top + (\hatmu' - \hatmu)\hatPi\vec{h}^\top \nonumber \\
        &\qquad\qquad\qquad\qquad\qquad\qquad + \hatPi
         (\vmu_i - \hatmu)(\vmu_i - \hatmu)^\top \hatPi - \hatPi(\vmu_i - \hatmu')(\vmu_i - \hatmu')^\top \hatPi}^2_F]^{1/2} \nonumber \\
         &\lesssim \Pr[\x\in\calK_j]^{1/2} \cdot \beta \bigl(\beta^{1/2} \norm{\hatmu' - \hatmu} + \norm{\vmu_i - \hatmu}^2 + \norm{\vmu_i - \hatmu'}^2\bigr) \,,\label{eq:diffquads}
    \end{align}
    where in the second step we used that the covariance of $\x^\perp$ is $\hatPi^\perp \cov_i \hatPi^\perp$.

    Suppose that $j\in\Sfar[i]$. Then by \Cref{lem:clustermeans}, the above can be upper bounded by $\beta^{3/2}/\radius^3 + \beta/\radius^2 \le \beta^{3/2}\covthres + \beta\covthres^2$ (as $\covthres, \radius\ge 1$ by assumption), completing the proof of the second part of the Lemma.

    Next, suppose that $j\in \Sclose[i]$ and additionally $\hatmu, \hatmu'$ are centers of components in $\Sclose[i]$. Then Eq.~\eqref{eq:diffquads} can be upper bounded by $\beta^{3/2} \covthres + \beta \covthres^2$, completing the proof of the fourth part of the Lemma.

    \vspace{0.3em}

    \underline{\em Proof for $\flatmat_{11}$}: We have
    \begin{equation*}
        \bA^\top(\flatmat_{11}(\x - \hatmu) - \flatmat_{11}(\x - \hatmu'))\bA \cdot \mathds{1}[\x\in\calK_j] = \bigl(({\x^\perp}^\top \bA \x^\perp)^2 - ({\x^\perp}^\top \bA \x^\perp)^2\bigr)\cdot\mathds{1}[\x\in\calK_j] = 0\,.
    \end{equation*}
    As this holds for all $\bA$, the last part of the Lemma follows.
\end{proof}

\noindent By combining Eq.~\eqref{eq:Ni_split} with \Cref{lem:signal_term} and~\Cref{lem:noise_term}, we conclude the following:

\begin{corollary}\label{cor:Ni_approx}
    \begin{equation*}
        \norm{\E_{\normal_i}[\flatmat_{00}(\x-\hatmu(\x))] - \vc{\hatPi(\cov_i+\resid_i\resid_i^\top)\hatPi}\vc{\hatPi(\cov_i+\resid_i\resid_i^\top)\hatPi}^\top}_{\rm op} \lesssim \beta\meanerr  + \beta^2k^3 + k\covthres^4
    \end{equation*}
    \begin{equation*}
        \norm{\E_{\normal_i}[\flatmat_{01}(\x-\hatmu(\x))] - \vc{\hatPi(\cov_i + \resid_i(\vmu^\perp_i)^\top)\hatPi^\perp}\vc{\hatPi(\cov_i + \resid_i(\vmu^\perp_i)^\top)\hatPi^\perp}^\top}_{\rm op} \lesssim \beta^2 + \beta\meanerr + k\beta^{3/2}\covthres + k\beta\covthres^2
    \end{equation*}
    \begin{equation*}
        \norm{\E_{\normal_i}[\flatmat_{11}(\x-\hatmu(\x))] - \vc{\hatPi^\perp(\cov_i + \vmu^\perp_i (\vmu^\perp_i)^\top) \hatPi^\perp}\vc{\hatPi^\perp(\cov_i + \vmu^\perp_i (\vmu^\perp_i)^\top)\hatPi^\perp}^\top}_{\rm op} \lesssim \beta^2 + \beta\meanerr\,.
    \end{equation*}
\end{corollary}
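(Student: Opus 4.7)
The plan is to derive each of the three bounds by substituting the decomposition in Eq.~\eqref{eq:Ni_split} and then applying \Cref{lem:signal_term} (to the ``signal'' term $\E_{\normal_i}[\flatmat_s(\x-\hatmu_i)]$) and \Cref{lem:noise_term} (to the two correction sums, one over $\Sclose[i]\setminus\{i\}$ and one over $\Sfar[i]$), followed by a triangle inequality and summation over at most $k$ indices in each sum. There is no new probabilistic content to establish; the work is just careful bookkeeping to see that the resulting error matches the claim.

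Concretely, for each $s\in\{00,01,11\}$ I would let $\bM^*_s(i)$ denote the rank-one target matrix appearing in the Corollary (e.g.\ $\vc{\hatPi(\cov_i+\resid_i\resid_i^\top)\hatPi}\vc{\hatPi(\cov_i+\resid_i\resid_i^\top)\hatPi}^\top$ for $s=00$), and write
\begin{align*}
\E_{\normal_i}[\flatmat_s(\x-\hatmu(\x))] - \bM^*_s(i) &= \bigl(\E_{\normal_i}[\flatmat_s(\x-\hatmu_i)] - \bM^*_s(i)\bigr) \\
&\quad + \sum_{j\in\Sclose[i]\setminus\{i\}} \E_{\normal_i}\bigl[(\flatmat_s(\x-\hatmu_j) - \flatmat_s(\x-\hatmu_i))\,\mathds{1}[\x\in\calK_j]\bigr] \\
&\quad + \sum_{j\in\Sfar[i]} \E_{\normal_i}\bigl[(\flatmat_s(\x-\hatmu_j) - \flatmat_s(\x-\hatmu_i))\,\mathds{1}[\x\in\calK_j]\bigr].
\end{align*}
\Cref{lem:signal_term} bounds the first summand by $O(\beta^2 + \beta\meanerr)$. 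For the $\Sclose[i]$ sum, both centers $\hatmu_i,\hatmu_j$ are by definition centers of components in $\Sclose[i]$, so the ``close'' cases of \Cref{lem:noise_term} apply to each of the at most $k$ terms. For the $\Sfar[i]$ sum, the ``far'' cases of \Cref{lem:noise_term} apply to each of the at most $k$ terms. Summing with the triangle inequality yields an additional $k$ factor on the per-term bounds.

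For $s=00$, the per-term bound $O(\beta^2 k^2 + \covthres^4)$ holds in both cases of \Cref{lem:noise_term}, so multiplying by $k$ and combining with the signal-term bound gives the stated $\beta\meanerr + \beta^2 k^3 + k\covthres^4$ (absorbing the $\beta^2$ signal contribution into $\beta^2 k^3$ since $k\ge 1$). For $s=01$, the per-term bound is $O(\beta^{3/2}\covthres + \beta\covthres^2)$, and summing over $k$ indices and adding $\beta^2 + \beta\meanerr$ gives the stated bound. For $s=11$, \Cref{lem:noise_term} gives that every correction term vanishes identically (because the $\hatPi^\perp$-component of $\x-\hatmu$ does not depend on $\hatmu$), so only the signal-term bound $O(\beta^2 + \beta\meanerr)$ survives.

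I do not anticipate any real obstacle here; the only thing to be careful about is ensuring the hypotheses of \Cref{lem:noise_term} are correctly invoked in the $\Sclose[i]$ case, namely that both $\hatmu$ and $\hatmu'$ are centers of components in $\Sclose[i]$. This is automatic because in the close sum $\hatmu=\hatmu_j$ with $j\in\Sclose[i]$ and $\hatmu'=\hatmu_i$, and $i\in\Sclose[i]$ trivially. Thus the proof reduces to a few lines of accounting that assemble the three declared inequalities directly from the two preceding lemmas.
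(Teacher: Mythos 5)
Your proposal is correct and takes the same approach the paper uses — the paper's own justification for \Cref{cor:Ni_approx} is a one-line remark that it follows from combining Eq.~\eqref{eq:Ni_split} with \Cref{lem:signal_term} and \Cref{lem:noise_term}, which is exactly the decomposition and triangle-inequality bookkeeping you carry out. The arithmetic checks out (including absorbing the $\beta^2$ signal contribution into $\beta^2 k^3$ for $s=00$, and noting the $s=11$ correction terms vanish), and you correctly verify the hypotheses of \Cref{lem:noise_term} in the $\Sclose[i]$ case by observing $i\in\Sclose[i]$.
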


\noindent Using \Cref{cor:Ni_approx} and \Cref{lem:pca}, we are now ready to state our algorithm and prove the main guarantee of this section.

\begin{algorithm2e}
\DontPrintSemicolon
\caption{\textsc{CrudeEstimateCovariances}($q, \{\hatmu_i\}$)}
\label{alg:cov_span}
    \KwIn{Sample access to $q$, estimates $\hatmu_1,\ldots,\hatmu_k$}
    \KwOut{List $\calW$ containing approximations to $\cov_1,\ldots,\cov_k$}
        $\hatPi\gets $ span of $\hatmu_1,\ldots,\hatmu_k$\;
        Define the functions $\flatmat_s$ from Eq.~\eqref{eq:flatmat_def} and $\hatmu(\cdot)$ from Eq.~\eqref{eq:hatmudef} using $\hatmu_1,\ldots,\hatmu_k$.\;
        Initialize $\calW$ to the empty set.\;
        Draw samples $\x_1,\ldots,\x_N$ from $q$ for $N \gets \poly(d\radius/\beta)$.\;
        \For{$s\in\{00,01,11\}$}{
            $\wh{\bC}_s \gets \frac{1}{N}\sum^N_{j=1} \flatmat_s(\x_j - \hatmu(\x_j))$\; \label{step:Cs}
            $\wh{V}_s\gets$ top-$k$ singular subspace of $\wh{\bC}_s$\;
            $\calW_s\gets$ a $\beta$-net over vectors in $\wh{V}_s$ with $L_2$ norm at most $\beta\sqrt{d}$\;\label{step:betanet}
        }
        \For{$\wh{\cov}^{00}\in\calW_{00}$, $\wh{\cov}^{01}\in\calW_{01}$, $\wh{\cov}^{11}\in\calW_{11}$}{
            Add $\wh{\cov}^{00} + \wh{\cov}^{01} + (\wh{\cov}^{01})^\top + \wh{\cov}^{11}$ to $\calW$.\;
        }
        \Return{$\calW$}
\end{algorithm2e} 

\begin{proof}[Proof of \Cref{lem:cov_span}]
    Consider the matrix $\bC_{00} \triangleq \E_{\mathcal{M}}[\flatmat_{00}(\x - \hatmu(\x))] = \sum_i \lambda_i \E_{\normal_i}[\flatmat_{00}(\x-\hatmu(\x))]$. By standard matrix concentration, for $N = \poly(d\radius/\beta)$ given in \Cref{alg:cov_span}, we have that the matrix $\wh{\bC}_{00}$ constructed in Step~\Cref{step:Cs} of \Cref{alg:cov_span} satisfies $\norm{\wh{\bC}_{00} - \bC_{00}}_{\rm op} \le\beta$. Therefore, by triangle inequality and \Cref{cor:Ni_approx},
    \begin{equation*}
        \norm{\wh{\bC}_{00} - \sum_i \lambda_i \vc{\hatPi(\cov_i+\resid_i\resid_i^\top)\hatPi}\vc{\hatPi(\cov_i+\resid_i\resid_i^\top)\hatPi}^\top}_{\rm op} \lesssim \beta\meanerr + \beta^2 k^3 + k\covthres^4\,.
    \end{equation*}
    By \Cref{lem:pca}, this means that the top-$k$ singular subspace of $\wh{\bC}_{00}$ contains $d^2$-dimensional vectors $\wh{\cov}^{00}_1,\ldots,\wh{\cov}^{00}_k$ which, regarded as $d\times d$ matrices, satisfy
    \begin{equation*}
        \norm{\wh{\cov}^{00}_i - \hatPi(\cov_i + \resid_i\resid_i^\top)\hatPi}^2_F \lesssim \beta\meanerr + \beta^2 k^3 + k\covthres^4
    \end{equation*}
    for all $i\in[k]$.
    
    In an entirely analogous fashion, we can show that the top-$k$ singular subspace of $\wh{\bC}_{01}$ contains $d^2$-dimensional vectors $\wh{\cov}^{01}_1,\ldots,\wh{\cov}^{01}_k$ satisfying
    \begin{equation*}
        \norm{\wh{\cov}^{01}_i - \hatPi(\cov_i + \resid_i (\vmu^\perp_i)^\top)\hatPi^\perp}^2_F \lesssim \beta^2 + \beta\meanerr + k\beta^{3/2}\covthres + k\beta\covthres^2
    \end{equation*}
    Likewise, the top-$k$ singular subspace of $\wh{\bC}_{11}$ contains $d^2$-dimensional vectors $\wh{\cov}^{11}_1, \ldots, \wh{\cov}^{11}_k$ satisfying
    \begin{equation*}
        \norm{\wh{\cov}^{11}_i - \hatPi^\perp(\cov_i + \vmu^\perp_i (\vmu^\perp_i)^\top)\hatPi^\perp}^2_F \lesssim \beta^2 + \beta\meanerr\,.
    \end{equation*}
    Finally, note that
    \begin{equation*}
        \norm{\hatPi \resid_i \resid_i^\top \hatPi}_F, \norm{\hatPi \resid_i (\vmu^\perp_i)^\top \hatPi^\perp}_F, \norm{\hatPi^\perp \vmu^\perp_i (\vmu^\perp_i)^\top \hatPi^\perp}_F \le \meanerr\,.
    \end{equation*}
    Combining all of these bounds we find that
    \begin{equation*}
        \norm{\wh{\cov}^{00}_i + \wh{\cov}^{01}_i + (\wh{\cov}^{01})^\top + \wh{\cov}^{11}_i - \cov_i}_F \lesssim \beta^{1/2}\meanerr^{1/2} + k^{3/2}(\beta + \covthres^2) \lesssim \beta^{1/2}\meanerr^{1/2} + k^{3/2}\meanerr + k^{5/2}\beta + k^2\alpha\log\radius\,.
    \end{equation*}
    The claim then follows from the fact that $\calW_{00}, \calW_{01}, \calW_{11}$ in Step~\Cref{step:betanet} contain approximations to $\wh{\cov}^{00}_i, \wh{\cov}^{01}_i, \wh{\cov}^{11}_i$ that are $\beta$-close in operator norm. Finally, note that the size of $\calW$ is bounded by $d^{O(k)}$, by standard bounds on epsilon-nets.
\end{proof}

\subsection{Putting everything together}

It is straightforward to combine the results of the previous two sections to derive the proof of \Cref{lem:crude_param}. First, for completeness, we provide the pseudocode for the algorithm:

\begin{algorithm2e}
\DontPrintSemicolon
\caption{\textsc{CrudeEstimate}($q$)}
\label{alg:estimate_params}
    \KwIn{Sample access to $q$}
    \KwOut{List $\calW$ containing approximations to $(\vmu_1,\cov_1),\ldots,(\vmu_k,\cov_k)$}
        $\calW \gets \emptyset$\;
        $\calW^{(\vmu)} \gets $\textsc{CrudeEstimateMeans}($q$)\;
        \For{$\hatmu_1,\ldots,\hatmu_k\in \calW^{(\vmu)}$}{ \label{step:loopnet1}
            $\calW^{(\cov)} \gets $\textsc{CrudeEstimateCovariances}($q, \{\hatmu_i\}$)\;
            \For{$i\in[k]$, $\wh{\cov}\in\calW^{(\cov)}$}{ \label{step:loopnet2}
                Insert $(\hatmu_i, \wh{\cov})$ into $\calW$\;
            }
        }
        \Return{$\calW$}
\end{algorithm2e} 

\begin{proof}[Proof of \Cref{lem:crude_param}]
    By \Cref{lem:mean_net}, in some iteration of Line~\Cref{step:loopnet1} of \Cref{alg:estimate_params}, we get $\hatmu_1,\ldots,\hatmu_k$ which satisfy $\norm{\hatmu_i - \vmu_i}^2 \le \meanerr$ for $\meanerr = O(\beta/\lambdamin)$. Substituting this into \Cref{lem:cov_span}, we conclude that for each $i\in[k]$, in some iteration of Line~\Cref{step:loopnet2} of \Cref{alg:estimate_params}, we get $\wh{\cov}$ satisfying $\norm{\wh{\cov} - \cov_i}_F \lesssim k^{3/2}\beta/\lambdamin + k^2\alpha\log\radius$, where we used that $\lambdamin \le 1/k$ to simplify the bound in \Cref{lem:cov_span}.

    For the bound on $|\calW|$, note that there are $(\radius/\sqrt{\beta})^{O(k^2)}$ iterations of the outer loop, within each of which there are $d^{O(k)}$ iterations of the inner loop, so $|\calW| = (\radius/\sqrt{\beta})^{O(k^2)}\cdot d^{O(k)}$ as claimed. For the runtime, \textsc{CrudeEstimateMeans} is called exactly once, and \textsc{CrudeEstimateCovariances} is called $(\radius/\sqrt{\beta})^{O(k^2)}$ times, so the overall runtime of the algorithm is $(\radius/\sqrt{\beta})^{O(k^2)}\cdot(\poly(d,1/\beta) + d^{O(k)})$.
\end{proof}

\section{Clustering via likelihood ratio estimates}
\label{sec:likelihood-clustering}

\renewcommand{\covsep}{\betweenclustercov}

In this section we present our main clustering guarantee, which leverages the estimates for the parameters we obtained from the previous section. As those estimates are only crude approximations to the true parameters, we will obtain a commensurately crude clustering. First, we formalize the notion of ``clusters'' and what it means to give an accurate clustering:

\begin{definition}
    Let $\calS = \{S_1,\ldots,S_m\}$ and $\calT = \{T_1,\ldots,T_n\}$ be partitions of $[k]$.
    
    $(\calS, \calT)$ is a {\em $(\withinclustermean, \withinclustercov, \betweenclustermean,\betweenclustercov)$-separated partition pair} if:
    \begin{itemize}
        \item For all $a\in[m]$ and $i,i' \in S_a$, we have that $\norm{\vmu_i - \vmu_{i'}} \le \withinclustermean$. 
        \item For all distinct $a,a' \in [m]$ and $i\in S_a, i' \in S_{a'}$, we have that $\norm{\vmu_i - \vmu_{i'}} \ge \betweenclustermean$.
        \item For all $b\in[n]$ and $i,i' \in T_b$, we have that $\norm{\cov_i - \cov_{i'}}_F \le \withinclustercov$. 
        \item For all distinct $b,b' \in [n]$ and $i\in T_b, i' \in T_{b'}$, we have that $\norm{\cov_i - \cov_{i'}}_F \ge \betweenclustercov$.
    \end{itemize}
\end{definition}

Roughly speaking, $\calS$ (resp. $\calT$) partitions the mixture components into groups such that any two components in the same group have means (resp. covariances) that are not far, and any two components from two different groups have means (resp. covariances) that are not close. Their common refinement is a partition $\calU$ such that any two components in the same group have both means and covariances not too far, and any two components from two different groups either have means not too close or covariances not too close.

By brute-forcing over pairs of partitions of $[k]$ (of which there are at most $k^{2k}$), we may assume we have access to $\calS$ and $\calT$, and thus to $\calU$. Our goal is then to assign to every $\x\in\R^d$ an index into the partition $\calU$. For $\x$ sampled from the $i$-th component of the mixture which belongs to the $t$-th group in $\calU$, we would like our assignment to be $t$ with high probability. The main result of this section is to show that this is indeed possible:

\begin{proposition}\label{lem:main_cluster}
    Suppose $\hatmu_1,\ldots,\hatmu_k \in \R^d$ and $\hatQ_1,\ldots,\hatQ_k\in \R^{d\times d}$ satisfy $\norm{\vmu_i - \hatmu_i}^2 \le \meanerr$ and $\norm{\cov_i - \hatQ_i}_F \le \coverr$.

    Let $(\calS = \{S_1,\ldots,S_m\}, \calT = \{T_1,\ldots,T_n\})$ denote a $(\withinclustermean, \withinclustercov,\betweenclustermean,\betweenclustercov)$-separated partition of $[k]$, where
    \begin{equation}
        \betweenclustercov \ge \max(5(\beta/\alpha)^3\coverr, c\alpha) \,, \qquad \betweenclustermean \ge \max(6\sqrt{\meanerr}, 6\sqrt{\beta k})\,, \qquad \sqrt{\meanerr} + \withinclustermean \le c\betweenclustercov\sqrt{\alpha}/\beta\,.\label{eq:sep_conds}
    \end{equation}
    for sufficiently small constant $c > 0$. Let $\{U_1,\ldots,U_{\numclust}\}$ denote the common refinement of $\calS$ and $\calT$. 

    Then there is an explicit deterministic function $\classify: \R^d\to [\numclust]$ using $\calS$, $\calT$, and $\{\hatmu_i, \wh{\cov}_i\}$, such that for any $t\in[\numclust]$ and $i\in U_t$,
    \begin{equation*}
        \Pr_{\normal_i}[\classify(\x) \neq t] \le k^3\exp\Bigl(-\Omega\Bigl(\frac{(\betweenclustermean)^2}{\alpha\sqrt{k}} \wedge \frac{\alpha^6 (\betweenclustercov)^2}{\beta^6\coverr^2} \wedge \frac{\alpha^2\betweenclustercov}{\beta^3}\Bigr)\Bigr)
    \end{equation*}
\end{proposition}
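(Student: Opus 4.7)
We will define $\classify(\x)$ by a two-stage procedure: first identify the index $a\in[m]$ of the mean-cluster $S_a\in\calS$ containing $i$, then the index $b\in[n]$ of the covariance-cluster $T_b\in\calT$ containing $i$, and output the unique $t\in[\numclust]$ with $U_t = S_a\cap T_b$. A union bound over the two stages then yields the Proposition.

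\textbf{Stage 1: mean clustering.} Let $\hatPi$ denote the orthogonal projector onto $\mathrm{span}(\hatmu_1,\ldots,\hatmu_k)$, and let $a(\x)$ be the index of the $\calS$-group containing $\argmin_{j\in[k]}\norm{\hatPi\x-\hatmu_j}$. For $\x\sim\normal_i$ write $\x=\vmu_i+\cov_i^{1/2}\z$ with $\z\sim\calN(0,\Id)$; then $\hatPi\x-\hatmu_i=\hatPi(\vmu_i-\hatmu_i)+\hatPi\cov_i^{1/2}\z$, where the first summand has norm at most $\sqrt{\meanerr}$, and Hanson--Wright (\Cref{prop:HW}) applied to $\z^\top(\cov_i^{1/2}\hatPi\cov_i^{1/2})\z$ controls the norm of the second around $\sqrt{\tr(\hatPi\cov_i\hatPi)}\le\sqrt{k\beta}$ with Gaussian-type tails. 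Combining with the inter-cluster separation $\norm{\vmu_i-\vmu_{j'}}\ge\betweenclustermean$ for $j'$ in a different $\calS$-group, the bound $\norm{\vmu_j-\hatmu_j}\le\sqrt{\meanerr}$, and the hypothesis $\betweenclustermean\ge 6(\sqrt{\meanerr}\vee\sqrt{\beta k})$, the triangle inequality forces the closest $\hatmu_j$ to lie in $S_a$ except on an event of probability $\exp(-\Omega((\betweenclustermean)^2/(\alpha\sqrt{k})))$.

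\textbf{Stage 2: covariance clustering via log-likelihood.} Define $\ell_j(\x)\eqdef -\tfrac12(\x-\hatmu_j)^\top\hatQ_j^{-1}(\x-\hatmu_j)-\tfrac12\log\det\hatQ_j$, and let $b(\x)$ be the index of the $\calT$-group containing $\argmax_{j\in S_{a(\x)}}\ell_j(\x)$. For $\x\sim\normal_i$ with $i\in S_a\cap T_b$, by a union bound over the $O(k^2)$ pairs it suffices to show $\ell_j(\x)>\ell_{j'}(\x)$ for every $j\in S_a\cap T_b$ and $j'\in S_a\cap T_{b'}$ with $b'\neq b$. Writing $\ell_j(\x)-\ell_{j'}(\x)=\x^\top\bM\x+\vec{v}^\top\x+c$ with $\bM=\tfrac12(\hatQ_{j'}^{-1}-\hatQ_j^{-1})$, the expectation under $\normal_i$ equals $\kldistance(\normal_i\|\calN(\hatmu_{j'},\hatQ_{j'}))-\kldistance(\normal_i\|\calN(\hatmu_j,\hatQ_j))$. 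Using the closed-form Gaussian KL formula, the closeness bounds $\norm{\cov_j-\cov_i}_F\le\withinclustercov$ and $\norm{\cov_{j'}-\cov_i}_F\ge\betweenclustercov-\withinclustercov$, and the spectrum bounds $\alpha\Id\preceq\cov_i\preceq\beta\Id$, one shows the expected gap is at least $\Omega(\alpha^2(\betweenclustercov)^2/\beta^4)$ after absorbing the error contributions coming from $\meanerr,\coverr,\withinclustermean,\withinclustercov$; the hypothesis $\betweenclustercov\ge 5(\beta/\alpha)^3\coverr$ in \eqref{eq:sep_conds} is precisely what enables this absorption. Substituting $\x=\vmu_i+\cov_i^{1/2}\z$ turns the quadratic part into $\z^\top(\cov_i^{1/2}\bM\cov_i^{1/2})\z$, and Hanson--Wright applied here, combined with Gaussian concentration of the affine term $\vec{v}^\top\x$, gives a tail with exponent $-\Omega(\alpha^6(\betweenclustercov)^2/(\beta^6\coverr^2)\wedge\alpha^2\betweenclustercov/\beta^3)$, matching the remaining terms in the Proposition.

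\textbf{Main obstacle.} The delicate step is Stage 2: computing the expected log-likelihood gap in closed form and certifying that the signal from $\betweenclustercov$ dominates all error contributions. The $(\beta/\alpha)^3$ factor in the separation hypothesis and the $\alpha^6/\beta^6$ factors appearing in the final tail exponent both arise from translating Frobenius distance between covariances into the KL metric on Gaussians, which requires whitening by $\cov_i^{-1/2}$ multiple times and is therefore amplified by the condition number of $\cov_i$. Stage 1, by contrast, reduces to a fairly standard PCA-with-noise concentration argument.
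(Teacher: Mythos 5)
Your overall two-stage structure matches the paper: Stage 1 is essentially the same nearest-projected-mean rule and Hanson--Wright argument as the paper's \Cref{lem:meancorrect}. Stage 2, however, differs in a meaningful way. The paper does not classify by directly comparing estimated log-likelihoods; it compares quadratic forms $(\x-\hatmu(\x))^\top(\precest_i - \precest_j)(\x-\hatmu(\x))$ to thresholds $t_{ij}$ that it \emph{brute-forces over} (see \Cref{alg:classify} and the discussion preceding \Cref{cor:covcorrect1}), precisely because the ``right'' offset $\iprod{\cov_i,\precest_i-\precest_j}$ depends on the unknown $\cov_i$ and can be of order $d$. Your idea to absorb that offset into the deterministic $\log\det\hatQ_j$ term of the estimated log-likelihood is a nice simplification that eliminates the threshold brute-force: the difference $\ell_j-\ell_{j'}$ has expectation $\kldistance(\normal_i\|\calN(\hatmu_{j'},\hatQ_{j'}))-\kldistance(\normal_i\|\calN(\hatmu_j,\hatQ_j))$, whose $\log\det$ and trace terms cancel the dimension-dependent contributions at first order.

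The one real concern is the gap constant you state. You claim the expected gap after absorbing errors is $\Omega(\alpha^2(\betweenclustercov)^2/\beta^4)$. With that lower bound, the mean-error contribution $\kldistance(\normal_i\|\calN(\hatmu_i,\hatQ_i)) \lesssim \meanerr/\alpha$, combined with \eqref{eq:sep_conds} which only gives $\meanerr/\alpha \lesssim c^2(\betweenclustercov)^2/\beta^2$, would \emph{not} be absorbed: you would need $c\ll\alpha/\beta$, which is not a fixed constant, so the argument as written has a hole. The fix is that the covariance part of $\kldistance(\normal_i\|\calN(\hatmu_{j'},\hatQ_{j'}))$ is actually $\Omega(\norm{\cov_i-\hatQ_{j'}}_F^2/\beta^2)$, i.e., scales like $(\betweenclustercov)^2/\beta^2$ with no extra $(\alpha/\beta)^2$ penalty --- this follows because each eigenvalue $\sigma$ of $\hatQ_{j'}^{-1/2}\cov_i\hatQ_{j'}^{-1/2}$ contributes $\sigma-1-\log\sigma \gtrsim (\sigma-1)^2/\max(1,\sigma)^2$, and the constraint that the relevant eigenvalue of $\cov_i$ is at most $\beta$ caps the product $\lambda_k\max(1,\sigma_k)$ by $\beta$, giving a per-coordinate KL-to-Frobenius ratio $\gtrsim 1/\beta^2$ matching the paper's lower bound on $\iprod{\cov_j-\cov_i,\cov_i^{-1}-\cov_j^{-1}}$. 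With that sharper bound the $\meanerr$ and $\coverr$ contributions are indeed $O(c^2)$ fractions of the gap and the absorption closes. The Hanson--Wright fluctuation analysis for the quadratic part is then identical to the paper's (same matrix $\cov_i^{1/2}(\precest_i-\precest_{j'})\cov_i^{1/2}$), and the linear part is controlled the same way, so the final exponents match. You should also note that $\hatQ_j$ may be singular, so one must threshold its small singular values as in \Cref{lem:invcov_from_cov} before inverting or taking $\log\det$.
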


\noindent At a high level, the idea is as follows. It is not too hard to determine which group in $\calS$ a given point $\x$ should belong to, simply by checking which mean estimate $\hatmu_i$ is closest to $\x$ after projecting to the subspace spanned by $\hatmu_1,\ldots,\hatmu_k$. For each group in $\calS$, we can then effectively restrict our attention to components within that group and focus on clustering them according to their covariances. Roughly speaking, we accomplish this by comparing log-likelihoods of sampling $\x$ under $\calN(\hatmu_1,\wh{\cov}_1),\ldots,\calN(\hatmu_k,\wh{\cov}_k)$ and choosing the group in $\calT$ containing the component maximizing log-likelihood.

\subsection{Proof preliminaries}

First, we need the following basic lemma which implies that given estimates $\hatQ_1,\ldots,\hatQ_k$ for the covariances of the components, we can produce estimates $\precest_1,\ldots,\precest_k$ for the \emph{inverse} covariances:

\begin{lemma}\label{lem:invcov_from_cov}
    If $\wh{\cov}\in\R^{d\times d}$ is a psd matrix satisfying $\norm{\cov - \wh{\cov}}_F \le \coverr$, and $\alpha\,\Id \preceq \cov \preceq \beta\,\Id$, then $\norm{\cov'^{-1} - \cov^{-1}}_F \le 4\coverr/\alpha^2$ for $\cov'\in\R^{d\times d}$ defined as follows. Let $\wh{\cov}$ have singular value decomposition $\bU\bLam\bU^\top$, and define $\cov' \triangleq \bU\bLam' \bU^\top$, where $\bLam'$ is given by replacing every diagonal entry of $\bLam$ less than $\alpha/2$ with $\alpha/2$.
\end{lemma}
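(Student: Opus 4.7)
The overall strategy is to first show that the truncation operation defining $\cov'$ can only bring $\wh{\cov}$ \emph{closer} to the true covariance $\cov$ in Frobenius norm, and then to convert a Frobenius bound on $\cov' - \cov$ into one on $\cov'^{-1} - \cov^{-1}$ via the standard resolvent identity.

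For the first step, I would observe that the map $\wh{\cov}\mapsto\cov'$ defined in the statement is exactly the Euclidean (Frobenius) projection onto the closed convex set $\calC \triangleq \{\bM\in\R^{d\times d}_{\rm sym} : \bM \succeq (\alpha/2)\Id\}$. Indeed, since $\wh{\cov}$ is psd and hence symmetric, its SVD coincides with its eigendecomposition. Projecting onto $\calC$ amounts to projecting $\wh{\cov} - (\alpha/2)\Id$ onto the PSD cone (which zeros out negative eigenvalues) and adding $(\alpha/2)\Id$ back; equivalently, one replaces every eigenvalue of $\wh{\cov}$ below $\alpha/2$ with $\alpha/2$, which is precisely the construction of $\cov'$. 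Since $\alpha\,\Id \preceq \cov$, we have $\cov\in\calC$, so by the non-expansiveness of projection onto a convex set,
\begin{equation*}
    \norm{\cov' - \cov}_F \;\le\; \norm{\wh{\cov} - \cov}_F \;\le\; \coverr.
\end{equation*}

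For the second step, apply the resolvent identity
\begin{equation*}
    \cov^{-1} - \cov'^{-1} \;=\; \cov^{-1}\bigl(\cov' - \cov\bigr)\cov'^{-1},
\end{equation*}
together with the submultiplicative bound $\norm{\bA\bB\bC}_F \le \norm{\bA}_{\rm op}\norm{\bB}_F\norm{\bC}_{\rm op}$. Since $\cov \succeq \alpha\,\Id$ we have $\norm{\cov^{-1}}_{\rm op}\le 1/\alpha$, and since $\cov' \succeq (\alpha/2)\,\Id$ by construction we have $\norm{\cov'^{-1}}_{\rm op}\le 2/\alpha$. Combining,
\begin{equation*}
    \norm{\cov^{-1} - \cov'^{-1}}_F \;\le\; \frac{1}{\alpha}\cdot\coverr\cdot\frac{2}{\alpha} \;=\; \frac{2\coverr}{\alpha^2} \;\le\; \frac{4\coverr}{\alpha^2},
\end{equation*}
as claimed.

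The only nontrivial ingredient is the observation that eigenvalue-thresholding is a Frobenius projection onto $\calC$; I expect this to be the main conceptual step, but it follows immediately from the Fan--Hoffman / Ky Fan characterization of the best Frobenius approximation of a symmetric matrix by matrices with prescribed eigenvalue constraints. Everything else is a routine application of Weyl-type operator-norm bounds and the resolvent identity, and the slack in the constant (we actually get $2$ rather than the claimed $4$) absorbs any minor technicalities.
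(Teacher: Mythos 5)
Your proof is correct, and the first step takes a genuinely different route from the paper's. The paper bounds $\|\cov' - \wh{\cov}\|_F$ directly by an eigenvalue-counting argument: via Weyl/Hoffman--Wielandt, at most $4\coverr^2/\alpha^2$ eigenvalues of $\wh{\cov}$ can fall below $\alpha/2$ (since each such eigenvalue is at least $\alpha/2$ away from the corresponding eigenvalue of $\cov$), each is shifted by at most $\alpha/2$ when truncated, so $\|\cov' - \wh{\cov}\|_F \le \coverr$, and then the triangle inequality gives $\|\cov' - \cov\|_F \le 2\coverr$. You instead observe that the truncation map is the Frobenius projection onto the convex set $\{\bM \succeq (\alpha/2)\Id\}$, which contains $\cov$, so non-expansiveness of projection gives $\|\cov' - \cov\|_F \le \|\wh{\cov} - \cov\|_F \le \coverr$ directly — no detour through $\wh{\cov}$ and no triangle inequality. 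This is cleaner and improves the final constant to $2\coverr/\alpha^2$, comfortably within the claimed $4\coverr/\alpha^2$. The paper's route is more elementary (no appeal to convex-projection machinery, just eigenvalue perturbation bounds) but is slightly lossier. The second step — resolvent identity plus $\|\cov^{-1}\|_{\rm op} \le 1/\alpha$, $\|\cov'^{-1}\|_{\rm op} \le 2/\alpha$ — is identical in both. No gaps.
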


\begin{proof}
    Note that there are at most $4\coverr^2/\alpha^2$ diagonal entries of $\Lambda$ less than $\alpha/2$, or else we would violate the assumption that $\norm{\cov - \hatQ}_F \le \coverr$. So $\norm{\cov' - \wh{\cov}}_F \le \coverr$ and thus $\norm{\cov' - \cov}_F \le 2\coverr$. Finally, note that $\norm{\cov'^{-1}}_{\rm op} = \sigma_{\min}(\cov')^{-1}\le 2/\alpha$. We have
    \begin{align}
        \norm{\cov'^{-1} - \cov^{-1}}_F &= \norm{\cov'^{-1}(\cov' - \cov)\cov^{-1}}_F \le 4\coverr/\alpha^2\,. \qedhere
    \end{align}
\end{proof}

\noindent Given $i,j\in[k]$ and $\x, \hatmu \in \R^d$, define
\begin{equation*}
    \quadform_{ij}(\x;\hatmu) = (\x - \hatmu)^\top \precest_j (\x - \hatmu) - \iprod{\bQ_i, \precest_j}\,.
\end{equation*}
Note that for any $\vmu, \bQ$,
\begin{equation*}
    \E_{x\sim\calN(\vmu,\bQ)}[\quadform_{ii}(\x;\hatmu) - \quadform_{ij}(x;\hatmu)] = \iprod{(\vmu - \hatmu)(\vmu - \hatmu)^\top + \bQ - \bQ_i, \precest_i - \precest_j}\,.
\end{equation*}
Provided $\vmu$ and $\hatmu$ are close, if $\cov = \cov_i$ then this quantity is close to zero, but if $\cov = \cov_j$ then this quantity scales as 
\begin{equation*}
    \iprod{\cov_j - \cov_i, \precest_i - \precest_j} \approx \iprod{\cov_j - \cov_i, \cov^{-1}_i - \cov^{-1}_j} = \Tr(\cov_j \cov^{-1}_i) + \Tr(\cov_i \cov^{-1}_j) - 2d\,,
\end{equation*}
which can be quite large in comparison. Motivated by this, we will use $\quadform_{ii}(\x;\hatmu) - \quadform_{ij}(\x;\hatmu)$ to cluster the samples according to the covariances of the components generating them.

\subsection{Properties of \texorpdfstring{$\quadform_{ij}$}{Lamij}}

\begin{lemma}\label{lem:farij}
    Suppose $\covsep \ge 5(\beta/\alpha)^3\coverr$. Let $i,j\in[k]$. Suppose $\hatmu \in \R^d$ satisfies 
    \begin{equation}
        \norm{\hatmu - \vmu_j} \le c\covsep \sqrt{\alpha} / \beta \label{eq:hatmuclose}
    \end{equation}
    for some $c > 0$.

    If $\norm{\bQ_j - \bQ_i}_F \ge \covsep$, then for any $c' > 0$, with probability at least $1 - \exp(-\Omega(c'^2(\alpha^4/\beta^6) \cdot \norm{\cov_j - \cov_i}^2_F \cdot \min(1,\alpha^2/\coverr^2)))$ over $\x\sim \normal_j$,
    \begin{equation*}
        \quadform_{ii}(\x;\hatmu) - \quadform_{ij}(\x;\hatmu) \ge \iprod{\cov_j - \cov_i, \cov^{-1}_i - \cov^{-1}_j} - E \,,
    \end{equation*}
    where
    \begin{equation*}
        E \triangleq (c^2 + 2c')\norm{\cov_j - \cov_i}^2_F/\beta^2 + (4\coverr/\alpha^2)\cdot \norm{\cov_j - \cov_i}_F \,.
    \end{equation*}
\end{lemma}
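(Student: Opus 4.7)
The plan is to decompose $\quadform_{ii}(\x;\hatmu) - \quadform_{ij}(\x;\hatmu)$ as (signal) $+$ (mean error) $+$ (Gaussian fluctuation), lower bound the signal, and control the fluctuation via Gaussian concentration.

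Set $B \triangleq \precest_i - \precest_j$, which is symmetric with $\|B\|_{\op} \leq 4/\alpha$ and $\|B - (\cov_i^{-1} - \cov_j^{-1})\|_F \leq 8\coverr/\alpha^2$ by applying \Cref{lem:invcov_from_cov} to each of $\precest_i,\precest_j$. Unfolding the definition of $\quadform$,
\[
\quadform_{ii}(\x;\hatmu) - \quadform_{ij}(\x;\hatmu) = (\x - \hatmu)^\top B(\x - \hatmu) - \iprod{\cov_i, B}.
\]
Writing $\x = \vmu_j + \cov_j^{1/2}\vec g$ for $\vec g\sim\calN(0,\Id)$ and $\vec v \triangleq \vmu_j - \hatmu$, I expand the quadratic form as $\vec v^\top B\vec v + 2\vec v^\top B\cov_j^{1/2}\vec g + \vec g^\top A\vec g$ where $A \triangleq \cov_j^{1/2}B\cov_j^{1/2}$.

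Taking expectation gives mean $\vec v^\top B\vec v + \iprod{\cov_j - \cov_i, B}$. The mean-shift $\vec v^\top B\vec v$ is bounded by $\|\vec v\|^2\|B\|_{\op} \leq 4c^2\covsep^2/\beta^2 \leq 4c^2\|\cov_j-\cov_i\|_F^2/\beta^2$, using Eq.~\eqref{eq:hatmuclose} and $\covsep \leq \|\cov_j-\cov_i\|_F$. By Cauchy-Schwarz together with the approximate-inverse bound, $|\iprod{\cov_j-\cov_i, B} - \iprod{\cov_j-\cov_i, \cov_i^{-1}-\cov_j^{-1}}| \leq 8\coverr\|\cov_j-\cov_i\|_F/\alpha^2$. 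Together these recover, up to constants, the two pieces of $E$ corresponding to $c^2$ and to the $\coverr$-dependent term. For the fluctuation, the linear-in-$\vec g$ piece is a scalar Gaussian with variance $\leq 4\beta\|B\|_{\op}^2\|\vec v\|^2 \lesssim c^2\covsep^2/(\alpha\beta)$, so its Gaussian tail at scale $c'\|\cov_j-\cov_i\|_F^2/\beta^2$ is absorbed (contributing one copy of $c'$ to the $2c'$ in $E$). The quadratic piece $\vec g^\top A\vec g - \Tr(A)$ is controlled by Hanson--Wright (\Cref{prop:HW}) with $\|A\|_{\op} \leq 4\beta/\alpha$ and $\|A\|_F \lesssim \beta(\|\cov_j-\cov_i\|_F + \coverr)/\alpha^2$ (using $\|B\|_F \leq \|\cov_j-\cov_i\|_F/\alpha^2 + 8\coverr/\alpha^2$); taking target deviation $D = c'\|\cov_j-\cov_i\|_F^2/\beta^2$ yields exponent $\Omega(\min(D^2/\|A\|_F^2, D/\|A\|_{\op}))$.

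The main obstacle I expect is the bookkeeping needed to extract the probability exactly as $\exp(-\Omega(c'^2(\alpha^4/\beta^6)\|\cov_j-\cov_i\|_F^2 \cdot \min(1, \alpha^2/\coverr^2)))$. The two regimes of the $\min$ correspond to which summand dominates in $\|A\|_F$: when $\|\cov_j-\cov_i\|_F/\alpha^2$ dominates, $D^2/\|A\|_F^2 \gtrsim c'^2\alpha^4\|\cov_j-\cov_i\|_F^2/\beta^6$, producing the $\min = 1$ case; when $\coverr/\alpha^2$ dominates, the quotient degrades by precisely $\alpha^2/\coverr^2$, producing the other case. The hypothesis $\covsep \geq 5(\beta/\alpha)^3\coverr$ is what ensures the $\coverr$-contribution to $E$ stays smaller than the signal $\iprod{\cov_j-\cov_i, \cov_i^{-1}-\cov_j^{-1}} = \iprod{\cov_j - \cov_i, \cov_i^{-1}(\cov_j - \cov_i)\cov_j^{-1}}$, which is positive and of order $\|\cov_j-\cov_i\|_F^2/\beta^2$ by the conditioning bounds on $\cov_i,\cov_j$. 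Assembling, the lower bound on $\quadform_{ii}-\quadform_{ij}$ is (signal) $-E-D$ with the stated probability, which is exactly the claimed inequality.
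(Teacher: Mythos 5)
Your proof takes essentially the same route as the paper: expand $\quadform_{ii}-\quadform_{ij}$ around $\x=\vmu_j+\cov_j^{1/2}\vec g$ into a constant term, a linear-in-$\vec g$ term, and a quadratic form $\vec g^\top A\vec g$ (the paper calls your $A$ ``$\vec B$'' and your $\cov_j^{1/2}B\vec v$ ``$\w$''), then control the constant via $\norm{\hatmu-\vmu_j}$, the linear term via a scalar Gaussian tail, and the quadratic term via Hanson--Wright, finally replacing $\iprod{\cov_j-\cov_i,\precest_i-\precest_j}$ with $\iprod{\cov_j-\cov_i,\cov_i^{-1}-\cov_j^{-1}}$ at the cost of a $(\coverr/\alpha^2)\norm{\cov_j-\cov_i}_F$ error. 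One small inaccuracy in your commentary: the $\min(1,\alpha^2/\coverr^2)$ factor in the probability does not arise from which summand dominates $\norm{A}_F$ --- under the hypothesis $\covsep\ge 5(\beta/\alpha)^3\coverr$ the $\coverr$ contribution there is always subdominant --- but from the paper's (conservative) operator-norm bound $\norm{A}_{\op}\lesssim(\beta/\alpha)(1+\coverr/\alpha)$; using the simpler bound $\norm{A}_{\op}\le 4\beta/\alpha$ as you do would actually give the stronger exponent without the $\min$, which of course still implies the stated probability.
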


\begin{proof}
    Define $\bB \triangleq \bQ^{1/2}_j (\precest_i - \precest_j)\bQ_j^{1/2}$ and $\w \triangleq \bQ^{1/2}_j(\precest_i - \precest_j)(\vmu_j - \hatmu)$. Then for $\x\sim \normal_j$, writing this as $\x = \vmu_j + \bQ^{1/2}_j \z$ for $\z\sim\calN(0,\Id)$, we see that the quantity $\quadform_{ii}(\x) - \quadform_{ij}(\x)$ is distributed as
    \begin{equation}
        \z^\top \bB \z - 2\iprod{\z, \w} + \iprod{(\vmu_j - \hatmu)(\vmu_j - \hatmu)^\top - \bQ_i, \precest_i - \precest_j}\,. \label{eq:decomp_lam}
    \end{equation}
    
    \vspace{0.3em}
    
    \noindent\underline{\em Controlling $\z^\top \bB \z$}: We would like to apply \Cref{fact:HW}. Note that
    \begin{align}
        \norm{\bB}_F &\ge \norm{\cov^{1/2}_j (\cov^{-1}_i - \cov^{-1}_j) \cov^{1/2}_j}_F - 4\beta\coverr/\alpha^2 \nonumber \\
        &= \norm{\cov^{1/2}_j{\cov}^{-1}_i ({\cov}_j - {\cov}_i) {\cov}^{-1}_j\cov^{1/2}_j}_F - 4\beta\coverr/\alpha^2 \nonumber \\
        &\ge (\alpha/\beta^2)\cdot \norm{\cov_j - \cov_i}_F - 4\beta\coverr/\alpha^2 \gtrsim (\alpha/\beta^2)\cdot\norm{\cov_j - \cov_i}_F\,, \nonumber
    \end{align}
    where in the last step we used the fact that $\covsep$ satisfies $\covsep \ge 5(\beta/\alpha)^3\coverr$ by hypothesis. Furthermore, $\norm{\bB}_{\rm op} \lesssim (\beta/\alpha)\cdot (\coverr/\alpha + 1)$, so 
    $\norm{\bB}_F / \norm{\bB}_{\rm op} \gtrsim (\alpha^2/\beta^3)\cdot\norm{\cov_j - \cov_i}_F \cdot \min(1,\alpha/\coverr)$.

    Additionally,
    \begin{align}
        \norm{\bB}_F &\le \norm{\cov^{1/2}_j\cov^{-1}_i(\cov_j - \cov_i)\cov^{-1}_j\cov^{1/2}_j}_F + \beta\coverr / \alpha^2 \nonumber \\
        &\le (\beta/\alpha^2)\cdot\norm{\cov_j - \cov_i}_F + \beta\coverr / \alpha^2 \nonumber \\
        &\lesssim (\beta/\alpha^2)\cdot\norm{\cov_j - \cov_i}_F\,, \nonumber
    \end{align}
    where in the last step we used the assumption that $\covsep \ge \coverr$.

    By \Cref{fact:HW}, for any $s > 0$, we have
    \begin{align}
        \Pr_{\z\sim\calN(0,\Id)}\Bigl[\z^\top \bB \z - &\tr(\cov_j(\precest_i - \precest_j)) \le -s(\beta/\alpha^2)\cdot\norm{\cov_j - \cov_i}_F \Bigr] \nonumber
        \\ 
        &\le \exp(-\Omega(\min(s(\alpha^2/\beta^3)\cdot\norm{\cov_j - \cov_i}_F \cdot \min(1,\alpha/\coverr), s^2)))\,. \label{eq:applyHWpre}
    \end{align}
    We will take
    \begin{equation*}
        s = c'(\alpha^2/\beta^3) \cdot \norm{\cov_j - \cov_i}_F \cdot \min(1,\alpha/\coverr)
    \end{equation*}
    for arbitrarily small constant $c' > 0$. By this choice of $s$, we have $s(\beta/\alpha^2)\cdot\norm{\cov_j - \cov_i}_F \le c'\norm{\cov_j - \cov_i}^2_dF / \beta^2$. Additionally, $s^2$ is the dominant term in the exponent in Eq.~\eqref{eq:applyHWpre}. Summarizing,
    \begin{equation}
        \Pr_{\z\sim\calN(0,\Id)}\Bigl[\z^\top \bB \z - \tr(\cov_j(\precest_i - \precest_j)) \le -c'\norm{\cov_j - \cov_i}^2_F/\beta^2 \Bigr] \le \exp(-\Omega(s^2))\,. \label{eq:applyHW}
    \end{equation}

    \noindent\underline{\em Controlling $\iprod{\z,\w}$}: Note that $\norm{\precest_i}_{\rm op}, \norm{\precest_j}_{\rm op} \lesssim 1/\alpha$, so $\norm{\w}\lesssim \covsep/\sqrt{\alpha\beta}$ by Eq.~\eqref{eq:hatmuclose}. Note that because $\covsep \gtrsim \beta \ge \alpha^{5/2} / \beta^{3/2}$, we have that $s\covsep/\sqrt{\alpha\beta} \le c'(\covsep)^2 / \beta^2 \le c'\norm{\cov_j - \cov_i}^2_F / \beta^2$. By standard Gaussian tail bounds, we conclude that
    \begin{equation}
        \Pr[|\iprod{\z,\w}| \ge c'\norm{\cov_j - \cov_i}^2 / \beta^2] \le \exp(-\Omega(s^2))\,. \label{eq:gw}
    \end{equation}

    \noindent\underline{\em Controlling $\iprod{(\vmu_j - \hatmu)(\vmu_j - \hatmu)^\top, \precest_i - \precest_j}$}: As $\norm{\precest_i}_{\rm op}, \norm{\precest_j}_{\rm op} \lesssim 1/\alpha$, by Eq.~\eqref{eq:hatmuclose} we have that
    \begin{equation}
        |\iprod{(\vmu_j - \hatmu)(\vmu_j - \hatmu)^\top, \precest_i - \precest_j}| \le c^2(\covsep)^2/\beta^2\,. \label{eq:mumuKK}
    \end{equation}

    \noindent\underline{\em Putting things together}: 
    Conditioned on the events of Eq.~\eqref{eq:applyHW} and~\eqref{eq:gw} not holding, and also using the bound on the constant term in Eq.~\eqref{eq:mumuKK}, we see from the decomposition of $\quadform_{ii}(\x;\hatmu) - \quadform_{ij}(\x;\hatmu)$ in Eq.~\eqref{eq:decomp_lam} that
    \begin{equation}
        \Pr_{\x\sim \normal_j}\Bigl[\quadform_{ii}(\x;\hatmu) - \quadform_{ij}(\x;\hatmu) - \iprod{\cov_j - \cov_i, \precest_i - \precest_j} \le -(c^2 + 2c')\norm{\cov_j - \cov_i}^2_F/\beta^2 \Bigr] \lesssim \exp(-\Omega(s^2))\,.\label{eq:applyHW2}
    \end{equation}
   
    It remains to bound $\iprod{\cov_j - \cov_i, \precest_i - \precest_j}$. We have
    \begin{equation}
        \iprod{\cov_j - \cov_i, \precest_i - \precest_j} \ge \iprod{\cov_j - \cov_i, \cov^{-1}_i - \cov^{-1}_j} - (4\coverr/ \alpha^2) \cdot \norm{\cov_j - \cov_i}_F\,. \label{eq:covcovprecestprecest}
    \end{equation}
    Combining this with Eq.~\eqref{eq:applyHW2},
    we obtain the desired bound.
\end{proof}

\begin{lemma}\label{lem:nearii}
    Let $i\in[k]$. As in \Cref{lem:farij}, suppose $\hatmu\in\R^d$ satisfies
    \begin{equation}
        \norm{\hatmu - \vmu_i} \le c\covsep\sqrt{\alpha}/\beta \label{eq:hatmuclose2}
    \end{equation}
    for sufficiently small absolute constant $c > 0$.

    For any $s \ge 1$,
    with probability at least $1 - O(k)\cdot \exp(-\Omega(s))$ over $\x\sim \normal_i$, we have that for all $j\in[k]$,
    \begin{equation*}
        \quadform_{ii}(\x;\hatmu) - \quadform_{ij}(\x;\hatmu) \le (s\beta/\alpha^2)\cdot\{\norm{\cov_j - \cov_i}_F \vee \coverr\} + c^2(\covsep)^2 / \beta^2 + c\covsep\sqrt{s/\alpha\beta} \,.
    \end{equation*}
\end{lemma}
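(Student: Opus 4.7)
The plan is to write $\x = \vmu_i + \cov_i^{1/2}\z$ with $\z\sim\calN(0,\Id)$ and decompose $\quadform_{ii}(\x;\hatmu) - \quadform_{ij}(\x;\hatmu)$ into three pieces: a centered quadratic form in $\z$, a linear (Gaussian) term, and a deterministic constant. Concretely, writing $\x - \hatmu = (\vmu_i - \hatmu) + \cov_i^{1/2}\z$ and using that $\iprod{\cov_i, \precest_i - \precest_j} = \tr(\bB_{ij})$ for $\bB_{ij} \triangleq \cov_i^{1/2}(\precest_i - \precest_j)\cov_i^{1/2}$, we get
\begin{equation*}
\quadform_{ii}(\x;\hatmu) - \quadform_{ij}(\x;\hatmu) = \bigl(\z^\top \bB_{ij}\z - \tr(\bB_{ij})\bigr) + 2\iprod{\z,\w_{ij}} + (\vmu_i - \hatmu)^\top(\precest_i - \precest_j)(\vmu_i - \hatmu),
\end{equation*}
where $\w_{ij} \triangleq \cov_i^{1/2}(\precest_i - \precest_j)(\vmu_i - \hatmu)$. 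Each of these three terms will be bounded separately and then the result will follow by union bound over $j\in[k]$.

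For the quadratic term, I would first bound $\norm{\bB_{ij}}_F$. By \Cref{lem:invcov_from_cov}, $\norm{\precest_i - \cov_i^{-1}}_F, \norm{\precest_j - \cov_j^{-1}}_F \le 4\coverr/\alpha^2$, and by the identity $\cov_i^{-1} - \cov_j^{-1} = \cov_i^{-1}(\cov_j - \cov_i)\cov_j^{-1}$ together with $\alpha\,\Id \preceq \cov_i, \cov_j$, we get $\norm{\cov_i^{-1} - \cov_j^{-1}}_F \le \norm{\cov_j - \cov_i}_F/\alpha^2$. Combining these via the triangle inequality and $\norm{\cov_i}_{\rm op} \le \beta$ yields $\norm{\bB_{ij}}_F \lesssim (\beta/\alpha^2)(\coverr + \norm{\cov_j - \cov_i}_F) \lesssim (\beta/\alpha^2)\cdot(\norm{\cov_j - \cov_i}_F \vee \coverr)$. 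Since $\norm{\bB_{ij}}_F^2/\norm{\bB_{ij}}_{\rm op}^2 \ge 1$, the upper-tail half of Hanson-Wright (\Cref{prop:HW}) gives $\z^\top \bB_{ij}\z - \tr(\bB_{ij}) \le s\norm{\bB_{ij}}_F$ with probability at least $1-\exp(-\Omega(s))$ for $s \ge 1$, producing the $(s\beta/\alpha^2)\cdot\{\norm{\cov_j - \cov_i}_F \vee \coverr\}$ contribution.

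For the linear term, I would use $\norm{\precest_i - \precest_j}_{\rm op} \le 4/\alpha$ (from $\norm{\precest_i}_{\rm op}, \norm{\precest_j}_{\rm op} \le 2/\alpha$, as in \Cref{lem:invcov_from_cov}) and the hypothesis~\eqref{eq:hatmuclose2} to bound $\norm{\w_{ij}} \le \sqrt{\beta}\cdot(4/\alpha)\cdot c\covsep\sqrt{\alpha}/\beta = O(c\covsep/\sqrt{\alpha\beta})$. Then $\iprod{\z,\w_{ij}}$ is a univariate Gaussian with variance $\norm{\w_{ij}}^2$, so a standard tail bound gives $|\iprod{\z,\w_{ij}}| \lesssim \sqrt{s}\,\norm{\w_{ij}} \lesssim c\covsep\sqrt{s/\alpha\beta}$ with probability at least $1 - 2\exp(-\Omega(s))$. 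For the constant term, the same operator-norm bound on $\precest_i - \precest_j$ and~\eqref{eq:hatmuclose2} give $|(\vmu_i - \hatmu)^\top(\precest_i - \precest_j)(\vmu_i - \hatmu)| \le (4/\alpha)\cdot c^2(\covsep)^2\alpha/\beta^2 = O(c^2(\covsep)^2/\beta^2)$ deterministically.

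Adding the three contributions and taking a union bound over $j\in[k]$ (there are $k$ choices, accounting for the $O(k)$ prefactor) yields the claimed estimate. The argument is entirely mechanical; the only point that requires care is the Frobenius bound on $\bB_{ij}$, since it must feature $\norm{\cov_j - \cov_i}_F \vee \coverr$ rather than their sum, which I handle by absorbing constants and noting $\coverr + \norm{\cov_j - \cov_i}_F \le 2(\coverr \vee \norm{\cov_j - \cov_i}_F)$.
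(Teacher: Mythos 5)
Your proof is correct and follows essentially the same route as the paper: write $\x = \vmu_i + \cov_i^{1/2}\z$, decompose $\quadform_{ii}-\quadform_{ij}$ into the centered quadratic $\z^\top\bB\z - \tr(\bB)$, the linear Gaussian term in $\w$, and the constant $(\vmu_i-\hatmu)^\top(\precest_i-\precest_j)(\vmu_i-\hatmu)$, then control each with Hanson--Wright, a Gaussian tail bound, and the operator-norm bound on $\precest_i-\precest_j$ respectively, finishing with a union bound over $j$. (As an aside, the sign you give on the cross term $+2\iprod{\z,\w}$ is the correct one, whereas the paper's display Eq.~\eqref{eq:decomp_lam2} has a harmless sign slip; both arguments bound $|\iprod{\z,\w}|$, so the end result is unaffected.)
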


\begin{proof}
    Define $\bB \triangleq \cov^{1/2}_i (\precest_i - \precest_j) \cov^{1/2}_i$ and $\w\triangleq \cov^{1/2}_i (\precest_i - \precest_j)(\vmu_i - \hatmu)$ (note these are slightly different from $\bB$ defined in \Cref{lem:farij} as $\x$ is sampled from $\normal_i$ instead of $\normal_j$). Then for $\x\sim \normal_i$, writing this as $\x = \vmu_i + \cov^{1/2}_i \z$ for $\z\sim\calN(0,\Id)$, we see that the quantity $\quadform_{ii}(\x;\hatmu) - \quadform_{ij}(\x;\hatmu)$ is distributed as
    \begin{equation}
        \z^\top \bB \z - 2\iprod{\z, \w} + \iprod{(\vmu_i - \hatmu)(\vmu_i - \hatmu)^\top - \bQ_i, \precest_i - \precest_j}\,. \label{eq:decomp_lam2}
    \end{equation}
    
    \noindent \underline{\em Controlling $\z^\top \bB\z$}: Note that
    \begin{align*}
        \norm{\bB}_F &\le \norm{\cov^{1/2}_i \cov^{-1}_i (\cov_j - \cov_i)\cov^{-1}_j \cov^{1/2}_i}_F + 4\beta\coverr/\alpha^2 \\
        &\lesssim (\beta / \alpha^2)\cdot\{\norm{\cov_j - \cov_i}_F \vee \coverr\}
    \end{align*}
    By \Cref{fact:HW}, we have
    \begin{equation}
        \Pr_{\z\sim\calN(0,\Id)}\Bigl[|\z^\top \bB \z - \Tr(\cov_i(\precest_i - \precest_j))| \le  (s\beta/\alpha^2)\cdot\{\norm{\cov_j - \cov_i}_F \vee \coverr\}\Bigr] \ge 1 - 2\exp(-\Omega(s))\,.\label{eq:gBg_bound}
    \end{equation}

    \noindent \underline{\em Controlling $|\iprod{\z,\w}|$}: Note that $\norm{\precest_i}_{\rm op}, \norm{\precest_j}_{\rm op} \lesssim 1/\alpha$, so $\norm{\w} \le c\covsep /\sqrt{\alpha\beta}$ by Eq.~\eqref{eq:hatmuclose2}. By standard Gaussian tail bounds, we conclude that with probability at least $1 - \exp(-\Omega(s))$,
    \begin{equation}
        |\iprod{\z,\w}| \le c\covsep\sqrt{s/\alpha\beta}\,. \label{eq:gwbound}
    \end{equation}
    
    \noindent \underline{\em Controlling $\iprod{(\vmu_i - \hatmu)(\vmu_i - \hatmu)^\top, \precest_i - \precest_j}$}: As $\norm{\precest_i}_{\rm op}, \norm{\precest_j}_{\rm op} \lesssim 1/\alpha$, by Eq.~\eqref{eq:hatmuclose2} we have that
    \begin{equation}
        |\iprod{(\vmu_i - \hatmu)(\vmu_i - \hatmu)^\top, \precest_i - \precest_j}| \le c^2(\covsep)^2/\beta^2\,. \label{eq:constanttermbound}
    \end{equation}

    \noindent \underline{\em Putting things together}: Conditioned on the events of Eq.~\eqref{eq:gBg_bound} and~\eqref{eq:gwbound} holding, and also using the bound on the constant term in Eq.~\eqref{eq:constanttermbound}, we see from the decomposition of $\quadform_{ii}(\x;\hatmu) - \quadform_{ij}(\x;\hatmu)$ that
    \begin{equation*}
        \Pr_{\x\sim \normal_i}\Bigl[|\quadform_{ii}(\x;\hatmu) - \quadform_{ij}(\x;\hatmu)| > (s\beta/\alpha^2)\cdot\{\norm{\cov_j - \cov_i}_F \vee \coverr\} + c^2(\covsep)^2 / \beta^2 + c\sqrt{s/\alpha\beta}\Bigr] \lesssim \exp(-\Omega(s))\,.
    \end{equation*}
    The claimed bound follows by a union bound.
\end{proof}

\subsection{Formally defining the clustering} 

\noindent We are now ready to define our clustering function.

\noindent Let $(\calS = \{S_1,\ldots,S_m\}, \calT = \{T_1,\ldots,T_n\})$ denote a $(\withinclustermean, \withinclustercov,\betweenclustermean,\betweenclustercov)$-separated partition of $[k]$. First, define
\begin{equation*}
    \classify^{(\vmu)}(\x) \triangleq a\in[m] \ \text{for which} \ \argmin_{i\in[k]} \norm{\hatmu_i - \hatPi \x} \in S_a\,,
\end{equation*}
where $\hatPi$ is the projector to the span of $\hatmu_1,\ldots,\hatmu_k$.

The following is a slight modification of \Cref{lem:clustermeans}:

\begin{lemma}\label{lem:meancorrect}
    Suppose that $\betweenclustermean \ge \max(6\sqrt{\meanerr}, 6\sqrt{k\beta})$. Then
    for any $i\in S_a$ and $a' \neq a$,
    \begin{equation*}
        \Pr_{\normal_i}[\classify^{(\vmu)}(\x) = a'] \le \exp\Bigl(-\Omega\Bigl(\frac{1}{\alpha\sqrt{k}}\min_{i' \in S_{a'}} \norm{\vmu_i - \vmu_{i'}}^2\Bigr)\Bigr)\,.
    \end{equation*}
    Equivalently,
    \begin{equation*}
        \Pr_{\normal_i}[\hatmu(\x) \in \{\hatmu_{i'}: i' \in S_{a'}\}] \le \exp\Bigl(-\Omega\Bigl(\frac{1}{\alpha\sqrt{k}}\min_{i' \in S_{a'}} \norm{\vmu_i - \vmu_{i'}}^2\Bigr)\Bigr)\,.
    \end{equation*}
\end{lemma}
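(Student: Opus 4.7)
The plan is to reduce the event $\{\classify^{(\vmu)}(\x)=a'\}$ to a tail event for the single Gaussian quadratic form $\norm{\hatPi \cov_i^{1/2}\z}^2$ and then apply the Hanson-Wright inequality (\Cref{prop:HW}). The argument follows the same geometric template as \Cref{lem:clustermeans}, but we retain the dependence on the actual mean-separation $\norm{\vmu_i - \vmu_{i'}}$ rather than settling for a uniform tail bound.

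First I would unfold the definition of $\classify^{(\vmu)}$: if $\classify^{(\vmu)}(\x) = a'$, then there must exist some $i' \in S_{a'}$ with $\norm{\hatmu_{i'} - \hatPi \x} \leq \norm{\hatmu_i - \hatPi \x}$. Writing $\x = \vmu_i + \cov_i^{1/2}\z$ for $\z \sim \calN(\vec 0,\Id)$ and using that both $\hatmu_i$ and $\hatmu_{i'}$ lie in the image of $\hatPi$, I would decompose
\[
\hatmu_i - \hatPi \x = \hatPi(\hatmu_i - \vmu_i) - \hatPi\cov_i^{1/2}\z, \qquad \hatmu_{i'} - \hatPi \x = (\hatmu_{i'} - \hatmu_i) + \hatPi(\hatmu_i - \vmu_i) - \hatPi\cov_i^{1/2}\z.
\]
Triangle inequality together with $\norm{\hatmu_j - \vmu_j}^2 \leq \meanerr$ for every $j$ then yields $\norm{\hatmu_i - \hatPi\x} \leq \sqrt{\meanerr} + \norm{\hatPi\cov_i^{1/2}\z}$ and $\norm{\hatmu_{i'} - \hatPi\x} \geq \norm{\vmu_{i'} - \vmu_i} - 3\sqrt{\meanerr} - \norm{\hatPi\cov_i^{1/2}\z}$. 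Writing $\eta \triangleq \min_{i'\in S_{a'}} \norm{\vmu_i - \vmu_{i'}}$ and combining the two bounds, the closeness inequality forces $\norm{\hatPi\cov_i^{1/2}\z} \geq \eta/2 - 2\sqrt{\meanerr} \geq \eta/6$, where in the final step we used the hypothesis $\betweenclustermean \geq 6\sqrt{\meanerr}$. Importantly, this lower bound is independent of which witness $i' \in S_{a'}$ triggers the bad event, so no union bound over $i'$ is needed and the entire event is contained in $\{\norm{\hatPi\cov_i^{1/2}\z}^2 \geq \eta^2/36\}$.

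Next I would apply \Cref{prop:HW} to the PSD matrix $\bA \triangleq \cov_i^{1/2}\hatPi\cov_i^{1/2}$, noting the standard bounds $\Tr(\bA) = \Tr(\cov_i\hatPi) \leq k\beta$, $\norm{\bA}_{\op} \leq \beta$, and $\norm{\bA}_F \leq \beta\sqrt{k}$ (which come from $\cov_i \preceq \beta \Id$ and $\mathrm{rank}(\hatPi) \leq k$). The second hypothesis $\betweenclustermean \geq 6\sqrt{k\beta}$ and $\eta \geq \betweenclustermean$ yield $\eta^2/36 \geq k\beta \geq \Tr(\bA)$, so after absorbing a harmless constant into the hypothesis (which only affects the $\Omega$ in the conclusion) the deviation $\eta^2/36 - \Tr(\bA)$ is $\Omega(\eta^2)$. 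Feeding this into Hanson-Wright in the form $\Pr[\z^\top\bA\z - \Tr(\bA) > t] \leq \exp\bigl(-\Omega(\min(t/\norm{\bA}_{\op}, t^2/\norm{\bA}_F^2))\bigr)$ with $t = \Omega(\eta^2)$, the resulting exponent is of the correct order, and invoking also the lower bound $\alpha\Id \preceq \cov_i$ (which controls $\norm{\bA}_F$ from below and pins down which branch of Hanson-Wright dominates) recovers the claimed $\eta^2/(\alpha\sqrt{k})$ scaling.

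The main obstacle is this last step: matching the precise form of the claimed exponent requires careful bookkeeping of which of the two branches in Hanson-Wright is active across the regimes of $\eta$, $\alpha$, $\beta$, $k$ allowed by the hypotheses, and of exactly how $\norm{\bA}_F$, $\norm{\bA}_{\op}$, and $\Tr(\bA)$ depend on the spectrum of $\cov_i$ restricted to the image of $\hatPi$. The Step 1 reduction is purely geometric and goes through by elementary triangle inequalities, provided the constant $6$ in the two separation hypotheses is absorbed into the implicit constants in the conclusion.
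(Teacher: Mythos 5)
Your proposal follows essentially the same route as the paper's proof: reduce the misclassification event, via triangle inequalities in the $\hatPi$-projected coordinates, to a tail event for the quadratic form $\z^\top\cov_i^{1/2}\hatPi\cov_i^{1/2}\z$, and then apply Hanson-Wright using $\Tr(\cov_i\hatPi)\le k\beta$, $\norm{\cov_i^{1/2}\hatPi\cov_i^{1/2}}_F\ge\alpha\sqrt{k}$. The presentation differs only cosmetically — you compute the threshold $\eta/6$ directly and observe that no union bound over $i'\in S_{a'}$ is needed, whereas the paper parametrizes by a free $s$ and fixes it afterwards — and the constant-bookkeeping concern you flag at the end (whether the deviation $\eta^2/36-\Tr(\bA)$ is truly $\Omega(\eta^2)$ under the stated $6\sqrt{k\beta}$ hypothesis, and exactly which Hanson-Wright branch supplies the claimed $\eta^2/(\alpha\sqrt{k})$ exponent) is equally present in the paper's own argument.
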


\begin{proof}
    Note that $\Tr(\cov_i\hatPi) \le k\beta$ and $\norm{\cov^{1/2}_i \hatPi \cov^{1/2}_i}^2_F \ge k\alpha^2$, so for $\z\sim\calN(0,\Id)$, by \Cref{fact:HW} with $r$ therein taken to be $1$, for all $s > 0$ we have
    \begin{equation*}
        \Pr[\norm{\hatPi\cov^{1/2}_i\z}^2 > k\beta + s\alpha\sqrt{k}] \le \exp(-\Omega(s))\,.
    \end{equation*}

    Given $\x\sim \normal_i$, note that $\hatmu_i - \hatPi\x = \hatPi(\hatmu_i - \vmu_i) + \hatPi \cov^{1/2}_i \z$ for $\z\sim\calN(0,\Id)$. Thus, conditioned on the above event,
    \begin{equation*}
        \norm{\hatmu_i - \hatPi\x} \le \sqrt{\meanerr} + \sqrt{k\beta} + k^{1/4}\sqrt{\alpha s}\,.
    \end{equation*}
    Next, for any $i'\not\in S_a$ and $\x\sim \normal_i$, note that $\hatmu_{i'} - \hatPi\x = \hatPi(\hatmu_{i'} - \hatmu_i) + \hatPi(\hatmu_i - \vmu_i) + \hatPi\cov^{1/2}_i\z$ for $\z\sim\calN(0,\Id)$. We have
    \begin{equation*}
        \norm{\hatPi(\hatmu_{i'} - \hatmu_i) + \hatPi(\hatmu_i - \vmu_i)} \ge \norm{\vmu_{i'} - \vmu_i} - 3\sqrt{\meanerr} \ge \frac{1}{2}\norm{\vmu_{i'} - \vmu_i}\,,
    \end{equation*}
    where in the last step we used that $\betweenclustermean\ge 6\sqrt{\meanerr}$. Thus, conditioned on the above event,
    \begin{equation*}
        \norm{\hatmu_{i'} - \hatPi\x} \ge \frac{1}{2}\norm{\vmu_{i'} - \vmu_i} - \sqrt{k\beta} - k^{1/4}\sqrt{\alpha s}\,.
    \end{equation*}
    Provided that $s > (\frac{1}{2}\norm{\vmu_{i'} - \vmu_i} - \sqrt{\meanerr} - \sqrt{k\beta})^2 / \alpha\sqrt{k}$, we have that $\norm{\hatmu_i - \hatPi\x} < \norm{\hatmu_{i'} - \hatPi\x}$. As $\sqrt{\meanerr} \le \frac{1}{6}\betweenclustermean$ and $\sqrt{k\beta} \le \frac{1}{6}\betweenclustermean$, it suffices to take $s = \frac{\norm{\vmu_{i'} - \vmu_i}^2}{36\alpha\sqrt{k}}$.

    The second part of the Lemma follows by definition of $\hatmu(\x)$.
\end{proof}

Define $\classify^{(\cov)}(\x)$ as follows. First note that we can't directly use $\quadform_{ii} - \quadform_{ij}$ as it has a term $\iprod{\cov_i, \precest_i - \precest_j}$ which depends on the true covariance $\cov_i$. Likewise, the lower and upper bounds on $\quadform_{ii} - \quadform_{ij}$ in \Cref{lem:farij} and~\Cref{lem:nearii} depend on the true covariances $\cov_i, \cov_j$.

Instead, we will brute force over guesses for these quantities. Henceforth, suppose we have access to numbers $\{t_{ij}\}$ satisfying
\begin{equation*}
    \bigl|t_{ij} - (\iprod{\cov_i, \precest_i - \precest_j} + \iprod{\cov_j - \cov_i, \cov_i^{-1} - \cov_j^{-1}} - E)\bigr| \le \eta
\end{equation*}
for sufficiently small parameter $\eta$, where $E$ is the error term from \Cref{lem:farij}. Because
\begin{equation*}
    |\iprod{\cov_i, \precest_i - \precest_j} + \iprod{\cov_j - \cov_i, \cov_i^{-1} - \cov_j^{-1}} - E| \lesssim \beta d/\alpha + \coverr\beta\sqrt{d}/\alpha^2 \lesssim \beta d / \alpha\,,
\end{equation*}
we can produce these numbers by brute-forcing over a grid of size $(\beta d / \alpha\eta)^{O(k^2)}$. We will eventually take
\begin{equation}
    \eta = \frac{\covsep}{100\beta^2}\,.\label{eq:etabound}
\end{equation}

With these $\{t_{ij}\}$ in hand, given an index $\ell\in[n]$ into the partition $\{T_1,\ldots,T_n\}$, we define $\classify^{(\cov)}(\x) = b$ if there exists some $i\in T_b$ such that
\begin{equation*}
    (\x - \hatmu(\x))^\top (\precest_i - \precest_j) (\x - \hatmu(\x)) < t_{ij} - \eta
\end{equation*}
for all $j\not\in T_b$.
If there exist multiple such $b$ for which this is the case, then choose one arbitrarily. If no such $b$ exists, then set $\classify^{(\cov)}(\x)$ to be $0$.

\begin{corollary}\label{cor:covcorrect1}
    For any $i\in S_a \cap T_b$ and nonzero $b' \neq b$, we have that
    \begin{equation*}
        \Pr_{\normal_i}[\classify^{(\cov)}(\x) = b' \mid \classify^{(\vmu)}(\x) = a] \le 2k^2 \exp(-\Omega(c'^2(\alpha^4/\beta^6)\cdot \min_{j\in T_{b'}}\norm{\cov_j - \cov_i}^2_F \cdot \min(1,\alpha^2/\coverr^2)))\,.
    \end{equation*}
\end{corollary}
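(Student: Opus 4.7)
The plan is to unpack the definition of $\classify^{(\cov)}$ and show that each individual test comprising it fails, by directly invoking \Cref{lem:farij}. For $\classify^{(\cov)}(\x) = b'$ to occur (with $b' \ne 0$), there must exist some $i^* \in T_{b'}$ such that for every $j \notin T_{b'}$ the test
\[
(\x - \hatmu(\x))^\top (\precest_{i^*} - \precest_j)(\x - \hatmu(\x)) < t_{i^* j} - \eta
\]
holds. Since $i \in T_b$ with $b \ne b'$, we have $i \notin T_{b'}$, so in particular this test must hold for $j = i$. Under the conditioning $\classify^{(\vmu)}(\x) = a$, $\hatmu(\x) = \hatmu_{i'}$ for some $i' \in S_a$. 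Union bounding over the at most $k$ choices of $i^*$ and at most $k$ choices of $i'$, it suffices to bound, for each pair $(i^*, i')$ with $i^* \in T_{b'}$ and $i' \in S_a$, the probability that $(\x - \hatmu_{i'})^\top(\precest_{i^*} - \precest_i)(\x - \hatmu_{i'}) < t_{i^* i} - \eta$ when $\x \sim \normal_i$.

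For each such pair I apply \Cref{lem:farij} with its ``$i$'' set to $i^*$, its ``$j$'' set to $i$, and $\hatmu = \hatmu_{i'}$. The two hypotheses hold: since $i, i' \in S_a$, the triangle inequality gives $\norm{\hatmu_{i'} - \vmu_i} \le \sqrt{\meanerr} + \withinclustermean \le c\betweenclustercov\sqrt{\alpha}/\beta$ by the third bound in Eq.~\eqref{eq:sep_conds}, and $\norm{\cov_i - \cov_{i^*}}_F \ge \betweenclustercov$ because $i$ and $i^*$ lie in different parts of $\calT$. Using the identity
\[
(\x - \hatmu_{i'})^\top(\precest_{i^*} - \precest_i)(\x - \hatmu_{i'}) = \quadform_{i^* i^*}(\x;\hatmu_{i'}) - \quadform_{i^* i}(\x;\hatmu_{i'}) + \iprod{\cov_{i^*}, \precest_{i^*} - \precest_i}
\]
and the fact that $t_{i^* i}$ matches $\iprod{\cov_{i^*}, \precest_{i^*} - \precest_i} + \iprod{\cov_i - \cov_{i^*}, \cov_{i^*}^{-1} - \cov_i^{-1}} - E$ to within additive error $\eta$, the test threshold $t_{i^* i} - \eta$ is at most $\iprod{\cov_{i^*}, \precest_{i^*} - \precest_i} + \iprod{\cov_i - \cov_{i^*}, \cov_{i^*}^{-1} - \cov_i^{-1}} - E$, which is precisely the lower bound furnished by \Cref{lem:farij} for $\quadform_{i^* i^*}(\x;\hatmu_{i'}) - \quadform_{i^* i}(\x;\hatmu_{i'}) + \iprod{\cov_{i^*}, \precest_{i^*} - \precest_i}$. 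So the conclusion of \Cref{lem:farij} directly implies that the test fails, except with probability $\exp(-\Omega(c'^2(\alpha^4/\beta^6)\norm{\cov_i - \cov_{i^*}}_F^2 \min(1,\alpha^2/\coverr^2)))$.

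Summing over the at most $k^2$ pairs $(i^*, i')$ and taking the worst over $i^* \in T_{b'}$ produces the claimed $2k^2$ prefactor and the exponent involving $\min_{j \in T_{b'}} \norm{\cov_j - \cov_i}_F^2$. The one subtlety I expect, which should not be a real obstacle, is verifying that the gap $\iprod{\cov_i - \cov_{i^*}, \cov_{i^*}^{-1} - \cov_i^{-1}}$ in \Cref{lem:farij} indeed dominates $E$: the part $(c^2 + 2c')\norm{\cov_i - \cov_{i^*}}_F^2/\beta^2$ of $E$ is absorbed by taking $c, c'$ sufficiently small absolute constants, while the part $(4\coverr/\alpha^2)\norm{\cov_i - \cov_{i^*}}_F$ is controlled by the hypothesis $\betweenclustercov \ge 5(\beta/\alpha)^3\coverr$ in Eq.~\eqref{eq:sep_conds}. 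This is exactly what dictated the numerical constants in Eq.~\eqref{eq:sep_conds}, so no new constraints beyond those already imposed are incurred.
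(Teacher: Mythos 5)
Your argument follows the same route as the paper: decompose the misclassification event using the structure of $\classify^{(\cov)}$, union bound over pairs $(i^*, i') \in T_{b'} \times S_a$, and apply \Cref{lem:farij} with its ``$i$'' set to $i^*$, its ``$j$'' set to the fixed component index $i$, and $\hatmu = \hatmu_{i'}$. Your verification of the hypothesis $\norm{\hatmu_{i'} - \vmu_i} \le c\betweenclustercov\sqrt{\alpha}/\beta$ via the triangle inequality and the third condition in Eq.~\eqref{eq:sep_conds} is correct, as is the algebraic identity relating $(\x - \hatmu_{i'})^\top(\precest_{i^*} - \precest_i)(\x - \hatmu_{i'})$ to $\quadform_{i^*i^*} - \quadform_{i^*i}$ and the comparison to $t_{i^*i} - \eta$.

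However, there is a genuine gap in how you pass from the conditional probability to the atomic test probabilities. The union bound over $(i^*, i')$ shows only that
$\Pr_{\normal_i}[\classify^{(\vmu)}(\x) = a \ \text{and}\ \classify^{(\cov)}(\x) = b'] \le k^2 \max_{j \in T_{b'}} \exp(-\Omega(\cdots))$, i.e.\ it bounds the \emph{joint} probability, not the conditional one. To obtain the statement, you must divide by $\Pr_{\normal_i}[\classify^{(\vmu)}(\x) = a]$, and the factor of $2$ in the claimed $2k^2$ prefactor comes precisely from showing $\Pr_{\normal_i}[\classify^{(\vmu)}(\x) = a] \ge 1/2$ — this follows from \Cref{lem:meancorrect} together with the condition $k\cdot\exp(-\Omega((\betweenclustermean)^2/\alpha\sqrt k)) \le 1/2$, which is ensured by Eq.~\eqref{eq:sep_conds}. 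You misattribute the factor of $2$ to the union bound (which only gives $k^2$) and never invoke \Cref{lem:meancorrect} at all; without lower-bounding the conditioning probability, the union bound alone does not control $\Pr[\cdot \mid \classify^{(\vmu)}(\x) = a]$, since a priori the conditioning event could be rare and the atomic test events could be positively correlated with it. Separately, your concern about whether the separation term $\iprod{\cov_i - \cov_{i^*}, \cov_{i^*}^{-1} - \cov_i^{-1}}$ dominates $E$ is not actually needed for this corollary: $E$ is already subtracted in the definition of $t_{ij}$, so \Cref{lem:farij} directly shows the test fails with high probability regardless of the sign or magnitude of that difference. That domination only becomes relevant in \Cref{cor:covcorrect2}, where the tests must \emph{pass}.
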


\begin{proof}
    We can rewrite the conditional probability as
    \begin{equation*}
        \Pr_{\normal_i}[\classify^{(\vmu)}(\x) = a]^{-1} \cdot \Pr_{\normal_i}[\classify^{(\vmu)}(\x) = a \ \text{and} \ \classify^{(\cov)}(\x) = b'] \le 2\Pr_{\normal_i}[\classify^{(\vmu)}(\x) = a \ \text{and} \ \classify^{(\cov)}(\x) = b']\,,
    \end{equation*}
    where we used \Cref{lem:meancorrect} and the fact $k\cdot \exp(-\Omega((\betweenclustermean)^2 / \alpha\sqrt{k})) \le 1/2$. Note that 
    \begin{align}
        \Pr_{\normal_i}[\classify^{(\vmu)}(\x) = a \ \text{and} \ \classify^{(\cov)}(\x) = b'] &= \sum_{i' \in S_a} \Pr_{\normal_i}[\hatmu(\x) = \hatmu_{i'} \ \text{and} \ \classify^{(\cov)}(\x) = b'] \\
        &\le \sum_{i'\in S_a} \sum_{j \in T_{b'}} \Pr_{\normal_i}[(\x - \hatmu_{i'})^\top(\precest_{j} - \precest_j)(\x - \hatmu_{i'}) < t_{jj'} - \eta  \ \forall \ j'\in[k]] \\
        &\le \sum_{i'\in S_a} \sum_{j \in T_{b'}} \Pr_{\normal_i}[(\x - \hatmu_{i'})^\top(\precest_{j} - \precest_i)(\x - \hatmu_{i'}) < t_{ji} - \eta] \\
        &\le k^2\exp(-\Omega(c'^2(\alpha^4/\beta^6)\cdot \min_{j \in T_{b'}} \norm{\cov_{j} - \cov_i}^2_F \cdot \min(1,\alpha^2/\coverr^2)))\,,
    \end{align}
    where in the last step we used \Cref{lem:farij}.
\end{proof}

\begin{corollary}\label{cor:covcorrect2}
    Suppose that
    \begin{equation}
        \covsep \ge C\max(\coverr\beta^2/\alpha^2, c^{2/3}(\covsep)^{2/3}\alpha^{1/3}, (\beta/\alpha)^3\coverr) \label{eq:covcorrect2_condition}
    \end{equation}
    for sufficiently large absolute constant $C > 0$.
    Then for any $i\in S_a \cap T_b$, we have that
    \begin{equation*}
        \Pr_{\normal_i}[\classify^{(\cov)}(\x) = 0 \mid \classify^{(\vmu)}(\x) = a] \le 2k^3\exp(-\Omega(\alpha^2\covsep/\beta^3))\,.
    \end{equation*}
\end{corollary}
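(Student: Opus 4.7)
The plan is to show that, conditional on $\classify^{(\vmu)}(\x) = a$, with the stated probability we have $\classify^{(\cov)}(\x) = b$, which in particular implies $\classify^{(\cov)}(\x) \neq 0$. By unpacking the definition of $\classify^{(\cov)}$, it suffices to demonstrate that the index $i' = i$ (the true generating component, which lies in $T_b$ since $i \in S_a \cap T_b$) is a valid witness, i.e.\ for every $j \notin T_b$, $(\x - \hatmu(\x))^\top(\precest_i - \precest_j)(\x - \hatmu(\x)) < t_{ij} - \eta$.

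First I would check the hypotheses of \Cref{lem:nearii} for $\hatmu = \hatmu(\x)$. On the event $\classify^{(\vmu)}(\x) = a$, we have $\hatmu(\x) = \hatmu_{i''}$ for some $i'' \in S_a$, and by the triangle inequality and Eq.~\eqref{eq:sep_conds}, $\norm{\hatmu(\x) - \vmu_i} \leq \sqrt{\meanerr} + \withinclustermean \leq c\covsep\sqrt{\alpha}/\beta$. Union-bounding over the (at most $k$) possible values of $\hatmu(\x)$ and applying \Cref{lem:nearii} with $s = c_1\alpha^2\covsep/\beta^3$ for small constant $c_1 > 0$, we obtain that with probability at least $1 - O(k^2)\exp(-\Omega(\alpha^2\covsep/\beta^3))$, the bound
\begin{equation*}
\quadform_{ii}(\x;\hatmu(\x)) - \quadform_{ij}(\x;\hatmu(\x)) \leq (s\beta/\alpha^2)(\norm{\cov_j-\cov_i}_F \vee \coverr) + c^2\covsep^2/\beta^2 + c\covsep\sqrt{s/\alpha\beta}
\end{equation*}
holds simultaneously for all $j\in[k]$. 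Using the identity $(\x - \hatmu)^\top(\precest_i-\precest_j)(\x-\hatmu) = \quadform_{ii}(\x;\hatmu) - \quadform_{ij}(\x;\hatmu) + \iprod{\cov_i, \precest_i-\precest_j}$ together with the defining specification $|t_{ij} - (\iprod{\cov_i,\precest_i-\precest_j} + \iprod{\cov_j-\cov_i, \cov_i^{-1}-\cov_j^{-1}} - E)| \leq \eta$, the target inequality reduces to $\quadform_{ii} - \quadform_{ij} < \iprod{\cov_j-\cov_i, \cov_i^{-1}-\cov_j^{-1}} - E - 2\eta$ for every $j \notin T_b$. For such $j$ we have $\norm{\cov_j - \cov_i}_F \geq \covsep$, and expanding $\cov_i^{-1}-\cov_j^{-1} = \cov_i^{-1}(\cov_j-\cov_i)\cov_j^{-1}$ together with $\cov_i,\cov_j \preceq \beta\,\Id$ yields the lower bound $\iprod{\cov_j-\cov_i, \cov_i^{-1}-\cov_j^{-1}} \geq \norm{\cov_j-\cov_i}_F^2/\beta^2 \geq \covsep^2/\beta^2$.

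The main obstacle, and the step that motivates the specific form of Eq.~\eqref{eq:covcorrect2_condition}, is verifying deterministically that the sum of the three terms from the \Cref{lem:nearii} upper bound, plus $E = (c^2+2c')\norm{\cov_j-\cov_i}_F^2/\beta^2 + (4\coverr/\alpha^2)\norm{\cov_j-\cov_i}_F$, plus $2\eta = \covsep/(50\beta^2)$, is strictly smaller than $\norm{\cov_j-\cov_i}_F^2/\beta^2$. Here each condition in Eq.~\eqref{eq:covcorrect2_condition} plays a specific role: $\covsep \gtrsim \coverr\beta^2/\alpha^2$ absorbs the $(4\coverr/\alpha^2)\norm{\cov_j-\cov_i}_F$ piece of $E$ by using $\norm{\cov_j - \cov_i}_F \geq \covsep$; $\covsep \gtrsim c^{2/3}\covsep^{2/3}\alpha^{1/3}$, i.e.\ $\covsep \gtrsim c^2\alpha$, is precisely what is needed to bound $c\covsep\sqrt{s/\alpha\beta}$ by a small fraction of $\covsep^2/\beta^2$ upon substituting $s = c_1\alpha^2\covsep/\beta^3$; and $\covsep \gtrsim (\beta/\alpha)^3\coverr$ is the assumption already required by \Cref{lem:farij} so that the $\{t_{ij}\}$ grid is well-defined. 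The constants $c, c'$ can be taken small enough that $(c^2+2c')\norm{\cov_j-\cov_i}_F^2/\beta^2$ is also a small fraction of the RHS.

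Finally, as in the proof of \Cref{cor:covcorrect1}, the conditioning on $\classify^{(\vmu)}(\x) = a$ contributes at most a factor of $2$, since \Cref{lem:meancorrect} together with the separation hypotheses guarantees $\Pr_{\normal_i}[\classify^{(\vmu)}(\x) = a] \geq 1/2$. Combining this factor with the $O(k^2)$ union-bound factor above yields the claimed bound $2k^3\exp(-\Omega(\alpha^2\covsep/\beta^3))$ (absorbing constants into the exponent and into the polynomial factor in $k$).
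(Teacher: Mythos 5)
Your proof is correct and follows essentially the same route as the paper: reduce to showing that, conditional on $\classify^{(\vmu)}(\x) = a$, the true index $i$ is a witness for $T_b$ with high probability, using \Cref{lem:meancorrect} for the factor-of-$2$ conditioning step and \Cref{lem:nearii} for the quadratic-form concentration, then checking the deterministic inequality against the lower bound on $t_{ij} - \eta - \iprod{\cov_i, \precest_i - \precest_j}$. One small organizational difference worth noting: the paper picks the $j$-dependent threshold $s = (\alpha^2/\beta^3)\norm{\cov_j - \cov_i}_F$ in \Cref{lem:nearii}, which makes the leading term $(s\beta/\alpha^2)\norm{\cov_j-\cov_i}_F$ exactly equal to $(1/\beta^2)\norm{\cov_j-\cov_i}_F^2$ and leaves no slack for the remaining error terms; your fixed choice $s = c_1\alpha^2\covsep/\beta^3$ with a small absolute constant $c_1$ makes the leading term a small fraction of $(1/\beta^2)\norm{\cov_j-\cov_i}_F^2$ (since $\norm{\cov_j-\cov_i}_F \ge \covsep$), which is the cleaner way to close the final deterministic inequality and still yields $\exp(-\Omega(s)) = \exp(-\Omega(\alpha^2\covsep/\beta^3))$.
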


\begin{proof}
    We can rewrite the conditional probability as
    \begin{equation*}
        \Pr_{\normal_i}[\classify^{(\vmu)}(\x) = a]^{-1} \cdot \Pr_{\normal_i}[\classify^{(\vmu)}(\x) = a \ \text{and} \ \classify^{(\cov)}(\x) = 0] \le 2\Pr_{\normal_i}[\classify^{(\vmu)}(\x) = a \ \text{and} \ \classify^{(\cov)}(\x) = 0]\,,
    \end{equation*}
    where we used \Cref{lem:meancorrect} and the fact $k\cdot \exp(-\Omega((\betweenclustermean)^2 / \alpha\sqrt{k})) \le 1/2$. Note that 
    \begin{align}
        \Pr_{\normal_i}[\classify^{(\vmu)}(\x) = a \ \text{and} \ \classify^{(\cov)}(\x) = 0] &= \sum_{i' \in S_a} \Pr_{\normal_i}[\hatmu(\x) = \hatmu_{i'} \ \text{and} \ \classify^{(\cov)}(\x) = 0] \\
        &\le \sum_{i'\in S_a} \sum_{j\not\in T_b} \Pr_{\normal_i}[(\x - \hatmu_{i'})^\top(\precest_{i} - \precest_{j})(\x - \hatmu_{i'}) \ge t_{ij} - \eta] \label{eq:sumprobs}
    \end{align}
    We wish to apply \Cref{lem:nearii} here. Consider any $j\not\in T_b$. Note that
    \begin{align*}
        t_{ij} - \eta - \iprod{\cov_i, \precest_i - \precest_j} &\ge \iprod{\cov_j - \cov_i, \cov^{-1}_i - \cov^{-1}_j} - 2\eta - E \\
        &\ge \tr((\cov_j - \cov_i)\cov^{-1}_i(\cov_j - \cov_i)\cov^{-1}_j) - 2\eta - E \\
        &\ge (1/\beta^2)\cdot\norm{\cov_j - \cov_i}^2_F - 2\eta - E\,.
    \intertext{
    In \Cref{lem:nearii}, take $s = (\alpha^2/\beta^3)\cdot\norm{\cov_j - \cov_i}_F$. Then we can bound the above by}
        &\ge (s\beta/\alpha^2)\cdot \{\norm{\cov_j - \cov_i}_F \vee \coverr\} + c^2(\covsep)^2/\beta^2 + c\covsep\sqrt{s/\alpha\beta}\,.
    \end{align*}
    By \Cref{lem:nearii}, this happens with probability at most $O(k) \cdot \exp(-\Omega(s))$. There are at most $k^2$ terms in the sum in Eq.~\eqref{eq:sumprobs}, so the claimed bound follows by a union bound.
\end{proof}

\noindent We can now immediately conclude the proof of the main result of this section:

\begin{proof}[Proof of \Cref{lem:main_cluster}]
     Define $\classify(\x)$ as follows. Let $a = \classify^{(\vmu)}(\x)$ and $b = \classify^{(\cov)}(\x))$. If $b = 0$, or $S_a$ and $T_b$ do not intersect, then define $\classify(\x)$ arbitrarily. Otherwise, if they do intersect, let $U_t$ denote the element of the common refinement of $\calS$ and $\calT$ corresponding to $S_a \cap T_b$, and define $\classify(\x) = t$.

     The bound on the misclassification error then follows from \Cref{lem:meancorrect}, \Cref{cor:covcorrect1}, and \Cref{cor:covcorrect2}, noting that the condition of Eq.~\eqref{eq:sep_conds} ensures that the hypotheses of these components are met.
\end{proof}

\noindent For convenience, we summarize $\classify(\x)$ in \Cref{alg:classify} below.

\begin{algorithm2e}
\DontPrintSemicolon
\caption{\textsc{Clustering}}
\label{alg:classify}
    \KwIn{Partitions $\calS = \{S_1,\ldots,S_m\}, \calT = \{T_1,\ldots,T_n\}$ of $[k]$, estimates $\{(\hatmu_i, \wh{\cov}_i)\}$, thresholds $\{t_{ij}\}$}
    \KwOut{Clustering function $\classify: \R^d\to[n_c]$}
        $\eta \gets \betweenclustercov/100\beta^2$.\;
        Let $U_1,\ldots,U_{\numclust}$ denote the common refinement of the partitions $\calS, \calT$.\;
        Let $\hatPi$ denote the projector to the span of $\hatmu_1,\ldots,\hatmu_k$.\;
        Define $c^{(\vmu)}(\x)$ to be the index $a$ of the piece $S_a$ of $\calS$ containing $\argmin_{i\in[k]} \norm{\hatmu_i - \hatPi\x}$.\;
        Define $\hatmu(\x)$ to be $\hatmu_i$ for $i = \argmin_{j\in[k]} \norm{\hatmu_j - \hatPi \x}$.\;
        Define $c^{(\cov)}(\x)$ to be the index $b$ if there exists $i\in T_b$ such that $(\x - \hatmu(\x))^\top(\precest_i - \precest_j)(\x - \hatmu(\x)) < t_{ij} - \eta$ for all $j\not\in T_b$.\;
        \eIf{$b = 0$ or $S_a\cap T_b =  \emptyset$}{
            Define $\classify(\x)$ arbitrarily.\;
        }{
            Let $U_t$ denote an element of the common refinement corresponding to $S_a \cap T_b$.\;
            \Return $\classify(\x) = t$.\;
        }
\end{algorithm2e}

\section{Score simplification}

The main difficulty in providing a polynomial approximation of the score function arises when it involves multiple Gaussians that are far apart.  
Without further structural assumptions about the function and/or the underlying measure, the degree of the polynomial
approximation depends on (1) the smoothness properties of the target function (e.g., Lipschitz constant or higher-order derivative bounds) and (2) the radius of the support over which the polynomial is guaranteed to be close to the target function.  

Recall that the score function of a mixture $\mathcal M$ of $k$ Gaussian distributions with means $\vec \mu_1,\ldots, \vec \mu_k$ and covariances
$\cov_1,\ldots, \cov_k$ is given by
\begin{equation*}
\vec s(\x; \mathcal M)
    = -\sum_{i=1}^k w_i(\x) \cov_i^{-1}( \x - \vec{\mu}_i ) \hspace{5mm} \text{where} \hspace{4mm} w_i(\x) = 
    \frac{\lambda_i \normal(\vec \mu_i, \cov_i;\x)}{\sum_{j=1}^k \lambda_j
    \normal(\vec \mu_i, \cov_i;\x)} \,.
\end{equation*}
For simplicity, in what follows we will denote by $\normal_i$ the $i$-th component
of the above mixture, $\normal_i = \normal(\vec \mu_i, \cov_i)$.
For Gaussian mixtures, the effective support of the score function is roughly proportional to the radius of the parameter space which scales with the dimension and the parameter distance $\poly(d,R)$.  This is the case as we consider a mixture over $d$-dimensional Gaussians with mean and covariances bounded (in parameter distance) by $R$.  Moreover, the Lipschitz constant of the score function can also scale as $\poly(d, R)$.  Therefore, applying black-box
polynomial approximation results (such as Jackson's theorem \--- see \Cref{lemma:multivariate-jackson}) would yield a polynomial of degree
at least polynomial in the dimension $d$ and the parameter radius $R$ 
yielding a trivial (exponential) runtime.  Instead of using the polynomial approximation results in a black-box manner, we will be constructing a piecewise
polynomial approximation of the score function where the partition 
is given by the clustering algorithm we designed in \Cref{sec:likelihood-clustering}.

In this section, we show that given the ``rough'' clustering function of 
\Cref{sec:likelihood-clustering} we can simplify the score function
inside each cell of the partition given by the clustering so that it is possible to prove the existence of a low-degree approximation inside each cell.
More precisely, we require that the clustering function $\classify(\x)$ assigns
each $\x \in \R^d$ to one of $n_c$ subsets $U_1,\ldots, U_{n_c}$ of $[k]$ 
that form a partition of the original $k$ components such that
if $\normal_i, \normal_j$ belong in different subsets $U_t$
and $U_{t'}$ have to be at least $\poly(\beta/\alpha)\cdot \log(k/\eps)$ far in parameter distance.   In other words, we require that components in different
subsets of the partition have to be sufficiently separated.  Moreover, for every $i \notin U_t$, we require that the clustering function $\classify$ incorrectly classifies a sample $\x \sim \normal_i$ as belonging to $U_t$ with probability
at most $\eps$.  Under those assumptions, we show that for any given 
$\classify(\x) = t$, we can ``simplify'' the score function by removing
the contribution of all components $\normal_j$ that do not belong in $U_t$.

In what follows, given a subset $U_t$ of indices of $[k]$ we denote by 
$\mathcal M(U_t)$ the submixture containing the components $\normal_i$ for $i \in U_t$ and by 
$\vec s(\x; \mathcal{M}(U_t))$ the score function containing only the contribution of
components from $U_t$, i.e.,
\[
\vec s(\x; \mathcal M(U_t)) = \sum_{i \in U_t} \lambda_i \vec g_i(\x) \frac{\normal_i(\x)}{\sum_{j \in U_t}\lambda_j \normal_j(\x)}
\]
The main result of this section is the following proposition showing that,
inside each cell $t$ of the partition given by $\classify(\cdot)$, we can replace the original score function $\vec s(\x; \mathcal M)$ by the score function of the 
sub-mixture $\vec s(\vec x;\mathcal M(U_t))$.
\begin{proposition}[Score Simplification]
\label{prop:score-simplification}
Fix $\eps > 0$ and let $\mathcal M$ be a mixture of $k$ Gaussian distributions $\normal_1,\ldots, \normal_k$ with mean and covariances $\vec \mu_i, \cov_i$ such that for every pair $i, j$ $\dpar(\normal_i, \normal_j) 
= \|\vec \mu_i - \vec \mu_j\|_2^2 + \|\cov_i - \cov_j \|_F^2  \leq R$
for some $R> 1$
Moreover, assume that for some $\alpha \leq 1 \leq  \beta$ it holds that 
$\alpha \Id \preceq \cov_i \preceq \beta \Id$ for all $i \in [k]$
for $\alpha \leq 1 \leq \beta$.
\begin{enumerate}
\item 
Let $\numclust \in [k]$ and let $U_1,\ldots, U_{\numclust}$ be a partition of $[k]$ such that for every $i \in U_t$, and $j\notin U_t$ it holds that $\dpar(\normal_i, \normal_j)$  is larger than a sufficiently large absolute constant multiple of $\beta^4/\alpha^2 \log( k \beta/(\alpha \eps))$.
\item 
Assume that $\classify:\R^d \mapsto [n_c]$ is a $\eps$-approximate 
clustering function, i.e., $\pr_{\x \sim \normal_i}[\classify(\x) = t] \leq \eps$ for all $t \in [\numclust]$ and $i \notin U_t$.
\end{enumerate}
Define the following piecewise approximation to the score function
\[
s(\x; \classify(\cdot)) = 
\sum_{t=1}^{\numclust} s(\x; \mathcal M(U_t)) ~ \1\{ \classify(\x) = t \} \,.
\]
It holds that
\[ 
\E_{\x \sim \mathcal M}
[\| s(\x; \mathcal M) - s (\x; \classify(\cdot)) \|_2^2 ] 
\lesssim k^{5/4}  R \frac{\beta^5}{\alpha^6}
\sqrt{\eps}  .
\]
\end{proposition}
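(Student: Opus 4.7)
The plan is to decompose the error by first conditioning on the mixture component that produced $\x$ and then partitioning according to the clustering function. Writing $\vec s_U(\x) \triangleq \vec s(\x; \mathcal M(U))$ and letting $t^*(i) \in [\numclust]$ denote the unique cluster index with $i \in U_{t^*(i)}$, I would write
\begin{equation*}
\E_{\x\sim \mathcal M}[\|\vec s(\x;\mathcal M) - \vec s(\x; \classify(\cdot))\|^2] = (A) + (B),
\end{equation*}
where $(A) \triangleq \sum_{i=1}^k \lambda_i \E_{\normal_i}[\|\vec s(\x;\mathcal M) - \vec s_{U_{t^*(i)}}(\x)\|^2 \, \mathds{1}\{\classify(\x) = t^*(i)\}]$ and $(B) \triangleq \sum_i \lambda_i \sum_{t\neq t^*(i)} \E_{\normal_i}[\|\vec s(\x;\mathcal M) - \vec s_{U_t}(\x)\|^2 \, \mathds{1}\{\classify(\x) = t\}]$, and then bound $(A)$ and $(B)$ separately.

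For $(A)$, I would first drop the indicator to upper bound each summand by $\E_{\normal_i}[\|\vec s(\x;\mathcal M) - \vec s_{U_{t^*(i)}}(\x)\|^2]$, and then telescope: apply the single-component score removal result \Cref{lem:score-removal-single} to the components $\normal_j$ with $j \notin U_{t^*(i)}$ one at a time, shrinking the mixture in the second slot of $\vec s$ from $\mathcal M$ down to $\mathcal M(U_{t^*(i)})$ in at most $k-1$ steps. Any such $j$ satisfies $\dpar(\normal_i,\normal_j) \gtrsim \beta^4/\alpha^2 \cdot \log(k\beta/(\alpha\eps))$, so by Hanson--Wright (\Cref{prop:HW}) applied to the log-density-ratio quadratic form, $\log(\normal_j(\x)/\normal_i(\x))$ is extremely negative under $\x\sim\normal_i$ with overwhelming probability, and each removal step contributes at most $\poly(\beta/\alpha, R)\cdot \eps/k$ to the squared $L_2$ error after splitting on this high-probability event. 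Summing the telescope gives $(A) \lesssim \poly(\beta/\alpha, R)\cdot \eps$.

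For $(B)$, I would apply Cauchy--Schwarz to each summand,
\begin{equation*}
\E_{\normal_i}[\|\vec s(\x;\mathcal M) - \vec s_{U_t}(\x)\|^2\,\mathds{1}\{\classify(\x) = t\}] \le \E_{\normal_i}[\|\vec s(\x;\mathcal M) - \vec s_{U_t}(\x)\|^4]^{1/2} \cdot \pr_{\normal_i}[\classify(\x) = t]^{1/2},
\end{equation*}
where the probability factor is at most $\sqrt{\eps}$ by the clustering hypothesis (since $i\notin U_t$ when $t \ne t^*(i)$). For the $L_4$ factor I would use Jensen's inequality to write $\|\vec s(\x;\cdot)\|^2 \le \sum_j w_j(\x)\|\cov_j^{-1}(\x-\vmu_j)\|^2$ (the weights $w_j$ form a probability vector), and then bound fourth moments of $\|\cov_j^{-1}(\x-\vmu_j)\|$ under $\normal_i$ using $\|\cov_j^{-1}\|_{\rm op}\le 1/\alpha$ along with standard Gaussian moment bounds and the hypothesis $\|\vmu_i-\vmu_j\|^2 \le \dpar(\normal_i,\normal_j)\le R$. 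Summing over $t \ne t^*(i)$ (at most $k-1$ terms) and over $i$ yields $(B) \lesssim k\cdot\poly(\beta/\alpha,R)\cdot \sqrt{\eps}$.

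Combining, the $\sqrt{\eps}$ contribution from $(B)$ dominates the $\eps$ contribution from $(A)$, and careful accounting of the $\beta,\alpha,R,k$ factors yields the claimed $O(k^{5/4}(\beta^3/\alpha^5)R)\sqrt{\eps}$ bound. I expect the main obstacles to be: (i) the clean iterative application of \Cref{lem:score-removal-single}, which requires controlling how errors accumulate across up to $k-1$ removals while arguing that at each step the residual mixture still has a well-behaved score for the purposes of the next removal, and (ii) the $L_4$ bound on the score in $(B)$, for which the Jensen step is essential---a naive triangle-inequality bound on $\vec s = -\sum_j w_j \cov_j^{-1}(\x - \vmu_j)$ loses the probabilistic averaging provided by the $w_j$'s and leads to a spurious direct dimension dependence instead of the clean $\poly(\beta/\alpha, R)$ dependence needed for the stated bound.
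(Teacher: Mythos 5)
Your decomposition into $(A)$ (correctly clustered) and $(B)$ (misclustered) is exactly the decomposition the paper uses, and your plan for $(A)$---iteratively applying \Cref{lem:score-removal-single} to peel off far components---matches what the paper packages as \Cref{lem:score-removal-multiple}. The Cauchy--Schwarz step in $(B)$, extracting $\sqrt{\eps}$ from the clustering hypothesis, is also the paper's argument.

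The gap is in how you bound the $L_4$ factor $\E_{\normal_i}[\|\vec s(\x;\mathcal M) - \vec s_{U_t}(\x)\|^4]^{1/2}$ in $(B)$. Your plan is to use Jensen on each of $\vec s$ and $\vec s_{U_t}$ separately, reducing to fourth moments of $\|\cov_j^{-1}(\x-\vmu_j)\|$. But this cannot give a dimension-free bound: for $\x\sim\normal_i$ and any $j$, $\E[\|\cov_j^{-1}(\x-\vmu_j)\|^2] \ge \tr(\cov_j^{-1}\cov_i\cov_j^{-1}) \gtrsim d\alpha/\beta^2$, so $\E[\|\cov_j^{-1}(\x-\vmu_j)\|^4] = \Omega(d^2)$. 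The softmax weights $w_j$ form a probability vector, so the Jensen bound $\|\vec s\|^2 \le \sum_j w_j\|\vec g_j\|^2$ is a \emph{convex combination} of quantities that are all $\Omega(d)$---the weights suppress cross terms but do not rescue the dimension dependence, contrary to what your closing remark suggests. You would end up with $\E[\|\vec s\|^4]^{1/2} = \Omega(d)$ and hence an extra $\poly(d)$ in the final bound, whereas the stated bound is $O(k^{5/4}(\beta^3/\alpha^5)R)\sqrt{\eps}$ with no $d$.

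What makes the paper's bound dimension-free is that it never bounds $\|\vec s\|$ or $\|\vec s_{U_t}\|$ individually; it works directly with the \emph{difference}. The algebraic observation in \Cref{clm:generic-component-removal} rewrites $\vec s - \vec s^{-j}$ as a weighted combination of \emph{differences} $\vec g_i - \vec g_\ell = (\cov_i^{-1}-\cov_\ell^{-1})\x - \cov_i^{-1}\vmu_i + \cov_\ell^{-1}\vmu_\ell$, and \Cref{clm:power-norm-bound} shows $\E_{\normal_1}[\|\vec g_1 - \vec g_2\|^4] \lesssim (\beta^2/\alpha^8)\,\dpar(\normal_1,\normal_2)^2$. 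The key is that $\E[\|(\cov_i^{-1}-\cov_\ell^{-1})\x\|^2] \le \beta\|\cov_i^{-1}-\cov_\ell^{-1}\|_F^2$, which scales with the parameter distance rather than with $d$. The $d$-dependent parts of $\vec g_i$ and $\vec g_\ell$ cancel in the difference; your triangle/Jensen route discards that cancellation. So the fix is to apply \Cref{lem:score-removal-multiple} (second part, $i\notin U_t$) to bound $\E_{\normal_i}[\|\vec s - \vec s_{U_t}\|^4]$ directly, as in the paper, rather than splitting $\vec s - \vec s_{U_t}$ into two scores and bounding each.
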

\begin{proof}
We first observe that since $\sum_{t=1}^{\numclust} \1\{\classify(\x) = t\} = 1$ for all $\x$ (i.e.,
each point $x$ is only assigned to a single set $U_t$),
we can write $s(\x) = \sum_{t=1}^{\numclust} s(\x) \1\{\classify(\x) = t\}$ and therefore,
we have that
\[
\E_{\x \sim \mathcal M}[ \|\vec s(\x) - \vec s(\x; \classify(\cdot))\|_2^2 ]
=\sum_{t = 1}^{\numclust} 
\E_{\x \sim \mathcal M}[ \|\vec s(\x) - \vec s(\x; \mathcal M(U_t)) \|_2^2 ~ \1\{\classify(\x) =t \}] \,.
\]
We break down the total $L_2^2$ error into the case where $\x$ was actually
generated by a mixture component that belongs to the set $U_t$ (as
predicted by the clustering function $\classify(\x)$) and the case where $\x$ was
generated by some mixture component that is not in $U_t$.  Recall that we denote 
by $\mathcal M^J$ the joint density of the indexed pair $(i, \x)$ where $i$ corresponds
to the index of the mixture component that generates $\x$.  We have
\begin{align}
\E_{\x \sim \mathcal M}[&\|\vec s(\x) - \vec s(\x; \mathcal M(U_t)) \|_2^2 ~ \1\{\classify(\x) =t \}] \,
\\
&=
\E_{(i, \x) \sim \mathcal M^J}[\|s(\x; \mathcal M) -
s(\x; \mathcal M(U_t))\|^2 ~ \1\{\classify(\x) = t, i \in U_t\} ] 
\\
&+
\E_{(i, \x) \sim \mathcal M^J}[\|s(\x; \mathcal M)-
s(\x; \mathcal M(U_t))\|^2 ~ \1\{\classify(\x) = t, i \notin U_t\} ] \,.
\label{eq:score-simplification-error-decomposition}
\end{align} 
We first focus on the first part of the error, i.e., when the example $\x$ is generated by 
some component $\normal_i$ that belongs to the set $U_t$.
We have 
\begin{align*}
\E_{(i, \x) \sim \mathcal M^J}&[\|s(\x; \mathcal M)-
s(\x; \mathcal M(U_t))\|^2 ~ \1\{\classify(\x) = t, i \in U_t\}]
\leq
\sum_{i \in U_t} \lambda_i
\E_{\x \sim \normal_i}[\|s(\x; \mathcal M)-
s(\x; \mathcal M(U_t))\|^2 ]
\\
&\leq
\sum_{i\in U_t} \lambda_i \sqrt{\E_{\x \sim \mathcal \normal_i}[\|s(\x; \mathcal M)-
s(\x; \mathcal M(U_t))\|^4  ]},
\end{align*}
where the last inequality follows by Jensen's.

We show that as long as a component $\normal_j$ that we remove is far from the 
component $i \in U_t$ in parameter distance, their removal induces an exponentially small 
error in the score function.
\begin{lemma}
\label{lem:score-removal-multiple}
Let $\normal_1,\ldots,\normal_k$ be Normal distributions with 
means $\vec \mu_1,\ldots, \vec \mu_k$ and covariances $\cov_1,\ldots, \cov_k$
such that for all $i$, $\alpha \Id \leq \cov_i \leq \beta \Id$.
For any $i \in U_t$, it holds that
\[
\E_{\x \sim \normal_i}[\|\vec s(\x; \mathcal M) - \vec s(\x; \mathcal M(U_t))\|_2^4]
\lesssim 
\frac{k \beta^{10}}{\sqrt{\lambda_i} \alpha^{12}}  \sum_{j \notin U_t}
\exp\left(-c \frac{\alpha^2}{\beta^4} ~ \dpar(\normal_i, \normal_j)\right)  \,.
    \]
for some universal constant $c > 0$.  Moreover if $i \notin U_t$ it holds that
\begin{align*}
\E_{\x \sim \normal_i}&[\|\vec s(\x; \mathcal M) - \vec s(\x; \mathcal M(U_t))\|_2^4]
\\
&\lesssim 
\frac{\beta^2}{\alpha^8} 
\sum_{\ell=1, \ell \neq i}^k (\dpar(\normal_i, \normal_\ell)^2 + \dpar(\normal_i, \normal_\ell))
+ \sum_{j \notin U_t, j \neq i}
\frac{k \beta^{10}}{\sqrt{\lambda_i} \alpha^{12}}  
\exp\left(-c \frac{\alpha^2}{\beta^4} ~ \dpar(\normal_i, \normal_j)\right)  \,.
\end{align*}

\end{lemma}
Using \Cref{lem:score-removal-multiple} we obtain that 
\begin{align*}
\E_{(i, \x) \sim \mathcal M^J}&[\|\vec s(\x; \mathcal M)-
\vec s(\x; \mathcal M(U_t))\|^2  \1\{\classify(\x) = t\}  \mid i \in U_t]  
\\
&\leq 
\frac{1}{\sqrt{\sum_{i\in U_t} \lambda_i }}
\sum_{i\in U_t} \lambda_i \sqrt{\E_{\x \sim \mathcal \normal_i}[\|\vec s(\x; \mathcal M)-
\vec s(\x; \mathcal M(U_t))\|^4  ]}  
\\
&\lesssim
\frac{\sqrt{k} \beta^5}{\alpha^6} ~ 
\frac{\sum_{i\in U_t} \lambda_i^{3/4}}{\sqrt{\sum_{i\in U_t} \lambda_i } }
\max_{j \notin U_t} e^{-c \frac{\alpha^2}{\beta^4} D_p(\normal_i, \normal_j)}
\lesssim
\frac{k^{3/4} \beta^5}{\alpha^6} ~ 
\max_{j \notin U_t} e^{-c \frac{\alpha^2}{\beta^4} D_p(\normal_i, \normal_j)},
\end{align*}
where the last inequality follows from the fact that 
$\sum_{i \in U_t}\lambda_i^{3/4}
\leq {|U_t|}^{1/4} (\sum_{i \in U_t} \lambda_i)^{3/4}
\leq  {k}^{1/4} (\sum_{i \in U_t} \lambda_i)^{3/4}$.
Therefore, using this estimate we obtain that in the case where the sample is generated by some component in $U_t$, the error is 
\[
\sum_{t=1}^{\numclust} \frac{k^{3/4} \beta^5}{\alpha^6} \max_{j \notin U_t} e^{-c \frac{\alpha^2}{\beta^4} \dpar(\normal_i, \normal_j)} 
\leq \frac{k^{7/4} \beta^5}{\alpha^6} e^{- c \frac{\alpha^2}{\beta^4} \betweencluster}
\,.
\]

We next bound the error in the difference of the score functions when the clustering function
makes a mistake, i.e., $\classify(\x) = t$ but $\x$ is generated by $\normal_i$ for $i \not\in U_t$.

\begin{align*}
\E_{(i, \x) \sim \mathcal M^J}&[\|\vec s(\x; \mathcal M)-
\vec s(\x; \mathcal M(U_t))\|^2  ~ \1\{\classify(\x) = t, i \notin U_t\}]
\\
&=
\sum_{i \notin U_t} \lambda_i ~ 
\E_{\x\sim \normal_i}[\|\vec s(\x; \mathcal M)-
\vec s(\x; \mathcal M(U_t))\|^2 ~  \1\{\classify(\x) = t\}]
\\
&\leq 
\sum_{i \notin U_t} \lambda_i ~
 \sqrt{\E_{\x\sim \normal_i}[\|\vec s(\x; \mathcal M)-
\vec s(\x; \mathcal M(U_t))\|^4]} \sqrt{\pr_{\x \sim \normal_i}[\classify(\x) = t]}
\\
&\leq  \sqrt{2 \eps} ~ \sum_{i\notin U_t} \lambda_i \left(
\frac{\beta}{\alpha^4} 
\sqrt{\sum_{\ell = 1, \ell \neq i}^k D_p(\normal_i, \normal_\ell)^2 
+
\dpar(\normal_i, \normal_\ell)
}
+ 
\frac{\sqrt{k} \beta^5}{\lambda_i^{1/4} \alpha^{6}}
\sqrt{
\sum_{j \notin U_t, j \neq i} 
e^{-c \frac{\alpha^2}{\beta^4} \dpar(\normal_i, \normal_j)}
}
\right)
\\
&\lesssim \sqrt{\eps} ~ 
\left(
\frac{\beta}{\alpha^4} \max_{i\notin U_t} \sum_{\ell = 1, \ell \neq i}^k (D_p(\normal_i, \normal_\ell) + \sqrt{\dpar(\normal_i, \normal_\ell)})
+ \frac{k^{5/4} \beta^5}{\alpha^{6}}
\right),
\end{align*}
where for the third step we used the fact that by our assumption it holds
that $\pr_{\x \sim \normal_i}[\classify(\x) = t] \leq \eps$ when $i \notin U_t$
and for the last inequality we used the fact that there are at most $k$ elements that do not belong in $U_t$ and, similarly to the previous derivation, the fact
that  $\sum_{i \in U_t}\lambda_i^{3/4}
\leq {|U_t|}^{1/4} (\sum_{i \in U_t} \lambda_i)^{3/4}
\leq  {k}^{1/4} (\sum_{i \in U_t} \lambda_i)^{3/4}$.
\end{proof}

\subsection{Proof of Lemma~\ref{lem:score-removal-multiple}}
We first show the following lemma capturing the effect of removing a single component 
from the score function.  We show that the induced error is exponentially small in the distance
of the removed component $j$ and the component $i$. 
\begin{lemma}
\label{lem:score-removal-single}
Let $\normal_1,\ldots,\normal_k$ be Normal distributions with 
means $\vec \mu_1,\ldots, \vec \mu_k$ and covariances $\cov_1,\ldots, \cov_k$
such that for all $i$ $\alpha \Id \preceq \cov_i \preceq \beta \Id$
for some $\alpha \leq 1 \leq \beta$.
Let $\mathcal M$ be the mixture of $\normal_1,\ldots, \normal_k$ with weights $\lambda_1,\ldots,\lambda_k$.
Let $c>0$ be some universal constant. For all $i \neq j$, it holds that 
\[
\E_{\x \sim \normal_i}[\|\vec s(\x; \mathcal M) - \vec s^{-j}(\x)\|_2^4]
\lesssim \frac{k \beta^{10}}{\sqrt{\lambda_i} \alpha^{12}}  \exp\left(-c \frac{\alpha^2}{\beta^4} ~ \dpar(\normal_i, \normal_j)\right)  \,,
\]
where $\vec s^{-j}(\x) = \vec s(\x ; \mathcal M([k]\setminus j))$ is the score function
of the mixture after we drop the contribution of component $j$.
Moreover, it holds 
\(
\E_{\x \sim \normal_j}[\|\vec s(\x) - \vec s^{-j}(\x)\|_2^4]
\lesssim \frac{\beta^2}{\alpha^8} \sum_{\ell=1, \ell \neq j}^k 
(\dpar(\normal_j, \normal_\ell)^2 + \dpar(\normal_j, \normal_\ell)) \,.
\)
\end{lemma}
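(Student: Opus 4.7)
The proof plan begins with the algebraic identity
\begin{equation*}
\vec s(\x; \mathcal M) - \vec s^{-j}(\x) = -w_j(\x)\bigl[\cov_j^{-1}(\x - \vmu_j) + \vec s^{-j}(\x)\bigr]\,,
\end{equation*}
which follows from the elementary fact that $w_\ell(\x) = (1-w_j(\x))\,w_\ell^{-j}(\x)$ for every $\ell\neq j$ (a two-line computation from the definitions of $w_\ell$ and $w_\ell^{-j}$). Both claims then reduce to estimating the fourth moment of the mixing weight $w_j$ against the fourth moment of the residual bracket $\cov_j^{-1}(\x-\vmu_j) + \vec s^{-j}(\x)$.

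For the first bound (case $i\neq j$), the exponential decay will come from $w_j$ while the bracket grows only polynomially in $\x$, so I apply Cauchy--Schwarz to get $\E_{\normal_i}[w_j^4\,\|\mathrm{bracket}\|^4] \le \sqrt{\E_{\normal_i}[w_j^8]}\sqrt{\E_{\normal_i}[\|\mathrm{bracket}\|^8]}$. The second factor is a standard Gaussian moment computation bounded by $\poly(\beta,R,k)/\alpha^8$ using $\|\cov_\ell^{-1}\|_{\rm op}\le 1/\alpha$ and $\dpar\le R$. For the first factor, I use $w_j(\x) \le \min(1, (\lambda_j/\lambda_i)\,\mathcal N_j(\x)/\mathcal N_i(\x))$ and split according to the event $\mathcal E = \{\log(\mathcal N_j/\mathcal N_i)(\x) \le -\tfrac{1}{2}\kldistance(\normal_i\|\normal_j)\}$: on $\mathcal E$ the bound $w_j^8 \le (\lambda_j/\lambda_i)^8 \exp(-4\kldistance)$ suffices, while on $\mathcal E^c$ I take $w_j \le 1$ and bound $\Pr_{\normal_i}[\mathcal E^c]$ by recognizing the centered log-likelihood ratio as a mean-zero Gaussian quadratic form and applying Hanson--Wright (\Cref{prop:HW}), which yields $\exp(-\Omega(\kldistance))$. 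Combined with the lower bound $\kldistance(\normal_i\|\normal_j) \gtrsim (\alpha/\beta^3)\,\dpar(\normal_i,\normal_j)$, obtained by Taylor-expanding $\mu - 1 - \log\mu$ on the eigenvalues of $\cov_j^{-1/2}\cov_i\cov_j^{-1/2}\in[\alpha/\beta,\beta/\alpha]$, this delivers an exponent at least as large as the target $c(\alpha^2/\beta^4)\dpar$ after the Cauchy--Schwarz square root.

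For the second bound (case $i=j$), $w_j$ is no longer small, so the decay must come from the bracket itself. Adding and subtracting $\cov_\ell^{-1}(\x-\vmu_j)$ inside the sum defining $\vec s^{-j}$ and using $\sum_{\ell\neq j}w_\ell^{-j}(\x) = 1$ rewrites
\begin{equation*}
\cov_j^{-1}(\x-\vmu_j) + \vec s^{-j}(\x) = \sum_{\ell\neq j}w_\ell^{-j}(\x)\bigl[(\cov_j^{-1}-\cov_\ell^{-1})(\x-\vmu_j) - \cov_\ell^{-1}(\vmu_j-\vmu_\ell)\bigr]\,,
\end{equation*}
so that each summand has norm at most $\tfrac{1}{\alpha^2}\|\cov_j-\cov_\ell\|_F\,\|\x-\vmu_j\| + \tfrac{1}{\alpha}\|\vmu_j-\vmu_\ell\|$. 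Jensen's inequality applied to this convex combination, followed by raising to the fourth power and taking expectation under $\normal_j$ via the Gaussian fourth moment of $\|\x-\vmu_j\|$, yields the claimed $O(\beta^2/\alpha^8)\sum_\ell \dpar(\normal_j,\normal_\ell)^2$ bound.

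The main obstacle is the careful tracking of condition-number dependencies in the first claim. The target exponent $c\,\alpha^2/\beta^4$ arises from the compounded effect of the Cauchy--Schwarz halving, the $(\alpha/\beta)$-factors in the KL lower bound, and the Frobenius-to-operator-norm ratio that controls the Hanson--Wright tail (whose spectrum is determined by $\cov_i^{1/2}\cov_j^{-1}\cov_i^{1/2}$). The polynomial prefactor $k\beta^{10}/(\sqrt{\lambda_i}\alpha^{12})$ then emerges from the $\poly(\beta,R,k)/\alpha^8$ eighth-moment bound on the bracket together with the $(\lambda_j/\lambda_i)^8$ bound surviving Cauchy--Schwarz as a $1/\sqrt{\lambda_i}$-factor.
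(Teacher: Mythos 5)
Your weight identity $\vec s - \vec s^{-j} = -w_j\bigl[\cov_j^{-1}(\x-\vmu_j) + \vec s^{-j}\bigr]$ is correct and is, up to notation, the same starting point as the paper's Claim~\ref{clm:generic-component-removal}. However, the subsequent Cauchy--Schwarz split $\E[w_j^4\|\mathrm{bracket}\|^4] \le \sqrt{\E[w_j^8]}\sqrt{\E[\|\mathrm{bracket}\|^8]}$ is where the argument breaks: it decouples the softmax weight from the bracket \emph{before} expanding the bracket as a sum over $\ell\neq j$, and this loses the pairing that the lemma statement forces you to exploit. Concretely, the bracket contains terms of size $\|\cov_\ell^{-1}(\vmu_j-\vmu_\ell)\| \sim \|\vmu_j-\vmu_\ell\|/\alpha$ for $\ell\neq i,j$, which can be of order $R/\alpha$ even when $\dpar(\normal_i,\normal_j)$ is small, so the eighth moment of the bracket inherits a $\poly(R)$ factor that the lone exponential $e^{-c\,\alpha^2/\beta^4\,\dpar(\normal_i,\normal_j)}$ from $\E[w_j^8]$ cannot absorb; you yourself note the bracket bound is ``$\poly(\beta,R,k)/\alpha^8$,'' but the lemma's prefactor $k\beta^{10}/(\sqrt{\lambda_i}\alpha^{12})$ has no $R$. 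The paper avoids this by adding and subtracting $\vec g_i$ (the \emph{sampling} component, not $\vec g_j$), expanding into the per-$\ell$ terms $\frac{D_j}{A}\frac{D_\ell}{B}\|\vec g_i-\vec g_\ell\|^4$, and applying Cauchy--Schwarz term-by-term so that the polynomial growth $\dpar(\normal_i,\normal_\ell)$ sits next to the $\ell$-th Hellinger factor $\E_{\normal_i}[\normal_\ell/(\normal_i+\normal_\ell)]$; then $x^a e^{-cx}\lesssim e^{-cx/4}$ kills the polynomial and leaves only $e^{-c'\dpar(\normal_i,\normal_j)}$. Relatedly, your $(\lambda_j/\lambda_i)^8$ bound survives Cauchy--Schwarz as $(\lambda_j/\lambda_i)^4$, not $1/\sqrt{\lambda_i}$: if $\lambda_i$ is tiny this is far worse than the stated prefactor. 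The paper instead uses $\frac{\lambda_j \normal_j}{S} \le \frac{1}{\lambda_i}\frac{\normal_j}{\normal_i+\normal_j}$ together with $\E[w^2]\le\E[w]$ (exploiting $w\le 1$) to keep only one power of the weight before taking the square root.

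For the $i=j$ case, your convex-combination rewriting of the bracket is correct, but bounding each summand by $\tfrac{1}{\alpha^2}\|\cov_j-\cov_\ell\|_F\,\|\x-\vmu_j\|$ and then taking $\E_{\normal_j}[\|\x-\vmu_j\|^4]$ introduces a spurious $d^2$ (that moment scales as $(\beta d)^2$) as well as the wrong power of $\|\cov_j-\cov_\ell\|_F$. You need to compute $\E_{\z\sim\calN(0,\Id)}\bigl[\|(\cov_j^{-1}-\cov_\ell^{-1})\cov_j^{1/2}\z\|^4\bigr]$ directly via Wick's identity, which yields $\lesssim \|(\cov_j^{-1}-\cov_\ell^{-1})\cov_j^{1/2}\|_F^4$ with no $d$ dependence — this is precisely what the paper's Claim~\ref{clm:power-norm-bound} does. (On the other hand, your Hanson--Wright route to $\E[w_j^8]$ does deliver the right exponent once $\|\bB\|_F^2$ is bounded by $\tfrac14\sum(\mu_i-1)^2$ tightly rather than through the crude $\beta/\alpha^2$ factor, so that part can be made rigorous even if the paper's Hellinger calculation in Claim~\ref{clm:hellinger-bound} is cleaner.)
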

By iteratively applying \Cref{lem:score-removal-single}, and the (almost) triangle 
inequality $\|\vec a + \vec b\|_2^4 \leq 8 \|\vec a\|_2^4 + 8 \|\vec b\|_2^4$ 
we can remove all the components that do not belong in the set $U_t$ and obtain the 
error guarantee of \Cref{lem:score-removal-multiple}.

\begin{proof}[Proof \Cref{lem:score-removal-single}]
We first show the following claim bounding the gap between the original score function and the version where we drop the contribution of a component.  We remark that the following claim is 
a pointwise fact about the score function and holds for every $\x \in \R^d$.
\begin{claim}[Softmax Simplification]\label{clm:generic-component-removal}
Moreover let $D_1,\ldots, D_k$ be non-negative weight functions on $\R^d$
and $\vec g_1,\ldots, \vec g_k$ be functions $\vec g_i:\R^d \mapsto \R^d$.
Define $\vec s(\x) = \sum_{i=1}^k \vec g_i(\x) D_i(\x)/(\sum_{i=1}^k D_i(\x))$
and \[\vec s^{-j}(\x) = \sum_{i=1, i \neq j}^k \vec g_i(\x) D_i(\x) /\Bigl(\sum_{i=1, i \neq j}^k D_i(\x)\Bigr).\]
For every $i=1,\ldots, k$, it holds that
\[
\| \vec s(\x) - \vec s^{-j}(\x) \|^4_2
\leq
8  
\sum_{\ell=1, \ell \neq j}^k 
\Bigl(\frac{D_j(\x)}{A(\x)} \Bigr)  \Bigl(\frac{D_\ell(\x)}{B(\x)}\Bigr) 
\| \vec g_i(\x) - \vec g_\ell(\x) \|_2^4
+ 8 
\Bigl(\frac{D_j(\x)}{A(\x)} \Bigr) 
\| \vec g_j(\x) -\vec g_i(\x) \|_2^4 \,,
\]
where we denote by $A(\x) = \sum_{i=1}^k D_i(\x)$ and $B(\x) = \sum_{i=1, i \neq j }^k D_i(\x)$.
\end{claim}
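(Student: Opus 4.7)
The plan is to prove the identity by direct algebraic manipulation, which reveals that $\vec s(\x) - \vec s^{-j}(\x)$ factors cleanly as $D_j(\x)/A(\x)$ times a convex-combination error, and then bound this via one triangle-type inequality and Jensen's inequality. Write $B(\x) = A(\x) - D_j(\x)$, so that
\begin{equation*}
    \vec s(\x) - \vec s^{-j}(\x) \;=\; \vec g_j(\x) \frac{D_j(\x)}{A(\x)} \;+\; \sum_{\ell \neq j} \vec g_\ell(\x) D_\ell(\x)\Bigl(\tfrac{1}{A(\x)} - \tfrac{1}{B(\x)}\Bigr).
\end{equation*}
Since $1/A(\x) - 1/B(\x) = -D_j(\x)/(A(\x) B(\x))$, this collapses to
\begin{equation*}
    \vec s(\x) - \vec s^{-j}(\x) \;=\; \frac{D_j(\x)}{A(\x)} \Bigl( \vec g_j(\x) - \sum_{\ell \neq j} w_\ell \,\vec g_\ell(\x) \Bigr), \qquad w_\ell \triangleq \frac{D_\ell(\x)}{B(\x)},
\end{equation*}
where $\{w_\ell\}_{\ell \neq j}$ is a probability distribution.

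The next step is to introduce $\vec g_i$ as a pivot: because $\sum_{\ell \neq j} w_\ell = 1$, the bracket equals $(\vec g_j(\x) - \vec g_i(\x)) - \sum_{\ell \neq j} w_\ell (\vec g_\ell(\x) - \vec g_i(\x))$. Taking the fourth power of the Euclidean norm and applying the elementary inequality $\|\vec a - \vec b\|_2^4 \leq 8\|\vec a\|_2^4 + 8\|\vec b\|_2^4$ separates the two pieces. The first piece gives the second term claimed in the bound, after using the fact that $D_j(\x)/A(\x) \in [0,1]$ so $(D_j(\x)/A(\x))^4 \leq D_j(\x)/A(\x)$. The second piece is handled by two applications of Jensen's inequality with respect to the probability measure $\{w_\ell\}$: first to pull the norm inside the convex combination, then to pull the fourth power inside, yielding $\|\sum_\ell w_\ell \vec v_\ell\|_2^4 \leq \sum_\ell w_\ell \|\vec v_\ell\|_2^4$. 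Again bounding $(D_j(\x)/A(\x))^4 \leq D_j(\x)/A(\x)$ and writing $w_\ell = D_\ell(\x)/B(\x)$ yields exactly the first term in the claim.

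I do not anticipate a serious obstacle: the proof is almost entirely bookkeeping. The one conceptual choice worth highlighting is the introduction of the pivot $\vec g_i$; without it, the natural bound would be in terms of $\|\vec g_j - \vec g_\ell\|_2^4$, but the version with $\vec g_i$ is precisely what allows the caller (\Cref{lem:score-removal-multiple}) to exploit the fact that when $\x \sim \normal_i$, the Gaussian weights $D_\ell(\x) = \lambda_\ell \normal_\ell(\x)$ for $\ell$ far from $i$ are exponentially small compared to $D_i(\x)$, so that the ratio $D_j(\x)/A(\x)$ can be replaced by the affinity between $\normal_i$ and $\normal_j$.
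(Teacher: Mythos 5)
Your proposal is correct and follows essentially the same route as the paper's proof: the same algebraic factoring of $\vec s(\x) - \vec s^{-j}(\x)$ into $\frac{D_j(\x)}{A(\x)}$ times a convex-combination error, the same pivot trick of adding and subtracting $\vec g_i$, and then the inequality $\|\vec a + \vec b\|_2^4 \le 8\|\vec a\|_2^4 + 8\|\vec b\|_2^4$ combined with Jensen's inequality and the bound $D_j(\x)/A(\x) \le 1$. The paper applies Jensen in one step to the convex function $\vec v \mapsto \|\vec v\|_2^4$ whereas you split it into triangle inequality plus Jensen for $t \mapsto t^4$, but this is the same argument.
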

\begin{proof}
By a direct computation, we observe that 
\begin{equation*}
\vec s(\x) - \vec s^{-j}(\x)
=
\frac{D_j(\x)}{A(\x)}  \Bigg( \vec g_j(\x) - \sum_{\ell=1, \ell \neq j }^k \vec g_\ell(\x) \frac{D_\ell(\x)}{B(\x)} \Bigg)\,.
\end{equation*}
Adding and subtracting $\vec g_i$, we obtain that the above expression is equal to
\[
\frac{D_j(\x)}{A(\x)} \Bigg( 
\vec g_j(\x) - \vec g_i(\x) + \sum_{\ell=1, \ell \neq j }^k  
\frac{D_\ell(\x)}{B(\x)}
 (\vec g_\ell(\x) - \vec g_i(\x) ) \Bigg)\,.
\]
We observe that the normalized weights $D_{\ell}(\x)/B(\x)$ form a distribution over $\ell \in [k]\setminus j$ and therefore, using Jensen's inequality, we obtain that 
\[
\Big\|\sum_{\ell=1, \ell \neq j}^k \frac{D_\ell(\x)}{B(\x)} (\vec g_i(\x) - \vec g_\ell(\x)) \Big \|_2^4
\leq 
\sum_{\ell=1, \ell \neq j}^k \frac{D_\ell(\x)}{B(\x)} \|\vec g_i(\x) - \vec g_\ell(\x))\|_2^4 \,.
\]
Combining the above we obtain the following upper bound for the $\ell_2$ error induced in the score function when we remove the contribution of the $j$-th component. We use the fact
that $\|\vec a + \vec b\|_2^4 \leq 8 \|\vec a\|_2^4 + 8 \|\vec b\|_2^4$ to obtain:
\begin{align*}
\| \vec s(\x) - \vec s^{-j}(\x) \|_2^4 
&\leq 
\Bigl(
\frac{D_j(\x)}{A(\x)} 
\Bigr)^4  \Bigl( 8 \Big\|\sum_{\ell=1, \ell \neq j}^k \frac{D_\ell(\x)}{B(\x)} (\vec g_i(\x) - \vec g_\ell(\x)) \Big \|_2^4
+ 8 \| \vec g_j(\x) -\vec g_i(\x) \|_2^4 \Bigr) 
\\
&\leq
8 
\sum_{\ell=1, \ell \neq j}^k 
\Bigl(\frac{D_j(\x)}{A(\x)} \Bigr)  \Bigl(\frac{D_\ell(\x)}{B(\x)}\Bigr) \|
\vec g_i(\x) - \vec g_\ell(\x) \|_2^4
+ 8 
\Bigl(\frac{D_j(\x)}{A(\x)} \Bigr) 
\| \vec g_j(\x) -\vec g_i(\x) \|_2^4 
\,,
\end{align*}
where for the last inequality we used the fact that $D_j(\x)/A(\x) \leq 1$ for all $\x$
and Jensen's inequality, since $D_\ell(\x)/B(\x)$ is a distribution over $\ell \neq j$
and $\|\cdot \|^4_2$ is convex.
\end{proof}
\noindent Using \Cref{clm:generic-component-removal}, with $D$ corresponding to
the component $\normal_i$ in the statement of \Cref{lem:score-removal-single}, we obtain that we have to control the terms
\begin{equation}
A^{(i, j, \ell) } = \E_{\x \sim \normal_i}\Bigl[
\Bigl(\frac{\lambda_j \normal_j(\x)}{S(\x)} \Bigr)  \Bigl(\frac{\lambda_{\ell} \normal_\ell(\x)}{S^{-j}(\x)} \Bigr) \|
\vec g_i(\x) - \vec g_\ell(\x) \|_2^4
\Bigr]
\label{eq:norm-bound-A}
\,,
\end{equation}
where $S(\x) = \sum_{s=1}^k \lambda_s \normal_s(\x)$ and $S^{-j}(\x) = S(\x)- \lambda_j \normal_j(\x)$.
Moreover, we have to control the term 
\begin{equation}
 B^{(i,j)} = \E_{\x \sim \normal_i}\Bigl[
\Bigl(\frac{\lambda_j \normal_j(\x)}{S(\x)} \Bigr) \| \vec g_j(\x) - \vec g_i(\x) \|_2^4
\Bigr]
\,.
\label{eq:norm-bound-B}
\end{equation}

Using the above notation, and \Cref{clm:generic-component-removal}, we obtain that
\begin{equation}
\E_{\x \sim \normal_i}[\|\vec s(\x) - \vec s^{-j}(\x)\|_2^2]
\leq 
8 B^{(i,j)} + 8 \sum_{\ell=1, \ell \neq j}^k A^{(i,j,\ell)} \,.
\label{eq:removal-A-B-expression}
\end{equation}

We first bound the term $B^{(i,j)}$. By Cauchy-Schwarz we have 
\begin{align}
 B^{(i,j)}
 &\leq 
 \E_{\x \sim \normal_i}\Bigl[
\Bigl(\frac{\lambda_j \normal_j(\x)}{S(\x)} \Bigr) \| \vec g_j(\x) - \vec g_i(\x) \|_2^4
\Bigr] \nonumber
\\
&\leq 
\left(
 \E_{\x \sim \normal_i}\Bigl[ \left(\frac{\lambda_j \normal_j(\x)}{S(\x)} \right)^2 \Bigr]\right)^{1/2}
\left( \E_{\x \sim \normal_i}\Bigl[
\| \vec g_j(\x) - \vec g_i(\x) \|_2^8 
\Bigr] 
\right)^{1/2} \nonumber
\\
&\leq 
\left(
 \E_{\x \sim \normal_i}\Bigl[ \frac{\lambda_j \normal_j(\x)}{S(\x)} \Bigr]\right)^{1/2}
\left( \E_{\x \sim \normal_i}\Bigl[
\| \vec g_j(\x) - \vec g_i(\x) \|_2^8 
\Bigr] 
\right)^{1/2} \nonumber
\\
&\leq 
\frac{1}{\sqrt{\lambda_i}}
\left(
 \E_{\x \sim \normal_i}\Bigl[ \frac{\normal_j(\x)}{\normal_j(\x) + \normal_i(\x)} \Bigr]\right)^{1/2}
\left( \E_{\x \sim \normal_i}\Bigl[
\| \vec g_j(\x) - \vec g_i(\x) \|_2^8 
\Bigr] 
\right)^{1/2}\label{eq:8-power-norm-bound}
\,,
\end{align}
where the third inequality follows because the ratio of weighted densities is pointwise smaller than $1$, and the last inequality follows by the fact that $\lambda_j \normal_j(\x)/(\lambda_i \normal_i(\x) + \lambda_j \normal_j(\x)) \leq \frac{1}{\lambda_i} \normal_j(\x)/(\normal_i(\x) + \normal_j(\x))$ for all $\x$.

We now need to control the following correlation between 
$\normal_j$ and $\normal_i$, \(
\E_{\x \sim \normal_j}\Bigl[\frac{\normal_i(\x)}{\normal_i(\x) + \normal_j(\x)}  \Bigr]
\).  We show that as long as the parameters of $\normal_\ell$ are far in $\ell_2$ from those 
of $\normal_j$ this correlation is exponentially small.  We prove the following claim.
\begin{claim}
\label{clm:hellinger-bound}
Let $\mathcal N(\vec \mu_1, \cov_1)$ and $\mathcal N(\vec \mu_2, \cov_2)$ be
normal distributions
with $\alpha I \leq \cov_1 \leq \beta I $, $\alpha I \leq \cov_2 \leq \beta I$ .  
For $c = 16 (1+\beta/\alpha)^2 \beta^2$, it holds that 
\[
\E_{\x \sim \normal(\vec \mu_1,\cov_1)}\Bigl[ 
\frac{\mathcal{N}(\x; \vec \mu_2, \cov_2)}
{\mathcal{N}(\x; \vec \mu_1, \cov_1) + \mathcal{N}(\x; \vec \mu_2, \cov_2)}
\Bigr]
\leq \exp\Bigl(- \frac1{\beta} \|\vec \mu_1 - \vec \mu_2\|_2^2 - \frac{1}{c} ~ \|\cov_1 - \cov_2 \|_F^2\Bigr) \,.
\]
\end{claim}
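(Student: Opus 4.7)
The plan is to reduce the claim to the Bhattacharyya coefficient between $\normal(\vmu_1,\cov_1)$ and $\normal(\vmu_2,\cov_2)$. Writing $\normal_i(\x)$ for the density of the $i$th Gaussian, I would first bound the integrand pointwise by AM-GM: since $a+b\geq 2\sqrt{ab}$, we have $\normal_2(\x)/(\normal_1(\x)+\normal_2(\x)) \leq \tfrac{1}{2}\sqrt{\normal_2(\x)/\normal_1(\x)}$. Taking expectation over $\x\sim\normal(\vmu_1,\cov_1)$ collapses this to $\tfrac{1}{2}\int\sqrt{\normal_1(\x)\normal_2(\x)}\,\mathrm{d}\x = \tfrac{1}{2}\,\mathrm{BC}(\normal_1,\normal_2)$.

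Next, I would invoke the standard closed form $\mathrm{BC}(\normal_1,\normal_2) = \exp(-d_B)$ for Gaussians, where the Bhattacharyya distance splits as
\[ d_B = \tfrac{1}{8}(\vmu_1-\vmu_2)^\top\bar\cov^{-1}(\vmu_1-\vmu_2) + \tfrac{1}{2}\log\frac{\det\bar\cov}{\sqrt{\det\cov_1\det\cov_2}}, \qquad \bar\cov \triangleq (\cov_1+\cov_2)/2\,. \]
It then suffices to lower bound $d_B$ by $\tfrac{1}{\beta}\|\vmu_1-\vmu_2\|^2 + \tfrac{1}{c}\|\cov_1-\cov_2\|_F^2$ up to absolute constants, which will be absorbed in the downstream use of the claim. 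The mean contribution is immediate: $\bar\cov \preceq \beta\Id$ gives $\bar\cov^{-1}\succeq \tfrac{1}{\beta}\Id$, so the first term is $\Omega(\|\vmu_1-\vmu_2\|^2/\beta)$.

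The main technical work is the covariance (log-determinant) term. I would diagonalize $\Sigma \triangleq \cov_1^{-1/2}\cov_2\cov_1^{-1/2}$, whose eigenvalues $\sigma_i$ lie in $[\alpha/\beta,\beta/\alpha]$ by the hypothesis $\alpha\Id \preceq \cov_i \preceq \beta\Id$, and then rewrite the covariance term as $\tfrac{1}{2}\sum_i f(\sigma_i)$ for the nonnegative scalar function $f(\sigma) \triangleq \log\tfrac{1+\sigma}{2\sqrt{\sigma}}$. This $f$ is symmetric under $\sigma\mapsto 1/\sigma$ and satisfies $f(1)=f'(1)=0$, $f''(1)=1/4$. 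I would then establish a uniform quadratic lower bound $f(\sigma) \geq \gamma(\sigma-1)^2$ on $[\alpha/\beta,\beta/\alpha]$; by the symmetry, the ratio $f(\sigma)/(\sigma-1)^2$ is minimized at the endpoint $\sigma=\beta/\alpha$, and a direct computation gives $\gamma = \Omega(1/(1+\beta/\alpha)^2)$. Converting back via $\|\cov_1-\cov_2\|_F = \|\cov_1^{1/2}(\Sigma-\Id)\cov_1^{1/2}\|_F \leq \beta\|\Sigma-\Id\|_F$ then yields a covariance lower bound of the form $\Omega(\|\cov_1-\cov_2\|_F^2/((1+\beta/\alpha)^2\beta^2))$, matching the claim with $c = O((1+\beta/\alpha)^2\beta^2)$.

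The main obstacle is this last scalar analysis of $f(\sigma)$: the natural local quadratic from Taylor expansion around $\sigma=1$ only holds in a neighborhood, so one must explicitly verify that the same quadratic bound (with a worse constant) holds uniformly across the whole interval $[\alpha/\beta,\beta/\alpha]$ and quantify the dependence of $\gamma$ on $\beta/\alpha$. Everything else---the AM-GM reduction, the Bhattacharyya identity for Gaussians, and the operator-norm lower bound on $\bar\cov^{-1}$---is standard bookkeeping.
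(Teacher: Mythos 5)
Your proposal takes essentially the same route as the paper: bound the conditional mean ratio by the Bhattacharyya coefficient (the paper uses $2tz/(t+z)\le\sqrt{tz}$, you use the equivalent AM-GM), invoke the Gaussian closed form, handle the mean term via $\bar\cov^{-1}\succeq\beta^{-1}\Id$, and reduce the covariance term to a scalar inequality for $f(\sigma)=\log\frac{1+\sigma}{2\sqrt{\sigma}}$ applied to the (generalized) eigenvalues of $\cov_1^{-1/2}\cov_2\cov_1^{-1/2}$. The paper's Fact~\ref{fct:log-inequality} is exactly your quadratic lower bound on $f$ (with explicit constant $1/8$ and denominator $(1+\sigma)^2$), proved cleanly via the integral representation $\int_1^x \frac{t-1}{(1+t)t}\,dt$ rather than your endpoint-monotonicity argument, but the two arguments establish the same bound with the same constant dependence.
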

\begin{proof}
We first observe that we can bound by above the correlation between the two normals
by their Hellinger distance.  For brevity, we will denote $\normal(\vec \mu_1, \cov_1)$
as $\normal_1$ and $\normal(\vec \mu_2, \cov_2)$ as $\normal_2$.
Using the inequality $2 tz/(t+z) \leq \sqrt{t z}$ we obtain that
$\E_{\x \sim \normal_1}[\normal_2(\x)/(\normal_1(\x) + \normal_2(\x)]
\leq \frac12 (1- \vec H^2(\normal_1, \normal_2))$, where $\vec H^2$ is the squared
Hellinger distance between $\normal_1$ and  $\normal_2$.
For two normal distributions, we have that
\[
1- \vec H^2(\normal_1,\normal_2)
= \frac{|\cov_1|^{1/4} |\cov_2|^{1/4}}{|\cov_1/2 + \cov_2/2|^{1/2}}
\exp(- (1/8) \vec u^T (\cov_1/2 + \cov_2/2)^{-1} \vec u)\,,
\]
where $\vec u = \vec \mu_1 - \vec \mu_2$.
Assuming that $\lambda^1_i$ and $\lambda_i^2$ are the eigenvalues of $\normal_1,\normal_2$, 
we observe that we can write 
\[
\frac{|\cov_1|^{1/4} |\cov_2|^{1/4}}{|\cov_1/2 + \cov_2/2|^{1/2}}
= \exp\Bigl( \sum_{i=1}^d \frac{1}{4} \log\Bigl(\frac{\lambda^1_i}{\lambda^2_i}\Bigr) - \frac{1}{2} \log\Bigl(\frac12 + \frac{\lambda_i^1}{2 \lambda_i^2} \Bigr) \Bigr)\,.
\]
We can now use the following inequality showing that as long as the ratio $\lambda^1_i/\lambda^2_i$ is not very large the above difference of logarithms
behaves roughly as $(1-\lambda^1_i/\lambda^2_i)^2$.
\begin{fact}\label{fct:log-inequality}
Let $x > 0$. It holds 
$\frac14 \log x - \log(1/2 + x/2) \leq - \frac{1}{16} \frac{(1-x)^2}{(1+x)^2}$.
\end{fact}
\begin{proof}
We first use the following integral representation of the logarithm difference
\[
-\frac14 \log x + \frac12 \log(1/2 + x/2) 
=\frac 12 \int_1^x  \frac{1}{1+t} - \frac1{2 t}  d t 
=\frac 14 \int_1^x  \frac{t-1}{(1+t)t}  d t \,.
\]
We observe that if $0<x\leq 1$ we have that $(1+t) t \leq 2$ when $t\in[1,x]$.
In that case, by using the integral identity above, we obtain that $
-\frac14 \log x + \frac12 \log(1/2 + x/2) 
\leq -(1/16) (1-x)^2$.
When $x \geq 1$ we similarly obtain the upper bound
$-(1/8) (1-x)^2/((1+x)x)$.   Combining the two cases, we obtain the inequality.
\end{proof}
\noindent Using \Cref{fct:log-inequality} we obtain that 
$
\frac{|\cov_1|^{1/4} |\cov_2|^{1/4}}{|\cov_1/2 + \cov_2/2|^{1/2}}
\leq \exp\Bigl( -\frac{1}{16 C^2} \| \Id - \cov_2^{-1/2} \cov_1 \cov_2^{-1/2} \|_F^2 \Bigr),
$
where $C = 1+\max_{i=1}^d \lambda_i^1/\lambda_i^2 \leq 1+\beta/\alpha$. 
Moreover, since $\cov_2^{-1} \geq (1/\beta) \Id$ we obtain that
\begin{equation*}
\frac{|\cov_1|^{1/4} |\cov_2|^{1/4}}{|\cov_1/2 + \cov_2/2|^{1/2}}
\leq \exp\Bigl( -\frac{1}{16 C^2 \beta^2} \| \cov_1 - \cov_1 \|_F^2 \Bigr).
\end{equation*}
\end{proof}
In the following claim, we give a bound for the 
$ \E_{\x \sim \normal_1}
\Bigl[ \|\vec g_1(\x) - \vec g_2(\x) \|_2^8 \Bigr]
$ term that appears in the bound of term $B^{(i,j)}$ of 
~\Cref{eq:8-power-norm-bound}.

\begin{claim}
\label{clm:power-norm-bound}
Let $\normal_1 = \normal(\vec \mu_1, \cov_1)$,  
$\normal_2 = \normal(\vec \mu_2, \cov_2)$ and define 
$\vec g_1(\x) = \cov_1^{-1}(\x - \vec \mu_1)$, 
$\vec g_2(\x) = \cov_2^{-1}(\x - \vec \mu_2)$.
Assuming that $\alpha \Id \leq \cov_1, \cov_2\leq \beta \Id$, 
it holds 
\begin{align*}
\E_{\x \sim \normal_1}
\Bigl[ \|\vec g_1(\x) - \vec g_2(\x) \|_2^4 \Bigr] &\lesssim 
\frac{\beta^2}{\alpha^8} (
\|\cov_1 - \cov_2\|_F^2 + \|\vec \mu_1 - \vec \mu_2\|_2^2)^2 
+  \frac{1}{\alpha^2} \|\vec \mu_1 - \vec \mu_2\|_2^2 
\\
&\lesssim 
\frac{\beta^2}{\alpha^8} ( \dpar(\normal_1, \normal_2)^2
+ \dpar(\normal_1, \normal_2)) 
\,.
\end{align*}
Moreover, for $t \geq 2$ we have 
\[
\E_{\x \sim \normal_1}
\Bigl[ \|\vec g_1(\x) - \vec g_2(\x) \|_2^{2 t} \Bigr] \lesssim 
t^t \Big(\E_{\x \sim \normal_1}
\Bigl[ \|\vec g_1(\x) - \vec g_2(\x) \|_2^4 \Bigr] \Big)^{t/2} \,.
\]
\end{claim}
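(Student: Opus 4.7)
The plan is to write $\vec g_1(\x) - \vec g_2(\x) = \bA(\x - \vec\mu_1) + \vec b$ where $\bA \coloneqq \cov_1^{-1} - \cov_2^{-1}$ (symmetric) and $\vec b \coloneqq \cov_2^{-1}(\vec\mu_2 - \vec\mu_1)$. Using the identity $\cov_1^{-1} - \cov_2^{-1} = \cov_1^{-1}(\cov_2 - \cov_1)\cov_2^{-1}$ with the bounds $\alpha\Id \preceq \cov_i \preceq \beta\Id$ yields $\|\bA\|_F \le \|\cov_1 - \cov_2\|_F/\alpha^2$ and $\|\vec b\| \le \|\vec\mu_1 - \vec\mu_2\|/\alpha$. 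Substituting $\x = \vec\mu_1 + \cov_1^{1/2}\vec z$ for $\vec z \sim \normal(\vec 0,\Id)$ recasts the stochastic part as the centered Gaussian $\bA\cov_1^{1/2}\vec z$, which enables clean moment estimates.

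For the first bound, I would apply $(u+v)^4 \le 8(u^4 + v^4)$ to separate the random and deterministic parts and then evaluate $\E[\|\bA\cov_1^{1/2}\vec z\|^4] = \E[(\vec z^\top \vec M \vec z)^2]$ with $\vec M \coloneqq \cov_1^{1/2}\bA^2\cov_1^{1/2}$. By the centered version of \Cref{lem:wicks_scalar}, this equals $\tr(\vec M)^2 + 2\|\vec M\|_F^2$; submultiplicativity $\|\vec X\vec Y\|_F \le \|\vec X\|_{\rm op}\|\vec Y\|_F$ together with $\|\cov_1\|_{\rm op} \le \beta$ bounds both quantities by $\beta\|\bA\|_F^2 \le \beta\|\cov_1 - \cov_2\|_F^2/\alpha^4$. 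The random contribution is thus $\lesssim (\beta^2/\alpha^8)\|\cov_1 - \cov_2\|_F^4$; the deterministic contribution $\|\vec b\|^4 \lesssim \|\vec\mu_1 - \vec\mu_2\|^4/\alpha^4$ is absorbed into the same prefactor once we note that $\beta^2/\alpha^8 \ge 1/\alpha^4$ in the paper's regime $\alpha \le 1 \le \beta$. Applying $a^4 + b^4 \le (a^2 + b^2)^2$ yields the first claim.

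For general $t \ge 2$, the same decomposition together with $(u+v)^{2t} \le 2^{2t-1}(u^{2t} + v^{2t})$ reduces the task to bounding $\E[(\vec z^\top \vec M \vec z)^t]$. I would use the high-moment form of \Cref{prop:HW}: since $\tr(\vec M)$, $\|\vec M\|_F$, and $\|\vec M\|_{\rm op}$ are all simultaneously controlled by $\beta\|\bA\|_F^2$, integrating the sub-exponential tail yields $\E[(\vec z^\top \vec M \vec z)^t]^{1/t} \lesssim t\beta\|\bA\|_F^2$, so $\E[\|\bA\cov_1^{1/2}\vec z\|^{2t}] \lesssim (ct\beta/\alpha^4)^t\|\cov_1 - \cov_2\|_F^{2t}$. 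The deterministic term $\|\vec b\|^{2t} \le \|\vec\mu_1 - \vec\mu_2\|^{2t}/\alpha^{2t}$ is absorbed since $t\beta/\alpha^4 \ge 1/\alpha^2$ under the same regime, and $a^{2t} + b^{2t} \le 2(a^2 + b^2)^t$ finishes the job.

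The only nontrivial step is the $t$-dependent moment bound for the Gaussian quadratic form; everything else is routine matrix norm manipulation. The crucial simplification is that all three relevant norms of $\vec M = \cov_1^{1/2}\bA^2\cov_1^{1/2}$ are simultaneously controlled by the single quantity $\beta\|\bA\|_F^2$, so Hanson-Wright telescopes cleanly and no dimension-dependent factor appears in the final bound.
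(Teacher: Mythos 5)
Your proposal is correct, and the core computation is the same as the paper's. Both arguments change variables $\x = \vmu_1 + \cov_1^{1/2}\z$, reduce the problem to Gaussian quadratic-form moments, and close with the norm bounds $\|\cov_1^{-1} - \cov_2^{-1}\|_F \le \|\cov_1 - \cov_2\|_F/\alpha^2$ and $\|\cov_2^{-1}(\vmu_2 - \vmu_1)\| \le \|\vmu_1 - \vmu_2\|/\alpha$. Your $\bA\cov_1^{1/2}$ is exactly the paper's $\vec S = \cov_1^{-1/2} - \cov_2^{-1}\cov_1^{1/2}$ under a different name, and your $\vec M = \cov_1^{1/2}\bA^2\cov_1^{1/2}$ has the same spectrum as $\vec S\vec S^\top$, so the fourth-moment computation is equivalent modulo your cosmetic choice to peel off the mean term with $(u+v)^4 \le 8(u^4+v^4)$ rather than applying the non-centered version of \Cref{lem:wicks_scalar} directly.

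The one genuine difference is the argument for $t \ge 2$. The paper applies Gaussian hypercontractivity (\Cref{fct:hypercontractivity}) to the degree-$2$ polynomial $p(\x) = \|\vec g_1(\x) - \vec g_2(\x)\|_2^2$, which in a single line converts the already-established $L^2$ bound on $p$ into the $L^t$ bound and produces the $(t-1)^t$ growth. You instead integrate the Hanson-Wright sub-exponential tail of $\z^\top\vec M\z$ to get $\E[(\z^\top\vec M\z)^t]^{1/t} \lesssim \tr(\vec M) + t\,\|\vec M\|_F \lesssim t\beta\|\bA\|_F^2$, and then reassemble. Both routes are sound and give the same $(t\beta/\alpha^4)^t$ dependence. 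Hypercontractivity is slightly more economical since it reuses the fourth-moment estimate verbatim and avoids the tail-to-moment conversion; your route is more self-contained given \Cref{prop:HW} and makes explicit that $\tr(\vec M)$, $\|\vec M\|_F$, and $\|\vec M\|_{\rm op}$ are all simultaneously dominated by $\beta\|\bA\|_F^2$, which is the fact that lets Hanson-Wright close without any dimension factor. Either way there is no gap, though in a write-up you would want to spell out the standard "sub-exponential tail implies $L^t$ norm $\lesssim tK$" step, since \Cref{prop:HW} only states the tail bound.
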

\begin{proof}
We first observe that 
\begin{equation*}
\E_{\x \sim \normal_1}
\Bigl[ \|\vec g_1(\x) - \vec g_2(\x) \|_2^4 \Bigr] 
= 
\E_{\x \sim \normal}
\Bigl[ \| (\cov_1^{-1/2} - \cov_2^{-1} \cov_1^{1/2}) ~ \x + \cov_2^{-1}(\vec \mu_2 - \vec \mu_1 ) \|_2^4 \Bigr] 
= 
\E_{\x \sim \normal(\vec b, \vec A)}
\Bigl[ \| \x \|_2^4 \Bigr] 
\,,
\end{equation*}
where $\vec b = \vec Q_2^{-1} (\vec \mu_2 - \vec \mu_1)$ and $\vec A = \vec S \vec S^T$ with
$\vec S =  \vec Q_1^{-1/2} - \vec Q_2^{-1} \vec Q_1^{1/2}$.
By \Cref{lem:wicks_scalar} we have that 
\begin{align*}
\E_{\x \sim \normal(\vec b, \vec A)}
\Bigl[ \| \x \|_2^4 \Bigr] 
&= \tr(\vec A)^2 + 2 \|\vec A\|_F^2 + 2 \|\vec A^{1/2} \vec b\|^2 + \|\vec b\|_2^2 (1 + 2 \tr(\vec A)) + 2 \vec b^T \vec A \vec b + \|\vec b\|_2^4
\\
&\lesssim \|\vec S\|_F^4 + \|\vec b\|_2^2 (1 + \|\vec S\|_F^2) + \|\vec b\|_2^4 
\lesssim (\|\vec S\|_F^2 + \|\vec b\|_2^2)^2 + \|\vec b\|_2^2
\,.
\end{align*}
We observe that $\| \vec S\|_F = \| \cov_1^{-1}( \cov_2 - \cov_1) \cov_2^{-1} \cov_1^{1/2}\|_F
\leq  \frac{\sqrt{\beta}}{\alpha^2} ~ \|\cov_1 - \cov_2\|_F $, where the inequality follows by the fact that $\|\vec A \vec B\|_F\leq \|\vec A\|_2 \|\vec B\|_F$ and the spectral bounds on $\cov_1,\cov_2$. 
Moreover, $\|\vec b\|_2 \leq (1/\alpha) ~ \|\vec \mu_1 - \vec \mu_2\|_2$, since $\|\cov_2^{-1}\|_2 \leq 1/\alpha$.
Therefore, we obtain that
\begin{align*}
\E_{\x \sim \normal_1}
\Bigl[ \|\vec g_1(\x) - \vec g_2(\x) \|_2^4 \Bigr] 
&\lesssim 
\frac{\beta^2}{\alpha^8} 
(\|\cov_1 - \cov_2\|_F^2 + \|\vec \mu_1 - \vec \mu_2\|_2^2 )^2 + 
\frac{1}{\alpha^2} \|\vec \mu_1 - \vec \mu_2\|_2^2 \,.
\end{align*}

To obtain the second bound of the claim, we will use the standard hypercontractivity inequality for polynomials (\Cref{fct:hypercontractivity}).

\begin{fact}[Gaussian hypercontractivity]
\label{fct:hypercontractivity}
Let $p:\R^d \mapsto \R$ be a polynomial of degree at most $\ell$ and let $t \geq 2$. It holds 
\(
\Bigl(\E_{\x \sim \normal}[p^t(\x)]
\Bigr)^{1/t} 
\leq 
(t-1)^{\ell/2}
~ 
\Bigl(\E_{\x \sim \normal}[p^2(\x)]
\Bigr)^{1/2}  \,.
\)
\end{fact}

We have that $p(\x) = \|\vec g_1(\x) - \vec g_2(\x)\|_2^2$ is a degree $2$ polynomial
and therefore the claimed bound follows from the previous bound on 
$\|\vec g_1(\x) - \vec g_2(\x)\|_2^4 = |p(\x)|^2$ and the hypercontractivity inequality of \Cref{fct:hypercontractivity}.
\end{proof}

We can now apply \Cref{clm:hellinger-bound} and \Cref{clm:power-norm-bound} to the bound of \Cref{eq:8-power-norm-bound} and obtain the following bound 
for some universal constant $c>0$:
\begin{align*}
B^{(i,j)}
&\lesssim 
\frac{\beta^2}{\sqrt{\lambda_i} \alpha^8} ~ ((\|\cov_i - \cov_j\|_F^2 + \|\vec \mu_i - \vec \mu_j\|_2^2)^2 + \|\vec \mu_i - \vec \mu_j\|_2^2) ~ 
e^{-c \frac{\alpha^2}{\beta^4} ~ (\|\cov_i - \cov_j\|_F^2 + \|\vec \mu_i - \vec \mu_j\|_2^2) } 
\\
&\lesssim
\frac{\beta^2}{\sqrt{\lambda_i} \alpha^8} (\dpar(\normal_i, \normal_j)^2 + \dpar(\normal_i, \normal_j)) ~ 
e^{-c \frac{\alpha^2}{\beta^4} \dpar(\normal_i, \normal_j) }  \\
&\lesssim
\frac{\beta^{10}}{\sqrt{\lambda_i} \alpha^{12} }  ~ 
e^{-(c/4) \frac{\alpha^2}{\beta^4} \dpar(\normal_i, \normal_j) }\,,
\end{align*}
where for the last inequality, we used the fact that for all $t \geq 0$, it holds that
$t^2 e^{-t} \leq e^{-t/4}$ and 
$t e^{-t} \leq e^{-t/2}$.

We now bound the cross-error term $A^{(i,j,\ell)}$ of Equation \Cref{eq:norm-bound-A}.
We first observe that $A^{(i,j,\ell)}$ (in contrast with term $B^{(i,j)}$ that we bounded previously) does not vanish when $i = j$.  We first focus on the case where $i \neq j$.
Using the Cauchy-Schwarz inequality we obtain
\begin{align*}
A^{(i, j, \ell) } 
&= \E_{\x \sim \normal_i}\Bigl[
\Bigl(\frac{\lambda_j \normal_j(\x)}{S(\x)} \Bigr)  \Bigl(\frac{\lambda_{\ell} \normal_\ell(\x)}{S^{-j}(\x)} \Bigr) \|
\vec g_i(\x) - \vec g_\ell(\x) \|_2^4
\Bigr]\\
&\leq 
\left(\E_{\x \sim \normal_i}\Bigl[
\Bigl(\frac{\lambda_j \normal_j(\x)}{S(\x)} \Bigr)^{4} \Bigr]\right)^{1/4}  
~ \left(\E_{\x \sim \normal_i}
\Bigl[
\Bigl(\frac{\lambda_{\ell} \normal_\ell(\x)}{S^{-j}(\x)} \Bigr)^4\Bigr]\right)^{1/4} 
~ \left(\E_{\x \sim \normal_i}
\Bigl[ \|\vec g_i(\x) - \vec g_\ell(\x)\|_2^8 \Bigr]\right)^{1/2}  \\
&\leq 
\frac{1}{\sqrt{\lambda_i}}
\left(\E_{\x \sim \normal_i}\Bigl[
\frac{\normal_j(\x)}{\normal_i(\x) + \normal_j(\x)} \Bigr]\right)^{1/4}  
~ \left(\E_{\x \sim \normal_i}
\Bigl[ \frac{\normal_\ell(\x)}{\normal_i(\x) + \normal_\ell(\x)} \Bigr]\right)^{1/4} 
~ \left(\E_{\x \sim \normal_i}
\Bigl[ \|\vec g_i(\x) - \vec g_\ell(\x)\|_2^8 \Bigr]\right)^{1/2} \,,
\end{align*}
where the third inequality follows because the ratio of weighted densities is pointwise smaller than $1$. We remark that the last inequality holds true because in the case where $i \neq j$ it holds that $S^{-j}(\x) \geq \lambda_i \normal_i(\x) + \lambda_\ell\normal_\ell(\x)$.  We can now use \Cref{clm:hellinger-bound}
and \Cref{clm:power-norm-bound} to bound each of the three terms of the above expression
for $A^{(i,j,\ell)}$ separately:
\begin{equation*}
A^{(i, j, \ell)} 
\lesssim \frac{\beta^2}{\alpha^8 \sqrt{\lambda_i}} 
e^{-c' \frac{\alpha^2}{\beta^4}(\dpar(\normal_i, \normal_j) + \dpar(\normal_i, \normal_{\ell})) }
(\dpar(\normal_i, \normal_\ell)^2 + \dpar(\normal_i, \normal_\ell) )
\lesssim \frac{\beta^{10}}{\alpha^{12} \sqrt{\lambda_i}} 
e^{-c' \frac{\alpha^2}{\beta^4} \dpar(\normal_i, \normal_j)} \,,
\end{equation*}
where $c'$ is some universal constant and for the last inequality we used the fact that for all $t$
where for the last inequality, we used the fact that for all $t \geq 0$, it holds that
$t^2 e^{-t} \leq e^{-t/4}$ and 
$t e^{-t} \leq e^{-t/2}$.

Putting together the bounds for $A^{(i,j,\ell)}$ and $B^{(i,j)}$ we obtain that
\begin{equation*}
\E_{\x \sim \normal_i}[\|\vec s(\x) - \vec s^{-j}(\x)\|_2^4]
\lesssim \sum_{\ell = 1, \ell \neq j}^k A^{(i,j,\ell)} + B^{(i,j)}
\lesssim \frac{k}{\sqrt{\lambda_i}} \frac{\beta^{10}}{\alpha^{12}} \exp\left(-c \frac{\alpha^2}{\beta^4} ~ \dpar(\normal_i, \normal_j)\right)  \,.
\end{equation*}

We now work out the case where $i = j$ (see the second estimate in \Cref{lem:score-removal-single}).  Using \Cref{clm:generic-component-removal}, for $i=j$, we obtain the following estimate 
\begin{equation*}
\E_{\x \sim \normal_j}[\|\vec s(\x) - \vec s^{-j}(\x)\|_2^4]
\leq 8 \sum_{\ell = 1, \ell \neq j}^k A^{(j,j,\ell)} \,.
\end{equation*}
In this case, we cannot guarantee that the weight
terms $\lambda_j\normal_j(\x)/S(\x)$ and $\lambda_\ell \normal_\ell(\x)/S^{-j}(\x)$ will be exponentially small and therefore we simply use the fact that they are at most 1:
\begin{align*}
A^{(j, j, \ell) } 
&= 
\E_{\x \sim \normal_j}\Bigl[
\Bigl(\frac{\lambda_j \normal_j(\x)}{S(\x)} \Bigr)  \Bigl(\frac{\lambda_{\ell} \normal_\ell(\x)}{S^{-j}(\x)} \Bigr) \|
\vec g_j(\x) - \vec g_\ell(\x) \|_2^4
\Bigr]
\leq 
\E_{\x \sim \normal_j}\Bigl[\|\vec g_j(\x) - \vec g_\ell(\x) \|_2^4 \Bigr]
\\
&\lesssim \frac{\beta^2}{\alpha^8} (\dpar(\normal_j, \normal_\ell)^2 
+ \dpar(\normal_j, \normal_\ell))
\,,
\end{align*}
where for the last inequality we used \Cref{clm:power-norm-bound}.
Substituting the estimate for $A^{(i,j,\ell)}$ yields the claimed bound. 
\end{proof}

\section{Existence and learning of a piecewise polynomial}
\label{sec:learning-piecewise-polynomial}

\subsection{Existence of a piecewise polynomial}

In this section, we will show the existence of a piecewise polynomial approximation for the score function. To show the desired polynomial existence result, we start by showing the polynomial existence result for the score function of each subset $U_i$ and combine the results with the clustering guarantee (\Cref{lem:main_cluster}) and the score simplification guarantee (\Cref{prop:score-simplification}) to obtain the result for the complete mixture. 

\subsubsection{Polynomial approximation of a sub-mixture with small parameter distance}

We will first obtain the result for a mixture $\mc M(U)$ where the mixture has $|U| = m \leq k$ components and the parameter distance between any two components $\| \vmu_i - \vmu_j \| + \| \cov_i - \cov_j \| \leq \withincluster$ for all $i, j \in [m]$. Our main result of this section is the following proposition. 

\begin{proposition}
\label{lem:cluster-poly-approx}
    Let $\mc M(U)$ be a mixture of $\nc$ well-conditioned Gaussians with $\alpha \Id \preceq \cov_i \preceq \beta \Id$ and parameters satisfying $\| \vmu_i - \vmu_j \| + \| \cov_i - \cov_j \|_F \leq \withincluster$ for all $i, j \in [\nc]$. Let $\{ \hatmu_i, \hatQ_i, \precest_i \}_{i=1}^\nc$ be the estimates of the parameters $\{ \vmu_i, \cov_i, \cov_i^{-1} \}_{i=1}^\nc$ within parameter distance $\| \hatmu_i - \vmu_i \| + \| \hatQ_i - \cov_i \|_F + \| \precest_i - \cov_i^{-1} \|_F \leq \paramerr$ and with the operator norm satisfying $\| \precest_i \|_\op \lesssim \frac{1}{\alpha}$ for all $i \in U$. Then, there exists a polynomial $p(\x; \mathcal M(U))$ of degree $\wt{O}( \frac{  \beta^2 \nc^2 \paramerr^5 \withincluster^6 }{ \alpha^6 \eps} )$ and coefficients bounded in magnitude by $d R \exp( \wt{O} ( \frac{  \beta^2 \nc^2 \paramerr^5 \withincluster^6 }{ \alpha^6 \eps} ) )$ such that for all $\x$, the following holds
    \begin{equation*}
        \E_{\x \sim \mc M(U)}[\| s(\x; \mathcal M(U)) - \lscore(\x; \mathcal M(U))  \|^2] \leq \eps \ ,
    \end{equation*}
    where the approximating function is 
$\lscore(\x; \mc M(U) ) \triangleq p(\x; \mc M (U) ) \1 \{ \hballt(\x; U ) \} + \precest_i (\x - \hatmu_i) \1 \{ \hballt^c(\x; U) \}$ for some $i \in U$ where $\hballt(\x; U)$ denotes the region $\hballt(x; \htheta_1, \htheta_2)$ of the polynomial approximation for cluster $U$. 
    where $\widehat{B}(\x) : \R^d \to \{0, 1\}$ function that only depends on the estimates $\{ \hatmu_i, \hatQ_i, \precest_i \}_{i=1}^\nc$.
\end{proposition}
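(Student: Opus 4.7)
The plan is to construct the approximation in three steps: (i) set up a normalized representation of the softmax weights whose inputs are bounded by $\poly(\tau \withincluster \paramerr)$ independent of the ambient dimension $d$, (ii) define the region $\hballt(\x; U)$ as the event that these normalized inputs are all bounded, and (iii) apply multivariate Jackson's theorem on that region while using the linear tail $\precest_i(\x - \hatmu_i)$ outside of it.

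\emph{Step 1: normalization.} Write $\vec s(\x; \mc M(U)) = -\sum_{i \in U} w_i(\x)\,\cov_i^{-1}(\x - \vmu_i)$, where $w_i = \mathrm{softmax}_i(u_1,\ldots,u_m)$ for the log-density inputs $u_i(\x) = -\tfrac{1}{2}\|\x - \vmu_i\|^2_{\cov_i^{-1}} + \log \lambda_i - \tfrac{1}{2}\log\det\cov_i$. Since softmax is translation invariant we may use the centered inputs $\tilde u_i(\x) \triangleq u_i(\x) - u_1(\x) - \tfrac{1}{2}\iprod{\cov_1,\cov_i^{-1}-\cov_1^{-1}}$. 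The key point is that the expectation of $\tilde u_i$ under any component $\normal_j$ in $U$ is $\iprod{\cov_j-\cov_1,\cov_i^{-1}-\cov_1^{-1}} + \text{(mean-difference terms)}$, which is bounded by $\poly(\withincluster/\alpha)$ rather than scaling with $d$. Replacing the true parameters by the estimates $\{\hatmu_i,\hatQ_i,\precest_i\}$ yields quadratic polynomials $\wh{u}_i(\x)$; I will define $\hballt(\x; U) = \{\x : |\wh u_i(\x) - \wh u_1(\x) - \tfrac{1}{2}\iprod{\hatQ_1,\precest_i-\precest_1}| \leq \boundc \text{ for all } i \in U\}$ for $\boundc = C \tau \withincluster \paramerr \cdot \polylog(k/\eps)$ with $C$ sufficiently large.

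\emph{Step 2: tail probability.} Using \Cref{prop:HW} and standard Gaussian concentration applied to each component $\normal_j$, $j \in U$, together with the operator bounds $\|\precest_i\|_\op \lesssim 1/\alpha$ and $\|\precest_i-\precest_1\|_F \lesssim \withincluster/\alpha^2$, I will show that $\Pr_{\x\sim \mc M(U)}[\x \notin \hballt(\x;U)] \leq \delta$ for $\delta$ inverse-polynomial in $R/\eps$. This uses that the quadratic forms defining $\wh{u}_i-\wh{u}_1$ have Frobenius norm $\poly(\withincluster/\alpha^2)$ and the estimates are $\paramerr$-close to the truth.

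\emph{Step 3: Jackson approximation and error control.} On $\hballt(\x;U)$ the softmax arguments lie in $[-\boundc,\boundc]^m$ and the softmax is $O(1)$-Lipschitz there, so multivariate Jackson's theorem (\Cref{lemma:multivariate-jackson}) produces a polynomial $\sigma_i$ in $m$ variables of degree $\poly(m\boundc/\eps_1)$ uniformly $\eps_1$-close to $w_i$ on the box. Composing $\sigma_i$ with the quadratic polynomials $\wh u_i - \wh u_1 - \tfrac12\iprod{\hatQ_1,\precest_i-\precest_1}$ gives polynomials $\wh w_i(\x)$ in $\x$, and I set $p(\x;\mc M(U)) \triangleq -\sum_{i\in U} \wh w_i(\x)\,\precest_i(\x-\hatmu_i)$. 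The $L^2$ error splits as an inside-piece plus an outside-piece. Inside $\hballt$, the error decomposes into (a) replacing $(\vmu_i,\cov_i^{-1})$ by $(\hatmu_i,\precest_i)$ in both the softmax inputs and the linear tail, bounded by $\poly(\paramerr\withincluster/\alpha)$ via Lipschitz continuity of the softmax, plus (b) the Jackson error $\eps_1$ multiplied by $\|\precest_i(\x-\hatmu_i)\|$, whose fourth moment under $\mc M(U)$ is $\poly(\withincluster/\alpha)$. Outside $\hballt$, Cauchy--Schwarz against the tail probability $\delta$ and Gaussian moment bounds on $\|\precest_i(\x-\hatmu_i)\|^4$ and $\|\vec s(\x;\mc M(U))\|^4$ make the contribution $\sqrt{\delta}\cdot\poly(R,1/\alpha)$. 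Balancing $\eps_1 \sim \eps/\poly(\withincluster/\alpha)$ and $\delta \sim (\alpha\eps/R)^{\Theta(1)}$ yields the claimed degree $\wt O(\beta^2 m^2 \paramerr^5 \withincluster^6 / (\alpha^6\eps))$; the coefficient bound follows from tracking the magnitudes of the Chebyshev-basis coefficients after composition.

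\emph{Main obstacle.} The delicate point is the dimension-free normalization: the naive softmax input $\|\x-\vmu_i\|^2_{\cov_i^{-1}}$ is $\Theta(d)$ in typical samples, so any polynomial approximation working directly on it would inherit a $\poly(d)$ degree. The cancellation by $\iprod{\cov_1,\cov_i^{-1}-\cov_1^{-1}}$ removes exactly the dimension-dependent expectation, but the residual must then be controlled both in mean and in tail simultaneously with the substitution $\cov_1 \mapsto \hatQ_1$, without reintroducing a $d$-dependence through the error $\hatQ_1-\cov_1$. Carefully propagating the $\paramerr$ bounds through the quadratic forms while keeping the Hanson--Wright estimate sharp is the technically hardest ingredient of the argument.
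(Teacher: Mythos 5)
Your high-level plan — center the softmax inputs around $u_1$ and add $\tfrac12\iprod{\cov_1,\cov_i^{-1}-\cov_1^{-1}}$ so the arguments become dimension-free, define $\hballt$ from these normalized quantities with Hanson--Wright controlling the tail, apply Jackson on the box, and split the $L^2$ error inside/outside with Cauchy--Schwarz — is the same backbone the paper uses (cf.\ \Cref{lem:concentration-poly-approx-input}, \Cref{lem:poly-approx-score-bounded-interval}, \Cref{lem:ball-diff-actual-estimate}). However, there is a genuine gap in your Step 3.

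You set $p(\x;\mc M(U)) = -\sum_i \wh w_i(\x)\,\precest_i(\x-\hatmu_i)$ and control the approximation error by the Jackson error $\eps_1$ times $\|\precest_i(\x-\hatmu_i)\|$, asserting that the fourth moment of the latter under $\mc M(U)$ is $\poly(\withincluster/\alpha)$. This is false: for $\x\sim\normal(\vmu_\ell,\cov_\ell)$ one has $\precest_i(\x-\hatmu_i)=\precest_i\cov_\ell^{1/2}\z+\precest_i(\vmu_\ell-\hatmu_i)$, and $\E\|\precest_i\cov_\ell^{1/2}\z\|^2=\|\precest_i\cov_\ell^{1/2}\|_F^2=\Theta(d\,\alpha/\beta^2)$, so the fourth moment is $\Theta(d^2)$. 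With $g_i(\x)=\precest_i(\x-\hatmu_i)$ you would need $\eps_1 \sim \eps/d$ and the Jackson degree would inherit a $\poly(d)$ factor, destroying the claimed $d$-free degree bound.

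The paper avoids exactly this by first pulling out the leading linear term: writing $s(\x;\mc M(U))=s_1+s_2+\cov_1^{-1}(\x-\vmu_1)$ with $s_1(\x)=\sum_i w_i(\x)(\cov_i^{-1}-\cov_1^{-1})(\x-\vmu_i)$ and $s_2(\x)=-\sum_i w_i(\x)\cov_1^{-1}(\vmu_i-\vmu_1)$. The $\cov_1^{-1}(\x-\vmu_1)$ piece is already a polynomial and requires no Jackson approximation; Jackson is applied only to $s_1,s_2$, whose ``$\vec G$'' functions $(\cov_i^{-1}-\cov_1^{-1})(\x-\vmu_i)$ and $\cov_1^{-1}(\vmu_i-\vmu_1)$ are $O(\poly(\withincluster/\alpha))$ with high probability rather than $\Theta(\sqrt d)$, which is what makes the degree dimension-independent (this is \Cref{prop:poly-approx-softmax} with $M$ small). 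Equivalently in your notation: one must enforce $\sum_i \wh w_i=1$ and then rewrite the error as $\sum_i(w_i-\wh w_i)\bigl(\precest_i(\x-\hatmu_i)-\precest_1(\x-\hatmu_1)\bigr)$, so only the \emph{differences} of linear terms (which are $O(\withincluster/\alpha)$, not $O(\sqrt d)$) multiply the Jackson error. As written, your proposal omits this cancellation and the degree bound does not follow.
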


Observe that the score function for the mixture can be written as a product between linear functions (i.e., $\cov_i^{-1}(\x - \vmu_i)$) and the softmax function. We define the softmax function $w: \R^\nc \mapsto [0,1]^\nc$ as follows:
\begin{equation}
\label{eq:softmax-definition}
    w_i(\vec y; \vec \theta) = \frac{ e^{\vec y_i + \vec \theta_i} }{ \sum_{j=1}^\nc e^{\vec y_j + \vec \theta_j} }
\end{equation}
for some fixed parameters $\{ \vec \theta_i \}_{i=1}^\nc$. We start by showing that in this special case, the score can be pointwise approximated by a low-degree polynomial over a bounded domain (\Cref{prop:poly-approx-softmax} below). 

For this, we will need the following classical polynomial approximation result for functions with bounded gradients:

\begin{lemma}[Multivariate Jackson's Approximation, \cite{Newman1964, diakonikolas2010bounded}]\label{lemma:multivariate-jackson} 
    For $F: \R^n \to \R$, define the modulus of continuity
    \begin{equation*}
        \omega(F, \delta) = \sup_{ \substack{\norm{\x}_2, \norm{\vec y}_2 \leq 1 \\ 
        \norm{\x - \vec y} \leq \delta } } | F(\x) - F(\vec y) |.
    \end{equation*}
    For any $\ell \geq 1$, there exists a polynomial $p_\ell$ of degree $\ell$ such that
    \begin{equation*}
        \sup_{\norm{\x}_2 \leq 1} | F(\x) - p_\ell(\x) | \lesssim \omega (F, n/\ell)\,. 
    \end{equation*}
    
\end{lemma}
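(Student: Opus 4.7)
The approach is convolution with a multivariate Jackson-type polynomial kernel on the ball. Specifically, the plan is to construct a non-negative kernel $K_\ell: B^n \times B^n \to \R$ that is a polynomial in its first argument of total degree $\ell$, satisfies $\int_{B^n} K_\ell(\x, \vec y)\, d\vec y = 1$ for all $\x \in B^n$, and has first-moment bound $\int_{B^n} K_\ell(\x, \vec y)\|\x - \vec y\|_2\, d\vec y \lesssim n/\ell$. Given such a kernel, I would set $p_\ell(\x) \triangleq \int_{B^n} K_\ell(\x, \vec y) F(\vec y)\, d\vec y$, which is a polynomial in $\x$ of total degree $\ell$ by construction.

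The error bound then follows from the standard subadditivity of the modulus of continuity, namely $\omega(F, t) \le (1 + \lceil t/\delta\rceil)\,\omega(F,\delta)$ for all $t, \delta > 0$. Using normalization of $K_\ell$ to write $F(\x) - p_\ell(\x) = \int K_\ell(\x, \vec y)\,(F(\x) - F(\vec y))\, d\vec y$, bounding the integrand pointwise by $\omega(F, \|\x - \vec y\|_2)$, and applying subadditivity with $\delta = n/\ell$ gives
\begin{equation*}
|F(\x) - p_\ell(\x)| \leq \omega(F, n/\ell)\cdot\Bigl(1 + \tfrac{\ell}{n}\int K_\ell(\x, \vec y)\|\x-\vec y\|_2\, d\vec y\Bigr) \lesssim \omega(F, n/\ell),
\end{equation*}
which is exactly the claimed estimate.

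For the construction, the univariate case is classical: a Jackson kernel of degree $m$ on $[-1,1]$ can be built from the square of the Fej\'er kernel in Chebyshev coordinates $\theta = \arccos x$, yielding a non-negative polynomial of degree $m$ with unit integral and first-moment bound $O(1/m)$. I would lift this to the multivariate ball either via (i) an intrinsic construction using the reproducing kernel of the space of ball polynomials of total degree $\leq \ell$ (following Newman's approach) and then smoothing to obtain the moment bound, or (ii) extending $F$ from $B^n$ to $[-1,1]^n$ via a 1-Lipschitz radial retraction and then using a symmetrized tensor product of univariate kernels truncated to total degree $\ell$.

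The main obstacle is obtaining the sharp moment bound $n/\ell$ in the kernel construction: a naive tensor product of univariate Jackson kernels with per-coordinate degree $\ell/n$ has total degree $\ell$ but first moment scaling as $n^2/\ell$ (since $n$ coordinate-wise contributions of order $1/(\ell/n) = n/\ell$ add up), not $n/\ell$. Overcoming this requires exploiting the rotational symmetry of the ball — for instance, by symmetrizing the tensor-product kernel over the orthogonal group and arguing that the moment bound improves by a factor of $n$ thanks to cancellation, or by working directly with orthogonal polynomials on the ball. This step is the technical heart of the proof; the rest is a straightforward convolution argument combined with subadditivity of $\omega$.
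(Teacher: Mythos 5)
The paper does not prove this lemma; it is stated as a citation to Newman (1964) and Diakonikolas et al.\ (2010), so there is no in-paper proof to compare against. I will therefore evaluate your sketch on its own terms.

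Your overall strategy --- build a nonnegative polynomial kernel $K_\ell(\x,\vec y)$ of degree $\ell$ in $\x$ with unit mass and first moment $\lesssim n/\ell$, convolve, and finish with subadditivity of the modulus of continuity --- is the standard route to Jackson-type theorems, and the finishing steps (the telescoping via $\omega(F,t)\le(1+\lceil t/\delta\rceil)\omega(F,\delta)$) are correct. You also correctly identify that the tensor product of univariate Jackson kernels with per-coordinate degree $\ell/n$ fails: using the univariate second-moment bound $O((n/\ell)^2)$ per coordinate gives $\mathbb{E}\norm{\x-\vec y}_2 \lesssim \sqrt{n}\cdot n/\ell = n^{3/2}/\ell$, which overshoots the target by a factor of $\sqrt n$. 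So far, so good.

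The gap is that neither of your two proposed fixes is actually shown to work, and one of them cannot work as stated. Symmetrizing a fixed kernel over the orthogonal group does not improve the first moment: writing the symmetrized kernel as $K(\x,\vec y)=\int_{O(n)} K_0(\vec U\x,\vec U\vec y)\,d\vec U$ and changing variables $\vec y'=\vec U\vec y$ shows that
\begin{equation*}
    \int K(\x,\vec y)\norm{\x-\vec y}_2\, d\vec y \;=\; \int_{O(n)}\Bigl(\int K_0(\vec U\x,\vec y')\norm{\vec U\x-\vec y'}_2\, d\vec y'\Bigr)d\vec U\,,
\end{equation*}
which is just an average of the first moments of $K_0$ at rotated centers and hence is bounded by the same worst-case constant. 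There is no cancellation to exploit here; averaging a family of kernels cannot beat the best member of the family. Your other route (orthogonal polynomials / reproducing kernels on the ball, following Newman--Shapiro) is the one that actually appears in the literature, but you leave it at the level of a one-line pointer, and the moment bound $O(n/\ell)$ is precisely the technical heart that is not straightforward. So the proposal correctly scaffolds the convolution argument and the reduction to a moment bound, but the load-bearing claim --- existence of a degree-$\ell$ polynomial kernel on the ball with first moment $O(n/\ell)$ --- remains unproven, and one of the two suggested paths to it is a dead end.
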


To prove an upper bound on the coefficients of the polynomial, we will use the following result.

\begin{lemma}[Coefficients of bounded polynomials, \cite{ben2018classical}]\label{lem:coeff-bounds-polynomials}  
Let $p$ be a polynomial with real coefficients on $d$ variables with degree $\ell$ such that for all $\x \in [0, L]^d, |p(\x)| \leq R$. Then, the sum of the magnitude of all coefficients of $p$ is at most $R (2L (d + \ell))^{3 \ell}$ for any $L \geq 1$.
\end{lemma}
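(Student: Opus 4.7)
I would prove the bound by reducing to a univariate estimate via Lagrange interpolation on a carefully chosen set of nodes in $[0,L]^d$, and then refining the argument using the total-degree structure of $p$. The main ingredient is that the coefficient vector of $p$ is obtained by applying the inverse of a multivariate Vandermonde matrix to the vector of evaluations at the interpolation nodes; the claim then reduces to controlling the operator norm of that inverse.

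\emph{Step 1 — Univariate bound.} First establish the claim when $d=1$: if $q(x)=\sum_{i=0}^\ell c_i x^i$ satisfies $|q(x)|\le R$ on $[0,L]$, then $\sum_i |c_i| \le R\cdot C_1(L,\ell)$ with $C_1(L,\ell) = (O(L\ell))^{\ell}$. The standard route is to substitute $x = L(t+1)/2$ so that $\widetilde q(t):=q(L(t+1)/2)$ is bounded by $R$ on $[-1,1]$, then expand $\widetilde q = \sum_{k=0}^\ell a_k T_k(t)$ in the Chebyshev basis. Orthogonality gives $|a_k|\le 2R$, and the monomial expansion of each $T_k$ has coefficients whose magnitudes sum to at most $(1+\sqrt 2)^k \le 3^\ell$. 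Thus the coefficients of $\widetilde q$ in the monomial basis in $t$ sum in magnitude to at most $R\cdot (\ell+1)\cdot 2\cdot 3^\ell$. Undoing the affine change $t = 2x/L - 1$ amplifies each coefficient by at most $(4L)^\ell$ via a binomial expansion (using $L\ge 1$), yielding the univariate bound.

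\emph{Step 2 — Multivariate lift via Lagrange interpolation.} Since $p$ has total degree $\ell$, it lies in the space $\mathcal V = \mathrm{span}\{\x^\alpha : |\alpha|\le \ell\}$ of dimension $N = \binom{d+\ell}{\ell}\le (d+\ell)^\ell$. Pick any $N$ unisolvent nodes $\x_1,\ldots,\x_N \in [0,L]^d$ — for example the principal grid $\{L\alpha/\ell : \alpha\in \Z_{\ge 0}^d,\ |\alpha|\le \ell\}$ — and form the Vandermonde-type matrix $V \in \R^{N\times N}$ with $V_{i,j}=\x_i^{\alpha_j}$. Letting $\vec c$ denote the coefficient vector of $p$ and $\vec p = (p(\x_1),\ldots,p(\x_N))^\top$, we have $\vec c = V^{-1}\vec p$, hence
\[
\sum_\alpha |c_\alpha| \;\le\; \|V^{-1}\|_{\infty\to 1}\cdot \|\vec p\|_\infty \;\le\; R\cdot \|V^{-1}\|_{\infty\to 1}.
\]
Every entry of $V$ has magnitude at most $L^\ell$, so Hadamard's inequality bounds each $(N-1)\times(N-1)$ cofactor by $(L^\ell\sqrt N)^{N-1}$. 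A matching lower bound $|\det V| \ge (L/\ell)^{O(\ell N)}$ on the principal-grid Vandermonde determinant comes from the explicit product formula (a multivariate generalization of $\prod_{i<j}(x_j-x_i)$). Cramer's rule then gives $\|V^{-1}\|_{\infty\to 1}\le N\cdot(\ell^{O(\ell N)})\cdot L^{O(\ell)}$, and simplifying with $N\le (d+\ell)^\ell$ yields the advertised $R\cdot (2L(d+\ell))^{3\ell}$ bound.

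\emph{Main obstacle.} The crux is the quantitative analysis of Step~2: one must lower-bound the multivariate Vandermonde determinant on the total-degree node set. A naive tensor-product interpolation on the full $(\ell+1)^d$ grid (combined with Step~1) would yield only $R\cdot C_1(L,\ell)^d$, which is exponentially worse in $d$. Getting the exponent $3\ell$ rather than $\Theta(\ell d)$ genuinely requires exploiting that $p$ lives in the lower-dimensional total-degree space, through the explicit determinantal formula for the principal grid (or an equivalent multivariate Chebyshev analysis on the simplex). This combinatorial/determinantal step is what distinguishes the statement from a routine tensor-power consequence of univariate Chebyshev theory, and is the content of the classical interpolation results in \cite{ben2018classical}.
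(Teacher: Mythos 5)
Your Step~1 (univariate Chebyshev expansion plus affine rescaling) is standard and correct, and the reduction of the stated bound to the case $L=1$ is also fine. The problem is Step~2, and it is not merely an unpolished quantitative estimate: as written, the Cramer-plus-Hadamard route cannot give a bound of the form $(d+\ell)^{O(\ell)}$.

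Concretely, you bound each $(N-1)\times(N-1)$ cofactor of the Vandermonde matrix $V$ by Hadamard as $(L^\ell\sqrt{N})^{N-1}$ and then divide by a determinant lower bound $(L/\ell)^{\Theta(\ell N)}$. The resulting estimate for $\|V^{-1}\|_{\infty\to 1}$ therefore has an exponent of order $N\log N$, where $N=\binom{d+\ell}{\ell}$. For $\ell\ge 2$ this is already $\Theta(d^\ell\cdot\ell\log d)$, i.e. exponential in $d^\ell$ — astronomically larger than the target exponent $O(\ell\log(d+\ell))$. Your own displayed intermediate bound $\|V^{-1}\|_{\infty\to 1}\le N\cdot \ell^{O(\ell N)}\cdot L^{O(\ell)}$ already has $N$ in the exponent, and "simplifying with $N\le(d+\ell)^\ell$" does not repair this: $\ell^{O(\ell N)}$ is not $\le (d+\ell)^{O(\ell)}$. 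So the argument does not yield the claimed $(2L(d+\ell))^{3\ell}$; the gap is a missing idea, not a missing calculation.

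The correct route avoids working with the full $N\times N$ Vandermonde system altogether. The structural fact to exploit is that every monomial $\x^\alpha$ of total degree $\le\ell$ has $|\mathrm{supp}(\alpha)|\le\ell$. Fix $\alpha$ and set to zero all coordinates outside $S=\mathrm{supp}(\alpha)$; the restricted polynomial $p|_S$ lives in at most $\ell$ variables, still has total degree $\le\ell$, is still bounded by $R$ on $[0,1]^{|S|}$, and has $c_\alpha$ as one of its coefficients. One can then bound the coefficient $\ell_1$-norm of $p|_S$ using the tensor Chebyshev basis in a fixed number $m\le\ell$ of variables: there are at most $\binom{m+\ell}{\ell}\le(2\ell)^\ell$ Chebyshev coefficients, each bounded by $O(2^m R)\le O(2^\ell R)$ by orthogonality, and each product $\prod_i T_{\alpha_i}$ contributes at most $3^\ell$ in monomial $\ell_1$-norm. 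This gives an $\ell$-variable bound of the form $R\cdot\ell^{O(\ell)}$ with no $d$-dependence. Summing over the $\sum_{k\le\ell}\binom{d}{k}\le (d+1)^\ell$ choices of support $S$ (this overcounting is harmless for an upper bound) yields $\sum_\alpha|c_\alpha|\le R\cdot(d+1)^\ell\cdot\ell^{O(\ell)}\le R(2(d+\ell))^{3\ell}$, and the factor $L^{3\ell}$ is recovered by the initial rescaling (using $L\ge 1$). Alternatively, one can stay with the principal-lattice interpolation but must use the explicit Chung–Yao Lagrange basis functions and bound their coefficient norms directly — the abstract Cramer/Hadamard machinery loses a factor that is exponential in $N$ and is not salvageable.
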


\noindent We now show the polynomial approximation result for the softmax function and, as a consequence, for the product of a linear function with the softmax function:

\begin{lemma}
[Polynomial Approximation]
\label{prop:poly-approx-softmax}
Let $\mathcal X$ be a subset of $\R^d$ and $w_i(\vec y; \vec \theta)$ be the softmax function defined in \eqref{eq:softmax-definition}. Let $\vec G(\x) = (\vec g_1(\x), \ldots, \vec g_\nc(\x)) :\R^d \mapsto \R^{d \times \nc}$ be 
such that $\|\vec g_i(\x)\|_2 \leq M$ for all $\x \in \mathcal X$ and 
$\vec g_i(\x)$ is linear in $\x$.
Let $\vec r:\R^d\mapsto \R^\nc$ with $\vec r = (\vec r_1(\x),\ldots, \vec r_\nc(\x))$
be such that $|\vec r_i(\x)| \leq L$ for all $\x \in \mathcal X$. 
There exists a polynomial
transformation $\vec q:\R^\nc \mapsto \R^\nc$ of degree at most 
$O(L M \nc^2/\eps)$ 
such that for all $\x \in \mathcal X$ it holds that 
$\| \vec G(\x) w(r(\x); \vec \theta) - \vec G(\x) \vec q(r(\x)) \|_2 \leq \eps $.
The sum of the magnitudes of the coefficients of $\vec q$ is at most $\nc \exp( \wt O ( L M \nc^2/\eps ) )$.
\end{lemma}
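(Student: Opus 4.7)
The plan is to reduce the vector-valued statement to a coordinatewise polynomial approximation of the softmax on the bounded cube $[-L,L]^{\nc}$, and then apply multivariate Jackson. Since $\|\vec g_i(\x)\|_2\le M$ for every $i$, the triangle inequality yields
$\|\vec G(\x)\,w(r(\x);\vec\theta)-\vec G(\x)\,\vec q(r(\x))\|_2 \le M\sum_i |w_i(r(\x);\vec\theta)-q_i(r(\x))| \le M\nc\max_i\|w_i-q_i\|_\infty$,
so it suffices to produce, for each $i\in[\nc]$, a polynomial $q_i$ of low degree that approximates the softmax coordinate $w_i(\cdot\,;\vec\theta)$ to uniform error $\eps/(M\nc)$ on the cube $\{\vec y\in\R^{\nc}:\|\vec y\|_\infty\le L\}$, which contains $r(\mathcal X)$.

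To apply \Cref{lemma:multivariate-jackson}, I would first rescale $[-L,L]^{\nc}$ into the unit Euclidean ball via the substitution $\vec y=L\sqrt{\nc}\,\vec z$. The key smoothness input is that the softmax is globally $1$-Lipschitz in $\ell_2$: its Jacobian $Dw=\mathrm{diag}(w)-ww^\top$ satisfies $v^\top Dw\,v = \mathrm{Var}_w(v)\in[0,1]$ for every unit $v$, so each gradient $\nabla w_i$ has norm at most $1$. Hence the rescaled coordinate $\wt w_i(\vec z):=w_i(L\sqrt{\nc}\,\vec z;\vec\theta)$ is $L\sqrt{\nc}$-Lipschitz, its modulus of continuity is bounded by $\omega(\wt w_i,\delta)\le L\sqrt{\nc}\,\delta$, and Jackson delivers a polynomial $\wt q_i$ of degree $\ell$ with $\|\wt q_i-\wt w_i\|_\infty \lesssim L\nc^{3/2}/\ell$. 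Taking $\ell=\wt O(LM\nc^{5/2}/\eps)$ and undoing the rescaling gives a polynomial $q_i(\vec y)$ of the same degree achieving error $\eps/(M\nc)$; this matches the claimed $O(LM\nc^{2}/\eps)$ up to a single $\sqrt{\nc}$ factor that can be absorbed by a tighter coordinate-wise Jackson argument on the cube $[-1,1]^{\nc}$. For the coefficient bound, \Cref{lem:coeff-bounds-polynomials} applied to the shifted polynomial $\wt q_i(\vec z-\vec 1)$ on $[0,2]^{\nc}$ (where the bound is $\le 2$) gives coefficient magnitudes at most $(O(\nc+\ell))^{O(\ell)}=\exp(\wt O(\ell))$ in the $\vec z$-variables; reverting $\vec z=\vec y/(L\sqrt{\nc})$ only divides each coefficient by $(L\sqrt{\nc})^{|\alpha|}\ge 1$, so after summing over the $\nc$ output coordinates we obtain the claimed $\nc\exp(\wt O(LM\nc^{2}/\eps))$.

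The main obstacle is obtaining a degree polynomial, rather than exponential, in $L$: approximating the numerator $e^{y_i}$ and the denominator $\sum_j e^{y_j}$ separately would require degree $\Omega(e^L)$, since individually these functions grow exponentially on $[-L,L]$. The proof crucially uses the normalized form of the softmax together with the $1$-Lipschitzness of its Jacobian to keep the modulus of continuity linear in $L$. A secondary subtlety is the mismatch between the $\ell_\infty$-cube $[-L,L]^{\nc}$ and the $\ell_2$-ball required by the statement of Jackson's theorem; this forces the $\sqrt{\nc}$-rescaling that is responsible for the extra factors of $\nc$ in both the degree and coefficient bounds.
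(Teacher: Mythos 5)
Your proof is essentially the same as the paper's: compute the Lipschitz constant of the softmax, rescale to the unit ball, invoke multivariate Jackson coordinatewise, and then bound $\|\vec G(\x)(w-\vec q)\|_2$ by the per-coordinate sup error (you use the $\ell_1$ bound $M\sum_i|w_i-q_i|$; the paper uses $\|\vec G\|_{\mathrm{op}}\le M\sqrt{\nc}$ together with $\|w-\vec q\|_2\le\sqrt{\nc}\max_i|w_i-q_i|$, which is the same $M\nc$ factor), and then get coefficients from \Cref{lem:coeff-bounds-polynomials}. Your Lipschitz estimate via $Dw=\mathrm{diag}(w)-ww^\top\preceq \Id$ is actually tighter than the paper's stated bound $\|\nabla w_i\|\le\sqrt{\nc}$, though it does not change the asymptotics.

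The $\sqrt{\nc}$ discrepancy you flag is real but is not a defect of your argument alone: tracing the paper's own steps (radius $L\sqrt{\nc}$ for the cube $[-L,L]^{\nc}$, Jackson on the $\ell_2$-ball, per-coordinate error $L\nc^{3/2}/\ell$, then the $M\nc$ loss) gives $\ell=\Theta(LM\nc^{5/2}/\eps)$ as well; the stated $O(LM\nc^2/\eps)$ would follow from a cube-adapted Jackson bound (error $\lesssim L\nc/\ell$ using the per-coordinate Lipschitz constant $1$), which is what you gesture at. Since the degree bound is only ever used inside a $\poly(\cdot)$ downstream, this is immaterial, and your proof should be considered a faithful and, if anything, slightly more careful version of the paper's. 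One small caveat: your coefficient-transfer step needs $L\sqrt{\nc}\ge 1$ so that dividing by $(L\sqrt{\nc})^{|\alpha|}$ does not inflate coefficients; the paper's \Cref{lem:coeff-bounds-polynomials} has the same implicit $L\ge 1$ requirement, so this is not a new gap, but it should be stated.
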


\begin{proof}
    The gradient of the softmax function is given by
    \begin{equation*}
        \frac{\partial w_i ( \vec y; \vec \theta )}{ \partial \vec y_j } = \begin{cases}
            w_i( \vec y; \vec \theta) ( 1 - w_i(\vec y; \vec \theta) ) & \text{if } \;\; i = j  \\
            - w_i( \vec y; \vec \theta) w_j(\vec y; \vec \theta) & \text{otherwise}.
        \end{cases}
    \end{equation*}
    We conclude that $\| \nabla w_i( \vec y; \vec \theta) \| \leq \sqrt{\nc}$ for all $i \in [\nc]$ 
    and any $\vec y \in \R^\nc$. Using multivariate Jackson's theorem (\Cref{lemma:multivariate-jackson}) for $w_i( \vec y; \vec \theta )$, we obtain that there exists a polynomial $q(\vec y)$ of degree $\ell$ such that 
    \begin{equation*}
        \sup_{\| \vec y \| \leq L\nc} | w_i(\vec y; \vec \theta) - q (\vec y)  | \lesssim \frac{L\nc^{\frac{3}{2}}}{\ell} \,. 
    \end{equation*}
    This implies that we have a set of polynomials $\{ q_i(\vec y) \}_{i=1}^\nc$ of degree $O(\frac{L \nc^{3/2}}{\eps})$ 
    such that for all $\vec y$ in $L_2-$ball of radius $\| \vec y \| \leq L \nc$, we have $\| w(\vec y; \vec \theta) - \vec q( \vec y ) \| \leq \eps.$ 
    Additionally, $\| \vec g_i( \x ) \|_2 \leq M$ implies that $\| \vec G(\x) \| \leq M \sqrt{\nc}$. 
    Therefore, we have
    \begin{equation*}
        \| \vec G(\x) \vec w( r(\x) ) - \vec G(\x) \vec q( r(\x) ) \|_2 \leq \| \vec G(\x) \| \| \vec w( r(\x) ) - \vec q( r(\x) ) \|_2 \leq M \sqrt{\nc} \eps.
    \end{equation*}
    We obtain the result by rescaling $\eps$. To obtain the bounds on the sum of the magnitude of coefficients, we use the fact that $| \vec q_i(\vec y) | \leq 2$ for all $\| \vec y \| \leq L \nc$. Therefore, using \Cref{lem:coeff-bounds-polynomials}, we obtain that the bounds on the sum of the magnitude of coefficients is at most $O((2 L \nc (\nc + \frac{L M \nc^2}{\eps})^{ \frac{L M \nc^2}{ \eps } })) = \exp( \wt O ( L M \nc^2/\eps ) )$. 
\end{proof}

\begin{lemma}
\label{lem:concentration-poly-approx-input}
    Let $\normal (\vmu_1, \cov_1)$ be a Gaussian distribution with $\alpha \Id \preceq \cov_1 \preceq \beta \Id$. Let $(\hatmu_2, \hatQ_2, \precest_2)$ and $(\hatmu_3, \hatQ_3, \precest_3)$ be any triplets of the same shape as $(\vmu_1, \cov_1, \cov_1^{-1})$ with condition that $\| \precest_2 \|_\op, \| \precest_3 \|_\op \lesssim \frac{1}{\alpha}$. Then, with probability at least $1 - \delta$ over $\x \sim \normal( \vmu_1, \cov_1 )$, we have
    \begin{align*}
        \big| \|\x - \hatmu_2 \|_{\precest_2}^2 &- \| \x - \hatmu_3 \|_{\precest_3}^2 - \iprod{ \cov_1,  (\precest_2 - \precest_3) } \big| \lesssim \beta \| \precest_2 - \precest_3 \|_F \log \frac{1}{\delta} \\ & + \frac{1}{\alpha} \big( \| \vmu_1 - \hatmu_2 \|^2 + \| \vmu_1 - \hatmu_3 \|^2 \big) + \sqrt{\beta} \log \frac{1}{\delta} ( \| \precest_2 - \precest_3  \|_\op \| \vmu_1 - \hatmu_2 \| + \frac{1}{\alpha} \| \hatmu_3 - \hatmu_2 \| )
    \end{align*}
    
\end{lemma}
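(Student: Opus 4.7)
The strategy is to expand $\x = \vmu_1 + \cov_1^{1/2}\z$ with $\z \sim \calN(\vec 0, \Id)$, decompose the quantity of interest into a centered quadratic form in $\z$, a Gaussian linear form, and a purely deterministic term, and then control each piece separately.

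After the substitution, direct expansion gives
\begin{align*}
\|\x-\hatmu_2\|_{\precest_2}^2 - \|\x-\hatmu_3\|_{\precest_3}^2 - \iprod{\cov_1,\precest_2-\precest_3}
&= \bigl(\z^\top \bA \z - \tr(\bA)\bigr) + 2\iprod{\vec v, \cov_1^{1/2}\z} + c,
\end{align*}
where $\bA \triangleq \cov_1^{1/2}(\precest_2-\precest_3)\cov_1^{1/2}$, $\vec v \triangleq \precest_2(\vmu_1-\hatmu_2) - \precest_3(\vmu_1-\hatmu_3)$, and $c \triangleq (\vmu_1-\hatmu_2)^\top\precest_2(\vmu_1-\hatmu_2) - (\vmu_1-\hatmu_3)^\top\precest_3(\vmu_1-\hatmu_3)$, using the identity $\E[\z^\top \bA \z] = \tr(\bA) = \iprod{\cov_1,\precest_2-\precest_3}$.

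For the centered quadratic form, I would invoke Hanson-Wright (\Cref{prop:HW}) with the bound $\|\bA\|_F \le \|\cov_1\|_{\rm op}\,\|\precest_2-\precest_3\|_F \le \beta\|\precest_2-\precest_3\|_F$. Taking $s = \log(1/\delta)$ and using $\|\bA\|_F \ge \|\bA\|_{\rm op}$ (so the ratio $r \ge 1$), the $s^2$ branch of the Hanson-Wright tail controls this term by $\beta\|\precest_2-\precest_3\|_F \log(1/\delta)$ with probability $\ge 1-\delta/3$. For the Gaussian linear term, note $\iprod{\vec v, \cov_1^{1/2}\z}\sim \calN(0,\vec v^\top \cov_1 \vec v)$ with variance at most $\beta\|\vec v\|^2$; rewriting $\vec v = (\precest_2-\precest_3)(\vmu_1-\hatmu_2) + \precest_3(\hatmu_3-\hatmu_2)$ and using $\|\precest_3\|_{\rm op}\lesssim 1/\alpha$ yields $\|\vec v\|\lesssim \|\precest_2-\precest_3\|_{\rm op}\|\vmu_1-\hatmu_2\| + \tfrac{1}{\alpha}\|\hatmu_3-\hatmu_2\|$, so Gaussian tails give the matching linear-term bound with probability $\ge 1-\delta/3$. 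The deterministic term $c$ is bounded in magnitude by $\tfrac{1}{\alpha}(\|\vmu_1-\hatmu_2\|^2 + \|\vmu_1-\hatmu_3\|^2)$ directly from $\|\precest_2\|_{\rm op},\|\precest_3\|_{\rm op}\lesssim 1/\alpha$. A union bound over the two probabilistic events completes the proof.

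There is no serious obstacle here; the only minor care is the algebraic rewriting $\vec v = (\precest_2-\precest_3)(\vmu_1-\hatmu_2) + \precest_3(\hatmu_3-\hatmu_2)$, which is what produces the asymmetric form $\|\precest_2-\precest_3\|_{\rm op}\|\vmu_1-\hatmu_2\| + \tfrac{1}{\alpha}\|\hatmu_3-\hatmu_2\|$ appearing in the statement (rather than a symmetric bound in both $\|\vmu_1-\hatmu_i\|$'s). Verifying that the $s^2$ rather than the $s\sqrt{r}$ branch dominates in Hanson-Wright at the chosen $s$ is the only other place to be careful.
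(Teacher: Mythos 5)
Your proposal is correct and takes essentially the same route as the paper: expand $\x = \vmu_1 + \cov_1^{1/2}\z$, split into a centered quadratic form controlled by Hanson--Wright, a Gaussian linear term controlled after the rewriting $\vec v = (\precest_2-\precest_3)(\vmu_1-\hatmu_2) + \precest_3(\hatmu_3-\hatmu_2)$, and a deterministic remainder bounded by the operator-norm assumptions. One small terminological slip: with $r \ge 1$ and $s = \log(1/\delta) \ge 1$, the binding branch in Hanson--Wright is $s\sqrt{r}$ (worst case $r=1$), not $s^2$ — but the choice $s \asymp \log(1/\delta)$ yields tail probability $\delta^{\Omega(1)}$ either way, so the conclusion is unaffected.
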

\begin{proof}
    For $\x \sim \normal(\vec \mu_1, \cov_1)$, we rewrite $\|\x - \hatmu_2 \|_{\precest_2}^2 - \| \x - \hatmu_3 \|_{\precest_3}^2$ by writing $\x = \cov_1^{1/2}\z + \vmu_1$ for $\vec z \sim \normal(0, \Id)$, obtaining:
    \begin{equation}
    \label{eq:sq-diff-expansion}
    \begin{aligned}
        \|\x - \hatmu_2 \|_{\precest_2}^2 - \| \x - \hatmu_3 \|_{\precest_3}^2 = & \; \| \cov_1^{1/2} \vec z \|_{\precest_2}^2 - \| \cov_1^{1/2} \vec z \|_{\precest_3}^2  + \| \vmu_1 - \hatmu_2 \|_{\precest_2}^2 - \| \vmu_1 - \hatmu_3 \|^2_{\precest_3} \\
        & +2 (\cov_1^{1/2} \vec z)^\top \precest_2 (\vmu_1 - \vec \hatmu_2) - 2 (\cov_1^{1/2} \vec z)^\top \precest_3 (\vmu_1 - \hatmu_3)
    \end{aligned}
    \end{equation}
    We would like to bound the first two terms in the above equation using Hanson-Wright (\Cref{fact:HW}). Using $\| \cov_1  \| \leq \beta$, we have $\| \cov_1^{1/2} (\precest_2 - \precest_3) \cov_1^{1/2} \| \leq \beta \| \precest_2 - \precest_3 \|_F$. Using Hanson-Wright on the quadratic form $\vec z^\top \cov_1^{1/2}(\precest_2 - \precest_3) \cov_1^{1/2} \vec z$, we have for any $\delta > 0$ that
    \begin{equation*}
        \Pr_{\vec z \sim \normal (0, \Id)}\Bigl[ |\| \cov_1^{1/2} \vec z \|_{\precest_2}^2 - \| \cov_1^{1/2} \vec z \|_{\precest_3}^2 - \iprod{ \cov_1,  (\precest_2 - \precest_3) }  | \gtrsim \beta \| \precest_2 - \precest_3 \|_F \log \frac{1}{\delta} \Bigr] \leq \delta.
    \end{equation*}
    We simplify the sum of the last two terms in \eqref{eq:sq-diff-expansion} to obtain
    \begin{equation}
    \label{eq:sq-diff-linear-term}
    \begin{aligned}
        (\cov_1^{1/2} \vec z)^\top \precest_2 (\vmu_1 - \vec \hatmu_2) & - (\cov_1^{1/2} \vec z)^\top \precest_3 (\vmu_1 - \hatmu_3) = (\cov_1^{1/2} \vec z)^\top ( \precest_2 - \precest_3 ) (\vec \mu_1 - \hatmu_2) + (\cov_1^{1/2} \vec z)^\top  \precest_3 (\hatmu_3 - \hatmu_2).
    \end{aligned}
    \end{equation}
    Using the bounds $\| \cov_1 \|_{\rm op} \leq \beta$ and $\| \precest_3 \|_{\rm op} \lesssim 1/\alpha$, we can upper bound $\| \cov_1^{1/2} ( \precest_2 - \precest_3 ) (\vmu_1 - \hatmu_2) \| \lesssim \sqrt{\beta} \| \precest_2 - \precest_3  \|_\op \| \vmu_1 - \hatmu_2 \|$ and $\| \cov_1^{1/2} \precest_3 (\hatmu_3 - \hatmu_2) \| \lesssim \sqrt{\beta} \| \hatmu_3 - \hatmu_2 \| / \alpha$. So with probability at least $1 - \delta$, we have 
    \begin{equation*}
    \| (\cov_1^{1/2} \vec z)^\top \precest_2 (\vmu_1 - \vec \hatmu_2) - (\cov_1^{1/2} \vec z)^\top \precest_3 (\vmu_1 - \hatmu_3) \| \leq \sqrt{\beta} \log \frac{1}{\delta} ( \| \precest_2 - \precest_3  \|_\op \| \vmu_1 - \hatmu_2 \| + \frac{ \| \hatmu_3 - \hatmu_2 \|  }{ \alpha }). 
    \end{equation*}
    Putting everything together in \eqref{eq:sq-diff-expansion} and assuming $\alpha \leq 1$ and $\beta \geq 1$ to simplify, we obtain the result. 
\end{proof}

\begin{lemma}
\label{lem:poly-approx-score-bounded-interval}
    Let $\mc M(U)$ be a mixture of $\nc$ Gaussians with well-conditioned covariances $\alpha \Id \preceq \cov_i \preceq \beta \Id$ for all $i \in [m]$. Let $\withincluster$ be an upper bound on the parameter distance between any two components, i.e., $ \| \vec \mu_i - \vec \mu_j \| + \| \vec \cov_i - \vec \cov_j \|_F \leq \withincluster$ for all $i, j \in [m]$. Then, for $\x \sim \mc M(U)$ and for any $j \in [m]$, with probability at least $1 - \delta$, we have
    \begin{align*}
        &| \|\x - \vec \mu_j \|_{\cov_j^{-1}}^2 - \| \x - \vec \mu_1 \|_{\cov_1^{-1}}^2 - \iprod{ \cov_1,  (\cov_j^{-1} - \cov_1^{-1}) } | \lesssim \zeta_1 \hspace{3mm} \text{where} \hspace{3mm} \zeta_1 \triangleq \frac{ \beta \withincluster^2 }{ \alpha^2 } \log \frac{\nc}{\delta} \\ 
        & \text{and} \;\; \| (\cov_i^{-1} - \cov_1^{-1}) (\x - \vec \mu_i) \| \lesssim \zeta_2  \hspace{3mm} \text{where} \hspace{3mm} \zeta_2 \triangleq \frac{\sqrt{\beta} \withincluster^2}{\alpha^2} \log \frac{\nc}{\delta} \; .
    \end{align*}
    Combining it with \Cref{prop:poly-approx-softmax}, we obtain that there exists a polynomial $p(\x; \mathcal M(U))$ of degree $O(\frac{ \zeta_1 \zeta_2 \nc^2 }{ \eps  } )$ 
    and coefficients bounded in magnitude by $dR\exp(\wt{O}(\frac{\zeta_1 \zeta_2 \nc^2}{\eps} ))$ 
    such that
    \begin{equation*}
        \Pr_{\x \sim \mathcal M(U)} \big[ \| s(\x; \mathcal M(U)) - p(\x; \mathcal M(U)) \| \leq \eps \big] \geq 1 - \delta \; .
    \end{equation*}
\end{lemma}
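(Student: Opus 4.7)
\medskip

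\noindent\textbf{Proof proposal.} The plan has three phases: (i) establish the two pointwise concentration inequalities on $\x \sim \mathcal{M}(U)$; (ii) decompose the score into an affine piece plus two softmax-weighted pieces with \emph{bounded} input and output norms on the concentration event; and (iii) invoke \Cref{prop:poly-approx-softmax} to approximate each softmax-weighted piece by a low-degree polynomial.

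For phase (i), I would condition on $\x \sim \normal_\ell$ for each $\ell \in U$ and apply \Cref{lem:concentration-poly-approx-input} with the substitution $(\vmu_1,\cov_1) \gets (\vmu_\ell,\cov_\ell)$, $(\hatmu_2,\precest_2)\gets(\vmu_j,\cov_j^{-1})$, $(\hatmu_3,\precest_3)\gets(\vmu_1,\cov_1^{-1})$. Using $\|\cov_j^{-1}-\cov_1^{-1}\|_F \le \withincluster/\alpha^2$ (from $\|\cov_j^{-1}-\cov_1^{-1}\| = \|\cov_1^{-1}(\cov_1-\cov_j)\cov_j^{-1}\|$) and $\|\vmu_\ell-\vmu_{\,\cdot}\|\le \withincluster$, each of the four error terms in \Cref{lem:concentration-poly-approx-input} is at most $O(\beta\withincluster^2/\alpha^2\log(1/\delta))$, and the discrepancy $\iprod{\cov_\ell-\cov_1,\cov_j^{-1}-\cov_1^{-1}} \le \withincluster^2/\alpha^2$ between the trace term produced by the lemma and the one in the target statement is also absorbed. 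A union bound over $\nc^2$ pairs $(j,\ell)$ yields the first bound with failure probability $\delta$ and the stated $\zeta_1$. For the second bound, write $\x - \vmu_i = (\vmu_\ell - \vmu_i) + \cov_\ell^{1/2}\vec z$; the deterministic part contributes at most $\withincluster^2/\alpha^2$, and the Gaussian part $(\cov_i^{-1}-\cov_1^{-1})\cov_\ell^{1/2}\vec z$ is controlled by the standard Gaussian-norm tail bound using $\|(\cov_i^{-1}-\cov_1^{-1})\cov_\ell^{1/2}\|_F \le \sqrt\beta\withincluster/\alpha^2$.

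For phase (ii), I would exploit the identity
\[
\cov_i^{-1}(\x - \vmu_i) = \cov_1^{-1}(\x - \vmu_1) + (\cov_i^{-1} - \cov_1^{-1})(\x - \vmu_i) + \cov_1^{-1}(\vmu_1 - \vmu_i)
\]
and $\sum_i w_i(\x) = 1$ to rewrite
\[
s(\x;\mathcal{M}(U)) = -\cov_1^{-1}(\x-\vmu_1) - \sum_{i\in U} w_i(\x)(\cov_i^{-1}-\cov_1^{-1})(\x-\vmu_i) - \cov_1^{-1}\sum_{i\in U} w_i(\x)(\vmu_1 - \vmu_i).
\]
Softmax invariance under constant shifts lets me write $w_i(\x) = \mathrm{softmax}_i(\vec y(\x) + \vec\theta)$ where $\vec y_j(\x) = -\tfrac12\bigl(\|\x-\vmu_j\|_{\cov_j^{-1}}^2 - \|\x-\vmu_1\|_{\cov_1^{-1}}^2 - \iprod{\cov_1,\cov_j^{-1}-\cov_1^{-1}}\bigr)$ is a quadratic polynomial in $\x$, and $\vec\theta_j$ absorbs the $j$-dependent constants (mixing weights, log-determinants, and $\tfrac12\iprod{\cov_1,\cov_j^{-1}-\cov_1^{-1}}$). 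On the event from phase (i), $\|\vec y(\x)\|_\infty \le \zeta_1$.

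For phase (iii), the first summand is already a linear polynomial. For the second summand I invoke \Cref{prop:poly-approx-softmax} with $\vec g_i(\x) = (\cov_i^{-1}-\cov_1^{-1})(\x-\vmu_i)$ (linear in $\x$, bounded by $M=\zeta_2$ on the good event), yielding a polynomial $\vec q$ of degree $O(\zeta_1 \zeta_2 \nc^2/\eps)$ in $\vec y$ so that $\|\vec G(\x)w(\vec y+\vec\theta) - \vec G(\x)\vec q(\vec y)\| \le \eps/3$; composing with the quadratic $\vec y(\x)$ and noting $\vec G$ is linear gives an $\x$-polynomial of degree $O(\zeta_1\zeta_2\nc^2/\eps)$. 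The third summand is handled identically with $\vec g_i = \cov_1^{-1}(\vmu_1-\vmu_i)$ (constant, $M=\withincluster/\alpha \le \zeta_2$). Summing gives the claimed polynomial with total error $\le \eps$ on an event of probability $\ge 1-\delta$. The coefficient bound follows from the $\nc\exp(\wt O(\zeta_1\zeta_2\nc^2/\eps))$ bound in \Cref{prop:poly-approx-softmax} together with \Cref{lem:coeff-bounds-polynomials} applied after composing with $\vec y(\x)$, whose coefficients are $\poly(R,1/\alpha)$-bounded, inflating coefficients by at most $\poly(dR/\alpha)^{\mathrm{deg}}$ — this is absorbed into the stated $dR\exp(\wt O(\zeta_1\zeta_2\nc^2/\eps))$.

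The main obstacle is the decomposition step: the naive approach of directly approximating $s$ by a softmax polynomial requires bounding $\|\cov_i^{-1}(\x-\vmu_i)\|$ uniformly on the support of $\mathcal{M}(U)$, which inevitably introduces $\sqrt d$-dependence through the Gaussian norm. Extracting the dimension-heavy linear piece $-\cov_1^{-1}(\x-\vmu_1)$ first is essential, because the remaining softmax-weighted residuals involve the \emph{differences} $\cov_i^{-1}-\cov_1^{-1}$ and $\vmu_i-\vmu_1$, whose norms scale only with the intra-cluster parameter distance $\withincluster$ and thus give $d$-free bounds $\zeta_1,\zeta_2$.
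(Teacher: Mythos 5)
Your proposal is correct and matches the paper's proof essentially step for step: the decomposition $s = -\cov_1^{-1}(\x-\vmu_1) + s_1 + s_2$ with $s_1$ carrying the $(\cov_i^{-1}-\cov_1^{-1})(\x-\vmu_i)$ terms and $s_2$ the $\cov_1^{-1}(\vmu_1-\vmu_i)$ terms, the normalization of the softmax input via the quadratic $r_j(\x)=-\tfrac12(\|\x-\vmu_j\|^2_{\cov_j^{-1}}-\|\x-\vmu_1\|^2_{\cov_1^{-1}}-\iprod{\cov_1,\cov_j^{-1}-\cov_1^{-1}})$, the concentration argument via \Cref{lem:concentration-poly-approx-input} (conditioning on each $\normal_\ell$, absorbing the $\iprod{\cov_\ell-\cov_1,\cov_j^{-1}-\cov_1^{-1}}$ discrepancy), and the two invocations of \Cref{prop:poly-approx-softmax} with $M=\zeta_2$ and $M=\withincluster/\alpha$ respectively. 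The only cosmetic difference is that you treat the coefficient bound via an explicit composition-with-$\vec y(\x)$ step while the paper folds the $\poly(dR/\alpha)$ inflation directly into its bookkeeping, but the resulting bound is the same.
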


\begin{proof}
    Recall that the score function for the mixture is
    \begin{equation*}
        s(\x; \mathcal M(U)) = \sum_{i \in U} w_i(\x) \cov_i^{-1} (\x - \vec \mu_i) \hspace{5mm} \text{where} \hspace{5mm} w_i(\x) = \frac{ \lambda_i \det(\cov_i)^{-1/2} \; e^{-\frac{1}{2} \|\x - \vec \mu_i \|_{\cov_i^{-1}}^2 } }{ \sum_{j \in U} \lambda_i \det(\cov_j)^{-1/2} \; e^{- \frac{1}{2}\|\x - \vec \mu_j \|_{\cov_j^{-1}}^2 } }.
    \end{equation*}
    We can rewrite the score function as $s(\x; \mathcal M(U)) = s_1(\x; \mathcal M(U)) + s_2(\x; \mathcal M(U)) + \cov_1^{-1} (\x - \vec \mu_1)$ where $s_1(\x; \mathcal M(U))$ and $s_2(\x; \mathcal M(U))$ are defined as
    \begin{align*}
        & s_1(\x; \mathcal M(U)) = \sum_{i \in U} w_i(\x) (\cov_i^{-1} - \cov_1^{-1}) (\x - \vec \mu_i) \text{ and } s_2(\x; \mathcal M(U)) =- \sum_{i \in U} w_i(\x) \cov_1^{-1} (\vec \mu_i - \vec \mu_1) \\
         & \text{ and } \hspace{3mm} w_i(\x) = \frac{ e^{- \frac{1}{2}( \|\x - \vec \mu_i \|_{\cov_i^{-1}}^2 - \| \x - \vec \mu_1 \|_{\cov_1^{-1}}^2 + \log ( \frac{\det(\cov_i)}{ \det(\cov_1) } ) ) + \log \frac{\lambda_i}{\lambda_1} } }{ 1 + \sum_{j=2}^\nc \; e^{- \frac{1}{2}( \|\x - \vec \mu_j \|_{\cov_j^{-1}}^2 - \| \x - \vec \mu_1 \|_{\cov_1^{-1}}^2 + \log ( \frac{\det(\cov_j)}{ \det(\cov_1) } ) ) + \log \frac{\lambda_j}{\lambda_1} } }\,.
    \end{align*}
    We show the polynomial approximation result for $s_1(\x; \mathcal M(U))$ and $s_2(\x; \mathcal M(U))$ using \Cref{prop:poly-approx-softmax}. To prove an upper bound on $\| \vec g_i(\x) \|$ in \Cref{prop:poly-approx-softmax}, we apply \Cref{lem:concentration-poly-approx-input} for all $j, \ell \in [\nc]$ and have that with probability at least $1-\delta$ over $\x \sim \normal(\vmu_\ell, \cov_\ell)$ (and hence over $\x \sim \mathcal M(U)$), we have
    \begin{align*}
        \big| \|\x - \vmu_j \|_{\cov_j^{-1}}^2 &- \| \x - \vmu_1 \|_{\cov_1^{-1}}^2 - \iprod{ \cov_\ell,  \cov_j^{-1} - \cov_1^{-1} } \big| \lesssim \beta \| \cov_j^{-1} - \cov_1^{-1} \|_F \log \frac{\nc}{\delta} \\ 
        & + \frac{1}{\alpha}  (\| \vmu_\ell - \vmu_j \|^2 + \| \vmu_\ell - \vmu_1 \|^2) + \sqrt{\beta} \log \frac{\nc}{\delta} ( \| \cov_j^{-1} - \cov_1^{-1}  \|_\op \| \vmu_\ell - \vmu_j \| + \frac{1}{\alpha} \| \vmu_j - \vmu_1 \| ).
    \end{align*}
    Using $\| \cov_i^{-1} \|_{\rm op} \leq 1/\alpha$ for all $i \in [k]$, we have $\| \cov_j^{-1} - \cov_1^{-1} \|_F = \| \cov_j^{-1} ( \cov_j - \cov_1 ) \cov_1^{-1} \|_F \leq \withincluster/\alpha^2$, we have
    \begin{align*}
        \beta \| \cov_j^{-1} - \cov_1^{-1} \|_F & \log \frac{\nc}{\delta} + \frac{1}{\alpha}  (\| \vmu_\ell - \vmu_j \|^2 + \| \vmu_\ell - \vmu_1 \|^2) \\ 
        &+ \sqrt{\beta} \log \frac{\nc}{\delta} ( \| \cov_j^{-1} - \cov_1^{-1}  \|_\op \| \vmu_\ell - \vmu_j \| + \frac{1}{\alpha} \| \vmu_j - \vmu_1 \| ) \leq \frac{ \beta \withincluster^2 }{ \alpha^2 } \log \frac{\nc}{\delta}
    \end{align*}
    We add and subtract $\iprod{ \cov_1,  (\cov_j^{-1} - \cov_1^{-1}) }$ on the left side and rearranging the terms and 
    \begin{align*}
        | \|\x - \vec \mu_j \|_{\cov_j^{-1}}^2 - \| \x - \vec \mu_1 \|_{\cov_1^{-1}}^2 - \iprod{ \cov_1,  (\cov_j^{-1} - \cov_1^{-1}) } | &\lesssim \frac{ \beta \withincluster^2 }{ \alpha^2 } \log \frac{\nc}{\delta} + \| \cov_\ell - \cov_1 \|_F \| \cov_j^{-1} - \cov_1^{-1} \|_F \\ 
        &\lesssim \frac{ \beta \withincluster^2 }{ \alpha^2 } \log \frac{\nc}{\delta}\,.
    \end{align*}
    We have $\| (\cov_i^{-1} - \cov_1^{-1}) \cov_\ell^{1/2} \|_F^2 \leq \frac{\beta \withincluster^2}{\alpha^4}$. For a fixed $\ell \in [\nc]$, when $\x \sim \normal (\vmu_\ell, \cov_\ell)$, we can rewrite $\|  (\cov_i^{-1} - \cov_1^{-1}) (\x - \vec \mu_i) \|$ by expressing $\x = \cov^{1/2}_\ell \z + \vmu_\ell$ for $\z \sim \normal(0, \Id)$ to get:
    \begin{equation}
    \label{eq:linear-term-bound}
        \| (\cov_i^{-1} - \cov_1^{-1}) (\x - \vec \mu_i) \| \leq \| (\cov_i^{-1} - \cov_1^{-1}) \cov_\ell^{1/2}\z \| + \| (\cov_i^{-1} - \cov_1^{-1}) ( \vmu_\ell - \vmu_i ) \|
    \end{equation}
    Using Hanson-Wright (\Cref{fact:HW}), with at least $1 - \delta$ probability over $\z \sim \normal(0, \Id)$, $\| (\cov_i^{-1} - \cov_1^{-1}) \cov_\ell^{1/2} \z\| \lesssim \| (\cov_i^{-1} - \cov_1^{-1}) \cov_\ell^{1/2} \|_F (1 + \log \frac{1}{\delta}) \lesssim \frac{ \sqrt{ \beta } \withincluster}{ \alpha^2 } \log \frac{1}{\delta}$. Using this bound in \eqref{eq:linear-term-bound}, with probability at least $1 - \delta$ over $\x \sim \mc M(U)$, we have
    \begin{equation*}
        \| (\cov_i^{-1} - \cov_1^{-1}) (\x - \vec \mu_i) \| \lesssim \frac{\withincluster^2}{\alpha} + \frac{\sqrt{\beta} \withincluster}{\alpha^2} \log \frac{\nc}{\delta} \lesssim \frac{\sqrt{\beta} \withincluster^2}{\alpha^2} \log \frac{\nc}{\delta}.
    \end{equation*}
    We apply \Cref{prop:poly-approx-softmax} to $s_1(\x; \mc M(U))$ with the softmax function taking input $r_j(\x) = -\frac{1}{2}\|\x - \vec \mu_j \|_{\cov_j^{-1}}^2 + \frac{1}{2}\| \x - \vec \mu_1 \|_{\cov_1^{-1}}^2 + \frac{1}{2}\iprod{ \cov_1,  (\cov_j^{-1} - \cov_1^{-1}) }$ and $\vec \theta_j = \log \frac{\lambda_j}{\lambda_1} - \frac{1}{2}\iprod{ \cov_1,  (\cov_j^{-1} - \cov_1^{-1}) } +  \frac{1}{2}\log \frac{\det(\cov_1)}{ \det(\cov_j) }$. We take $L$ and $M$ therein to be of order $\frac{\beta\withincluster^2}{\alpha^2}\log\frac{\nc}{\delta}$ and $\frac{\sqrt{\beta}\withincluster^2}{\alpha^2}\log\frac{\nc}{\delta}$ respectively. We conclude that there exists a polynomial transformation $p_1(\x; \mathcal M(U))$ with degree 
    $O( L M \nc^2 / \eps ) = O( \zeta_1 \zeta_2 \nc^2 / \eps )$ 
    such that with probability at least $1 - \delta$ over $\x \sim \mathcal M(U)$, we have
    \begin{equation*}
        \| s_1(\x; \mc M(U)) - p_1(\x; \mathcal M(U) ) \| \leq \eps\,.
    \end{equation*}
    Note that the multiplication of $(\cov_i^{-1} - \cov_1^{-1}) (\x - \vec \mu_i)$ to the polynomial approximation of the softmax can increase the sum of absolute values of coefficients at most by a factor of $\frac{d R \nc}{\alpha}$. The sum of absolute values of coefficients of the polynomial transformation $p_1(\x; \mathcal M(U))$ is $\frac{d R \nc}{\alpha} \exp( \wt{O}( \frac{ \zeta_1 \zeta_2 \nc^2 }{ \eps } ) )$.

    We also have $\|\cov_1^{-1} (\vmu_j - \vmu_1)\| \leq \withincluster / \alpha$. We apply \Cref{prop:poly-approx-softmax} for $s_2(\x; \mc M(U))$ with the same choice of $r_j(\x)$ and $L$ but we take $\vec g_j(\x)$ and $M$ as $\cov_1^{-1} (\vmu_j - \vmu_1)$ and $\withincluster / \alpha$. Therefore, we obtain that there exists a polynomial $p_2(\x; \mathcal M(U) )$ with degree $\frac{\beta\nc^2\withincluster^3}{\eps \alpha^3}\log\frac{\nc}{\delta}$
    such that with at least $1 - \delta$ probability, we have
    \begin{equation*}
        \| s_2(\x; \mc M(U)) - p_2(\x; \mc M(U)) \| \leq \eps\,. 
    \end{equation*}
    Combining the polynomials $p_1(\x; \mc M(U))$ and $p_2(\x; \mc M(U))$, we obtain the result. 
\end{proof}

We define $\ballc{j}_1(\x)$ to measure relative distance of $j^{\text{th}}$ input of the softmax to its mean and $\ballc{j}_2(\x)$ to measure norm of $(\cov_j^{-1} - \cov_1^{-1}) (\x - \vec \mu_j)$ as follows:
\begin{align*}
    \ballc{j}_1(\x) & \triangleq \|\x - \vec \mu_j \|_{\cov_j^{-1}}^2 - \| \x - \vec \mu_1 \|_{\cov_1^{-1}}^2 - \iprod{ \cov_1,  (\cov_j^{-1} - \cov_1^{-1}) }   \\
    \ballc{j}_2(\x) &\triangleq \| (\cov_j^{-1} - \cov_1^{-1}) (\x - \vec \mu_j) \|^2
\end{align*}

We similarly define $\hballc{j}_1$ and $\hballc{j}_2$ using estimates $\{ \hatmu_i, \hatQ_i, \precest_i \}_{i=1}^k$ instead of $\{ \vmu_i, \cov_i, \cov_i^{-1} \}_{i=1}^k$. Define $\ballt(\cdot)$ to be the indicator function for whether the input to the softmax is close to its mean and $(\cov_j^{-1} - \cov_1^{-1}) (\x - \vec \mu_j)$ is sufficiently small in norm: 
\begin{equation*}
    \ballt(\x, \theta_1, \theta_2) \triangleq \bigwedge^k_{j=1} \ball{j}(\x, \theta_1, \theta_2) \hspace{5mm} \text{where} \hspace{5mm}
    \ball{j}(\x, \theta_1, \theta_2) \triangleq \Ind \Big\{ \Bigl( \big| \ballc{j}_1(\x) \big| \le \theta_1 \Bigr) \wedge \Bigl(  \ballc{j}_2(\x) \leq \theta_2 \Bigr) \Big\} 
\end{equation*}
 Observe that the polynomial approximation result of \Cref{lem:poly-approx-score-bounded-interval} holds when $B(\x, \theta_1, \theta_2) = 1$. We also define $\hballt$ and $\hball{j}$ by replacing $\ballc{j}_1$ and $\ballc{j}_2$ with $\hballc{j}_1$ and $\hballc{j}_1$ in the definition of $\ballt$ and $\ball{j}$. 

Following the parameters used in the proof of \Cref{lem:poly-approx-score-bounded-interval}, we will take
\begin{equation}
\label{eq:ball-definition}
    \theta_1 \triangleq  \Theta\Bigl(\frac{ \beta \withincluster^2 }{ \alpha^2 } \log \frac{\nc}{\delta}\Bigr) \qquad \theta_2 \triangleq \Theta\Bigl(\frac{\sqrt{\beta} \withincluster^2}{\alpha^2} \log \frac{\nc}{\delta}\Bigr)\,.
\end{equation}

\begin{lemma}
\label{lem:ball-diff-actual-estimate} Let $\mc M(U)$ be a mixture of $\nc$ Gaussians with $\alpha \Id \preceq \cov_i \preceq \beta \Id$ and parameters satisfying $\| \vmu_i - \vmu_j \| + \| \cov_i - \cov_j \|_F \leq \withincluster$ for all $i, j \in [\nc]$. Let $\{ \hatmu_i, \hatQ_i, \precest_i \}_{i=1}^\nc$ be the estimates of the parameters $\{ \vmu_i, \cov_i, \cov_i^{-1} \}_{i=1}^\nc$ within parameter distance $\| \hatmu_i - \vmu_i \| + \| \hatQ_i - \cov_i^{-1} \|_F + \| \precest_i - \cov_i^{-1} \|_F \leq \paramerr$ and with the operator norm satisfying $\| \precest_i \|_\op \lesssim \frac{1}{\alpha}$ for all $i \in U$. 
Then, for any $\x \sim \mathcal M(U)$, with probability at least $1 - \delta$, the error in estimating $\ballc{j}_1(\x)$ by $\hballc{j}_1(\x)$ (similarly $\ballc{j}_2(\x)$ by $\hballc{j}_2(\x)$) is upper bounded by
    \begin{align*}
        \big| \ballc{j}_1(\x) - \hballc{j}_1(\x) \big| &\lesssim \thetadiff_1 \hspace{1cm} \text{where} \hspace{1cm} \thetadiff_1 \triangleq \frac{ \beta \withincluster^2 \paramerr^2 }{ \alpha } \log \frac{\nc}{\delta} \; , \\
        \big| \ballc{j}_2(\x) - \hballc{j}_2(\x) \big| &\lesssim \thetadiff_2 \hspace{1cm} \text{where} \hspace{1cm} \thetadiff_2 \triangleq \frac{\beta \withincluster^4 \paramerr^3}{\alpha^4} \log \frac{\nc}{\delta} \ .
    \end{align*}
\end{lemma}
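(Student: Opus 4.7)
The proof is a term-by-term calculation: expand each difference as a sum of pieces that are either quadratic-in-$\x$, linear-in-$\x$, or constant-in-$\x$, control the random pieces by Hanson-Wright (\Cref{prop:HW}) and standard Gaussian tails, and bound the deterministic pieces using parameter closeness ($\paramerr$) and the within-cluster bound ($\withincluster$). Since $\x\sim\mathcal{M}(U)$, I first condition on which component $\ell\in U$ generated $\x$ (so $\x = \cov_\ell^{1/2}\z + \vmu_\ell$ with $\z\sim\calN(0,\Id)$) and take a union bound over $\ell$ and $j$ at the end.

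\textbf{First bound.} Write $\ballc{j}_1 - \hballc{j}_1 = A_j - A_1 - C$ where $A_i \triangleq \|\x-\vmu_i\|_{\cov_i^{-1}}^2 - \|\x-\hatmu_i\|_{\precest_i}^2$ and $C \triangleq \iprod{\cov_1,\cov_j^{-1}-\cov_1^{-1}} - \iprod{\hatQ_1,\precest_j-\precest_1}$. Expanding $\x - \hatmu_i = (\x-\vmu_i) + (\vmu_i - \hatmu_i)$ gives
\begin{equation*}
A_i = (\x-\vmu_i)^\top \Delta_{\precest_i}(\x-\vmu_i) - 2(\x-\vmu_i)^\top \precest_i \Delta_{\vmu_i} - \Delta_{\vmu_i}^\top \precest_i \Delta_{\vmu_i}\,,
\end{equation*}
where $\Delta_{\precest_i} = \cov_i^{-1}-\precest_i$ and $\Delta_{\vmu_i} = \vmu_i-\hatmu_i$, both of norm $\le \paramerr$. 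Substituting $\x - \vmu_i = \cov_\ell^{1/2}\z + (\vmu_\ell - \vmu_i)$, I apply Hanson-Wright to the purely quadratic-in-$\z$ piece $\z^\top\cov_\ell^{1/2}\Delta_{\precest_i}\cov_\ell^{1/2}\z$, and Gaussian tails to each linear-in-$\z$ cross term. The surviving ``main'' contribution is the trace $\Tr(\cov_\ell \Delta_{\precest_i})$, which is dimension-dependent.

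The key cancellation comes from rewriting $C$ as $\iprod{\Delta_{\cov_1},\cov_j^{-1}-\cov_1^{-1}} + \iprod{\hatQ_1,\Delta_{\precest_j}-\Delta_{\precest_1}}$. After subtracting $C$ from $A_j-A_1$, the dimension-heavy traces regroup into $\Tr\bigl((\cov_\ell - \hatQ_1)(\Delta_{\precest_j}-\Delta_{\precest_1})\bigr) - \iprod{\Delta_{\cov_1},\cov_j^{-1}-\cov_1^{-1}}$. Using $\|\cov_\ell-\hatQ_1\|_F \le \withincluster+\paramerr$, $\|\Delta_{\precest_j}-\Delta_{\precest_1}\|_F \le 2\paramerr$, and $\|\cov_j^{-1}-\cov_1^{-1}\|_F \le \withincluster/\alpha^2$, these pieces are controlled by $\paramerr$-type quantities only. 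Collecting all remaining error terms (the Hanson-Wright tails of size $\beta\paramerr\log(\nc/\delta)$, the Gaussian cross-term tails of size $\sqrt{\beta}\,\paramerr\withincluster\sqrt{\log(\nc/\delta)}/\alpha$, and the deterministic $\paramerr/\alpha$-type pieces), and taking a union bound over $j\in[\nc]$, yields the claimed bound by $\thetadiff_1$.

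\textbf{Second bound.} Let $\vec u = (\cov_j^{-1}-\cov_1^{-1})(\x-\vmu_j)$ and $\hat{\vec u} = (\precest_j-\precest_1)(\x-\hatmu_j)$, so that $\ballc{j}_2 = \|\vec u\|^2$ and $\hballc{j}_2 = \|\hat{\vec u}\|^2$. Use the elementary identity $|\|\vec u\|^2 - \|\hat{\vec u}\|^2| \le \|\vec u - \hat{\vec u}\|\cdot(\|\vec u\| + \|\hat{\vec u}\|)$. The factor $\|\vec u\| + \|\hat{\vec u}\|$ is bounded by $O(\sqrt{\zeta_2})$ using (and reproving a variant of) the high-probability bound on $\ballc{j}_2$ established in the proof of \Cref{lem:poly-approx-score-bounded-interval}. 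For the other factor, decompose
\begin{equation*}
\vec u - \hat{\vec u} = (\Delta_{\precest_j} - \Delta_{\precest_1})(\x - \vmu_j) + (\precest_j - \precest_1)\Delta_{\vmu_j}\,;
\end{equation*}
the first summand is handled by applying Hanson-Wright to $\|(\Delta_{\precest_j}-\Delta_{\precest_1})\cov_\ell^{1/2}\z\|^2$ (trace $\le 4\beta\paramerr^2$, Frobenius norm $\lesssim \beta\paramerr^2$) together with a deterministic $2\paramerr\withincluster$ contribution from $\vmu_\ell - \vmu_j$, and the second summand is at most $(\withincluster/\alpha^2 + 2\paramerr)\paramerr$. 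Multiplying the two factors and union-bounding over $j$ gives the claimed bound by $\thetadiff_2$.

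\textbf{Main obstacle.} The technical heart of the argument is the trace-cancellation in the first bound: the naive bounds on $A_j$ and $A_1$ each contain a $\Tr(\cov_\ell\Delta_{\precest_i})$ term of size as large as $\beta\paramerr\sqrt{d}$, which would be fatal. The subtraction of $C$ is designed precisely so that these dimension-dependent traces collapse into inner products against matrices of Frobenius norm $O(\paramerr+\withincluster/\alpha^2)$, after which the rest of the calculation is routine bookkeeping of Hanson-Wright and Gaussian-tail deviations.
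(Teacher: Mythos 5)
Your proof of the first inequality is essentially the paper's, just unpacked: the paper decomposes $\ballc{j}_1 - \hballc{j}_1$ into two centered quadratic-form differences that each match the hypothesis of \Cref{lem:concentration-poly-approx-input} (with recentering by $\iprod{\cov_\ell,\cdot}$) plus two deterministic inner products, whereas you group the terms as $A_j - A_1 - C$ and then carry out the Hanson-Wright / Gaussian-tail bounds by hand. The algebra of your trace cancellation checks out, and it is exactly the same cancellation the paper exploits implicitly by subtracting $\iprod{\cov_\ell,\cov_j^{-1}-\precest_j}$ inside each bracket: either way the dangerous dimension-dependent traces collapse into Frobenius inner products of $O(\paramerr\withincluster/\alpha^2)$ size. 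You correctly identify this as the main obstacle, and the rest is bookkeeping.

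For the second inequality your route is genuinely different from the paper's, and cleaner. The paper does not actually control the difference $|\ballc{j}_2 - \hballc{j}_2|$ directly: it shows via Hanson-Wright and Cauchy-Schwarz that each of $\ballc{j}_2$ and $\hballc{j}_2$ is individually $\lesssim \frac{\beta\paramerr^3\withincluster^4}{\alpha^4}\log\frac{\nc}{\delta}$ (using $\paramerr,\withincluster\ge 1$ to absorb constants), and then simply triangle-inequalities the absolute difference. You instead factor $|\|\vec u\|^2 - \|\hat{\vec u}\|^2| \le \|\vec u - \hat{\vec u}\|\cdot(\|\vec u\| + \|\hat{\vec u}\|)$ and exploit that $\vec u - \hat{\vec u}$ involves only $\Delta_{\precest}$ and $\Delta_{\vmu}$ terms of size $\paramerr$, which gives a genuinely tighter bound (roughly $\frac{\beta\paramerr\withincluster^2}{\alpha^2}\log\frac{\nc}{\delta}$ versus the paper's $\thetadiff_2$) that still implies the claimed one. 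The paper's approach is cruder but marginally shorter to write down; your approach would matter if one wanted the difference bound to degrade gracefully as $\paramerr \to 0$.

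One small nit: you write that $\|\vec u\| + \|\hat{\vec u}\| = O(\sqrt{\zeta_2})$ — recall that $\zeta_2$ in \Cref{lem:poly-approx-score-bounded-interval} bounds $\|(\cov_i^{-1}-\cov_1^{-1})(\x-\vmu_i)\|$, i.e.\ the \emph{norm} rather than the squared norm, so the bound you want is $\|\vec u\| + \|\hat{\vec u}\| = O(\zeta_2)$ (equivalently, $\ballc{j}_2 = O(\zeta_2^2)$). This does not affect the conclusion but should be fixed to match the parameter bookkeeping.
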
 

\begin{proof}
    The expression of $\ballc{j}_1(\x) - \hballc{j}_1(\x)$ can be rewritten as 
    \begin{equation}
    \label{eq:ball-condition-diff-estimated-actual}
        \begin{aligned}
            \ballc{j}_1(\x) - \hballc{j}_1(\x) = & \; \big( \| \x - \vmu_j \|_{\cov_j^{-1}}^2 - \| \x - \hatmu_j \|_{\precest_j} - \iprod{ \cov_\ell, \cov_j^{-1} -  \precest_j } \big) \\
            &\qquad- \big( \| \x - \vmu_1 \|_{\cov_1^{-1}}^2 - \| \x - \hatmu_1 \|_{\precest_1} - \iprod{ \cov_\ell, \cov_1^{-1} -  \precest_1 } \big) \\
            &\qquad+ \iprod{ \cov_\ell - \cov_1, \cov_j^{-1} - \precest_j + \precest_1 - \cov_1^{-1} } + \iprod{ \cov_1 - \hatQ_1, \precest_1 - \precest_j }.
        \end{aligned}
    \end{equation}
    Using \Cref{lem:concentration-poly-approx-input} by choosing $\normal (\vmu_1, \cov_1)$ as $\normal (\vmu_\ell, \cov_\ell)$ and $(\hatmu_2, \hatQ_2, \precest_2), (\hatmu_3, \hatQ_3, \precest_3)$ as $(\vmu_j, \cov_j, \cov_j^{-1})$ and $(\hatmu_j, \hatQ_j, \precest_j)$ and applying the union bound over $j, \ell \in U$, for $(\ell, \x) \sim \mathcal M^J(U)$, with at least $1 - \delta$ probability, we have
    \begin{align*}
        \big| \| \x - \vmu_j \|_{\cov_j^{-1}}^2 - \| \x - \hatmu_j \|_{\precest_j} - \iprod{ \cov_\ell, \cov_j^{-1} -  \precest_j } \big| &\lesssim \beta \paramerr \log \frac{\nc}{\delta} + \frac{1}{\alpha} ( \withincluster^2 + \paramerr^2) + \sqrt{\beta} \log \frac{\nc}{\delta} (\paramerr \withincluster + \frac{\paramerr}{\alpha}) \\
        &\lesssim \frac{ \beta \withincluster^2 \paramerr^2 }{ \alpha }  \log \frac{\nc}{\delta}.
    \end{align*}
    For $j=1$ in the above equation, we also have
    \begin{equation*}
        \big| \| \x - \vmu_1 \|_{\cov_1^{-1}}^2 - \| \x - \hatmu_1 \|_{\precest_1} - \iprod{ \cov_\ell, \cov_1^{-1} -  \precest_1 } \big| \lesssim \frac{\beta \withincluster^2 \paramerr^2}{ \alpha }  \log \frac{\nc}{\delta} 
    \end{equation*}
    Note that $\iprod{ \cov_1 - \hatQ_1, \precest_1 - \precest_j } \lesssim \paramerr (\paramerr + \frac{\withincluster}{\alpha^2} )$ therefore, the last term in \eqref{eq:ball-condition-diff-estimated-actual} can be upper bounded as 
    \begin{equation*}
        \big| \iprod{ \cov_\ell - \cov_1, \cov_j^{-1} - \precest_j + \precest_1 - \cov_1^{-1} } + \iprod{ \cov_1 - \hatQ_1, \precest_1 - \precest_j } \big| \lesssim \frac{ \paramerr^2 \withincluster }{\alpha^2} \ . 
    \end{equation*}
    When $\vec z \sim \normal(0, \Id)$, using Hanson-Wright (\Cref{fact:HW}), we have $\| (\cov_j^{-1} - \cov_1^{-1}) \cov_\ell^{1/2} \vec z \|^2 \lesssim \frac{\beta \withincluster^2}{\alpha^4}\log \frac{1}{\delta}$ with probability at least $1-\delta$. Additionally, we have $\|  (\vmu_\ell - \vmu_j) (\cov_j^{-1} - \cov_1^{-1})^\top (\cov_j^{-1} - \cov_1^{-1}) \cov_\ell^{1/2} \| \lesssim \sqrt{\beta} \withincluster^3/\alpha^4.$ Therefore, with at least $1-\delta$ probability, we obtain $|  (\vmu_\ell - \vmu_j) (\cov_j^{-1} - \cov_1^{-1})^\top (\cov_j^{-1} - \cov_1^{-1}) \cov_\ell^{1/2} \vec z | \leq \frac{\sqrt{\beta} \withincluster^3}{\alpha^4} \log \frac{\nc}{\delta}$  Moreover, $\| (\cov_j^{-1} - \cov_1^{-1}) (\vmu_\ell - \vmu_j) \|^2 \leq \beta \withincluster^4 / \alpha^4 $. Therefore, for $\x \sim \mathcal M(U)$, with probability at least $1-\delta$, we have 
    \begin{equation}
    \label{eq:V2j-bound}
        \| (\cov_j^{-1} - \cov_1^{-1})(\x - \vmu_j) \|^2 \lesssim \frac{\beta \withincluster^4}{\alpha^4}\log \frac{\nc}{\delta}.
    \end{equation}
    Similarly, for any $\ell \in [\nc],$ we have $\| ( \precest_j - \precest_1 ) \cov_\ell^{1/2} \|_F^2 \lesssim \beta (\| \precest_j - \cov_j^{-1} \|_F^2 +  \| \precest_1 - \cov_1^{-1} \|_F^2 + \| \cov_j^{-1} - \cov_1^{-1} \|_F^2) \lesssim \beta ( \paramerr^2 + \withincluster^2/ \alpha^4)$. Using Hanson-Wright inequality (\Cref{fact:HW}) for $\vec z \sim \normal ( 0, \Id )$, with probability at least $1-\delta$, we have $\| ( \precest_j - \precest_1 ) \cov_\ell^{1/2} \vec z \|^2 \lesssim (\beta \paramerr^2 \withincluster^2 \log (\nc / \delta) ) / \alpha^4.$ We also have $\| (\vmu_\ell - \hatmu_j) ( \precest_j - \precest_1 )^\top ( \precest_j - \precest_1 ) \cov_\ell^{1/2} \| \lesssim \sqrt{\beta} (\paramerr + \withincluster) ( \paramerr + \withincluster/\alpha^2 )^2 \lesssim  \sqrt{\beta} \paramerr^3 \withincluster^3 / \alpha^4.$ This implies that with probability at least $1-\delta$, we have $ \big| (\vmu_\ell - \hatmu_j) ( \precest_j - \precest_1 )^\top ( \precest_j - \precest_1 ) \cov_\ell^{1/2} \big| \lesssim \frac{ \sqrt{\beta} \paramerr^3 \withincluster^3 }{ \alpha^4 } \log \frac{\nc}{\delta}$. We also have $\| ( \precest_j - \precest_1 ) (\vmu_\ell - \hatmu_j) \|^2 \lesssim \withincluster^2 (\paramerr^2 + \frac{\withincluster^2}{\alpha^4}) \lesssim \frac{\paramerr^2 \withincluster^4}{\alpha^4}$. Combining all the bounds, for $\x \sim \mathcal M(U)$, with probability at least $1-\delta$, we have
    \begin{equation*}
        \| (\precest_j - \precest_1) (\x - \hatmu_j) \| \lesssim \frac{\beta \paramerr^3 \withincluster^4}{\alpha^4} \log \frac{\nc}{\delta}. 
    \end{equation*}
    Combining this bound with \eqref{eq:V2j-bound}, we obtain the result. 
\end{proof}

We now prove our main proposition of this section. 

\begin{proof}[Proof of \Cref{lem:cluster-poly-approx}]
    We set $\htheta_1 = c_1\frac{\beta \withincluster^2 \paramerr^2 }{ \alpha^2 } \log \frac{\nc}{\delta}$ and $\htheta_2 = c_2 \frac{\beta \paramerr^3 \withincluster^4}{\alpha^4} \log \frac{\nc}{\delta}$ for some large constant $c_1$ and $c_2$. 
    \begin{align*}
        & \E_{\x \sim \mc M(U)}[\| s(\x; \mathcal M(U)) - p(\x; \mathcal M(U)) \Ind \{ \hballt(\x, \htheta_1, \htheta_2) = 1 \} - \precest_1 (\x - \hatmu_1) \Ind \{ \hballt(\x, \htheta_1, \htheta_2) = 0 \}  \|^2 ] \\ 
        = & \E_{\x \sim \mc M(U)}[ \| s(\x; \mathcal M(U)) - p(\x; \mathcal M(U)) \|^2 \Ind \{ \hballt(\x, \htheta_1, \htheta_2 ) = 1 \} ] \\ 
        &+ \E_{\x \sim \mc M(U)}[ \| s(\x; \mathcal M(U)) - \precest_1 (\x - \hatmu_1) \|^2 \Ind \{ \hballt(\x, \htheta_1, \htheta_2 ) = 0 \} ]
    \end{align*}
    \Cref{lem:ball-diff-actual-estimate} gives us that $|\hballc{j}_1(\x)| \leq \htheta_1$ implies that $\ballc{j}(\x) \leq \htheta_1 + \thetadiff_1$ for all $\x$ and for all $j \in U$ and hence, $\ballt(\x, \htheta_1 + \thetadiff_1, \htheta_2 + \thetadiff_2)$. We apply \Cref{lem:poly-approx-score-bounded-interval} with $\zeta_1$ as $c_1\frac{\beta \withincluster^2 \paramerr^2 }{ \alpha^2 } \log \frac{\nc}{\delta}$ and $\zeta_2$ as $c_2 \frac{\beta \paramerr^3 \withincluster^4}{\alpha^4} \log \frac{\nc}{\delta}$ and obtain that there exist a polynomial $p(\x; \mathcal M(U))$ of degree $O( \frac{  \beta^2 \nc^2 \paramerr^5 \withincluster^6 }{ \alpha^6 \eps} \log^2 \frac{\nc}{\delta} )$ and coefficients bounded in magnitude by $d R \exp( \wt{O} ( \frac{  \beta^2 \nc^2 \paramerr^5 \withincluster^6 }{ \alpha^6 \eps} \log^2 \frac{\nc}{\delta} ) )$ such that the following holds:
    \begin{equation*}
        \E_{\x \sim \mc M(U)}[ \| s(\x; \mathcal M(U)) - p(\x; \mathcal M(U)) \|^2 \Ind \{ \hballt(\x, \htheta_1, \htheta_2 ) = 1 \} ] \lesssim \eps.
    \end{equation*}
    We can upper bound the error when $\hballt(\x, \htheta_1, \htheta_2 ) = 0 $ using Cauchy-schwarz inequality as follows:
    \begin{align*}
        & \E_{\x \sim \mc M(U)}[ \| s(\x; \mathcal M(U)) - \precest_1 (\x - \hatmu_1) \|^2 \Ind \{ \hballt(\x, \htheta_1, \htheta_2) \} ] \\ 
        = & \Bigl( \E_{\x \sim \mc M(U)}[ \| s(\x; \mathcal M(U)) - \precest_1 (\x - \hatmu_1) \|^4 ] \Bigr)^{1/2} \big( \Pr [ \hballt(\x, \htheta_1, \htheta_2 ) = 0 ] \big)^{1/2}
    \end{align*}
    We know that $\Pr_{\x \sim \mathcal M(U)} [ \hballt(\x, \htheta_1, \htheta_2 ) = 0 ] \leq \delta$. We upper bound the other term as follows:
    \begin{equation}
    \label{eq:delta-error-score-approx}
        \begin{aligned}
            \E_{\x \sim \mc M(U)} [ \| s(\x; \mathcal M(U)) - \precest_1 (\x - \hatmu_1) \|^4 ] \leq & \; \nc^4 \sum_{i=1}^\nc \E_{\x \sim \mc M} [ \| \cov_i^{-1}(\x - \vmu_i) - \precest_1(\x - \hatmu_1) \|^4 ].
        \end{aligned}
    \end{equation}
    Writing $\x$ in terms of standard Gaussian $\vec z \sim \normal(0, \Id)$ for any $i, \ell \in [\nc]$, we have 
    \begin{align*}
        \E_{\vec z \sim \normal} & [ \| \cov_i^{-1} ( \cov_\ell^{1/2} \vec z + \vmu_\ell - \vmu_i ) - \; \precest_1 ( \cov_\ell^{1/2} \vec z + \vmu_\ell - \hatmu_1 ) \|^4 ] \\ 
        &\lesssim \E_{\vec z \sim \normal}[ \|(\cov_i^{-1} - \precest_1) \cov_\ell^{1/2} \vec z \|^4 ] + \| \cov_i^{-1} (\vmu_\ell - \vmu_i) \|^4 + \| \precest_1 (\vmu_\ell - \hatmu_1) \|^4 \\
        &\lesssim \frac{\beta^2 \withincluster^4}{\alpha^8} + \beta^2 \paramerr^4 + \frac{ \withincluster^4 }{\alpha^4} + \frac{\withincluster^2}{\alpha^4}  \lesssim \frac{\beta^2 \paramerr^4 \withincluster^4}{ \alpha^{8} },
    \end{align*}
    where the last inequality follows from \Cref{lem:wicks_scalar} and $\|(\cov_i^{-1} - \precest_1) \cov_\ell^{1/2} \|^4 \lesssim \beta^2 (\| \cov_i^{-1} - \cov_1^{-1} \|_F^4 + \| \cov_1^{-1} - \precest_1 \|_F^4 ) \lesssim \frac{\beta^2 \withincluster^4}{\alpha^8} + \beta^2 \paramerr^4$. 
    Putting together the above bounds, we obtain that there exists a polynomial $p(\x)$ such that 
    \begin{align}
        & \E_{\x \sim \mc M(U)}[\| s(\x; \mathcal M(U)) - p(\x; \mathcal M(U)) \Ind \{ B(\x;  \mathcal M(U) ) = 1 \} - \precest_1 (\x - \hatmu_1) \Ind \{ B(\x \mathcal M(U)) = 0 \}  \|^2] \\
        &\lesssim \; \eps + \sqrt{\delta} \frac{\beta^2 \paramerr^4 \withincluster^4}{ \alpha^{8} }
    \end{align} 
    Choosing $\delta = \frac{\eps^2 \alpha^{16}}{ \beta^4 \hbeta^8 \Delta^8 \withincluster^8 }$, we obtain the result. 
\end{proof}

\subsubsection{Piecewise polynomial approximation of the complete mixture}

The goal of this section is to prove that there exists a piecewise polynomial that can approximate the $\score(\x; \mc M)$. More precisely, there exists $\lscore (\x; \mc M(U_t))$ when used with the $\classify(\cdot)$, $\lscore$ is $\eps$-approximate to the true score function $\score$, i.e.,
\begin{equation*}
    \E_{\x \sim \mc M} \big[ \| \score (\x; \mc M) - \lscore(\x, \classify (\cdot) ) \|^2 \big] \leq \eps,
\end{equation*}
where $\lscore(\x, \classify(\cdot))$ is defined as
\begin{equation*}
    \lscore(\x, \classify (\cdot) ) = \sum_{t=1}^{\numclust} \lscore (\x; \mathcal M(U_t)) ~ \1\{ \classify(\x) = t \}
\end{equation*}
We will bound the error for every subset $U_t$. The error for the subset corresponding to $U_t$ can be decomposed into an error due to the score simplification of $\mathcal M$ to $\mathcal M(U_t)$ and an error due to the approximation $\mathcal M(U_t)$ to the piecewise polynomial score function. 
\begin{align}
    & \E_{\x \sim \mc M} \big[ \| \score (\x; \mc M) - \lscore(\x, \mc M(U_t)) \|^2 \Ind \{\classify(\x) = t \} \big] \nonumber \\
    & \quad = \E_{\x \sim \mc M} \big[ \| \score (\x; \mc M) - s(\x, \mc M(U_t)) \|^2 \Ind \{\classify(\x) = t \} \big] \label{eq:score-approx-err} \\ 
    & \quad \quad + \E_{\x \sim \mc M} \big[ \| \score (\x; \mc M(U_t)) - \lscore(\x, \mc M(U_t)) \|^2 \Ind \{\classify(\x) = t \} \big] \label{eq:poly-approx-err} .
\end{align}
Recall that the score simplification (\Cref{prop:score-simplification}) bounds the term in \eqref{eq:score-approx-err}. We rewrite \eqref{eq:poly-approx-err} in two parts, when samples are coming from $\mathcal M(U_t)$ and $\mathcal M(U_t^c)$ as follows
\begin{align}
    &\E_{\x \sim \mc M} \big[ \| \score (\x; \mc M(U_t)) - \lscore(\x, \mc M(U_t)) \|^2 \Ind \{\classify(\x) = t \} \big] \nonumber \\
    & \quad = \Pr[j \in U_t] \cdot \E_{\x \sim \mc M(U_t)} \big[\| \score (\x; \mc M(U_t)) - \lscore(\x; \mc M(U_t))\|^2 \Ind \{ \classify(\x) = t \} \big] \label{eq:correct-clustering-poly-approx-err} \\ 
    & \quad \quad + \Pr[j \in U_t^c]\cdot \E_{\x \sim \mc M(U_t^c)} \big[\| \score (\x; \mc M(U_t)) - \lscore(\x; \mc M(U_t))\|^2 \Ind \{ \classify(\x) = t \} \big] \label{eq:incorrect-clustering-poly-approx-err}
\end{align}
The term in \eqref{eq:correct-clustering-poly-approx-err} is upper bounded by $\eps$ using \Cref{lem:cluster-poly-approx}. 
In the following Lemma, we upper bound the term in \eqref{eq:incorrect-clustering-poly-approx-err}.

\begin{lemma}
\label{lem:incorrect-poly-approx-err}
Let $\mc M$ be a $(\alpha, \beta, R)$-well-conditioned mixture and let $U_t \subset [k]$ be a subset of components. Assume that the clustering function $\classify : \R^d \to [\numclust]$ satisfies $\Pr_{\x \sim \normal_i}[ \classify(\x) = t ] \leq \delta$ for all $i \notin U_t$ and $t \in [\numclust].$ Then, we have
\begin{equation*}
    \E_{\x \sim \mc M(U_t^c)} \big[\| s(\x; \mc M(U_t)) - \lscore(\x; \mc M(U_t))\|^2 \Ind \{ \classify(\x) = t \} \big] \lesssim \frac{ \beta^2 }{ \alpha^8 } k^3 (\paramerr \withincluster R)^4 \sqrt{\delta} \; .
\end{equation*}
\end{lemma}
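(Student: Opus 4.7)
I will prove this via a Cauchy--Schwarz plus fourth-moment argument, in the same spirit as the misclassification analysis in \Cref{prop:score-simplification}. Since $\x$ is drawn from a component $\normal_j$ with $j \notin U_t$, the event $\{\classify(\x) = t\}$ has probability at most $\delta$ by hypothesis, so Cauchy--Schwarz on $\E_{\x \sim \normal_j}[\|s - \lscore\|^2\,\Ind\{\classify(\x) = t\}]$ isolates the $\sqrt{\delta}$ factor and reduces the problem to controlling the fourth moment $\E_{\x \sim \mc M(U_t^c)}[\|s(\x;\mc M(U_t)) - \lscore(\x;\mc M(U_t))\|^4]$.

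For the fourth moment, I would triangle-inequality through the linear ``anchor'' $\precest_1(\x - \hatmu_1)$ associated with some fixed component $1 \in U_t$, writing
\begin{equation*}
\|s - \lscore\|^4 \lesssim \|s(\x;\mc M(U_t)) - \precest_1(\x - \hatmu_1)\|^4 + \|\lscore(\x;\mc M(U_t)) - \precest_1(\x - \hatmu_1)\|^4\,.
\end{equation*}
The first term is handled as in the $\hballt = 0$ case of the proof of \Cref{lem:cluster-poly-approx}: since $s(\x;\mc M(U_t))$ is a softmax-weighted convex combination of the linear functions $\cov_i^{-1}(\x - \vmu_i)$ for $i \in U_t$, Jensen's inequality reduces the problem to a sum over $i \in U_t$ of $\E_{\x \sim \normal_j}[\|\cov_i^{-1}(\x - \vmu_i) - \precest_1(\x - \hatmu_1)\|^4]$. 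Decomposing each summand via triangle inequality and using the bounds $\|\cov_i^{-1} - \precest_1\|_F \lesssim \withincluster/\alpha^2 + \paramerr$ and $\|\hatmu_1 - \vmu_i\| \lesssim \withincluster + \paramerr$, along with the Gaussian moment identity \Cref{lem:wicks_scalar} applied to $\x = \vmu_j + \cov_j^{1/2}\z$ (where now the deterministic shift $\|\vmu_j - \vmu_i\|$ is bounded only by $2R$ rather than by $\withincluster$), yields a fourth-moment bound of order $\beta^2(\paramerr\withincluster R)^4/\alpha^8$.

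The second term $\|\lscore - \precest_1(\x - \hatmu_1)\|^4$ vanishes on $\{\hballt(\x;U_t) = 0\}$ by definition of $\lscore$. On $\{\hballt(\x;U_t) = 1\}$, where $\lscore = p(\x;\mc M(U_t))$, I would again triangle-inequality, bounding $\|p - \precest_1(\x - \hatmu_1)\|^4 \lesssim \|p - s\|^4 + \|s - \precest_1(\x - \hatmu_1)\|^4$, with the latter already handled. The crucial point is that the polynomial approximation $p$ constructed in \Cref{lem:cluster-poly-approx} comes from multivariate Jackson's theorem (\Cref{lemma:multivariate-jackson}), which is a \emph{pointwise} uniform approximation valid on the region $\ballt(\x;U_t) = 1$. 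By \Cref{lem:ball-diff-actual-estimate}, $\hballt(\x;U_t) = 1$ forces $\ballt(\x;U_t) = 1$ with slightly relaxed parameters (absorbed into constants in the construction of \Cref{lem:cluster-poly-approx}), so on this event $\|s(\x;\mc M(U_t)) - p(\x;\mc M(U_t))\| \lesssim \eps$ pointwise, a negligible contribution.

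Summing over the at most $k$ components of $U_t^c$ and accounting for a further $k$-factor from the Jensen decomposition over $i \in U_t$, then plugging back into the Cauchy--Schwarz bound and taking the square root, produces the claimed $k^3$ factor and the overall bound. \textbf{The main obstacle} is the second term: naively, for $\x$ sampled from a component outside $U_t$, the polynomial $p(\x;\mc M(U_t))$ is evaluated outside its intended region of approximation and could in principle be as large as its coefficient bound $dR\exp(\poly)$. The resolution is that the indicator $\hballt(\x;U_t) = 1$ in the definition of $\lscore$ explicitly restricts us to the Jackson region, regardless of which mixture component generated $\x$, so the pointwise uniform approximation applies even in this ``wrong-component'' regime.
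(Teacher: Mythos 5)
Your argument follows the same route as the paper: Cauchy--Schwarz to extract the $\sqrt{\delta}$ factor, followed by a fourth-moment bound obtained by Jensen-splitting the softmax score over $i \in U_t$ and applying the Gaussian moment identity (\Cref{lem:wicks_scalar}) component-by-component, together with the observation that the indicator $\hballt$ restricts to the region where the Jackson approximation of \Cref{lem:cluster-poly-approx} holds pointwise (so $\|s - p\| \lesssim \eps$ even though $\x$ comes from the wrong component). The only cosmetic difference is that you triangle-inequality through the linear anchor $\precest_1(\x-\hatmu_1)$ before splitting on $\hballt$, whereas the paper splits on $\hballt$ directly from the definition of $\lscore$; this is equivalent up to constants.
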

\begin{proof}
    The term in \eqref{eq:incorrect-clustering-poly-approx-err} can be upper bounded by Cauchy-Schwarz inequality as follows:
    \begin{multline*}
        \E_{\x \sim \mc M(U_t^c)} \big[\| s(\x; \mc M(U_t)) - \lscore(\x; \mc M(U_t))\|^2 \Ind \{ \classify(\x) = t \} \big] \\ 
        \leq \Bigl( \E_{\x \sim \mc M(U_t^c)} \big[\| s(\x; \mc M(U_t)) - \lscore(\x; \mc M(U_t))\|^4 \Ind \{ \classify(\x) = t \} \big] \Bigr)^{1/2} \Pr_{\x \sim \mathcal M(U_t^c)} \big[ \classify(\x) = t \big]^{1/2}\,.
    \end{multline*}
    Using the definition of $\lscore(\x; \mc M(U_t))$, we can simplify the first term as 
    \begin{equation}
    \label{eq:score-misclass-err-bound}
    \begin{aligned}
        & \E_{\x \sim \mc M(U_t^c)} \big[\| \score (\x; \mc M(U_t)) - \lscore(\x; \mc M(U_t))\|^4 \big] \\ 
        & \quad =  \E_{\x \sim \mc M(U_t^c)} \big[\| \score (\x; \mc M(U_t)) - p(\x; \mc M(U_t))\|^4 \1 \{ \hballt(\x; U_t ) \} \big] \\ 
        & \quad \quad + \E_{\x \sim \mc M(U_t^c)} [ \| \score (\x; \mc M(U_t)) - \precest_i (\x - \hatmu_i) \|^4 \1 \{ \hballt^c(\x; U_t ) \} ] \\
    \end{aligned}
    \end{equation}
    The first term in \eqref{eq:score-misclass-err-bound} is upper bounded by $\eps^4$. The second term in \eqref{eq:score-misclass-err-bound} can be upper bounded by
    \begin{equation*}
        \E_{\x \sim \mc M(U_t^c)} [ \| \score (\x; \mc M(U_t)) - \precest_i (\x - \hatmu_i) \|^4 ] \leq k^3 \sum_{j \in U_t} \E_{\x \sim \mc M(U_t^c)} [ \| \cov_j^{-1}(\x - \vmu_j) - \precest_i (\x - \hatmu_i) \|^4 ].
    \end{equation*}
    We can upper bound $\x \sim \normal (\vmu_\ell, \cov_\ell)$ by writing it in terms of the standard normal $\z \sim \normal(0, \Id)$:
    \begin{align*}
        & \E_{\x \sim \normal (\mu_\ell, \cov_\ell)} [ \| \cov_j^{-1}(\x - \vmu_j) - \precest_i (\x - \hatmu_i) \|^4 ] \\ 
        \lesssim & \E_{\vec z \sim \normal (0, \Id)} \big[ \| ( \cov_j^{-1} - \precest_i ) \cov_\ell^{1/2} \z \|^4 \big] + \| \cov_j^{-1} (\vmu_\ell - \vmu_j) \|^4 + \| \precest_i ( \vmu_\ell - \hatmu_i ) \|^4 \\ 
        \lesssim & \hspace{3mm} \| ( \cov_j^{-1} - \precest_i ) \cov_\ell^{1/2} \|^4 + \frac{R^4}{\alpha^4} + \hbeta^4 (R^4 + \Delta^4) \\
        \lesssim & \hspace{3mm} \beta^2 \Bigl( \frac{\withincluster^4}{\alpha^8} + \paramerr^4 \Bigr) + \frac{R^4}{\alpha^4} + \frac{(R^4 + \paramerr^4)}{\alpha^4} \\
        \lesssim & \hspace{3mm}  \frac{ \beta^2 }{ \alpha^8 }  (\paramerr \withincluster R)^4.
    \end{align*}
    Additionally, we have
    \begin{equation*}
        \Pr_{\x \sim \mathcal M(U_t^c)} \big[ \classify(\x) = t \big] \leq \max_{j \in [k]: j \notin U_t } \quad \Pr_{\x \sim \normal_j}[ \classify(\x) = t ] \leq \delta.
    \end{equation*}
    Combining \Cref{eq:score-misclass-err-bound} with the above bound, we obtain the result. 
\end{proof}

\begin{proposition}
\label{prop:piecewise-poly-approx-score}
    Let $\mathcal M$ be $(\alpha, \beta, R)$-well-conditioned mixture and then, there exists a piecewise polynomial $\lscore(\x; \classify(\cdot) )$ 
    such that it satisfies
    \begin{equation*}
        \E_{\x \sim \mc M} \big[ \| \score (\x; \mc M) - \lscore(\x, \classify (\cdot) ) \|^2 \big] \leq \eps,
    \end{equation*}
    where $\lscore(\x, \classify (\cdot))$ is defined as
    \begin{align*}
        & \quad \lscore(\x, \classify (\cdot) ) = \sum_{t=1}^{\numclust} \lscore (\x; \mathcal M(U_t)) ~ \1\{ \classify(\x) = t \} \\
        \text{and} & \quad \lscore (\x; \mathcal M(U_t)) = p(\x; \mathcal M(U_t)) \Ind \{ \hballt(\x;  \mathcal M(U_t) ) = 1 \} \\ 
        & \qquad \qquad + \precest_j (\x - \hatmu_j) \Ind \{ \hballt (\x; \mathcal M(U_t)) = 0 \} \quad \text{for some $j \in U_t$ and $\hballt$ defined in \Cref{eq:ball-definition}} 
    \end{align*}
    Moreover, every polynomial $p(\x; \mathcal M(U_t))$ has the degree at most $\poly( \frac{\beta k}{ \alpha \lambdamin \eps } \log R )$ and coefficients of the polynomials are bounded in magnitude by $\poly(d) \exp( \poly( \frac{\beta k}{ \alpha \lambdamin \eps } \log R ) )$.
\end{proposition}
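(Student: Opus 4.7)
The plan is to assemble the three main ingredients already developed: the crude parameter estimation and likelihood-based clustering of \Cref{sec:crude-parameters,sec:likelihood-clustering}, the score simplification guarantee of \Cref{prop:score-simplification}, and the within-cluster piecewise polynomial approximation of \Cref{lem:cluster-poly-approx}. First, I would invoke \Cref{lem:crude_param} to obtain estimates $(\hatmu_i, \wh{\cov}_i)$ within parameter distance $\paramerr = \poly(k\beta/\alpha\lambdamin)\log R$ of the true parameters, then brute-force over the (at most $k^{2k}$) pairs of partitions of $[k]$ and invoke \Cref{lem:main_cluster} to produce a clustering function $\classify:\R^d\to[\numclust]$ whose common refinement $\{U_1,\ldots,U_{\numclust}\}$ satisfies the hypotheses of \Cref{prop:score-simplification}. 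In particular, I would set the between-cluster separation $\betweencluster$ to be a sufficiently large multiple of $(\beta^4/\alpha^2)\log(k\beta/(\alpha\eps))$ and the misclassification probability $\epspart$ polynomially smaller than $\eps$ by a factor chosen to dominate all downstream error blowups.

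Using that the indicators $\Ind\{\classify(\x)=t\}$ partition $\R^d$, I would decompose
\begin{equation*}
\E_{\x\sim\mathcal M}\bigl[\|\score(\x;\mathcal M)-\lscore(\x;\classify(\cdot))\|^2\bigr] = \sum_{t=1}^{\numclust}\E_{\x\sim\mathcal M}\bigl[\|\score(\x;\mathcal M)-\lscore(\x;\mathcal M(U_t))\|^2\Ind\{\classify(\x)=t\}\bigr].
\end{equation*}
For each $t$, a triangle inequality splits the summand into (i) a score-simplification error $\|\score(\x;\mathcal M)-\score(\x;\mathcal M(U_t))\|^2$ and (ii) a within-cluster polynomial approximation error $\|\score(\x;\mathcal M(U_t))-\lscore(\x;\mathcal M(U_t))\|^2$, both weighted by $\Ind\{\classify(\x)=t\}$. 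Summing (i) over $t$ and applying \Cref{prop:score-simplification} with the chosen $\betweencluster$ yields total error at most $k^{5/4}(\beta^3/\alpha^5)R\sqrt{\epspart}\le\eps/2$ by our choice of $\epspart$.

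For the within-cluster errors (ii), I would further condition on whether the sample index $j$ lies in $U_t$ or not. The in-cluster contribution is bounded by $\Pr[j\in U_t]\cdot\E_{\x\sim\mathcal M(U_t)}[\|\score(\x;\mathcal M(U_t))-\lscore(\x;\mathcal M(U_t))\|^2]$, which is at most $\eps/(4\numclust)$ by \Cref{lem:cluster-poly-approx} applied with within-cluster scale $\withincluster=\poly(k\beta/\alpha\lambdamin)\log R$ and within-cluster accuracy $\eps' = \eps/\poly(k)$. The out-of-cluster contribution is controlled by \Cref{lem:incorrect-poly-approx-err}, whose bound $(\beta^2/\alpha^8)k^3(\paramerr\withincluster R)^4\sqrt{\epspart}$ is again $\le\eps/(4\numclust)$ by our choice of $\epspart$. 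Summing over $t\in[\numclust]$ gives total error at most $\eps$ as required.

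Finally, the degree and coefficient bounds are inherited directly from \Cref{lem:cluster-poly-approx}: substituting $\paramerr,\withincluster=\poly(k\beta/\alpha\lambdamin)\log R$ and $\eps'=\eps/\poly(k)$ into the degree $\wt O(\beta^2\nc^2\paramerr^5\withincluster^6/(\alpha^6\eps'))$ and coefficient bound $dR\exp(\wt O(\cdot))$ yields degree $\poly(\beta k/(\alpha\lambdamin\eps)\log R)$ and coefficient magnitudes $\poly(d)\exp(\poly(\beta k/(\alpha\lambdamin\eps)\log R))$. The main obstacle is bookkeeping the error amplifications incurred each time Cauchy--Schwarz is used to convert a small probability event (misclassification by $\classify$, or $\x$ lying outside the ball $\hballt$) into an $L^2$ bound: each such step costs a factor of $\sqrt{\epspart}$ or $\sqrt{\delta}$ multiplied by an $L^4$ moment that scales polynomially in $R,d,\beta/\alpha$, and these amplifications must be absorbed by setting the failure probabilities sufficiently (but still polynomially) small in $\eps$ so that the final degree does not explode to $\poly(d)$.
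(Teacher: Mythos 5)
Your proposal follows essentially the same structure as the paper's proof: decompose by the clustering indicators, split each summand via triangle inequality into a score-simplification error (controlled by Proposition~\ref{prop:score-simplification}) and a within-cluster polynomial approximation error, further condition the latter on whether the generating component lies in $U_t$ (handled by Lemma~\ref{lem:cluster-poly-approx} and Lemma~\ref{lem:incorrect-poly-approx-err} respectively), and then choose $\betweenclustermean, \betweenclustercov, \withinclustermean, \withinclustercov$ and the misclassification probability so that all the Cauchy--Schwarz amplifications are absorbed. This matches the paper; the only minor discrepancy is your appeal to a ``triangle inequality'' for the first split where the paper writes an equality sign (likely a typo on the paper's side, since that step does require $\|a-c\|^2 \lesssim \|a-b\|^2 + \|b-c\|^2$), and your remark that the $L^4$ moments scale polynomially in $d$ slightly overstates the actual dependence in Lemma~\ref{lem:incorrect-poly-approx-err}, which is $d$-free.
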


\begin{proof}
    Combining \Cref{eq:correct-clustering-poly-approx-err}, \Cref{eq:incorrect-clustering-poly-approx-err} and \Cref{lem:incorrect-poly-approx-err}, for a fixed $t \in [\numclust]$, we have 
    \begin{equation}
    \label{eq:poly-approx-err-2}
        \E_{\x \sim \mc M} \big[ \| \score (\x; \mc M(U_t)) - \lscore(\x, \mc M(U_t)) \|^2 \Ind \{\classify(\x) = t \} \big] \lesssim \eps + \frac{ \beta^2 }{ \alpha^8 } k^3 (\paramerr \withincluster R)^4 \sqrt{\delta}.
    \end{equation}
    We now combine the bound of the above equation with the score simplification guarantee. The score simplification guarantee (\Cref{prop:score-simplification}) assumes that the clustering function $\classify : \R^d \to [\numclust]$ satisfies $\Pr_{\x \sim \normal_i}[ \classify(\x) = t ] \leq \delta$ for all $i \notin U_t$ and $t \in [\numclust].$ and obtains that
    \[ 
        \E_{\x \sim \mathcal M}
        [\| s(\x; \mathcal M) - s (\x; \classify(\cdot)) \|_2^2 ] 
        \leq O(k^{5/4} (\beta^3/\alpha^5) R)\,\sqrt{\delta},
    \]
    Combining the above bound with \Cref{eq:poly-approx-err-2}, we have
    \begin{equation}
    \label{eq:unnormalized-final-error-1}
        \E_{\x \sim \mc M} \big[ \| \score (\x; \mc M) - \lscore(\x, \classify (\cdot) ) \|^2 \big] \lesssim k \eps + \frac{ \beta^3 }{ \alpha^8 } k^3 (\paramerr \withincluster R)^4 \sqrt{\delta}.
    \end{equation}
    Using clustering guarantee from \Cref{lem:main_cluster} for any $t \in [\numclust], i \in U_t$ and $t' \in [\numclust]$ and $t' \neq t$, we have 
    \begin{equation*}
        \Pr_{\x \sim \normal_i} [ \classify(\x) = t' ] \leq \Pr_{\x \sim \normal_i} [ \classify(\x) \neq t ] \leq k^3\exp\Bigl(-\Omega\Bigl(\frac{(\betweenclustermean)^2}{\alpha\sqrt{k}} \wedge \frac{\alpha^6 (\betweenclustercov)^2}{\beta^6\coverr^2} \wedge \frac{\alpha^2\betweenclustercov}{\beta^3}\Bigr)\Bigr). 
    \end{equation*}
    Recall that $\meanerr \lesssim \beta / \lambdamin$ and $\coverr \lesssim k^{3/2} \beta / \lambdamin + k^2 \alpha \log R$. Therefore, we choose $\betweenclustermean$ and $\betweenclustercov$ for some large constants $c_1$ and $c_2$ as follows which satisfies the conditions in \Cref{lem:main_cluster}.
    \begin{equation*}
        \betweenclustermean = c_1 \frac{ \beta \sqrt{k} }{ \lambdamin } \log \frac{k R \beta }{ \lambdamin \alpha \eps} \hspace{5mm} \text{and} \hspace{5mm} \betweenclustercov = c_2 \frac{\beta^4 k^2 \log R }{ \alpha^3 \lambdamin } \log \frac{k R \beta }{ \lambdamin \alpha \eps}.
    \end{equation*}
    We also choose $\withinclustermean \asymp k\betweenclustermean$ and $\withinclustercov \asymp k\betweenclustercov$. 
    Using the chosen values of $\betweenclustermean$ and $\betweenclustercov$, we have
    \begin{equation*}
        \Pr[ \classify(\x) = t \mid j \notin U_t ] \leq \eps^2 \; \poly \Big( \frac{\alpha \lambdamin }{ \beta k R } \Big) \ . 
    \end{equation*}
    Using this bound in \Cref{eq:score-misclass-err-bound}, we have
    \begin{equation*}
        \E_{\x \sim \mc M} \big[ \| \score (\x; \mc M) - \lscore(\x, \classify (\cdot) ) \|^2 \big] \lesssim k \eps .
    \end{equation*}
    Rescaling $\eps$ as $\eps / k$ and using $\withincluster = \withinclustermean + \withinclustercov$ and $\paramerr = \meanerr + \coverr$ in \Cref{lem:cluster-poly-approx}, we obtain the result. 
\end{proof}

\subsection{Learning polynomials using denoising objective}

The goal of this section is to provide details about our learning algorithm using denoising objective.  Recall that to sample from the data distribution, the diffusion reverse process uses an approximation to the score function $\nabla_{\x} \log q_t(\x)$. To learn the score function, we minimize the following DDPM objective in which one wants to predict the noise $\vec z_t$ from the noisy observation $\x_t$, i.e.

\begin{equation}
\label{eq:diffusion-loss-appendix}
    \min_{\vec g \in \mathcal G} \;\; L_t(\vec g_t) = \E_{\x_0, \vec z_t} \Bigl[ \Big\| \vec g_t(\x_t) + \frac{\vec z_t}{ \sqrt{1 - \exp(-2t)} } \Big\|^2 \Bigr]\,.
\end{equation}

Given parameter candidates $\{ (\hatmu_i, \hatQ_i) \}_{i=1}^k$ and a clustering function $\classify(\cdot)$, our learning algorithm minimizes the following empirical loss

\begin{equation}
\label{eq:empirical-loss}
\begin{aligned}
    & \min_{ \substack{ p(\x; \mathcal M(U_i)) \\ \forall i \in [k] } \;\; } \frac{1}{n} \sum_{i=1}^n \lossdenoisingclipped_t ( \lscore_t, \x_t^{(i)}, \z_t^{(i)}) \\
    \text{where} &= \Big\| \lscore_t(\x_t, \classify(\cdot)) + \frac{\z_t}{ \sqrt{1 - \exp(-2t)} } \Big\|^2 \Ind \{ \| \x_t \| \leq R_{\x}, \| \z_t \| \leq R_{\z} \},
\end{aligned}
\end{equation}
for some large choices of $R_{\x}, R_{\z} = \poly(d R \tau / \eps)^\ell$.  Clipping the loss for large values of $\| \x_t \|$ and $\| \z_t \|$ is for analysis purposes and in fact, we show that the choice of the value of $R_{\x}$ and $R_{\z}$ are sufficiently large such that the unclipped loss will be at most $O(\eps)$ in expectation.

\begin{proposition} 
\label{thm:learning-score-guarantee-fixed-t}
Let $\mathcal M$ be a $(\alpha,\beta,R)$-well-conditioned mixture. Then, for any $\eps, \delta > 0$ and noise scale $t \geq \eps$, there exists an algorithm that runs in $O(d^{\poly( \frac{\beta k \log R}{ \alpha \eps \lambdamin })} \poly(\log \frac{1}{\delta}))$ and returns a score function $\lscore_t$ such that with probability $1 - \delta$ over samples generated from the mixture $\mc M$, we have
\begin{equation}
\label{eq:small-population-loss-noise-t}
    \E_{\x_t \sim \noisymixture{t}} \big[ \| \lscore_t(\x_t) - \nabla_{\x} \log q_t(\x) \|^2 \big] \leq \eps.
\end{equation}
The algorithm to learn the score function takes input as noise scale $t$, target error $\eps$ and confidence $\delta$ and it is given by
\begin{itemize}
    \item Obtain a candidate list of parameters $\mathcal W \gets $ \textsc{CrudeEstimate}
    
    \item Brute force over the parameter candidate list $(\hatmu_1, \hatQ_1)\ldots(\hatmu_k, \hatQ_k) \in \mathcal{W}$
    \begin{itemize}
        \item Brute force over number of mean-based partition ($m$), number of covariance-based partition ($n$), mean-based partition $\mathcal S = \{S_1, S_2, \ldots, S_m\}$ and covariance-based partition $\mathcal T = \{T_1, \ldots, T_n \}$ 
        \begin{itemize}
            \item Brute force over possible thresholds $\{t_{ij}\}_{i,j=1}^k$ in range $[-c\frac{\beta d}{\alpha}, c\frac{\beta d}{\alpha}]$ for some large constant $c$.
            \begin{itemize}
                \item Clustering function $\classify \gets$ \textsc{Clustering}$(S, T, \{ (\hatmu_i, \hatQ_i) \}_{i=1}^k, \{ t_{i,j} \}_{i,j=1}^k )$
                \item $\lscore_t \gets$ minimizer of empirical loss \Cref{eq:empirical-loss}. 
                \item Compute the validation loss on the fresh samples for $\lscore_t$.
            \end{itemize}
        \end{itemize}
    \end{itemize}
\end{itemize}

In the end, the algorithm returns the $\lscore_t$ which has minimum validation loss across all brute force candidates. 
\end{proposition}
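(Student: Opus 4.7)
The goal is to combine the structural result of \Cref{prop:piecewise-poly-approx-score} with polynomial regression and uniform convergence. First observe that by \eqref{inf-eq:OU-equation}, $q_t$ is itself a Gaussian mixture with means $e^{-t}\vmu_i$ and covariances $e^{-2t}\cov_i + (1 - e^{-2t})\Id$, so for $t \ge \eps$ it is $(\alpha', \beta', R')$-well-conditioned with $\alpha' \gtrsim \eps$ (thanks to the additive $(1-e^{-2t})\Id$ term) and $\beta', R'$ polynomial in the original parameters. Applying \Cref{prop:piecewise-poly-approx-score} to $q_t$ therefore produces a piecewise polynomial $\lscore^*$ of degree $\ell = \poly(\beta k \log R / (\alpha \lambdamin \eps))$, with coefficients bounded in magnitude by $\poly(d)\exp(\poly(\ell))$ and with a clustering function $\classify^*$, satisfying $\E_{q_t}[\|\lscore^* - \nabla_{\x}\log q_t\|^2] \le \eps/C$ for a suitably large constant $C$. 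By the standard DDPM identity (Appendix~A of \cite{chen2023sampling}), this is equivalent to $L_t(\lscore^*) - \min L_t \le \eps/C$.

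\textbf{Brute force hits the correct structure.} Next I argue that on at least one branch of the algorithm's brute-force loops, the target hypothesis $\lscore^*$ lies in the search class. By \Cref{lem:crude_param}, the list $\calW$ returned by \textsc{CrudeEstimate} contains some $(\hatmu_i, \wh\cov_i)$ within the required error of each true parameter, so one iteration of the outer loop fixes these estimates correctly. Brute-forcing over the $k^{2k}$ ordered partition pairs $(\calS,\calT)$ hits the $(\withinclustermean, \withinclustercov, \betweenclustermean, \betweenclustercov)$-separated pair of \Cref{prop:piecewise-poly-approx-score}; and brute-forcing over thresholds $\{t_{ij}\}$ on a grid of resolution $\eta = \betweenclustercov / (100\beta^2)$ (of size $(\beta d/\alpha\eta)^{O(k^2)}$) hits a setting for which \Cref{alg:classify} produces a clustering function $\classify$ satisfying the hypotheses of \Cref{lem:main_cluster}. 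On this branch, $\classify$ misclusters with probability $\le \eps_{\text{part}}$ (which by the choice of $\betweenclustermean, \betweenclustercov$ in the proof of \Cref{prop:piecewise-poly-approx-score} is polynomially small in $\eps$), and $\lscore^*$ is realized by polynomials from the class we optimize over.

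\textbf{ERM and generalization.} Once $\classify$ is fixed, the clipped empirical loss $\lossdenoisingclipped$ in \eqref{eq:empirical-loss} is a convex quadratic in the polynomial coefficients, so the ERM can be solved in time $\poly(d^\ell)$. For generalization, the class has $N_{\rm par} = k \binom{d + \ell}{\ell} \le d^{O(\ell)}$ coefficients each of magnitude at most $B_{\rm coef} = \poly(d)\exp(\poly(\ell))$, and on the clipping region $\{\|\x_t\|\le R_\x, \|\z_t\|\le R_\z\}$ the loss is bounded by $B = \poly(R_\x^\ell B_{\rm coef}, R_\z)$. Standard Rademacher/covering bounds then give that $M = d^{O(\ell)} \log(B/\delta) / \eps^2$ samples suffice for the empirical minimizer $\lscore_t$ (on this branch) to satisfy $\E[\lossdenoisingclipped(\lscore_t)] \le \E[\lossdenoisingclipped(\lscore^*)] + \eps/C$. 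Choosing $R_\x, R_\z$ as a sufficiently large polynomial in $dR\tau/\eps$, sub-Gaussian tail bounds on $\x_t$ and $\z_t$ show that $\Pr[\|\x_t\| > R_\x] + \Pr[\|\z_t\| > R_\z]$ is exponentially small in $\poly(R_\x/\sqrt{d\beta})$, and Cauchy--Schwarz against the easily-bounded fourth moment of $\|\lscore^* + \z_t/\sqrt{1 - e^{-2t}}\|^2$ then shows that the gap between the clipped and unclipped population losses is at most $\eps/C$. Combining and re-applying the DDPM identity yields $\E_{q_t}[\|\lscore_t - \nabla_{\x}\log q_t\|^2] \le \eps$ on the correct branch.

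\textbf{Validation and main obstacle.} Since we do not know which branch is correct and bad branches may output arbitrary functions, we select using a fresh validation sample of size $\poly(1/\eps)\log(|\mathcal W|^k \cdot k^{2k} \cdot (\beta d/\alpha\eta)^{O(k^2)} / \delta)$ to estimate the (clipped) population loss of each candidate and output the minimizer; a union bound over the $d^{\poly(k \tau / \lambdamin \eps)}$ branches costs only logarithmic factors, and yields a final branch whose loss exceeds the minimum over branches by at most $\eps/C$. The main technical obstacle is the uniform-convergence step: the coefficient bound $B_{\rm coef}$ is exponential in $\ell$, so naive Rademacher bounds would be catastrophic; the clipping of $\|\x_t\|,\|\z_t\|$ is essential, because it converts the regression into one of bounded range where the sample complexity depends only logarithmically on $B$, and so matches the $d^{\poly(\ell)}$ scaling we need. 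A secondary subtlety is that the proposition assumes $t \ge \eps$ rather than $t \ge 0$; this is what permits $\alpha' \gtrsim \eps$ and hence a polynomial (rather than unbounded) dependence of $\ell$ on $1/\alpha'$ when we invoke \Cref{prop:piecewise-poly-approx-score} on $q_t$.
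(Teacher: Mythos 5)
Your proposal matches the paper's proof essentially step for step: existence of the piecewise-polynomial score from \Cref{prop:piecewise-poly-approx-score}, brute force over crude parameter estimates (\Cref{lem:crude_param}), partition pairs and thresholds to hit a branch realizing the clustering of \Cref{lem:main_cluster}, ERM on the clipped DDPM objective \eqref{eq:empirical-loss}, uniform convergence on the clipping region, a Cauchy--Schwarz--plus--fourth-moment bound on the out-of-region loss, and final selection via validation. That is exactly the route the paper takes in the ``Learning polynomials using denoising objective'' and ``Generalization error analysis'' subsections.

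Two small but genuine misstatements in your last paragraph are worth flagging. First, your explanation of why $t\geq\eps$ is needed is not correct: the component covariances of $q_t$ are $e^{-2t}\cov_i + (1-e^{-2t})\Id$, so $\alpha' = e^{-2t}\alpha + (1-e^{-2t})\geq\alpha$ for every $t\geq 0$ (the OU noise can only \emph{improve} conditioning), and the degree bound from \Cref{prop:piecewise-poly-approx-score} is automatically controlled without a lower bound on $t$. The actual reason for the lower bound $t\geq t_{\min}$ is that the DDPM target $\z_t/\sqrt{1-e^{-2t}}$, and hence the range and Lipschitz constant of the clipped loss, scale like $1/\sqrt{1-e^{-2t}}\sim 1/\sqrt{t}$; the paper's sample-complexity lemma has a $1/t_{\min}^2$ factor for exactly this reason. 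Second, the sample complexity after clipping is \emph{polynomial}, not logarithmic, in the range parameter $B$ and the coefficient bound $M$ (the paper's bound is $n\gtrsim (dMR_\z)^4(dR_\x)^{4\ell}/(t_{\min}^2\eps^2)\cdot\log(1/\delta)$). This is still compatible with the claimed $d^{\poly(k\tau/(\lambdamin\eps))}$ sample complexity because $M\leq\poly(d)\exp(\poly(\ell))$ and $(dR_\x)^{\ell}\leq d^{\poly(\ell)}$ once the radius bound is expressed via $\tau\geq\log R$, so the polynomial blowup is absorbed; but the mechanism is polynomial absorption into the $d^{\poly(\ell)}$ budget, not a logarithmic dependence on $B$.
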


\subsection{Generalization error analysis}

As we can decompose the learning problem into learning a polynomial in the piece given by the clustering function $\classify (\cdot)$, we can start the generalization error argument by considering the loss function restricted to a fixed piece of the polynomial. 

Observe that the DDPM objective can be unbounded in general however, the loss becomes bounded assuming that $\| \x_t \| \leq R_{\x}$ and $\| \z_t \| \leq R_{\z}$. Therefore, we first derive the generalization error bound when we restrict the loss function to points $\| \x_t \| \leq R_{\x}$ and $\| \z_t \| \leq R_{\z}$ and then argue that the points outside of this region follow with a small probability because of the sub-Gaussian tail of the mixture model outside an appropriate radius. 

To simplify the notation, we define the clipped loss and clipped loss restricted to a particular piece as
\begin{align*}
    \lossdenoisingclipped_t ( \lscore, \x_t, \z_t) &= \Big\| \lscore(\x_t, \classify(\cdot)) + \frac{\z_t}{ \sqrt{1 - \exp(-2t)} } \Big\|^2 \Ind \{ \| \x_t \| \leq R_{\x}, \| \z_t \| \leq R_{\z} \} \\
    \lossdenoisingclipped_t ( \lscore, \x_t, \z_t, U_i, \hballt ) &= \Big\| \lscore(\x_t, \classify(\cdot)) + \frac{\z_t}{ \sqrt{1 - \exp(-2t)} } \Big\|^2 \Ind \{ \classify(\x_t) = i, \hballt(\x_t, U_i), \| \x_t \| \leq R_{\x}, \| \z_t \| \leq R_{\z} \}.
\end{align*}
Similarly define $\lossdenoisingclipped_t ( \lscore, \x_t, \z_t, U_i, \hballt^c )$ by replacing $\hballt$ with $\hballt^c$. Recall that for the region where $\Ind \{ \classify(\x_t) = i, \hballt(\x_t, U_i)\} = 1$, $\lscore(\x_t, \classify(\cdot))$ is simplified to $p_j(\x_t)$.

\begin{lemma}[Sample complexity] Assume that the sum of absolute values of the coefficient of the polynomial is $M$. Then, choosing $R_{\x}, R_{\z} = \Theta( (\beta R d / \alpha) \log (1/\delta') )$ for some $\delta' > 0$ and taking number of samples $n \geq \poly( \frac{d M R \beta }{ \alpha \eps t_{\min} } \log \frac{1}{\delta} ) \poly( \frac{ d \beta R }{ \alpha  } \log \frac{1}{\delta'} )^\ell $, with probability at least $1 - \delta$ over samples, we have
\begin{equation*}
    \E_{\x_t, \z_t} [ \lossdenoisingclipped_t ( \lscore, \x_t, \z_t) ] \leq \frac{1}{n} \sum_{i=1}^n \lossdenoisingclipped_t ( \lscore, \x_t^{(i)}, \z_t^{(i)} ) + \eps.
\end{equation*}
\end{lemma}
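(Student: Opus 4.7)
The approach is standard uniform convergence: since the algorithm returns an approximate ERM of the clipped DDPM objective over a brute-force enumeration of discrete choices (parameter candidates, partitions $\calS, \calT$, threshold grid $\{t_{ij}\}$) and over polynomial coefficients of bounded magnitude, it suffices to show that empirical and population values of $\lossdenoisingclipped_t$ are close uniformly over this class. The plan is to first bound the sup-norm of the loss on the clipped region, then cover the continuous part of the hypothesis class (polynomial coefficients) by an $\eta$-net, apply Hoeffding's inequality for each element of the net, and finally take a union bound over the net and over the finite brute-force choices.

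First I would bound $|\lossdenoisingclipped_t|$ on $\{\|\x_t\| \le R_\x, \|\z_t\| \le R_\z\}$. On the ``in'' piece, the polynomial part has sup-norm at most $M R_\x^\ell$ (using that the sum of absolute coefficient values is $M$ and every monomial is bounded by $R_\x^\ell$); on the ``out'' piece, $\precest_j(\x - \hatmu_j)$ is bounded by $(R_\x + R)/\alpha$; and the noise term satisfies $\|\z_t/\sqrt{1-e^{-2t}}\| \lesssim R_\z/\sqrt{t_{\min}}$ using $t \ge t_{\min}$. Squaring gives a uniform bound $B = \poly(M R_\x^\ell, R_\z/\sqrt{t_{\min}})$. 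Next I would build the covering: for each piece, perturbing each of the $D = \binom{d+\ell}{\ell} \le (d+\ell)^\ell$ monomial coefficients by at most $\eta/(D R_\x^\ell)$ changes the polynomial pointwise (on the clipped region) by at most $\eta$, and thus changes the loss by at most $O(\sqrt{B}\eta)$. Choosing $\eta \asymp \eps/\sqrt{B}$ yields an $\eps$-net over polynomials of size $(M D R_\x^\ell \sqrt{B}/\eps)^{kD}$, with the $k$ accounting for at most $k$ different polynomial pieces. For each fixed net element, Hoeffding gives $\Pr[|\wh{L} - L| > \eps] \le 2\exp(-\Omega(n\eps^2/B^2))$. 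A union bound over the net plus the brute-force enumeration (of size at most $|\calW|^k \cdot k^{2k} \cdot (\beta d/\alpha\eta_{\rm grid})^{O(k^2)}$) then yields the claim for
$$ n \;\gtrsim\; \frac{B^2}{\eps^2}\cdot \Bigl(kD \log\frac{M D R_\x^\ell \sqrt B}{\eps} + k\log|\calW| + k^2\log\frac{\beta d}{\alpha \eta_{\rm grid}} + \log\frac{1}{\delta}\Bigr)\,. $$

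Finally I would choose $R_\x, R_\z = \Theta((\beta R d/\alpha)\log(1/\delta'))$ so that $\Pr[\|\x_t\| > R_\x \text{ or } \|\z_t\| > R_\z] \le \delta'$ by Gaussian/sub-Gaussian concentration applied to each mixture component (each component of $q_t$ has covariance of operator norm at most $\beta + 1$ and mean of norm at most $R$). Substituting $B = \poly(M R_\x^\ell, R_\z/\sqrt{t_{\min}})$ and $D = (d+\ell)^\ell$, the bound collapses to the stated form $n \ge \poly(d M R \beta/(\alpha\eps t_{\min}) \cdot \log(1/\delta)) \cdot \poly(d \beta R /\alpha \cdot \log(1/\delta'))^\ell$. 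The main obstacle I foresee is bookkeeping of the $R_\x^\ell$ factor coming from the high polynomial degree in the Lipschitz argument and in the net size: $\ell = \poly(k\tau/\eps)$ is itself large, so a naive bound will blow up, and one must be careful that the exponential dependence is limited to the indicated $\poly(\cdot)^\ell$ prefactor rather than contaminating the $B^2/\eps^2$ rate; separating the $\ell$-dependent and $\ell$-independent factors cleanly (and noting that $M = dR \exp(\poly(\ell))$ only enters inside a logarithm of the net size) is the delicate step.
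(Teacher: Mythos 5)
Your proof is correct and reaches the same sample-complexity bound, but it takes a genuinely different route from the paper. You cover the coefficient space with an $\eta$-net, apply Hoeffding to each net element, and union bound over the net and the brute-force enumeration. The paper instead notes that $\lscore$ restricted to a piece is \emph{linear} in the coefficient vector $\coeffvec$ over the monomial features $\phi(\x)$, bounds $\|\coeffvec\|_2 \le M$ and $\|\phi(\x)\|_2 \lesssim (dR_\x)^\ell$ on the clipped region, bounds the Lipschitz constant of the loss with respect to the prediction, and invokes Rademacher complexity for a norm-bounded linear class (via contraction) --- no explicit covering is needed. The two routes give the same $\poly$-shaped rate; the Rademacher route avoids the discretization bookkeeping, while your covering route is more elementary and more transparent about where each factor comes from. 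Two small inaccuracies in your write-up: (1) you say $M$ ``only enters inside a logarithm of the net size,'' but $M$ also enters the sup-norm bound $B$ and hence the $B^2/\eps^2$ rate directly (in the paper, $M^4$ sits outside the $(\ldots)^\ell$ factor); this does not break your argument since the lemma's bound already has a $\poly(dM\ldots)$ prefactor that accepts the $M^4$. (2) You union-bound over the brute-force enumeration inside this lemma, whereas the paper only needs uniformity over the polynomial coefficients for a fixed brute-force candidate --- the selection across brute-force candidates is handled downstream by validation on fresh samples. Your extra union bound is not wrong, just unnecessarily conservative, and it would add $\poly(k,\log|\calW|)$ factors to $n$ that the lemma statement technically does not promise to accommodate (though they do not change the final $d^{\poly(\cdot)}$ conclusion).
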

\begin{proof}
    Denote $\coeffvec$ as coefficients of the polynomials and $\phi(\x)$ denote the monomials up to degree $\ell$. Then, we know that $\| \coeffvec \|_2 \leq \| \coeffvec \|_1 \leq M$. Additionally, the bound on $\| \x \|$ implies that $\| \phi(\x) \|_\infty \lesssim R_{\x}^{\ell}$. This implies that $\| \phi(\x) \|_2 \lesssim (d R_{\x})^\ell$. The Lipschitz constant $\lossdenoisingclipped_t$ for each coordinate can be upper bounded by
    \begin{equation*}
        \| \nabla \lossdenoisingclipped_t ( \lscore, \x_t, \z_t, U_i, \hballt )\| 
        \lesssim \frac{ dMR_{\z}(d R_{\x})^{\ell}  }{ \sqrt{1 - \exp(-2t)} } \leq \frac{ dMR_{\z}(d R_{\x})^{\ell}  }{ \sqrt{t_{\min}} }.
    \end{equation*}
    Additionally, we have $\| \lossdenoisingclipped_t \| \leq \frac{ (dMR_{\z})^2 (d R_{\x})^{2\ell}  }{ t_{\min} }$ for any $\| \x_t \| \leq R_{\x}$ and $\| \z_t \| \leq R_{\z}$. We choose $R_{\x}, R_{\z} \asymp \frac{\beta R d}{ \alpha } \log(1/\delta')$ for some $\delta' > 0$ and apply standard generalization error analysis result using Rademacher complexity for linear function class (e.g., see \cite{shalev2014understanding}). If we choose the total number of samples $n$ to satisfy $n \geq \frac{(d M R_{\z})^4 (d R_{\x})^{4\ell} }{ t_{\min}^2 \eps^2} \log \frac{1}{\delta}$,then with at least $1 - \delta$ probability, we have
    \begin{equation*}
         \E_{\x_t, \z_t} [ \lossdenoisingclipped_t ( \lscore, \x_t, \z_t, S_j, \hballt ) ]  \leq \frac{1}{n} \sum_{i=1}^n \lossdenoisingclipped_t ( \lscore, \x_t^{(i)}, \z_t^{(i)}, S_j, \hballt ) + \eps
    \end{equation*}
    for all $j \in [\numclust]$. Using a similar argument to prove the boundedness of $\lossdenoisingclipped_t ( \lscore, \x_t, \z_t, S_j, \hballt^c )$, we also obtain
    \begin{equation*}
        \E_{\x_t, \z_t} [ \lossdenoisingclipped_t ( \lscore, \x_t, \z_t, S_j, \hballt^c ) ] \leq \frac{1}{n} \sum_{i=1}^n \lossdenoisingclipped_t ( \lscore, \x_t, \z_t, S_j, \hballt^c ) + \eps.
    \end{equation*}
    Because $\Ind \{\classify(\x_t) = j\}$ for any single $j$ for all $\x_t$, combining these bounds for all $j \in [\numclust]$ for $n \geq $, we have
    \begin{equation*}
         \;\; \E_{\x_t, \z_t} [ \lossdenoisingclipped_t ( \lscore, \x_t, \z_t) ] \leq \frac{1}{n} \sum_{i=1}^n \lossdenoisingclipped_t ( \lscore, \x_t^{(i)}, \z_t^{(i)} ) + \eps \ .
    \end{equation*}
\end{proof}

\begin{proposition}
    Let $\mathcal M$ be an $(\alpha, \beta, R)$-well-conditioned mixture. Then, for any $\eps > 0$ and any noise scale $t \geq t_{\min} \geq \alpha \eps / R$, there exist an algorithm that takes number of samples $n \geq (\log \frac{1}{\delta}) d^{\poly( \frac{\beta k \log R}{ \alpha \eps \lambdamin })}$ and runs in sample-polynomial time and returns a score function $\lscore_t$ such that
    \begin{equation*}
        \E_{\x_t} [ \| \nabla_{\x} \log q_t(\x_t) - \lscore_t(\x_t) \|^2 ] \leq \eps\,.
    \end{equation*}
\end{proposition}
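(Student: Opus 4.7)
The plan is to combine the existence of a low-degree piecewise polynomial approximator to the score function (\Cref{prop:piecewise-poly-approx-score}) with a brute-force search strategy, polynomial regression using the clipped DDPM objective \Cref{eq:empirical-loss}, and the generalization bound established just before this proposition. First I would observe that the noised distribution $q_t$ is itself a mixture of $k$ Gaussians with parameters $(e^{-t}\vmu_i,\, e^{-2t}\cov_i + (1-e^{-2t})\Id)$, which is $(\alpha_t,\beta_t,R_t)$-well-conditioned with $\alpha_t \gtrsim \min(\alpha, 1-e^{-2t})$, $\beta_t \le \beta+1$, $R_t \le R$. Since $t \ge t_{\min} = \alpha\eps/R$, the effective condition number parameter $\tau_t$ for $q_t$ is controlled by $\poly(\tau,R/\alpha\eps)$. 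I would therefore apply \Cref{lem:crude_param} to $q_t$ to produce a list $\calW$ of candidate parameter pairs $(\hatmu_i,\wh{\cov}_i)$ of size $d^{O(k)}\cdot(R/\sqrt{\beta_t})^{O(k^2)}$, containing crude approximations to the true means and covariances in the sense of that lemma.

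Next I would brute-force over: (i) a $k$-tuple of candidate pairs from $\calW$, (ii) partitions $\calS,\calT$ of $[k]$ (of which there are at most $k^{2k}$ each), and (iii) thresholds $\{t_{ij}\}$ over a grid of size $(\beta d/\alpha\eta)^{O(k^2)}$ with $\eta$ as in \Cref{sec:likelihood-clustering}. For each combination I would run \Cref{alg:classify} to produce a candidate clustering function $\classify$. For the \emph{correct} combination, \Cref{lem:main_cluster} guarantees that $\Pr_{\normal_i}[\classify(\x)\neq t] \le \delta'$ for any prescribed $\delta'$ by tuning $\betweenclustermean,\betweenclustercov$ (and hence the misclassification probability can be driven below any polynomially small threshold). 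For each candidate $\classify$ I would then solve the empirical clipped polynomial regression problem \Cref{eq:empirical-loss} over piecewise polynomials of degree $\ell = \poly(\beta k\log R/\alpha\eps\lambdamin)$ on each cluster.

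By \Cref{prop:piecewise-poly-approx-score}, for the correct $\classify$ there exists a piecewise polynomial $\lscore^\star$ with $\E_{\x_t\sim q_t}[\|\lscore^\star(\x_t) - \nabla\log q_t(\x_t)\|^2]\le \eps/4$. Using the standard identity that the DDPM loss equals the score-matching error plus a distribution-dependent constant, the empirical risk minimizer $\wh{\lscore}_t$ of the clipped loss has empirical clipped loss no larger than that of $\lscore^\star$. Choosing the clipping radii $R_\x, R_\z = \Theta((\beta Rd/\alpha)\log(n/\eps))$, the sub-Gaussian tails of $\x_t$ and $\z_t$ and the uniform magnitude bound on the polynomial candidates of \Cref{prop:piecewise-poly-approx-score} imply that the difference between clipped and unclipped population losses is at most $\eps/4$. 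The generalization lemma proved immediately before the proposition, together with $n \ge (\log(1/\delta))\, d^{\poly(\ell)}$ samples, transfers the empirical bound to a population bound uniformly over the polynomial class. Finally, I would use a fresh validation sample to select the candidate $\wh{\lscore}_t$ with smallest clipped validation loss; because at least one brute-force candidate achieves loss $\le \eps$ and the validation set concentrates uniformly over all (polynomially many in the exponent) candidates, the selected estimator inherits the $\eps$ bound.

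The main obstacle is the bookkeeping around the clipping: one must simultaneously argue that (a) the clipped population loss is within $\eps/4$ of the unclipped score-matching loss for the \emph{target} function, (b) the candidate polynomial class has bounded range on the clipped domain so that Rademacher complexity/uniform convergence apply with only polynomial degradation in $R_\x^\ell$, and (c) model selection via validation does not blow up the sample complexity by more than a logarithmic factor in the number of brute-force candidates. A secondary subtlety is that we must select the best candidate via validation rather than identifying the correct partition outright, since we have no certificate that any particular partition guess is the right one; this is exactly what the validation step handles.
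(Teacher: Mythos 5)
Your proposal is correct and takes essentially the same approach as the paper: apply the crude parameter estimation and clustering to the noised mixture $q_t$, brute-force over parameter candidates/partitions/thresholds, solve the clipped piecewise polynomial regression, invoke the existence result \Cref{prop:piecewise-poly-approx-score}, apply the generalization lemma, and select a final candidate by validation. The paper's own proof of this particular proposition is deliberately terse — it only spells out the contribution of the tail/clipping term $\lossdenoisingout_t$ (using Gaussian hypercontractivity for the polynomial term and Gaussian tail bounds for the noise term), with the remaining ingredients left to be assembled from the preceding lemmas and \Cref{thm:learning-score-guarantee-fixed-t} — whereas you lay out the full chain of reasoning explicitly, including the role of validation for model selection, which is exactly the intended assembly.
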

\begin{proof}
    We define the loss function outside the radius $\| \x_t \| \geq \frac{\beta R d}{\alpha} \log \frac{1}{\delta'}$ or $\| \z_t \| \geq \frac{\beta R d}{\alpha} \log \frac{1}{\delta'}$ as
\begin{equation*}
    \lossdenoisingout_t (\lscore, \x_t, \z_t) = \Big\| \lscore(\x_t, \classify(\cdot)) + \frac{\z_t}{ \sqrt{1 - \exp(-2t)} } \Big\|^2 \Ind \{ \| \x_t \| \geq \frac{\beta R d}{\alpha} \log \frac{1}{\delta'} \vee \| \z_t \| \geq \frac{\beta R d}{\alpha} \log \frac{1}{\delta'} \}
\end{equation*}
    The $\lossdenoisingout$ can be simplified as 
    \begin{align}
    \label{eq:dsm-out-loss-split}
        \Bigl| \E_{\x_t, \z_t} [ \lossdenoisingout (\lscore, \x_t, \z_t) ] \Bigr| &\lesssim \E_{\x_t} \Bigl[ \| \lscore(\x_t, \classify(\cdot)) \|^2 \cdot\Ind \Bigl\{ \| \x_t \| \geq \frac{\beta R d}{\alpha} \log \frac{1}{\delta'} \Bigr\} \Bigr] \\
        &\qquad\qquad + \frac{ 1 }{t_{\min}} \E_{\z_t} \Bigl[  \| \z_t \|^2 \cdot\Ind \Bigl\{ \| \z_t \| \geq \frac{\beta R d}{\alpha} \log \frac{1}{\delta'} \Bigr\} \Bigr]. 
    \end{align}
    The second term in the above equation can be upper bounded by $( \Pr \{ \| \z_t \| \geq R_{\z}  \} )^{1/2} ( \E \big[  \| \z_t \|^4 \big] )^{1/2} \lesssim \sqrt{\delta} d$. To upper-bound the first term, we first upper-bound $\E_{\x_t} [ \| p(\x_t, \mathcal M(S_j)) \|^4 ]$:
    \begin{align*}
        \E \big[ \| p(\x_t, \mathcal M(S_j)) \|^4 \big] & \leq M^4 \E_{\x_t} \big[ \| \phi(\x_t) \|_1^4 \big] \leq M^4 d^\ell \; \Bigl( \max_{\vec v: \| \vec v \|_1 \leq 4\ell} \E \big[ \prod_{i=1}^d | \x_t^{(i)} |^{\vec v_i} \big] \Bigr) \\ 
        &\leq M^4 d^\ell \max_{\vec v: \| \vec v \|_1 \leq 4\ell} \prod_{i=1}^d \big( \E \big[ | \x_t^{(i)} |^{\vec v_i d}  \big] \big)^{1/d} 
    \end{align*}
    Using Gaussian hypercontractivity (\Cref{fct:hypercontractivity}), we can simplify $\E \big[ | \x_t^{(i)} |^{\vec v_i d}  \big] \lesssim  \sum_{i=1}^k \lambda_i (\vec v_i d)^{\vec v_i d} (\beta R)^{\vec v_i d} \leq (4\ell d \beta R)^{4\ell d}$. Using this bound in \eqref{eq:dsm-out-loss-split}, we have
    \begin{equation*}
        \Big| \E_{\x_t, \z_t} [ \lossdenoisingout (\lscore, \x_t, \z_t) ] \Big| \lesssim \sqrt{\delta' / t_{\min}} d + M^4 d^\ell (4\ell d \beta R)^{4\ell} \sqrt{\delta'}. 
    \end{equation*}
    Choosing $\delta' = \poly( \eps t_{\rm min} / (d M (4\ell d \beta R)^{\ell} )$, we obtain the result.
\end{proof}

\newpage

\section*{Acknowledgments} 

We thank Adam Klivans for many illuminating discussions about score estimation, polynomial regression, and diffusion models throughout the preparation of this work. We also thank the authors of~\cite{gatmiry2024learning} for coordinating the submission of manuscripts with us.

\bibliographystyle{alpha}
\bibliography{references}

\newcommand{\etalchar}[1]{$^{#1}$}
\begin{thebibliography}{SDWMG15}

\bibitem[ABDH{\etalchar{+}}18]{ashtiani2018nearly}
Hassan Ashtiani, Shai Ben-David, Nicholas Harvey, Christopher Liaw, Abbas Mehrabian, and Yaniv Plan.
\newblock Nearly tight sample complexity bounds for learning mixtures of gaussians via sample compression schemes.
\newblock {\em Advances in Neural Information Processing Systems}, 31, 2018.

\bibitem[ACV24]{anari2024fast}
Nima Anari, Sinho Chewi, and Thuy-Duong Vuong.
\newblock Fast parallel sampling under isoperimetry.
\newblock {\em arXiv preprint arXiv:2401.09016}, 2024.

\bibitem[ADLS17]{acharya2017sample}
Jayadev Acharya, Ilias Diakonikolas, Jerry Li, and Ludwig Schmidt.
\newblock Sample-optimal density estimation in nearly-linear time.
\newblock In {\em Proceedings of the Twenty-Eighth Annual ACM-SIAM Symposium on Discrete Algorithms}, pages 1278--1289. SIAM, 2017.

\bibitem[AHL{\etalchar{+}}23]{anari2023parallel}
Nima Anari, Yizhi Huang, Tianyu Liu, Thuy-Duong Vuong, Brian Xu, and Katherine Yu.
\newblock Parallel discrete sampling via continuous walks.
\newblock In {\em Proceedings of the 55th Annual ACM Symposium on Theory of Computing}, pages 103--116, 2023.

\bibitem[AMS23]{alaoui2023sampling}
Ahmed~El Alaoui, Andrea Montanari, and Mark Sellke.
\newblock Sampling from mean-field gibbs measures via diffusion processes.
\newblock {\em arXiv preprint arXiv:2310.08912}, 2023.

\bibitem[BDBDD23]{benton2023linear}
Joe Benton, Valentin De~Bortoli, Arnaud Doucet, and George Deligiannidis.
\newblock Linear convergence bounds for diffusion models via stochastic localization.
\newblock {\em arXiv preprint arXiv:2308.03686}, 2023.

\bibitem[BDBGK18]{ben2018classical}
Shalev Ben-David, Adam Bouland, Ankit Garg, and Robin Kothari.
\newblock Classical lower bounds from quantum upper bounds.
\newblock In {\em 2018 IEEE 59th Annual Symposium on Foundations of Computer Science (FOCS)}, pages 339--349. IEEE, 2018.

\bibitem[BDD23]{benton2023error}
Joe Benton, George Deligiannidis, and Arnaud Doucet.
\newblock Error bounds for flow matching methods.
\newblock {\em arXiv preprint arXiv:2305.16860}, 2023.

\bibitem[BDJ{\etalchar{+}}22]{bakshi2022robustly}
Ainesh Bakshi, Ilias Diakonikolas, He~Jia, Daniel~M Kane, Pravesh~K Kothari, and Santosh~S Vempala.
\newblock Robustly learning mixtures of k arbitrary gaussians.
\newblock In {\em Proceedings of the 54th Annual ACM SIGACT Symposium on Theory of Computing}, pages 1234--1247, 2022.

\bibitem[BGJ{\etalchar{+}}23]{betker2023improving}
James Betker, Gabriel Goh, Li~Jing, Tim Brooks, Jianfeng Wang, Linjie Li, Long Ouyang, Juntang Zhuang, Joyce Lee, Yufei Guo, et~al.
\newblock Improving image generation with better captions.
\newblock {\em Computer Science. https://cdn. openai. com/papers/dall-e-3. pdf}, 2(3):8, 2023.

\bibitem[BK20]{bakshi2020outlier}
Ainesh Bakshi and Pravesh Kothari.
\newblock Outlier-robust clustering of non-spherical mixtures.
\newblock {\em arXiv preprint arXiv:2005.02970}, 2020.

\bibitem[BMR22]{BloMroRak22genmodel}
Adam Block, Youssef Mroueh, and Alexander Rakhlin.
\newblock Generative modeling with denoising auto-encoders and {L}angevin sampling.
\newblock {\em arXiv preprint 2002.00107}, 2022.

\bibitem[BRST21]{bruna2021continuous}
Joan Bruna, Oded Regev, Min~Jae Song, and Yi~Tang.
\newblock Continuous lwe.
\newblock In {\em Proceedings of the 53rd Annual ACM SIGACT Symposium on Theory of Computing}, pages 694--707, 2021.

\bibitem[BS15]{belkin2015polynomial}
Mikhail Belkin and Kaushik Sinha.
\newblock Polynomial learning of distribution families.
\newblock {\em SIAM Journal on Computing}, 44(4):889--911, 2015.

\bibitem[BS23]{buhai2023beyond}
Rares-Darius Buhai and David Steurer.
\newblock Beyond parallel pancakes: Quasi-polynomial time guarantees for non-spherical gaussian mixtures.
\newblock In {\em The Thirty Sixth Annual Conference on Learning Theory}, pages 548--611. PMLR, 2023.

\bibitem[BV08a]{BV:08}
S.~C. Brubaker and S.~Vempala.
\newblock {Isotropic PCA and Affine-Invariant Clustering}.
\newblock In {\em Proc. 49\textsuperscript{th} IEEE Symposium on Foundations of Computer Science}, pages 551--560, 2008.

\bibitem[BV08b]{brubaker2008isotropic}
S~Charles Brubaker and Santosh~S Vempala.
\newblock Isotropic pca and affine-invariant clustering.
\newblock {\em Building Bridges: Between Mathematics and Computer Science}, pages 241--281, 2008.

\bibitem[CCL{\etalchar{+}}23a]{chen2023probability}
Sitan Chen, Sinho Chewi, Holden Lee, Yuanzhi Li, Jianfeng Lu, and Adil Salim.
\newblock The probability flow ode is provably fast.
\newblock {\em arXiv preprint arXiv:2305.11798}, 2023.

\bibitem[CCL{\etalchar{+}}23b]{chen2023sampling}
Sitan Chen, Sinho Chewi, Jerry Li, Yuanzhi Li, Adil Salim, and Anru Zhang.
\newblock Sampling is as easy as learning the score: theory for diffusion models with minimal data assumptions.
\newblock In {\em The Eleventh International Conference on Learning Representations}, 2023.

\bibitem[CDD23]{chen2023restoration}
Sitan Chen, Giannis Daras, and Alexandros~G Dimakis.
\newblock Restoration-degradation beyond linear diffusions: A non-asymptotic analysis for ddim-type samplers.
\newblock {\em arXiv preprint arXiv:2303.03384}, 2023.

\bibitem[CDS23]{conforti2023score}
Giovanni Conforti, Alain Durmus, and Marta~Gentiloni Silveri.
\newblock Score diffusion models without early stopping: finite fisher information is all you need.
\newblock {\em arXiv preprint arXiv:2308.12240}, 2023.

\bibitem[CDSS13]{CDSS13}
S.~Chan, I.~Diakonikolas, R.~Servedio, and X.~Sun.
\newblock Learning mixtures of structured distributions over discrete domains.
\newblock In {\em SODA}, pages 1380--1394, 2013.

\bibitem[Cel22]{celentano2022sudakov}
Michael Celentano.
\newblock Sudakov-fernique post-amp, and a new proof of the local convexity of the tap free energy.
\newblock {\em arXiv preprint arXiv:2208.09550}, 2022.

\bibitem[CHZW23]{chen2023score}
Minshuo Chen, Kaixuan Huang, Tuo Zhao, and Mengdi Wang.
\newblock Score approximation, estimation and distribution recovery of diffusion models on low-dimensional data.
\newblock In {\em International Conference on Machine Learning}, pages 4672--4712. PMLR, 2023.

\bibitem[CKVEZ23]{cui2023analysis}
Hugo Cui, Florent Krzakala, Eric Vanden-Eijnden, and Lenka Zdeborov{\'a}.
\newblock Analysis of learning a flow-based generative model from limited sample complexity.
\newblock {\em arXiv preprint arXiv:2310.03575}, 2023.

\bibitem[CLL22]{chen2022improved}
Hongrui Chen, Holden Lee, and Jianfeng Lu.
\newblock Improved analysis of score-based generative modeling: user-friendly bounds under minimal smoothness assumptions.
\newblock {\em arXiv preprint arXiv:2211.01916}, 2022.

\bibitem[CLS20]{chen2020learning}
Sitan Chen, Jerry Li, and Zhao Song.
\newblock Learning mixtures of linear regressions in subexponential time via fourier moments.
\newblock In {\em Proceedings of the 52nd Annual ACM SIGACT Symposium on Theory of Computing}, pages 587--600, 2020.

\bibitem[Das99a]{Dasgupta:99}
S.~Dasgupta.
\newblock {Learning mixtures of Gaussians}.
\newblock In {\em Proceedings of the 40\textsuperscript{th} Annual Symposium on Foundations of Computer Science}, pages 634--644, 1999.

\bibitem[Das99b]{dasgupta1999learning}
Sanjoy Dasgupta.
\newblock Learning mixtures of gaussians.
\newblock In {\em 40th Annual Symposium on Foundations of Computer Science (Cat. No. 99CB37039)}, pages 634--644. IEEE, 1999.

\bibitem[DB22]{DeB22diffusion}
Valentin De~Bortoli.
\newblock Convergence of denoising diffusion models under the manifold hypothesis.
\newblock {\em Transactions on Machine Learning Research}, 2022.

\bibitem[DBTHD21]{debetal2021scorebased}
Valentin De~Bortoli, James Thornton, Jeremy Heng, and Arnaud Doucet.
\newblock Diffusion {S}chr\"{o}dinger bridge with applications to score-based generative modeling.
\newblock In M.~Ranzato, A.~Beygelzimer, Y.~Dauphin, P.S. Liang, and J.~Wortman Vaughan, editors, {\em Advances in Neural Information Processing Systems}, volume~34, pages 17695--17709. Curran Associates, Inc., 2021.

\bibitem[DHKK20]{diakonikolas2020robustly}
Ilias Diakonikolas, Samuel~B Hopkins, Daniel Kane, and Sushrut Karmalkar.
\newblock Robustly learning any clusterable mixture of gaussians.
\newblock {\em arXiv preprint arXiv:2005.06417}, 2020.

\bibitem[DK14]{DK14}
C.~Daskalakis and G.~Kamath.
\newblock Faster and sample near-optimal algorithms for proper learning mixtures of gaussians.
\newblock In {\em Proceedings of The 27\textsuperscript{th} Conference on Learning Theory, {COLT} 2014}, pages 1183--1213, 2014.

\bibitem[DK20]{diakonikolas2020small}
Ilias Diakonikolas and Daniel~M Kane.
\newblock Small covers for near-zero sets of polynomials and learning latent variable models.
\newblock In {\em 2020 IEEE 61st Annual Symposium on Foundations of Computer Science (FOCS)}, pages 184--195. IEEE, 2020.

\bibitem[DKK{\etalchar{+}}16]{DKKLMS16}
I.~Diakonikolas, G.~Kamath, D.~M. Kane, J.~Li, A.~Moitra, and A.~Stewart.
\newblock Robust estimators in high dimensions without the computational intractability.
\newblock In {\em Proceedings of FOCS'16}, pages 655--664, 2016.

\bibitem[DKN10]{diakonikolas2010bounded}
Ilias Diakonikolas, Daniel~M. Kane, and Jelani Nelson.
\newblock Bounded independence fools degree-2 threshold functions.
\newblock In {\em 2010 IEEE 51st Annual Symposium on Foundations of Computer Science}, pages 11--20, 2010.

\bibitem[DKS17]{diakonikolas2017statistical}
Ilias Diakonikolas, Daniel~M Kane, and Alistair Stewart.
\newblock Statistical query lower bounds for robust estimation of high-dimensional gaussians and gaussian mixtures.
\newblock In {\em 2017 IEEE 58th Annual Symposium on Foundations of Computer Science (FOCS)}, pages 73--84. IEEE, 2017.

\bibitem[DKS18]{diakonikolas2018list}
Ilias Diakonikolas, Daniel~M Kane, and Alistair Stewart.
\newblock List-decodable robust mean estimation and learning mixtures of spherical gaussians.
\newblock In {\em Proceedings of the 50th Annual ACM SIGACT Symposium on Theory of Computing}, pages 1047--1060, 2018.

\bibitem[DN21]{dhariwal2021diffusion}
Prafulla Dhariwal and Alexander Nichol.
\newblock Diffusion models beat {GANs} on image synthesis.
\newblock {\em Advances in Neural Information Processing Systems}, 34:8780--8794, 2021.

\bibitem[EAMS22]{el2022sampling}
Ahmed El~Alaoui, Andrea Montanari, and Mark Sellke.
\newblock Sampling from the sherrington-kirkpatrick gibbs measure via algorithmic stochastic localization.
\newblock In {\em 2022 IEEE 63rd Annual Symposium on Foundations of Computer Science (FOCS)}, pages 323--334. IEEE, 2022.

\bibitem[Eld13]{eldan2013thin}
Ronen Eldan.
\newblock Thin shell implies spectral gap up to polylog via a stochastic localization scheme.
\newblock {\em Geometric and Functional Analysis}, 23(2):532--569, 2013.

\bibitem[Eld20]{eldan2020taming}
Ronen Eldan.
\newblock Taming correlations through entropy-efficient measure decompositions with applications to mean-field approximation.
\newblock {\em Probability Theory and Related Fields}, 176(3-4):737--755, 2020.

\bibitem[FOS08]{FOS:05focs}
Jon Feldman, Ryan O'Donnell, and Rocco~A Servedio.
\newblock Learning mixtures of product distributions over discrete domains.
\newblock {\em SIAM Journal on Computing}, 37(5):1536--1564, 2008.

\bibitem[GKL24]{gatmiry2024learning}
Khashayar Gatmiry, Jonathan Kelner, and Holden Lee.
\newblock Learning mixtures of gaussians using diffusion models.
\newblock {\em arXiv preprint arXiv:2404.18869}, 2024.

\bibitem[GVV22]{gupte2022continuous}
Aparna Gupte, Neekon Vafa, and Vinod Vaikuntanathan.
\newblock Continuous lwe is as hard as lwe \& applications to learning gaussian mixtures.
\newblock In {\em 2022 IEEE 63rd Annual Symposium on Foundations of Computer Science (FOCS)}, pages 1162--1173. IEEE, 2022.

\bibitem[HJA20]{ho2020denoising}
Jonathan Ho, Ajay Jain, and Pieter Abbeel.
\newblock Denoising diffusion probabilistic models.
\newblock {\em Advances in Neural Information Processing Systems}, 33:6840--6851, 2020.

\bibitem[HL18]{hopkins2018mixture}
Samuel~B Hopkins and Jerry Li.
\newblock Mixture models, robustness, and sum of squares proofs.
\newblock In {\em Proceedings of the 50th Annual ACM SIGACT Symposium on Theory of Computing}, pages 1021--1034, 2018.

\bibitem[HMP24]{huang2024sampling}
Brice Huang, Andrea Montanari, and Huy~Tuan Pham.
\newblock Sampling from spherical spin glasses in total variation via algorithmic stochastic localization.
\newblock {\em arXiv preprint arXiv:2404.15651}, 2024.

\bibitem[Hyv05]{hyvarinen2005estimation}
Aapo Hyv{\"a}rinen.
\newblock Estimation of non-normalized statistical models by score matching.
\newblock {\em Journal of Machine Learning Research}, 6(4), 2005.

\bibitem[KHR23]{koehler2022statistical}
Frederic Koehler, Alexander Heckett, and Andrej Risteski.
\newblock Statistical efficiency of score matching: The view from isoperimetry.
\newblock {\em International Conference on Learning Representations}, 2023.

\bibitem[KMV10]{kalai2010efficiently}
Adam~Tauman Kalai, Ankur Moitra, and Gregory Valiant.
\newblock Efficiently learning mixtures of two gaussians.
\newblock In {\em Proceedings of the forty-second ACM symposium on Theory of computing}, pages 553--562, 2010.

\bibitem[KSS18]{kothari2018robust}
Pravesh~K Kothari, Jacob Steinhardt, and David Steurer.
\newblock Robust moment estimation and improved clustering via sum of squares.
\newblock In {\em Proceedings of the 50th Annual ACM SIGACT Symposium on Theory of Computing}, pages 1035--1046, 2018.

\bibitem[KSV05]{KSV:05}
R.~Kannan, H.~Salmasian, and S.~Vempala.
\newblock The spectral method for general mixture models.
\newblock In {\em Proceedings of the Eighteenth Annual Conference on Learning Theory (COLT)}, pages 444--457, 2005.

\bibitem[KV23]{koehler2023sampling}
Frederic Koehler and Thuy-Duong Vuong.
\newblock Sampling multimodal distributions with the vanilla score: Benefits of data-based initialization.
\newblock {\em arXiv preprint arXiv:2310.01762}, 2023.

\bibitem[LL22]{liu2022clustering}
Allen Liu and Jerry Li.
\newblock Clustering mixtures with almost optimal separation in polynomial time.
\newblock In {\em Proceedings of the 54th Annual ACM SIGACT Symposium on Theory of Computing}, pages 1248--1261, 2022.

\bibitem[LLT22]{leelutan22sgmpoly}
Holden Lee, Jianfeng Lu, and Yixin Tan.
\newblock Convergence for score-based generative modeling with polynomial complexity.
\newblock In Alice~H. Oh, Alekh Agarwal, Danielle Belgrave, and Kyunghyun Cho, editors, {\em Advances in Neural Information Processing Systems}, 2022.

\bibitem[LLT23]{lee2023convergence}
Holden Lee, Jianfeng Lu, and Yixin Tan.
\newblock Convergence of score-based generative modeling for general data distributions.
\newblock In {\em International Conference on Algorithmic Learning Theory}, pages 946--985. PMLR, 2023.

\bibitem[LM23]{liu2023robustly}
Allen Liu and Ankur Moitra.
\newblock Robustly learning general mixtures of gaussians.
\newblock {\em Journal of the ACM}, 2023.

\bibitem[LS17]{li2017robust}
Jerry Li and Ludwig Schmidt.
\newblock Robust and proper learning for mixtures of gaussians via systems of polynomial inequalities.
\newblock In {\em Conference on Learning Theory}, pages 1302--1382. PMLR, 2017.

\bibitem[LWCC23]{li2023towards}
Gen Li, Yuting Wei, Yuxin Chen, and Yuejie Chi.
\newblock Towards faster non-asymptotic convergence for diffusion-based generative models.
\newblock {\em arXiv preprint arXiv:2306.09251}, 2023.

\bibitem[LWYL22]{liu2022let}
Xingchao Liu, Lemeng Wu, Mao Ye, and Qiang Liu.
\newblock Let us build bridges: understanding and extending diffusion generative models.
\newblock {\em arXiv preprint arXiv:2208.14699}, 2022.

\bibitem[Mon23]{montanari2023sampling}
Andrea Montanari.
\newblock Sampling, diffusions, and stochastic localization.
\newblock {\em arXiv preprint arXiv:2305.10690}, 2023.

\bibitem[MV10]{moitra2010settling}
Ankur Moitra and Gregory Valiant.
\newblock Settling the polynomial learnability of mixtures of gaussians.
\newblock In {\em 2010 IEEE 51st Annual Symposium on Foundations of Computer Science}, pages 93--102. IEEE, 2010.

\bibitem[MW23a]{mei2023deep}
Song Mei and Yuchen Wu.
\newblock Deep networks as denoising algorithms: Sample-efficient learning of diffusion models in high-dimensional graphical models.
\newblock {\em arXiv preprint arXiv:2309.11420}, 2023.

\bibitem[MW23b]{montanari2023posterior}
Andrea Montanari and Yuchen Wu.
\newblock Posterior sampling from the spiked models via diffusion processes.
\newblock {\em arXiv preprint arXiv:2304.11449}, 2023.

\bibitem[NS64]{Newman1964}
D.~J. Newman and H.~S. Shapiro.
\newblock {\em Jackson's Theorem in Higher Dimensions}, pages 208--219.
\newblock Springer Basel, Basel, 1964.

\bibitem[OAS23]{oko2023diffusion}
Kazusato Oko, Shunta Akiyama, and Taiji Suzuki.
\newblock Diffusion models are minimax optimal distribution estimators.
\newblock In {\em International Conference on Machine Learning}, pages 26517--26582. PMLR, 2023.

\bibitem[Pea94]{pearson1894contributions}
Karl Pearson.
\newblock Contributions to the mathematical theory of evolution.
\newblock {\em Philosophical Transactions of the Royal Society of London. A}, 185:71--110, 1894.

\bibitem[Pid22]{Pid22sgm}
Jakiw Pidstrigach.
\newblock Score-based generative models detect manifolds.
\newblock In S.~Koyejo, S.~Mohamed, A.~Agarwal, D.~Belgrave, K.~Cho, and A.~Oh, editors, {\em Advances in Neural Information Processing Systems}, volume~35, pages 35852--35865. Curran Associates, Inc., 2022.

\bibitem[PRS{\etalchar{+}}24]{pabbaraju2024provable}
Chirag Pabbaraju, Dhruv Rohatgi, Anish~Prasad Sevekari, Holden Lee, Ankur Moitra, and Andrej Risteski.
\newblock Provable benefits of score matching.
\newblock {\em Advances in Neural Information Processing Systems}, 36, 2024.

\bibitem[RV17]{regev2017learning}
Oded Regev and Aravindan Vijayaraghavan.
\newblock On learning mixtures of well-separated gaussians.
\newblock In {\em 2017 IEEE 58th Annual Symposium on Foundations of Computer Science (FOCS)}, pages 85--96. IEEE, 2017.

\bibitem[SCK23]{shah2023learning}
Kulin Shah, Sitan Chen, and Adam Klivans.
\newblock Learning mixtures of gaussians using the {DDPM} objective.
\newblock In {\em Thirty-seventh Conference on Neural Information Processing Systems}, 2023.

\bibitem[SDWMG15]{sohl2015deep}
Jascha Sohl-Dickstein, Eric Weiss, Niru Maheswaranathan, and Surya Ganguli.
\newblock Deep unsupervised learning using nonequilibrium thermodynamics.
\newblock In {\em International Conference on Machine Learning}, pages 2256--2265, 2015.

\bibitem[SE19]{song2019generative}
Yang Song and Stefano Ermon.
\newblock Generative modeling by estimating gradients of the data distribution.
\newblock {\em Advances in neural information processing systems}, 32, 2019.

\bibitem[SOAJ14]{SOAJ14}
A.~T. Suresh, A.~Orlitsky, J.~Acharya, and A.~Jafarpour.
\newblock Near-optimal-sample estimators for spherical gaussian mixtures.
\newblock In {\em Advances in Neural Information Processing Systems (NIPS)}, pages 1395--1403, 2014.

\bibitem[SSBD14]{shalev2014understanding}
Shai Shalev-Shwartz and Shai Ben-David.
\newblock {\em Understanding machine learning: From theory to algorithms}.
\newblock Cambridge university press, 2014.

\bibitem[SSDK{\etalchar{+}}20]{song2020score}
Yang Song, Jascha Sohl-Dickstein, Diederik~P Kingma, Abhishek Kumar, Stefano Ermon, and Ben Poole.
\newblock Score-based generative modeling through stochastic differential equations.
\newblock {\em arXiv preprint arXiv:2011.13456}, 2020.

\bibitem[Ver18]{vershynin-HDP-book}
Roman Vershynin.
\newblock {\em High-dimensional probability: An introduction with applications in data science}, volume~47.
\newblock Cambridge university press, 2018.

\bibitem[Vin11]{vincent2011connection}
Pascal Vincent.
\newblock A connection between score matching and denoising autoencoders.
\newblock {\em Neural computation}, 23(7):1661--1674, 2011.

\bibitem[VW02]{VempalaWang:02}
S.~Vempala and G.~Wang.
\newblock A spectral algorithm for learning mixtures of distributions.
\newblock In {\em Proceedings of the 43\textsuperscript{rd} Annual Symposium on Foundations of Computer Science}, pages 113--122, 2002.

\bibitem[VW04]{vempala2004spectral}
Santosh Vempala and Grant Wang.
\newblock A spectral algorithm for learning mixture models.
\newblock {\em Journal of Computer and System Sciences}, 68(4):841--860, 2004.

\bibitem[WWY24]{wibisono2024optimal}
Andre Wibisono, Yihong Wu, and Kaylee~Yingxi Yang.
\newblock Optimal score estimation via empirical bayes smoothing.
\newblock {\em arXiv preprint arXiv:2402.07747}, 2024.

\bibitem[WY22]{WibYan22sgm}
Andre Wibisono and Kaylee~Y. Yang.
\newblock Convergence in {KL} divergence of the inexact {L}angevin algorithm with application to score-based generative models.
\newblock {\em arXiv preprint 2211.01512}, 2022.

\bibitem[YWR23]{yan2023learning}
Yuling Yan, Kaizheng Wang, and Philippe Rigollet.
\newblock Learning gaussian mixtures using the wasserstein-fisher-rao gradient flow.
\newblock {\em arXiv preprint arXiv:2301.01766}, 2023.

\end{thebibliography}

\newpage

\appendix

\end{document}